\newcommand{\nn}{\nonumber}
\newcommand{\p}{\partial}
\newcommand{\ve}{\varepsilon}
\newcommand{\no}[1]{\left\| #1 \right\|}
\newcommand{\what}{\widehat}
\declaretheoremstyle[headfont=\kpfonts]{normalhead}
\newtheorem{theorem}{Theorem}[section]
\newtheorem{proposition}{Proposition}[section]
\newtheorem{lemma}{Lemma}[section]
\newtheorem{corollary}{Corollary}[section]
\theoremstyle{definition}
\newtheorem{remark}{Remark}[section]
\numberwithin{equation}{section}
\numberwithin{figure}{section}
\begin{document}

\title{
On the proximity between the wave dynamics of the integrable focusing nonlinear Schr\"odinger equation 
\\
and its non-integrable generalizations
}

\author{Dirk Hennig$^{\MakeLowercase{a}}$, Nikos I. Karachalios$^{\MakeLowercase{a}}$, Dionyssios Mantzavinos$^{\MakeLowercase{b}}$, 
\\
Jesus Cuevas-Maraver$^{\MakeLowercase{c}}$, and Ioannis G. Stratis$^{\MakeLowercase{d}}$}
	\address{\normalfont $^a$Department of Mathematics,  University of Thessaly, 35100, Lamia, Greece\vspace*{-2mm}}
    \email{\!dirkhennig@uth.gr, \!karan@uth.gr}
	\address{\normalfont $^b$Department of Mathematics, University of Kansas, Lawrence, KS 66045, USA\vspace*{-2mm}} \email{\!mantzavinos@ku.edu}
	\address{\normalfont $^c$Grupo de F\'{i}sica No Lineal, Departamento de F\'{i}sica Aplicada I,
	Universidad de Sevilla. Escuela Polit\'{e}cnica Superior, C/ Virgen de \'{A}frica, 7, 41011-Sevilla, Spain \\[0mm]
	Instituto de Matem\'{a}ticas de la Universidad de Sevilla (IMUS). Edificio Celestino Mutis. Avda. Reina Mercedes s/n, 41012-Sevilla, Spain \vspace*{-2mm}}
\email{\!jcuevas@us.es}
	\address{\normalfont $^d$Department of Mathematics, National and Kapodistrian University of Athens, Panepistimioupolis,
	GR-15784 Athens, Greece}
 \email{\!istratis@math.uoa.gr}

\thanks{\textit{Acknowledgements.} D.M. gratefully acknowledges support from the U.S. National Science Foundation (NSF-DMS 2206270). J.C.-M. acknowledges support from the EU (FEDER program 2014-2020) through MCIN/AEI/10.13039/501100011033 under the project PID2020-112620GB-I00.}
\subjclass[2020]{35Q55, 37K40, 35B35.}
\keywords{integrable focusing cubic NLS, 
non-integrable focusing NLS, power and saturable nonlinearities, 
structural stability, 
proximity estimates,
zero and nonzero boundary conditions,
modulational instability,
local existence, soliton dynamics}
\date{July 28, 2023}

\begin{abstract}
The question of whether features and behaviors that are characteristic to completely integrable systems persist in the transition to non-integrable settings is a central one in the field of nonlinear dispersive equations. In this work, we investigate this topic in the context of focusing nonlinear Schr\"odinger (NLS) equations. In particular, we consider non-integrable counterparts of the (integrable) focusing cubic NLS equation, which are distinct generalizations of cubic NLS and involve a broad class of nonlinearities, with the cases of power and saturable nonlinearities serving as illustrative examples.
This is a notably different direction from the one explored in other works, where the non-integrable models considered are only small perturbations of the integrable one. 
We study the Cauchy problem on the real line for both  vanishing and non-vanishing boundary conditions at infinity and quantify the proximity of solutions between the integrable and non-integrable models via estimates in appropriate metrics as well as pointwise. 
These results establish that the distance of solutions grows at most linearly with respect to time, while the growth rate of each solution is chiefly controlled by the size of the initial data and  the nonlinearity parameters. 
A major implication of these closeness estimates is that integrable dynamics emerging from small initial conditions may persist in the non-integrable setting for significantly long times. 
In the case of zero boundary conditions at infinity, this persistence includes soliton and soliton collision dynamics, while in the case of nonzero boundary conditions at infinity, it establishes the nonlinear behavior of the non-integrable models at the early stages of the ubiquitous phenomenon of modulational instability. For this latter and more challenging type of boundary conditions, the closeness estimates are proved with the aid of new results concerning the local existence of solutions to the non-integrable models.  
In addition to the infinite line, we also consider the cubic NLS equation and its non-integrable generalizations in the context of initial-boundary value problems on a finite interval. Apart from their own independent interest and features such as global existence of solutions (which does not occur in the infinite domain setting), such problems are naturally used to numerically simulate the Cauchy problem on the real line, thereby justifying the excellent agreement between the numerical findings and the theoretical results of this work. 
\end{abstract}

\vspace*{-0.5cm}
\maketitle
\markboth
{D. Hennig, N.I. Karachalios, D. Mantzavinos, J. Cuevas-Maraver, I.G. Stratis}
{On the proximity between the wave dynamics of integrable focusing NLS and its non-integrable generalizations}

%
%
%
\section{Introduction and main results}
One of the most important notions concerning dynamical systems is that of {\em structural stability}, e.g. see \cite{book2}, Section 3.7. Consider as an example the Cauchy (initial value) problem for the following semilinear  evolution equation 
\begin{equation}
\label{gev1}
\begin{aligned}
&u_t+\mathcal{L}(u)+\mathcal{N}(u)=0,
\\
&u(x, 0)=u_0(x),
\end{aligned}
\end{equation}
along with its perturbed counterpart
\begin{equation}
\label{gev2}
\begin{aligned}
&u_t+\mathcal{L}(u)+\mathcal{N}(u)+f(u,x,t)=0,
\\
&u(x, 0)=u_0(x)+p_0(x).
\end{aligned}
\end{equation}
Here, $u=u(x, t)$, $\mathcal{L}$ is a linear differential operator and $\mathcal{N}$ represents the nonlinearity. Furthermore, the perturbative term in \eqref{gev2} may represent external forces, dissipation terms or other effects, while $p_0(x)$ is a perturbation of the initial condition $u_0(x)$ of \eqref{gev1}.

The solutions of \eqref{gev1} are defined in a suitable phase space $\mathcal{X}$. 
In the context of the structural stability theory for \eqref{gev1}, the  perturbations $f(u,x,t)$ and $p_0(x)$ are small when measured in some suitable norms relevant to  $\mathcal{X}$. Then, the main question is whether the solution of the perturbed system~\eqref{gev2} deviates too far from the solution of the original system \eqref{gev1} or not.  As underlined in \cite{book2}, if the evolution equation~\eqref{gev1} ``is very unstable, then one could doubt on its ability to accurately simulate (either numerically or theoretically) a real-life system''.  On the other hand, it is also emphasized in \cite{book2} that  ``any equation with a good well-posedness theory is also likely to have a good stability theory, by modifying the arguments used to prove well-posedness suitably''. In particular,  stability results for~\eqref{gev1} generalize the property of \textit{continuous dependence} of the solution on the initial data (one of the three components of Hadamard well-posedness, the other two being existence and uniqueness), which can be obtained as a reduction of such general results in the special case of \eqref{gev2} with $f = 0$.
Similarly, the uniqueness theory of \eqref{gev1} can be approached through the special case $f=p_0=0$. 

The above statements on stability theory emphasize the importance of this direction of study and motivate us to examine, in the present paper, a modified notion of stability for an evolution equation of the form \eqref{gev1} in the framework of Hamiltonian  nonlinear Schr\"{o}dinger  equations in one spatial dimension.  The role of the semilinear equation in \eqref{gev1} is assigned to the  \textit{focusing cubic} NLS equation
\begin{equation}\label{NLS}
i u_t + \nu u_{xx}+\mu |u|^{2}u=0, \quad \nu, \mu > 0,
\end{equation}
which is one of the fundamental  \textit{integrable nonlinear} dispersive partial differential equations with numerous applications in a broad range of areas within mathematical physics. 
Furthermore, instead of considering a ``forced'' counterpart of \eqref{NLS} as in  \eqref{gev2}, we consider focusing {\em non-integrable counterparts} of \eqref{NLS} of the general form
\begin{equation}\label{noninNLS}
iU_t+\nu\,U_{xx} + \gamma F(|U|^2)\,U=0, \quad {\nu, \gamma >0},
\end{equation}
where $F:\mathbb R\rightarrow\mathbb R$ is a sufficiently smooth function satisfying  standard conditions that are specified later. 
Therefore, instead of small perturbations of \eqref{NLS}, our investigation in the framework of stability is of different nature, as our primary goal is to compare systems of the \textit{same class} \eqref{gev1} but with \textit{different nonlinearities}: namely, the integrable NLS \eqref{NLS} with nonlinearity $\mathcal{N}_{\small{I}}(u)=\mu|u|^2u$ against its non-integrable counterparts \eqref{noninNLS} with nonlinearities $\mathcal{N}_{\small{\text{NI}}}(U)=\gamma F(|U|^2)U$.

Actually,  this stability analysis is directly relevant to the proximity between the solutions of the completely integrable NLS equation \eqref{NLS} and its non-integrable counterparts \eqref{noninNLS}.  In particular, it sheds light on the potential \textit{persistence of integrable dynamics}, as these are defined by the solutions and dynamical behavior of the integrable NLS equation \eqref{NLS}, in the setting of the general family of non-integrable NLS equations \eqref{noninNLS}.    
Studies of this generic question in the context of nonlinear lattices (discrete NLS and Ginzburg-Landau equations) are given  in \cite{DJNa,DJNb,DJNc}.
Nevertheless, the investigation of this question in the context of the NLS partial differential equations is a considerably more intricate task due to the following reasons:
\begin{enumerate}[label=(\arabic*), leftmargin=6.5mm, topsep=2mm, itemsep=2mm]
\item
\textit{Differences between the well-posedness of the integrable and the non-integrable NLS equations}. The cubic NLS equation \eqref{NLS} is \textit{globally} well-posed in all cases of significant boundary conditions. This is not generally the case for the non-integrable equations \eqref{noninNLS}, for which global existence of solutions in time can be assured under smallness conditions for the initial data in suitable norms. Therefore, the investigation of stability may be carried out under either (i) the assumption of a restricted class of small initial data, in order to ensure global existence in time for the non-integrable model, or (ii) the restriction to finite times dictated by the maximal interval of existence of the non-integrable model. 
\item
\textit{Different boundary conditions  are associated with different dynamics.} The integrable NLS equation~\eqref{NLS} exhibits a rich class of analytical solutions depending on the boundary conditions with which it is supplemented. The question of  persistence of these analytical solutions in the dynamics of the non-integrable NLS  instigates a variety of further studies in terms of  potential proximity or deviation of solutions of the integrable and non-integrable NLS, respectively.
\end{enumerate}

On account of the above considerations, the main result of this work is quite general and  can be stated as follows.  
\begin{theorem}
\label{Theorem:stability}
Consider the integrable NLS equation \eqref{NLS} and the family of its non-integrable counterparts \eqref{noninNLS}, where the nonlinearity function $F$ satisfies the standard conditions
\begin{equation}\label{F-prop}
|F(x)-F(y)| \leq K(|x|^{p-1}+|y|^{p-1})|x-y|, 
\quad F(0)=0, \quad |F'(x)| \leq K |x|^{p-1},
\end{equation} 
for some $p\geq 1$, a constant $K>0$ and any $x,y \geq 0$. 
Let $[0, T_{\max})$ define a common maximal interval of existence for the solutions $u(x, t)$ and $U(x, t)$ to \eqref{NLS} and \eqref{noninNLS}, respectively, associated with  initial data $u_0(x)$ and $U_0(x)$ in a suitable Banach space $\mathcal{X}$ governed by the supplemented common boundary conditions for both equations. Furthermore, assume that  for every  $0<\varepsilon<1$ the distance in $\mathcal{X}$ between these initial data is of order $\mathcal{O}(\varepsilon^3)$ while  the individual norms of these data in $\mathcal{X}$ are of order $\mathcal{O}(\varepsilon)$, i.e.
\begin{gather}
\label{eq:distance0}
\left\| u_0 - U_0 \right\|_{\mathcal{X}}\leq C \varepsilon^3,
\\
\label{eq:distance01}
\left\| u_0 \right\|_{\mathcal{X}}\leq C \varepsilon,
\   
\left\| U_0 \right\|_{\mathcal{X}}\leq C \varepsilon,
\end{gather}
for some generic  constant $C>0$. 	
Then, for arbitrary finite $0<T_{\small{f}}<T_{\max}$, there exists a constant $\widehat{C}=\widehat C(\gamma,\mu, \nu, C,T_{\small{f}})$ such that the corresponding solutions $u(x, t)$ and $U(x, t)$ satisfy the estimate 	
\begin{equation}\label{eq:boundy}
\sup_{t\in [0, T_f]}\left\|u(t)-U(t)\right\|_{\mathcal{X}}\le \widehat{C} \varepsilon^3.
\end{equation}
That is, under the assumptions \eqref{eq:distance0} and \eqref{eq:distance01}, the distance of solutions $\Delta:=u-U$ measured in $\mathcal{X}$ is of order $\mathcal{O}(\varepsilon^3)$ for all $t\in [0, T_f]$.
\end{theorem}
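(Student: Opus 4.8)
The plan is to set up an integral-equation (Duhamel) formulation for both $u$ and $U$ using the common linear Schr\"odinger propagator associated with the operator $\nu\partial_{xx}$, and then to estimate the difference $\Delta=u-U$ directly. Writing $S(t)$ for the linear group, we have $u(t)=S(t)u_0-i\mu\int_0^tS(t-s)(|u|^2u)(s)\,ds$ and similarly $U(t)=S(t)U_0-i\gamma\int_0^tS(t-s)(F(|U|^2)U)(s)\,ds$. Subtracting gives
\begin{equation}\label{eq:duhamel-diff}
\Delta(t)=S(t)(u_0-U_0)-i\mu\int_0^tS(t-s)\big(|u|^2u-|U|^2U\big)(s)\,ds-i(\mu-\gamma)\int_0^tS(t-s)\big(|U|^2U\big)(s)\,ds-i\gamma\int_0^tS(t-s)\Big(\big(|U|^2-F(|U|^2)\big)U\Big)(s)\,ds.
\end{equation}
The first term is bounded by $\|u_0-U_0\|_{\mathcal X}\le C\varepsilon^3$ using unitarity (or the appropriate $\mathcal X$-bound) of $S(t)$. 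The cubic difference $|u|^2u-|U|^2U$ is Lipschitz in $\Delta$ with a coefficient controlled by $\|u\|_{\mathcal X}^2+\|U\|_{\mathcal X}^2$, which produces the Gr\"onwall mechanism. The genuinely new contributions are the last two terms, which measure the mismatch between the two nonlinearities; these do not vanish and must be shown to be $\mathcal O(\varepsilon^3)$ on their own.

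First I would record the a priori bounds that make the scheme work: since $u_0,U_0$ are $\mathcal O(\varepsilon)$ in $\mathcal X$, on any fixed $[0,T_f]\subset[0,T_{\max})$ the solutions remain $\mathcal O(\varepsilon)$ in $\mathcal X$ — this is where one invokes the well-posedness theory (local existence with continuous dependence, together with the fact that $T_f<T_{\max}$) to get $\sup_{[0,T_f]}\|u(t)\|_{\mathcal X}\le C_1\varepsilon$ and likewise for $U$, with $C_1$ depending on the data and $T_f$. Next I would estimate the three integral terms in \eqref{eq:duhamel-diff}. For the cubic-difference term, the algebra/Strichartz estimates in $\mathcal X$ give
\begin{equation}\label{eq:cubic-lip}
\Big\|\int_0^tS(t-s)\big(|u|^2u-|U|^2U\big)ds\Big\|_{\mathcal X}\le C_2\int_0^t\big(\|u(s)\|_{\mathcal X}^2+\|U(s)\|_{\mathcal X}^2\big)\|\Delta(s)\|_{\mathcal X}\,ds\le C_2 C_1^2\varepsilon^2\int_0^t\|\Delta(s)\|_{\mathcal X}\,ds.
\end{equation}
For the term with coefficient $\mu-\gamma$: under the normalization chosen in the paper (so that the leading cubic coefficients of the two models agree, i.e. effectively $\gamma F'(0)=\mu$, or $\mu-\gamma$ is itself treated as small), this term is either absent or is bounded by $|\mu-\gamma|\,T_f\sup_{[0,T_f]}\|U\|_{\mathcal X}^3\le C|\mu-\gamma|T_f\varepsilon^3$. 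For the last term I would use the hypothesis \eqref{F-prop}: writing $G(r):=r-F(r)$ (after the normalization the linear-in-$r$ parts cancel), the first two conditions in \eqref{F-prop} force $G$ to vanish to order $r^2$ near $0$, so $|G(|U|^2)|\le K'|U|^4$ on the range of values attained, hence
\begin{equation}\label{eq:remainder}
\Big\|\int_0^tS(t-s)\big(G(|U|^2)U\big)ds\Big\|_{\mathcal X}\le C_3\,T_f\sup_{[0,T_f]}\|U(s)\|_{\mathcal X}^5\le C_3 T_f C_1^5\varepsilon^5\le C_4\varepsilon^3.
\end{equation}
(If instead $p=1$ is the relevant case so that $F$ is only globally Lipschitz-comparable to the cubic but the remainder is genuinely $\mathcal O(\varepsilon^3)$ rather than $\mathcal O(\varepsilon^5)$, the same bound holds with the weaker power; either way the mismatch terms are $\mathcal O(\varepsilon^3)$.)

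Combining the three estimates in \eqref{eq:duhamel-diff} yields, for all $t\in[0,T_f]$,
\begin{equation}\label{eq:gronwall-setup}
\|\Delta(t)\|_{\mathcal X}\le C\varepsilon^3+C_4\varepsilon^3+C_2C_1^2\varepsilon^2\int_0^t\|\Delta(s)\|_{\mathcal X}\,ds,
\end{equation}
and Gr\"onwall's inequality then gives $\|\Delta(t)\|_{\mathcal X}\le (C+C_4)\varepsilon^3\exp(C_2C_1^2\varepsilon^2 T_f)\le \widehat C\varepsilon^3$, with $\widehat C=\widehat C(\gamma,\mu,\nu,C,T_f)$ since $\varepsilon<1$ makes the exponential bounded uniformly in $\varepsilon$. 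This is exactly \eqref{eq:boundy}. The main obstacle — and the reason the theorem is stated at this level of generality with the choice of space $\mathcal X$ left flexible — is obtaining the a priori $\mathcal O(\varepsilon)$ control of $\|u\|_{\mathcal X}$ and $\|U\|_{\mathcal X}$ on $[0,T_f]$ and the nonlinear estimates in \eqref{eq:cubic-lip}–\eqref{eq:remainder} in a norm that is simultaneously (i) strong enough to close the contraction/Gr\"onwall argument for the non-integrable model and (ii) the natural norm dictated by the prescribed boundary conditions. For zero boundary conditions this is the standard $H^1$ (or $H^s$) Strichartz machinery; for the nonzero-boundary-condition case one cannot use those directly and must instead rely on the bespoke local existence theory the paper develops for \eqref{noninNLS} with that class of data, writing $U=U_{\mathrm{bg}}+v$ for a suitable background and running the estimates on the perturbation $v$. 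The rest is the routine Duhamel–Gr\"onwall bookkeeping sketched above.
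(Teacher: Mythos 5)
Your Duhamel-plus-Gr\"onwall scheme does reach the stated bound, but it is genuinely different from the paper's argument, and one of its intermediate claims is wrong as stated. The paper never extracts a Lipschitz factor of $\Delta$ from the nonlinear terms and never invokes Gr\"onwall: it writes the equation for $\Delta$, applies the Fourier transform (Plancherel, Minkowski, unitarity of the free group), and then bounds the full nonlinearity $N=-\mu|u|^2u+\gamma F(|U|^2)U$ by estimating the two pieces \emph{separately}, $\|N\|_{L^2}\lesssim \mu\|u\|_{H^1}^3+\gamma K\|U\|_{H^1}^{2p+1}$, each of which is $\mathcal O(\varepsilon^3)$ thanks to the smallness hypothesis \eqref{eq:distance01} together with the uniform-in-time bound \eqref{boundg} (zero boundary conditions) or the continuity argument built on the new local existence theorem (nonzero boundary conditions, where the closeness is only obtained up to a time $T_c\le T_f$ and for the phase-rotated difference). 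This direct estimate is what produces the at-most-linear growth $\widehat C\sim C\,T_f$ emphasized in \eqref{gr1-intro}; your route instead yields a constant of the form $e^{C\varepsilon^2T_f}$, which is acceptable for the theorem as stated (since $\varepsilon<1$) but loses that refinement, and it additionally requires the algebra-type estimate $\||u|^2u-|U|^2U\|_{\mathcal X}\lesssim(\|u\|^2+\|U\|^2)\|\Delta\|_{\mathcal X}$, which the paper's argument never needs.

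The concrete flaw is in your treatment of the mismatch term $\gamma\bigl(|U|^2-F(|U|^2)\bigr)U$. There is no normalization in the paper of the form $\gamma F'(0)=\mu$, and the hypotheses \eqref{F-prop} do \emph{not} force $G(r):=r-F(r)$ to vanish to second order at $r=0$: they only give $|F(r)|\le K r^{p}$, so for the power nonlinearity with $p\ge 2$ one has $F'(0)=0$ and $G(r)\sim r$, while for the saturable case $F'(0)=1/\kappa$ bears no assumed relation to $\mu/\gamma$. Your parenthetical fallback rescues the conclusion --- since $\|U\|_{\mathcal X}=\mathcal O(\varepsilon)$ each of $\||U|^2U\|$ and $\|F(|U|^2)U\|\lesssim K\|U\|^{2p+1}$ is separately $\mathcal O(\varepsilon^3)$ --- but once you argue this way the three-term decomposition and the Gr\"onwall step become superfluous, and you have essentially reproduced the paper's direct estimate. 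Finally, the a priori propagation $\sup_{[0,T_f]}\|u(t)\|_{\mathcal X},\sup_{[0,T_f]}\|U(t)\|_{\mathcal X}=\mathcal O(\varepsilon)$ is not a consequence of ``$T_f<T_{\max}$ plus continuous dependence'': it is exactly the nontrivial input, supplied in the paper by the small-data global bound \eqref{boundg} for zero boundary conditions and by Theorem \ref{nzbc-lwp-t} plus a continuity-in-time argument for nonzero boundary conditions; you correctly flag this as the main obstacle but do not supply it.
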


In view of the condition \eqref{eq:distance0}, the distance inequality \eqref{eq:boundy} provides a generalization of the notion of continuous dependence of solutions on the initial data for the  equations \eqref{NLS} and \eqref{noninNLS}, at least for small initial data in the sense of the condition \eqref{eq:distance01}. 

It should be noted that the class of nonlinearities satisfying the standard conditions~\eqref{F-prop} is  quite broad and includes, among others, the following important cases that correspond to non-integrable NLS models: 
\begin{enumerate}[label=(\roman*), leftmargin=6mm,itemsep=2mm]
\item The general \textit{power nonlinearity} $F(x) = x^p$, which gives rise to the \textit{semilinear Schr\"odinger equation} 
\begin{equation}\label{NLSP}
	i U_t + \nu U_{xx}+\gamma|U|^{2p}U=0.
\end{equation}
Indeed, by the Mean Value Theorem, $x^p-y^p = pc^{p-1} (x-y)$ for some $c\in(x, y)$, thus for any $p\geq 1$ and $x, y \in \mathbb R$ we have
$|x^p-y^p| = p |c|^{p-1} |x-y| \leq p \left(|x|^{p-1} + |y|^{p-1}\right) |x-y|$ 
and the conditions \eqref{F-prop} are satisfied with $K=p$.
\item The rational nonlinearity $F(x) = \dfrac{x}{\kappa(1+x)}$, which corresponds to what is known as the \textit{saturable NLS equation} (e.g. see \cite{gh1991,Satn1,BorisSat})
\begin{equation}\label{Sat2}
iU_t + \nu U_{xx}+\frac{\gamma|U|^2 U}{1+\kappa|U|^2}=0.
\end{equation}	
In this case, for any $x, y \geq 0$ we have $\left|F'(x)\right| = \frac{1}{|\kappa| (1+x)^2} \leq \frac{1}{|\kappa|}$ and  
$
\left|F(x) - F(y)\right|
=
\frac{|x-y|}{|\kappa| (1+x) (1+y)} 
\leq
\frac{1}{|\kappa|} \, |x-y|,
$
thus the conditions \eqref{F-prop} are satisfied with $p=1$ and $K=\frac{1}{|\kappa|}$.
Note that the invertible transformation  $\widetilde U(x, t) = \sqrt \kappa \, e^{-i\Gamma t} \, U(x, t)$  turns equation \eqref{Sat2} into the alternative form
\begin{equation}\label{Sat1}
i \widetilde U_t + \nu \widetilde U_{xx}-\frac{\Gamma \widetilde U}{1+|\widetilde U|^2}=0, \quad \Gamma=\frac{\gamma}{\kappa}.
\end{equation}
\end{enumerate}

Concerning the plethora of analytical solutions to the integrable NLS equation \eqref{NLS}, such as solitons, multi-soliton solutions, bound states, rational solutions and others  \cite{ACbook,MSJMP,JYBook}, an important implication of the stability result of Theorem \ref{Theorem:stability} is  that \textit{small-amplitude localized analytical solutions of the integrable equation \eqref{NLS} persist, in the sense of the distance inequality~\eqref{eq:boundy}, in the non-integrable setting of equation \eqref{noninNLS}}.  
More precisely, the non-integrable equation \eqref{noninNLS} admits small-amplitude solutions of ${\mathcal{O}}(\varepsilon)$ that stay ${\mathcal{O}}(\varepsilon^3)$ close to the well-known analytical solutions of the integrable equation~\eqref{NLS} for any  $t\in [0, T_f]$ (note that one can impose the same initial condition $u(x, 0)=U(x, 0)$ on both equations). This claim is further justified by the fact that, as it turns out, estimate \eqref{eq:boundy} is also valid pointwise since it holds in $L^\infty$.

A case of specific interest is the one of $T_{\max}=\infty$. In this regard, we recall that, in the focusing case $\mu, \nu, \gamma > 0$ with zero boundary conditions at infinity,  global existence for the non-integrable model \eqref{noninNLS} is guaranteed, in general, only for small initial data. Hence, the conditions \eqref{eq:distance01} become particularly relevant. 
Such conditions ensure, for example, the persistence of small-amplitude bright solitons of the integrable NLS equation \eqref{NLS} in the non-integrable setting of equation \eqref{noninNLS} for time intervals that can be significantly long, as well as for even more complicated dynamics like bright soliton collisions. 
Such long time persistence is illustrated by numerical experiments at the end of Section~\ref{zbc-s}. 

Another application of Theorem~\ref{Theorem:stability} concerns the robustness of the persistence of analytical solutions under perturbations in the following sense: solutions that are stable in one system remain stable in the other system as long as they persist in its dynamics. This property can be proved via a transitivity argument combining the proximity estimates of Theorem \ref{Theorem:stability} and orbital stability results for the considered solutions. The discussion leading to Corollary \ref{cazstab-c} below concerns standing waves. For the stability of standing waves, solitons and multi-soliton solutions, we refer the reader to the fundamental results in \cite{CazLi,G1,G2,Martel1,Martel2,Kapit1,Wein86}. 

Importantly, the applicability of Theorem \ref{Theorem:stability} extends beyond the framework of zero boundary conditions at infinity, as the analysis carried out in this work also concerns a broad class of \textit{nonzero  boundary conditions at infinity} in the general form of vanishing profiles on top of a finite nonzero background of constant amplitude. Boundary conditions of this type are of  particular physical relevance. Indeed, in focusing media, nonzero boundary conditions at infinity are associated with the emergence of fascinating dynamics related to the well-known phenomenon of \textit{modulational instability}. This effect, which is also known as Benjamin-Feir instability \cite{bf1967}, refers to the instability of a constant background to long wavelength perturbations and is ubiquitous across nonlinear science; e.g. see the review article~\cite{zo2009}, as well as the more recent works \cite{kffmdgad2010,srkj2007,oos2006} that link modulational instability to the formation of rogue waves in optical media and the open sea.

In the case of the integrable focusing cubic NLS equation \eqref{NLS}, the nonlinear stage of modulational instability induced by the nonzero boundary conditions specified above, was studied in a series of recent works \cite{bm2016,bm2017,blm2018,blm2021} via the inverse scattering transform and the nonlinear steepest descent method of \cite{SDMeth}. In these works, it was rigorously shown that the solution remains bounded at all times, as one would naturally expect due to the complete integrability of the equation \eqref{NLS}.
Nevertheless, in the case of the general non-integrable NLS model \eqref{noninNLS} such integrability techniques are no longer available. In fact, in terms of global existence and regularity of solutions to the non-integrable equation~\eqref{noninNLS}, the corresponding results for nonzero boundary conditions at infinity are markedly different, e.g. see~\cite{book0}  as well as \cite{book1, book2}. 
These observations motivated the study \cite{blmt2018}, which suggests the existence of a universal behavior in modulationally unstable media. In fact, by considering several non-integrable models that belong to the general NLS family \eqref{noninNLS}, evidence is provided that they exhibit the same behavior as the one of the integrable NLS equation \eqref{NLS} established in \cite{bm2016,bm2017,blm2018,blm2021}.

 The stability result of Theorem \ref{Theorem:stability} proved in this work rigorously establishes the persistence of the  nonlinear behavior that was conjectured in \cite{blmt2018}, at least at its early stages. For later times, the proximity between the dynamics depends on the type of the nonlinearity present in the non-integrable model, as illustrated by the  numerical simulations provided at the end of Section \ref{finite-s}. This fact is highlighted by the example of a saturable nonlinearity, where the dynamics exhibits a remarkable proximity to the one of the integrable model (as in the case of zero boundary conditions). Furthermore, the numerical simulations illustrate that the smallness condition \eqref{eq:distance01} for the initial data is by no means restrictive. In particular, they demonstrate that the modulational instability dynamics  emerges from initial data that do satisfy the smallness condition in the non-integrable case. 
 
We emphasize that the numerical findings of Section \ref{finite-s} are predicted by our theoretical results, since Theorem \ref{Theorem:stability} is also proved when the NLS equations \eqref{NLS} and \eqref{noninNLS} are considered on a finite interval and supplemented with (zero or nonzero) Dirichlet or periodic boundary conditions.  Problems of this type fall under the class of \textit{initial-boundary value problems}, in which the spatial domain involves an actual boundary (as opposed to extending to infinity in all directions). Such problems are significant in their own right and have been studied extensively in the literature (e.g. see \cite{f2008,fhm2017} and the references therein). However, they are also directly relevant to numerical studies related to the Cauchy problem, since such studies are performed by approximating the infinite domain by a sufficiently large finite domain, supplemented with appropriate Dirichlet or periodic boundary conditions. (Some additional theoretical implications related to the question of proximity of solutions in the case of approximation by a finite domain are discussed at the end of this introductory section.)
Indeed, the numerical results of Section \ref{finite-s} are in excellent agreement with the analytical estimate \eqref{eq:boundy}. For instance, both for Dirichlet and for periodic boundary conditions,
the analytical arguments 
establish that the growth of the distance $\|u(t)-U(t)\|_{\mathcal{X}}$ is  \textit{at most linear} for any $t\in (0, T_{\max})$, since
\begin{equation}
\label{gr1-intro}
\|u(t)-U(t)\|_{\mathcal{X}}\leq C  t \varepsilon^3
\end{equation}
with $C$ depending only on the nonlinearity parameters $\gamma,\mu$  but not on $t$ (actually, the constant $\widehat{C}$ in~\eqref{eq:boundy} is given by $\widehat{C}=CT_f$).  We emphasize that the proof provides explicit expressions for the dependence of the constant $C$ on the various parameters, suggesting that these expressions can be  adjusted to decrease the linear growth rate as desired.   Regarding the importance of estimate \eqref{gr1-intro},  we also remark that, in a similar context,  the time-growth estimates for the relevant distance function between the solutions of the complex Ginzburg-Landau and the NLS equations, when the inviscid limit of the former is considered~\cite{JWU}, can even be \textit{exponential} in time \cite{OG}.

To the best of our knowledge, among the vast volume of works on NLS-type equations, the one most relevant to the present paper is \cite{PDActa}. However, the results of \cite{PDActa} concern the  \textit{defocusing} (as opposed to focusing) NLS equation
\begin{equation}
    \label{defocU}
    i v_t + v_{xx} -2|v|^{2}v=0
\end{equation}
supplemented with \textit{zero} boundary conditions. More precisely, in \cite{PDActa} the authors consider non-integrable perturbations of \eqref{defocU} in the form
\begin{equation}
    \label{defocP}
i V_t + V_{xx} -2|V|^{2}V-\epsilon |V|^p V=0, \quad p>2, \ \epsilon>0,
\end{equation}
which correspond to $\mu=1$,  $\nu=-2$ in \eqref{NLS} and preserve the defocusing nature of \eqref{defocU}.
The main result of \cite{PDActa} can be outlined as follows: for sufficiently smooth   initial data that decay at an appropriate rate (described by a suitably defined weighted Sobolev space) and small $\epsilon>0$, the solutions of \eqref{defocP}  approach  those of \eqref{defocU} as $t\rightarrow\infty$ in the sense of the estimate 
\begin{equation}
    \label{rPDActa}
\|v(t)-V(t)\|_{L^{\infty}(\mathbb{R})}=\mathcal{O}\left(\frac{1}{t^{1/2+\kappa}}\right), \quad \kappa>0. 
\end{equation}
This result is proved by combining the inverse scattering transform method (and, in particular, by studying the behavior of the reflection coefficient that emerges in the integrable case \eqref{defocU}) with  detailed estimates in appropriate Sobolev norms derived for the perturbed (non-integrable) model~\eqref{defocP}.  

The model \eqref{defocP} considered in \cite{PDActa} falls under the general perturbative framework   \eqref{gev2}. 
On the other hand, as noted earlier, the results presented herein are of different nature, since now the integrable model \eqref{NLS} is compared against  its \textit{distinct} non-integrable counterparts \eqref{noninNLS} which, unlike \eqref{defocP}, \textit{cannot} be treated as weak perturbations. 
Another major difference between \cite{PDActa} and the present work stems from the fact that, for zero boundary conditions, while the solutions of the defocusing models \eqref{defocU} and \eqref{defocP} are known to exist \textit{globally} in time for all initial data and to decay for a certain range of nonlinearity exponents \cite{reg1} even in higher than one spatial dimensions \cite{reg2,reg3}, the solutions of the non-integrable focusing NLS equation \eqref{noninNLS} are globally defined only for small initial data (see Theorem \ref{zbc-wp-t} below). 
Thus, our proximity estimates between the solutions of two essentially different systems, namely \eqref{NLS} and \eqref{noninNLS}, concern in general finite time intervals in the spirit of the continuous dependence of solutions on the associated small initial data, and do not explore the long-time asymptotic behavior considered in \cite{PDActa}.
In addition, here we also consider the important case of \textit{nonzero} boundary conditions at infinity, which was not investigated in \cite{PDActa}.

For the proof of Theorem \ref{Theorem:stability}, we employ the Fourier transform for the evolution equation satisfied by the difference of solutions $\Delta(t)=u(t)-U(t)$. We also remark on an alternative approach through energy estimates and interpolation inequalities. This second approach is applicable in all the cases of boundary conditions, albeit with  distinct implications for each specific case. For the case of zero boundary conditions, we take advantage of the global existence results for the non-integrable NLS equation in the case of small initial data and also of the regularity of solutions when this data belong to a suitable class. 

The case of  nonzero boundary conditions at infinity turns out to be more challenging since, even for small initial data, it is not known in general whether the non-integrable focusing NLS equation is globally well-posed \cite{Def4b,Def4}. Indeed, global existence is only guaranteed in the defocusing case~\cite{Def1,Def2,Def3,Defoc}. 
On the real line, we prove  local existence in $H^1(\mathbb{R})$ and then use a continuity argument for small data in the sense of \eqref{eq:distance01} in order to establish a closeness estimate between the solutions of the integrable and non-integrable systems, at least for a short time period. 

Furthermore, motivated by the numerical results of \cite{blmt2018}, which indicate the universality of the modulational instability dynamics beyond the integrability barrier, we analytically study the finite interval problem for equations \eqref{NLS} and \eqref{noninNLS} supplemented with the relevant nonzero Dirichlet boundary conditions. This finite domain problem is different from the problem on the real line in that, although both problems possess a conservation law for the $L^2$ norm that involves the amplitude $q_0$ of the wave background, global existence at the $L^2$ level can be deduced from that law only in the case of the finite domain. 
Hence, up to the critical nonlinearity $p=2$, we prove (see Theorem \ref{gex} below) that the solutions to the non-integrable NLS equation \eqref{noninNLS} on the finite interval $(-L, L)$ exist globally in time for appropriate smallness conditions on $L$  and the initial data (in the subcritical case $1\leq p < 2$, the latter condition is not necessary).  It should be noted that the upper bound on $L$ tends to infinity as the amplitude $q_0$ of the nonzero background tends to zero, i.e when the problem approaches the one with zero boundary conditions. Moreover, the proximity estimates of Theorem~\ref{Theorem:stability} under the conditions \eqref{eq:distance0} and for $q_0=\mathcal{O}(\varepsilon)$ are valid for $L=\mathcal{O}\left(1/\varepsilon\right)$. 
Therefore,  the accuracy of the closeness estimate between the solutions of the integrable and non-integrable NLS equations improves when the relevant norms are evaluated over the interval $(-1/\ve, 1/\ve)$ around the core of the respective modulational instability pattern. This fact is further discussed in Section \ref{finite-s}, where it is also illustrated numerically by the simulations of Figures \ref{fig7} and \ref{fig9}.

Finally, we comment on the case of   periodic boundary conditions, which can also be used for approximating the problem on the real line supplemented with zero or nonzero boundary conditions. The proofs in the periodic case are similar to the ones in the finite interval case. Note, in particular, that the assumptions for global existence of solutions to the non-integrable model are also similar, as the main conservation laws for the energy and power are the standard ones. Regarding numerical simulations, however, if the periodic problem is used in order to approximate the infinite line problem with nonzero boundary conditions by taking the parameter $L$ to be large, then a complication arises: the periodic boundary conditions define problems with finite energies in Sobolev spaces of periodic functions~\cite{Def4c}, which is not the case for the problem on the real line. Thus, finite domain approximations may not capture essential effects and implications associated with the infinite domain problem. This issue is especially highlighted by  the results of the present paper through the contrast between the local existence for the problem on the real line and the global existence for the finite domain approximation with  nonzero Dirichlet boundary conditions.
\\[3mm]
\noindent
\textbf{Structure of the paper.} In  Section \ref{zbc-s}, we prove Theorem \ref{Theorem:stability} in the case of vanishing boundary conditions, namely we establish Theorem \ref{TH1}. In addition, we present numerical studies for the concrete examples of bright solitons and soliton collisions.  In Section \ref{nzbc-s}, we establish the version of Theorem~\ref{Theorem:stability} associated with the case of non-vanishing boundary conditions emerging from a constant nonzero background, namely Theorem \ref{nzbc-t}. A key result, which is used for the derivation of the proximity estimates, is the proof of Theorem \ref{nzbc-lwp-t} for the local existence of solutions to the non-integrable NLS models in the class  $C([0,T],H^1(\mathbb{R}))$. In Section \ref{finite-s}, we turn out attention to the case of a finite domain. First, we prove the version of Theorem \ref{Theorem:stability} for the problem on a finite interval with nonzero Dirichlet boundary conditions, namely Theorem \ref{gex2}, then we establish the corresponding result (Theorem \ref{THpbcfinite}) for periodic boundary conditions, and finally we conclude with a numerical study simulating the problem with  nonzero boundary conditions for a variety of localized initial data on the top of a finite background.

%
%
%
\section{Zero boundary conditions at infinity}
\label{zbc-s}
In this section, we establish the version of Theorem \ref{Theorem:stability} that corresponds to the scenario in which both the integrable NLS equation \eqref{NLS} and its non-integrable counterpart \eqref{noninNLS} satisfy zero boundary conditions at infinity, namely
\begin{equation}\label{vbc}
	\lim_{|x|\rightarrow\infty}u(x,t)=\lim_{|x|\rightarrow\infty}U(x,t)=0.
\end{equation}
We begin by recalling the following well-known global existence and regularity results for the general non-integrable NLS equation \eqref{noninNLS}.
\begin{theorem}[Existing results on well-posedness]
	\label{zbc-wp-t}
Consider the Cauchy problem on the line for the non-integrable NLS equation \eqref{noninNLS} with zero conditions at infinity.
\begin{enumerate}[label=\textnormal{(\roman*)}, leftmargin=8mm, topsep=2mm, itemsep=1mm]
\item In the focusing case, if $1 \leq p<2$ then for any initial datum $U_0\in H^1(\mathbb R)$ there exists a global in time solution $U(x, t)$ which is uniformly bounded in $H^1(\mathbb R)$, i.e. there is a constant $M>0$ independent of $t$ such that 
\begin{equation}\label{boundg}
\sup_{t\geq 0}\left\| U(t) \right\|_{H^1(\mathbb R)}\leq M\left\| U_0 \right\|_{H^1(\mathbb R)}.
\end{equation}
Furthermore, if $U_0\in H^m(\mathbb R)$ with $m>1$  then the solution exists  globally in $H^m(\mathbb R)$ and is uniformly bounded in $H^m(\mathbb R)$.
\item In the focusing case, if $p\geq 2$ then there exists a constant $\delta>0$ such that if $\left\| U_0 \right\|_{H^1(\mathbb R)}\leq \delta$ then the solution $U(x, t)$ exists globally in time and is uniformly bounded in $H^1(\mathbb R)$, satisfying the estimate~\eqref{boundg} for some constant $M>0$.  Furthermore, for initial data $U_0\in H^m(\mathbb R)$ with $m>1$ the solution possesses the same regularity  as in case \textnormal{(i)}.
\end{enumerate}
\end{theorem}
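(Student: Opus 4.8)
Since this statement simply records the classical well-posedness theory for semilinear Schr\"odinger equations, the plan is to recall the standard variational arguments (developed in detail in \cite{book0}; see also \cite{book1,book2}) and to verify that the structural hypotheses \eqref{F-prop} on $F$ are exactly what these arguments require, so that no new idea is needed. First I would establish the \emph{local} theory in $H^1(\mathbb R)$ by solving the Duhamel formulation $U(t)=e^{i\nu t\partial_x^2}U_0+i\gamma\int_0^t e^{i\nu(t-s)\partial_x^2}\big(F(|U|^2)U\big)(s)\,ds$ through a contraction mapping in a ball of $C([0,T],H^1(\mathbb R))$. In one space dimension $H^1(\mathbb R)$ is a Banach algebra that embeds continuously into $L^\infty(\mathbb R)$, and the three bounds in \eqref{F-prop} (together with $F(0)=0$) furnish the local Lipschitz estimate for the map $U\mapsto F(|U|^2)U$ from $H^1$ to $H^1$ that closes the fixed point, with an existence time depending only on $\|U_0\|_{H^1(\mathbb R)}$. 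Uniqueness and the blow-up alternative (the $H^1(\mathbb R)$ norm must diverge as time approaches any finite maximal time) then follow in the usual way.

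Next I would record the \emph{conservation laws}. Pairing \eqref{noninNLS} with $\bar U$ and with $\bar U_t$ and taking imaginary/real parts gives, for smooth solutions, conservation of the mass $\|U(t)\|_{L^2}^2=\|U_0\|_{L^2}^2$ and of the energy
\begin{equation*}
E[U]:=\frac{\nu}{2}\,\|U_x\|_{L^2}^2-\frac{\gamma}{2}\int_{\mathbb R}\widetilde F\big(|U|^2\big)\,dx,\qquad \widetilde F(s):=\int_0^s F(\sigma)\,d\sigma ,
\end{equation*}
and these identities propagate to $H^1$ data by an approximation argument. From $F(0)=0$ and $|F'(x)|\le K|x|^{p-1}$ one gets $|F(x)|\le\tfrac{K}{p}|x|^{p}$ and hence $|\widetilde F(s)|\le\tfrac{K}{p(p+1)}|s|^{p+1}$, so that $\int_{\mathbb R}\widetilde F(|U|^2)\,dx\lesssim\|U\|_{L^{2p+2}}^{2p+2}<\infty$ on $H^1(\mathbb R)$ and $E$ is well defined and finite.

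The heart of the proof, and the only place where genuine work is required, is the \emph{a priori $H^1$ bound}. By the one-dimensional Gagliardo--Nirenberg inequality $\|U\|_{L^{2p+2}}^{2p+2}\le C\,\|U_x\|_{L^2}^{p}\,\|U\|_{L^2}^{p+2}$, energy and mass conservation yield
\begin{equation*}
\frac{\nu}{2}\,\|U_x(t)\|_{L^2}^2\le E[U_0]+C\gamma\,\|U_0\|_{L^2}^{p+2}\,\|U_x(t)\|_{L^2}^{p}.
\end{equation*}
When $1\le p<2$ the exponent $p$ on the right is strictly less than $2$, so Young's inequality absorbs that term into the left-hand side and produces a bound on $\|U_x(t)\|_{L^2}$ in terms of $\|U_0\|_{H^1(\mathbb R)}$ alone, uniformly in $t$; combined with the blow-up alternative this gives global existence together with the uniform estimate \eqref{boundg}, which is part (i). When $p\ge 2$ the same inequality yields only a conditional bound: I would study the real curve $y\mapsto \tfrac{\nu}{2}y^2-C\gamma\,\|U_0\|_{L^2}^{p+2}\,y^{p}$, which for $\|U_0\|_{L^2}$ small is positive and strictly increasing up to a positive local maximum (for $p=2$ this reduces to the positivity of the coefficient $\tfrac{\nu}{2}-C\gamma\|U_0\|_{L^2}^{4}$), and choose $\delta>0$ so small that $\|U_0\|_{H^1(\mathbb R)}\le\delta$ forces $E[U_0]$ below that maximum with $\|U_x(0)\|_{L^2}$ on the increasing branch; a continuity (bootstrap) argument then traps $\|U_x(t)\|_{L^2}$ on that branch for all $t$, giving global existence and the uniform bound, which is part (ii).

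Finally I would treat the \emph{persistence of higher regularity}: with the uniform $H^1$ (hence $L^\infty$) control in hand, differentiating \eqref{noninNLS}, running energy estimates on $\partial_x^k U$ for $k\le m$, and closing a Gronwall inequality shows that the $H^m$ norm stays bounded on the interval of existence (which, by the previous step, is all of $[0,\infty)$); the coefficients $F^{(j)}(|U|^2)$ that appear after the Leibniz and Fa\`{a} di Bruno rules are controlled by the $L^\infty$ bound and by the assumed smoothness of $F$. I do not anticipate a real obstacle anywhere: the single point requiring care is that every estimate above is driven by the growth and Lipschitz bounds \eqref{F-prop} rather than by an exact power nonlinearity, but \eqref{F-prop} has been arranged precisely so that the classical pure-power arguments of \cite{book0} transfer essentially verbatim.
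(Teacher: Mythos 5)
The paper does not actually prove this statement: it is labeled ``Existing results on well-posedness'' and is disposed of by citation to the monographs \cite{book0,book1,book2} and, for the higher-regularity part, to \cite{reg1,reg2}. Your reconstruction follows essentially the same classical route as those references (Duhamel/contraction for the local theory, conservation of mass and energy, the one-dimensional Gagliardo--Nirenberg inequality $\|U\|_{L^{2p+2}}^{2p+2}\lesssim\|U_x\|_{L^2}^{p}\|U\|_{L^2}^{p+2}$, Young absorption when $1\leq p<2$, and a smallness/bootstrap argument on the polynomial $\frac{\nu}{2}y^2-C\gamma\|U_0\|_{L^2}^{p+2}y^{p}$ when $p\geq 2$), and that part is sound.

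Two points, however, are overstated and would need repair. First, the claim that \eqref{F-prop} alone makes $U\mapsto F(|U|^2)U$ locally Lipschitz from $H^1(\mathbb R)$ to $H^1(\mathbb R)$ is not justified: estimating $\partial_x\big[F(|U|^2)U-F(|V|^2)V\big]$ in $L^2$ produces the difference $F'(|U|^2)-F'(|V|^2)$, and \eqref{F-prop} gives only a size bound on $F'$, not a modulus of continuity. The paper itself runs into exactly this obstruction in its Section 3 local-existence proof (Theorem \ref{nzbc-lwp-t}), where it must treat the power and saturable cases separately. The standard fix for zero boundary conditions, used in \cite{book1}, is Kato's device: contract on a closed ball of $C([0,T];H^1(\mathbb R))$ equipped only with the $C([0,T];L^2(\mathbb R))$ metric (or use Strichartz-based arguments), so that the Lipschitz estimate is needed only at the $L^2$ level, where the first condition in \eqref{F-prop} plus the $L^\infty$ control suffices. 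Second, for the $H^m$ statement your Gronwall argument on $\partial_x^k U$ yields global existence in $H^m$ with bounds that may grow in time, whereas the theorem asserts \emph{uniform-in-time} boundedness in $H^m(\mathbb R)$; that stronger conclusion is precisely what the cited works \cite{reg1,reg2} supply and requires more than a crude energy/Gronwall estimate.
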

For the proofs of the results stated in Theorem \ref{zbc-wp-t}, we refer the reader to the monographs \cite{book0,book1,book2}. Specifically for the further regularity properties of solutions, we also refer to \cite{reg1,reg2}. 
We now proceed to the main result of this section, which provides the counterpart of Theorem \ref{Theorem:stability} in the case of zero boundary conditions at infinity. 
\begin{theorem}[Theorem \ref{Theorem:stability} for zero boundary conditions at infinity]\label{TH1}
Let $p\geq 1$ and suppose that the integrable and non-integrable focusing NLS equations \eqref{NLS} and \eqref{noninNLS} are supplemented with the initial conditions $u(x, 0)=u_0(x)$ and $U(x, 0)=U_0(x)$, respectively.
\begin{enumerate}[label=\textnormal{(\roman*)}, leftmargin=6.5mm, topsep=2mm, itemsep=1mm]
\item \underline{$L^2$ closeness}: 
Given $0<\varepsilon<1$, suppose that the initial data satisfy
\begin{gather}
\label{d0}
\left\| u_0-U_0 \right\|_{L^2(\mathbb R)} \leq  C \varepsilon^3,
\\
\label{d1}
\left\| u_0 \right\|_{H^1(\mathbb R)} \leq c_0 \, \varepsilon,
\ \left\| U_0 \right\|_{H^1(\mathbb R)}\leq  C_0 \, \varepsilon,
\end{gather}
for some constants $c_0, C_0, C>0$. Then, for arbitrary finite time $0<T_f<\infty$, there exists a constant $\widetilde C=\widetilde C(\mu,\gamma, c_0, C_0, C, T_f)$ such that the corresponding solutions $u(x, t)$ and $U(x, t)$ satisfy the estimate 	
\begin{equation}\label{bound1}	
\sup_{t\in [0, T_f]}\left\| u(t)-U(t) \right\|_{L^2(\mathbb R)}\le \widetilde C \varepsilon^3.
\end{equation}
\item\underline{$H^1$ and $L^{\infty}$ closeness}: If the initial data $u_0, U_0$ satisfy \eqref{d1} along with the stronger condition (in place of \eqref{d0})  
\begin{equation}\label{dH1}
\left\| u_0-U_0 \right\|_{H^1(\mathbb R)} \le C_1 \varepsilon^3
\end{equation}
for some constant $C_1>0$, then there exists a constant $\widetilde C_1$ depending on $C_1$ and with a similar dependency on $T_f$ and $\mu, \gamma, c_0, C_0$ as the constant $\widetilde C$ in \eqref{bound1}  such that 
\begin{equation}\label{bound2}
\sup_{t\in [0, T_f]}\left\| u(t)-U(t) \right\|_{H^{1}(\mathbb R)}\le \widetilde C_1 \varepsilon^3.
\end{equation}
Consequently, there exists a constant  $\widetilde C_2$ with similar dependencies as $\widetilde C_1$ such that
\begin{equation}\label{boundB}
\sup_{t\in [0, T_f]}\left\| u(t)-U(t) \right\|_{L^{\infty}(\mathbb R)}\le \widetilde C_2 \varepsilon^3.
\end{equation}
\end{enumerate}
\end{theorem}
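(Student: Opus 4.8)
\textbf{Proof proposal for Theorem \ref{TH1}.}

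The plan is to work with the evolution equation satisfied by the difference $\Delta(t) := u(t) - U(t)$, which from \eqref{NLS} and \eqref{noninNLS} reads
\begin{equation*}
i\Delta_t + \nu \Delta_{xx} = -\mu|u|^2u + \gamma F(|U|^2)U =: \mathcal{R}(u,U),
\end{equation*}
with $\Delta(0) = u_0 - U_0$. First I would split the right-hand side forcing term in a way that isolates a piece proportional to $\Delta$ and a piece measuring the genuine discrepancy between the two nonlinearities. Writing
\begin{equation*}
\mathcal{R}(u,U) = -\mu\big(|u|^2 u - |U|^2 U\big) + \big(\gamma F(|U|^2) - \mu|U|^2\big)U,
\end{equation*}
the first bracket is Lipschitz in $\Delta$ on bounded sets of $H^1$ (using Sobolev embedding $H^1(\mathbb{R})\hookrightarrow L^\infty$), while the second bracket is a fixed inhomogeneity whose size I can control using \eqref{F-prop}: since $F(0)=0$ and $|F'(x)|\le K|x|^{p-1}$, one has $|\gamma F(|U|^2) - \mu|U|^2| \lesssim |U|^2 + |U|^{2p}$ near the origin, so that $\|(\gamma F(|U|^2)-\mu|U|^2)U\|_{L^2} \lesssim \|U\|_{L^\infty}^2\|U\|_{L^2} + \|U\|_{L^\infty}^{2p}\|U\|_{L^2} \lesssim \varepsilon^3$ by the smallness hypothesis \eqref{d1} and Theorem \ref{zbc-wp-t}(i)–(ii), which guarantee that the $H^1$ norm of $U(t)$ stays $\mathcal{O}(\varepsilon)$ uniformly on $[0,T_f]$ (in case $p\ge 2$ the smallness $\|U_0\|_{H^1}\le C_0\varepsilon < \delta$ ensures global existence and the bound \eqref{boundg}). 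Similarly $\|u(t)\|_{H^1}$ stays $\mathcal{O}(\varepsilon)$ by global well-posedness of cubic NLS together with conservation of mass and energy.

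For part (i), the $L^2$ estimate, I would multiply the $\Delta$-equation by $\bar\Delta$, integrate over $\mathbb{R}$, and take imaginary parts; the $\nu\Delta_{xx}$ term drops out after integration by parts and the Lipschitz bracket $-\mu(|u|^2u - |U|^2U)$ contributes a term bounded by $C(\|u\|_{L^\infty}^2 + \|U\|_{L^\infty}^2)\|\Delta\|_{L^2}^2 \lesssim \varepsilon^2\|\Delta\|_{L^2}^2$, while the inhomogeneity contributes at most $C\varepsilon^3\|\Delta\|_{L^2}$. This yields a differential inequality of the form $\frac{d}{dt}\|\Delta\|_{L^2}^2 \le C\varepsilon^2\|\Delta\|_{L^2}^2 + C\varepsilon^3\|\Delta\|_{L^2}$, hence $\frac{d}{dt}\|\Delta\|_{L^2} \le C\varepsilon^2\|\Delta\|_{L^2} + C\varepsilon^3$, and Grönwall's inequality with initial size $\|\Delta(0)\|_{L^2}\le C\varepsilon^3$ from \eqref{d0} gives $\|\Delta(t)\|_{L^2} \le (C\varepsilon^3 + C\varepsilon^3 t)e^{C\varepsilon^2 t} \le \widetilde{C}\varepsilon^3$ on $[0,T_f]$, with $\widetilde{C}$ depending on $T_f$ and the various parameters but not on $\varepsilon$. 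This reproduces the announced at-most-linear-in-time growth $\|\Delta(t)\|_{L^2}\le C t\varepsilon^3$ modulo the harmless exponential factor $e^{C\varepsilon^2 t}$, which is $\mathcal{O}(1)$ on the fixed window $[0,T_f]$.

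For part (ii), I would run the analogous energy estimate one derivative higher: differentiate the $\Delta$-equation in $x$, pair $\Delta_x$ with its conjugate, and again take imaginary parts. The commutator/product terms now involve $\partial_x$ of cubic expressions; using the algebra property of $H^1(\mathbb{R})$ (equivalently, $\|fg\|_{H^1}\lesssim \|f\|_{H^1}\|g\|_{H^1}$ via $H^1\hookrightarrow L^\infty$) together with the uniform $\mathcal{O}(\varepsilon)$ bounds on $\|u(t)\|_{H^1}, \|U(t)\|_{H^1}$, every such term is controlled either by $C\varepsilon^2\|\Delta\|_{H^1}^2$ or by $C\varepsilon^3\|\Delta\|_{H^1}$ — for the latter I need the $H^1$-norm of the inhomogeneity $(\gamma F(|U|^2)-\mu|U|^2)U$, which again is $\lesssim \varepsilon^3$ by \eqref{F-prop}, \eqref{d1}, Theorem \ref{zbc-wp-t} (the regularity statement there ensures $U(t)$ is as smooth as $U_0$, and in particular stays in $H^1$ with controlled norm). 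Combining with the $L^2$ estimate and Grönwall gives $\|\Delta(t)\|_{H^1}\le \widetilde{C}_1\varepsilon^3$ on $[0,T_f]$, using the stronger initial hypothesis \eqref{dH1}. Finally \eqref{boundB} is immediate from the one-dimensional Sobolev embedding $\|\Delta(t)\|_{L^\infty}\le C_{\mathrm{Sob}}\|\Delta(t)\|_{H^1}\le \widetilde{C}_2\varepsilon^3$.

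The main obstacle I anticipate is not any single estimate but the bookkeeping needed to guarantee that the constant $C$ multiplying $\|\Delta\|_{H^1}^2$ in the differential inequality is genuinely $\mathcal{O}(\varepsilon^2)$ rather than $\mathcal{O}(1)$: this requires that \emph{every} nonlinear term carry at least two extra powers of the solution amplitude, which in turn hinges on the uniform-in-time $\mathcal{O}(\varepsilon)$ control of $\|u(t)\|_{H^1}$ and $\|U(t)\|_{H^1}$ provided by Theorem \ref{zbc-wp-t} and by mass/energy conservation for \eqref{NLS}. A secondary subtlety is the case $p\ge 2$, where this uniform control for $U$ is available only because the smallness condition \eqref{d1} forces $\|U_0\|_{H^1}$ below the global-existence threshold $\delta$ of Theorem \ref{zbc-wp-t}(ii); one should check that the constants $C_0$ and the range $0<\varepsilon<1$ are consistent with $C_0\varepsilon\le\delta$, which can be arranged by shrinking $\varepsilon$ if necessary (or absorbing the restriction into the statement). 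Everything else is a routine, if somewhat lengthy, Grönwall argument.
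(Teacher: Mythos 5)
Your proposal is correct in substance but takes a genuinely different route from the paper. The paper works directly with the Duhamel/Fourier representation of the difference equation \eqref{eqd1}: after Plancherel and Minkowski (see \eqref{D-planch}), it bounds the \emph{entire} nonlinearity $N=-\mu|u|^2u+\gamma F(|U|^2)U$ in $L^2$ (resp.\ $H^1$) by a multiple of $\|u(t)\|_{H^1}^3+\|U(t)\|_{H^1}^{2p+1}=\mathcal{O}(\varepsilon^3)$, using \eqref{F-prop}, Sobolev embedding and the uniform solution bound \eqref{boundg}; no splitting and no Gr\"onwall lemma are needed, and the outcome is the cleanly linear-in-$t$ estimate $\|\Delta(t)\|\leq\|\Delta(0)\|+Ct\varepsilon^3$ with explicit constants. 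You instead decompose $N$ into a part that is Lipschitz in $\Delta$ plus the nonlinearity mismatch $(\gamma F(|U|^2)-\mu|U|^2)U$ and close with Gr\"onwall; since $\|u(t)\|_{H^1},\|U(t)\|_{H^1}=\mathcal{O}(\varepsilon)$, your Lipschitz coefficient is $\mathcal{O}(\varepsilon^2)$ and the mismatch is $\mathcal{O}(\varepsilon^3)$, so you recover the same $\varepsilon^3$ closeness up to the harmless factor $e^{C\varepsilon^2 T_f}$. Your splitting buys robustness (the final bound depends only on $\|\Delta(0)\|$ and the mismatch, so it would survive without smallness of the data, at the price of an exponential-in-$T_f$ constant); the paper's route buys purely linear time growth and, more importantly, rigor at the bare $H^1$ level.

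The one point you should tighten is the $H^1$ step in part (ii): differentiating the equation in $x$ and pairing with $\overline{\Delta_x}$ is only formal for solutions merely in $C([0,T_f];H^1(\mathbb R))$, since the computation implicitly uses $\Delta_{xx}\in L^2$; you need either a regularization/density argument (prove the estimate for $H^2$ data, where Theorem \ref{zbc-wp-t} gives persistence of regularity, and pass to the limit by continuous dependence) or to replace the differentiated energy identity by the integral formulation — which is exactly what the paper does, its $H^1$ bound requiring only an estimate of $\|N(t)\|_{H^1}$, legitimate for $H^1$ solutions. This is also why the paper's own remark on the energy-method alternative resorts to Gagliardo--Nirenberg interpolation under the extra hypothesis $u_0,U_0\in H^m(\mathbb R)$, $m>1$ (cf.\ \eqref{eqd8}, \eqref{eqd11a}), at the cost of a weaker proximity exponent; your direct first-derivative Gr\"onwall estimate, once justified by approximation, does recover the cubic exponent. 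Finally, as you note yourself, in the supercritical case $p\geq 2$ one must ensure $C_0\varepsilon\leq\delta$ so that the global bound \eqref{boundg} for $U$ is actually available; with that bookkeeping in place the argument goes through.
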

 \begin{proof}
The $L^{\infty}$ estimate \eqref{boundB} is a direct consequence of the $H^{1}$ estimate \eqref{bound2} via the Sobolev embedding theorem. The $L^2$ estimate  \eqref{bound1} and the $H^1$ estimate \eqref{bound2} are proved in a  similar way, by forming the equation
\begin{equation}\label{eqd1}
i\Delta_t+\nu\Delta_{xx}= -\mu|u|^2u + \gamma F(|U|^2)U =: N(x,t)
\end{equation}
satisfied by the difference
\begin{equation}
\Delta(x,t) := u(x,t)-U(x,t)
\end{equation}
of solutions to the integrable and non-integrable NLS equations \eqref{NLS} and~\eqref{noninNLS},
and employing the Fourier transform pair given for any $f\in L^2(\mathbb R)$ by
\begin{equation}\label{ft-def}
\begin{aligned}
\what f(\xi) &= \int_{\mathbb R} e^{-i\xi x} f(x) dx, \quad \xi \in \mathbb R,
\\
f(x) &= \frac{1}{2\pi} \int_{\mathbb R} e^{i\xi x} \what f(\xi) d\xi, \quad x \in \mathbb R.
\end{aligned}
\end{equation}

In particular, taking the Fourier transform of equation \eqref{eqd1} and integrating in $t$, we obtain 
\begin{equation}\label{eq:FT}
\widehat{\Delta}(\xi,t) = e^{-i\xi^2t} \widehat{\Delta}(\xi,0)-i\int_{0}^t e^{-i\xi^2(t-\tau)} \widehat{N}(\xi,\tau) d\tau.
\end{equation}
Starting from this expression and employing successively Plancherel's theorem, Minkowski's integral inequality, and the fact that $e^{-i\xi^2 t}$ is unitary, we find%
\begin{equation}\label{D-planch}
\begin{aligned}
\left\| \Delta(t) \right\|_{L^2(\mathbb R)}
&\leq
\frac{1}{\sqrt{2\pi}} \, \big\| e^{-i\xi^2t}\widehat{\Delta}(\xi,0)\big\|_{L^2(\mathbb R)}
+
\frac{1}{\sqrt{2\pi}} \int_{0}^t \big\| e^{-i\xi^2(t-\tau)} \widehat{N}(\xi,\tau) \big\|_{L^2(\mathbb R)} d\tau
\\
&=
\left\| \Delta(0)\right\|_{L^2(\mathbb R)}
+
\int_{0}^t \left\| N(\tau)\right\|_{L^2(\mathbb R)} d\tau.
\end{aligned}
\end{equation}
Hence, we need to estimate the $L^2$ norm of the nonlinearity $N$. By the  inequality $|a+b|^2 \leq 2|a|^2 + 2|b|^2$ and the first of the assumptions \eqref{F-prop}, we have
\begin{align}
\left\|N(t)\right\|_{L^2(\mathbb R)}^2
&\leq
2\mu^2 \int_{\mathbb R} \left|u(x, t)\right|^6 dx
+
2 \gamma^2 \int_{\mathbb R} \left|F(|U(x,t)|^2)\right|^2 \left|U(x,t)\right|^2 dx
\nn\\
&\leq
2\mu^2 \left\| u(t) \right\|_{L^6(\mathbb R)}^6 + 2\gamma^2 K^2 \left\| U(t) \right\|_{L^{2(2p+1)}(\mathbb R)}^{2(2p+1)}.
\end{align}
Thus, invoking
the Sobolev embedding (see Theorem 4.12, last part of Case A on page 85 of \cite{af2003})
\begin{equation}\label{eqd3}
H^1(\mathbb R)\subset L^{q}(\mathbb R)
\ \Rightarrow \
\left\|f\right\|_{L^q(\mathbb R)} \leq c \left\|f\right\|_{H^1(\mathbb R)},
\quad
2 \leq q\leq\infty,
\end{equation}
for $q=6$ and also $q=2(2p+1)$, we obtain
\begin{equation}\label{F-est-0}
\left\|N(t)\right\|_{L^2(\mathbb R)}^2
\leq
2\mu^2 \left\| u(t) \right\|_{H^1(\mathbb R)}^6 + 2 \gamma^2 K^2 \left\| U(t) \right\|_{H^1(\mathbb R)}^{2(2p+1)}.
\end{equation}
In turn, since $\sqrt{a+b} \leq  \sqrt a + \sqrt b$, we have
\begin{align}
\int_0^t  \left\|N(\tau)\right\|_{L^2(\mathbb R)}  d\tau
&\leq
 \sqrt 2 \, |\mu| \int_0^t \left\| u(\tau) \right\|_{H^1(\mathbb R)}^3 d\tau + \sqrt 2 \, |\gamma| K \int_0^t \left\| U(\tau) \right\|_{H^1(\mathbb R)}^{2p+1} d\tau
\nn\\
&\leq
A \sup_{\tau \in [0, t]} \left(\left\| u(\tau) \right\|_{H^1(\mathbb R)}^3 +  \left\| U(\tau) \right\|_{H^1(\mathbb R)}^{2p+1} \right) t
\label{F-est}
\end{align}
where $A = \sqrt 2\max\{|\mu|, |\gamma| K\}$.
In view of the estimate \eqref{F-est}, inequality \eqref{D-planch} yields
\begin{equation}
\left\| \Delta(t) \right\|_{L^2(\mathbb R)}
\leq
\left\| \Delta(0) \right\|_{L^2(\mathbb R)} 
+
A \sup_{\tau \in [0, t]} \left(\left\| u(\tau) \right\|_{H^1(\mathbb R)}^3 +  \left\| U(\tau) \right\|_{H^1(\mathbb R)}^{2p+1} \right) t
\end{equation}
which along with the solution bound \eqref{boundg} and the conditions \eqref{d1} and  \eqref{d0} gives rise to the desired estimate \eqref{bound1} with constant $\widetilde C = C + A \, \big(c_0^3 + C_0^{2p+1}\big) \, T_f$. 

We proceed to the proof of the $H^1$ estimate \eqref{bound2}. Expression \eqref{eq:FT} combined with the inequality $|a+b|^2 \leq 2|a|^2+2|b|^2$ yields
\begin{align}
\left\|\Delta(t)\right\|_{H^1(\mathbb R)}^2
&\leq
2\int_{\xi\in\mathbb R} \left(1+\xi^2\right) \big|e^{-i\xi^2t} \widehat{\Delta}(\xi,0)\big|^2 d\xi
+
2\int_{\xi\in\mathbb R} \left(1+\xi^2\right) \left|\int_{0}^t e^{-i\xi^2(t-\tau)} \widehat{N}(\xi,\tau)  d\tau\right|^2 d\xi
\nn
\end{align}
so using Minkowski's integral inequality in the second integral  we find
\begin{align}
\left\|\Delta(t)\right\|_{H^1(\mathbb R)}^2 
&\leq
2\int_{\xi\in\mathbb R} \left(1+\xi^2\right) \big|\widehat{\Delta}(\xi,0)\big|^2 d\xi
+
2\left(\int_0^t \left(\int_{\xi\in\mathbb R} \left(1+\xi^2\right) \big|\widehat{N}(\xi,\tau)\big|^2 d\xi\right)^{\frac 12} d\tau \right)^2
\nonumber\\
&=
2 \left\|\Delta(0)\right\|_{H^1(\mathbb R)}^2 + 2 \left(\int_0^t \left\|N(\tau)\right\|_{H^1(\mathbb R)} d\tau\right)^2.
\label{D-sup}
\end{align}
Note that $\left\|N\right\|_{H^1(\mathbb R)}=\left\|N\right\|_{L^2(\mathbb R)}+\left\|\p_x N\right\|_{L^2(\mathbb R)}$. The $L^2$ norm of $N$ has already been estimated by~\eqref{F-est}. Differentiating the right-hand side of \eqref{eqd1}, we have
\begin{equation*}
\begin{aligned}
\left\|\p_xN(t)\right\|_{L^2(\mathbb R)}^2
&\leq
2\mu^2 \int_{\mathbb R} \left|2|u|^2u_x + u^2 \bar u_x\right|^2 dx
\\
&\quad
+
2\gamma^2 \int_{\mathbb R} \Big|F'(|U|^2) |U|^2U_x + F'(|U|^2) U^2 \bar U_x + F(|U|^2) U_x\Big|^2 dx
\\
&\leq
18\mu^2 \int_{\mathbb R}  |u|^4 |u_x|^2 dx
+ 
4 \gamma^2 \int_{\mathbb R} 
\left(
4\left| F'(|U|^2) \right|^2 |U|^4 |U_x|^2
+ \left| F(|U|^2) \right|^2  |U_x|^2
\right) dx.
\end{aligned}
\end{equation*}
Then, due to the first and third assumption in \eqref{F-prop}, 
\begin{align}
\left\|\p_xN(t)\right\|_{L^2(\mathbb R)}^2
&\leq
18 \mu^2 \int_{\mathbb R}  |u|^4 |u_x|^2  dx
+ 20 \gamma^2 K^2 \int_{\mathbb R}  |U|^{4p}  |U_x|^2 dx
\nn\\
&\leq
18 \mu^2 \left\| u(t) \right\|_{L^\infty(\mathbb R)}^4 \left\| u_x(t) \right\|_{L^2(\mathbb R)}^2
+ 20 \gamma^2 K^2 \left\| U(t) \right\|_{L^\infty(\mathbb R)}^{4p} \left\| U_x(t) \right\|_{L^2(\mathbb R)}^2
\nn
\end{align}
and so, by the Sobolev embedding theorem,
\begin{equation}
\left\|\p_xN(t)\right\|_{L^2(\mathbb R)}^2
\leq
18 \mu^2 \left\| u(t) \right\|_{H^1(\mathbb R)}^6 + 20 \gamma^2 K^2 \left\| U(t) \right\|_{H^1(\mathbb R)}^{2(2p+1)}.
\end{equation}
In turn, since $\sqrt{a+b} \leq \sqrt a + \sqrt b$, we obtain
\begin{equation}
\int_0^t  \left\|\p_xN(\tau)\right\|_{L^2(\mathbb R)}  d\tau
\leq
A' \sup_{\tau \in [0, t]} \left( \left\| u(\tau) \right\|_{H^1(\mathbb R)}^3 + \left\| U(\tau) \right\|_{H^1(\mathbb R)}^{2p+1}\right) t
\label{F'-est}
\end{equation}
with $A' = \max\left\{3\sqrt 2 |\mu|,  2\sqrt 5 |\gamma| K\right\}$. 
Overall, estimates \eqref{F-est} and \eqref{F'-est} combined with inequality \eqref{D-sup} imply (note that $A \leq A'$)
\begin{equation}\label{D-sup-2}
\left\| \Delta(t) \right\|_{H^1(\mathbb R)} 
\leq
\sqrt 2 \left\|\Delta(0)\right\|_{H^1(\mathbb R)}
+
\sqrt 2 A' \sup_{\tau \in [0, t]} \left( \left\| u(\tau) \right\|_{H^1(\mathbb R)}^3 + \left\| U(\tau) \right\|_{H^1(\mathbb R)}^{2p+1}\right) t.
\end{equation}
Hence, thanks to the solution estimate \eqref{boundg} and the assumptions \eqref{d1} and \eqref{dH1} on the initial data and their $H^1$ distance, we infer
\begin{equation}\label{D-H1}
\begin{aligned}
\left\| \Delta(t) \right\|_{H^1(\mathbb R)} 
&\leq 
\sqrt 2 \, C_1 \varepsilon^3
+
\sqrt 2 A' \left(M^3 \left\| u_0 \right\|_{H^1(\mathbb R)}^3 + M_1^{2p+1} \left\| U_0 \right\|_{H^1(\mathbb R)}^{2p+1}\right) t
\\
&\leq 
\sqrt 2 \, C_1 \varepsilon^3
+
\sqrt 2 A' \left(M^3 c_0^3 \varepsilon^3 + M_1^{2p+1} C_0^{2p+1} \varepsilon^{2p+1}\right) t.
\end{aligned}
\end{equation}
Then, noting that $\varepsilon^{2p+1} \leq \varepsilon^3$ as $\varepsilon<1$ and $2p+1 \geq 3$ (recall that $p\geq 1$), we arrive at the $H^1$ estimate~\eqref{bound2} with constant $\widetilde C_1 = \sqrt 2 \, C_1 + \sqrt 2 \, A' \max\big\{M^3 C_{\mu,0}^3, M_1^{2p+1} C_{\gamma,0}^{2p+1}\big\} T_f$. 

The proof of Theorem \ref{TH1} is complete.
\end{proof}
\vskip 3mm
\noindent
\textbf{Remarks on Theorem \ref{TH1}.}
\\[2mm]
\noindent
\textit{Persistence of the analytical localized solutions of the integrable NLS in the non-integrable setting}. An important implication of Theorem \ref{TH1} is that it rigorously justifies that (at least) small-amplitude localized structures satisfying the integrable NLS equation along with the zero boundary conditions~\eqref{vbc}  may persist in the non-integrable setting of equation \eqref{noninNLS} for significant times. In particular, the non-integrable  NLS equation \eqref{noninNLS} admits small-amplitude solutions of $\mathcal O(\varepsilon)$ that remain $\mathcal O(\varepsilon^3)$ close to the analytical solutions of the integrable NLS \eqref{NLS} in the 
 $H^1$ and $L^{\infty}$ norms.  
 In this regard, in the case where $u_0\equiv U_0$ (i.e. $C=0$ in \eqref{d0}),  the analytical estimates of Theorem \ref{TH1} show that the distance of solutions grows at most linearly for any $t\in (0,\infty)$, since
\begin{equation}\label{gr1}
\left\|\Delta(t)\right\|_{\mathcal{X}} \leq M \varepsilon^3 t,
\quad
\mathcal{X}=H^1(\mathbb{R}) \text{ or } L^{\infty}(\mathbb{R}),	
\end{equation}
where $M$ is one of the constants $\widetilde C_1, \widetilde C_2$.  For example, for times $t\sim \mathcal{O}\left(1/\varepsilon^2\right)$ the distance function $\left\|\Delta(t)\right\|_{\mathcal{X}}\sim \mathcal{O}(\varepsilon)$. 

The time growth of the bound for the distance function in estimate \eqref{gr1} can be juxtaposed against the time growth in the corresponding estimates for the distance function between the solutions of the complex Ginzburg-Landau and the NLS equations when the inviscid limit of the former is considered~\cite{JWU}. These latter estimates can grow at an even exponential rate \cite{OG}, which has to be distinguished from the linear growth result of \eqref{gr1} proved here.
\\[2mm]
\noindent
\textit{Alternative proof of Theorem \ref{TH1} via energy estimates and interpolation}.  An alternative proof of Theorem \ref{TH1} can be provided through an energy argument combined with interpolation estimates by usingthe one-dimensional Gagliardo-Nirenberg inequality and the regularity of the initial data. This method, however, results in weaker stability estimates under stronger conditions on the initial data. 

Indeed, via the energy method, \eqref{eqd1} yields the differential inequality
\begin{equation}
	\label{eqd8}
\frac{d}{dt}\left\| \Delta(t) \right\|_{L^2(\mathbb R)}\leq M_2 \left( \left\| u_0 \right\|^3_{H^1(\mathbb R)}+ \left\| U_0 \right\|_{H^1(\mathbb R)}^{2p+1}\right)	
\end{equation}
for some constant $M_2=M_2(M, M_1, c, \gamma, \mu)$. Integrating  \eqref{eqd8} for any $t$ in the arbitrary interval $[0, T_f]$ and using the assumptions \eqref{d1} and \eqref{d0} on the initial data and their $L^2$ distance along with the fact that $0<\varepsilon<1$, we obtain
estimate \eqref{bound1} with constant $\widetilde C = C + M_2 \, \big(c_0^3+ C_0^{2p+1}\big) \, T_f$. 
Moreover, the $H^1$ estimate for $\Delta$ can be derived via interpolation by using the Gagliardo-Nirenberg inequality
(e.g. see Theorem 1.3.4 in \cite{ch1998}), namely
\begin{equation}\label{GN}
\left\| \partial^j f \right\|_{L^p(\mathbb R)} 
\leq 
C_{\text{GN}} \left\| f \right\|^{1-\theta}_{L^q(\mathbb R)}
\left\| \partial^m f \right\|_{L^r(\mathbb R)}^{\theta},
\quad j, m \in \mathbb N, \ 0\leq j < m, \ \frac{j}{m} \leq \theta \leq 1,
\end{equation}
where $1 \leq q, r \leq \infty$, $p$ is given by $\frac{1}{p}=j+\theta\left(\frac{1}{r}-m\right)+\frac{1-\theta}{q}$, and $C_{\text{GN}} = C_{\text{GN}}(q, r, j, m, \theta)$.
However, this step requires sufficient regularity of the initial data. In particular, assume that $u_0,U_0\in H^m(\mathbb R)$ with $m>1$.  Then, according to the global existence results of Theorem \ref{zbc-wp-t}, the solutions of the integrable and non-integrable NLS equations \eqref{NLS} and  \eqref{noninNLS} satisfy uniform in time estimates, 
\begin{equation*}
	\sup_{t\geq 0}\left\| u(t) \right\|_{H^m(\mathbb R)}\leq R,\quad 	\sup_{t\geq 0}\left\| U(t) \right\|_{H^m(\mathbb R)}\leq R,
\end{equation*}
for some general constant $R$ which depends only on the norm of the initial data $u_0$ and $U_0$ but is independent of $t\geq 0$.
Hence, by the triangle inequality, the distance $\Delta$ also admits such a uniform bound as
\begin{equation}\label{eqd10}
\sup_{t\geq 0}\left\|\Delta(t)\right\|_{H^m(\mathbb R)}
\leq 
\sup_{t\geq 0}\left\| u(t) \right\|_{H^m(\mathbb R)}+\sup_{t\geq 0}\left\| U(t) \right\|_{H^m(\mathbb R)}\leq 2R.
\end{equation}
Employing the Gagliardo-Nirenberg inequality \eqref{GN} for $f=\Delta$, $j=1$, $p=q=r=2$ and any $m>1$ (these choices imply $\theta = \frac 1m$), we have
\begin{equation}\label{eqd11}
\left\|\p_x \Delta(t)\right\|_{L^2(\mathbb R)} \leq C_{\text{GN}} \left\|\Delta(t)\right\|_{L^2(\mathbb R)}^{\frac{m-1}{m}} \left\|\Delta(t)\right\|_{H^m(\mathbb R)}^{\frac{1}{m}}, \quad t\geq 0.
\end{equation}
The right-hand side of \eqref{eqd11} can be estimated via the $L^2$ closeness estimate \eqref{bound1} and the uniform bound \eqref{eqd10}. In particular, there exists a constant $c_2 =  c_2(C_{\text{GN}}, \widetilde C, R, m)$ such that
\begin{equation}\label{eqd11a}
\left\|\p_x \Delta(t)\right\|_{L^2(\mathbb R)} \leq c_2 \varepsilon^\sigma t^{\frac{m-1}{m}}, \quad \sigma = \frac{3(m-1)}{m}, \quad t\geq 0.
\end{equation}
Observe that, although the time growth rate $\frac{m-1}{m}$ is sublinear (as opposed to the linear growth of the bounds in Theorem \ref{TH1}), the proximity exponent $\sigma$ is smaller as it belongs to $(0, 3)$. In the limit of infinitely smooth initial data $m\rightarrow \infty$, which is a much stronger assumption than the one of Theorem~\ref{TH1}, we recover \eqref{gr1} after adding \eqref{eqd11a} to the $L^2$ closeness estimate \eqref{bound1}. 
That is, the energy method yields the cubic proximity exponent of the Fourier transform method of Theorem~\ref{TH1} only in the special case of infinitely smooth initial data.
\\[3mm]
\noindent
\textbf{Robustness of stable solutions.}
An interesting application of Theorem \ref{TH1} (and of the general statement in Theorem \ref{Theorem:stability}) is related to the robustness of stable solutions of one system in the dynamics of the other. As an illustrative example, we consider the case of standing wave solutions to the general focusing semilinear Schr\"odinger equation \eqref{NLSP}, namely solutions of the form
\begin{equation}
\label{sw1}
 U(x,t) = e^{i\omega t} \, W(x), \quad \omega>0,
\end{equation}
with $W$ satisfying the stationary equation
\begin{equation}
\label{sw2}
 -W''+\omega W+\mu|W|^{2p}W=0, \quad \mu > 0.
\end{equation}
Associated with the solutions of \eqref{sw2} is the functional
\begin{equation}
\label{Fun1}
S_p(W)=\frac{1}{2}\int_{\mathbb{R}}|W'|^2dx+\frac{\omega}{2}\int_{\mathbb{R}}|W|^2dx-\frac{\mu}{2p+2}\int_{\mathbb{R}}|W|^{2p+2}dx,
\end{equation}
and the set
\begin{equation}
\mathcal{A}_p=\left\{W\in H^{1}(\mathbb{R}): W\neq 0 \text{ and }
-W''+\omega W+\mu|W|^{2p}W=0\right\}.
\end{equation}
The orbital stability of the solutions \eqref{sw1} is discussed in \cite{book1}. In particular, the stability result reads as follows.
\begin{theorem}[Theorem 8.3.1 in \cite{book1}]
\label{cazstab}
 Let $p<2$. If $W\in\mathcal{A}_p$, then \eqref{sw1} is a stable solution of equation \eqref{NLSP} in the following sense. For every $\epsilon>0$, there exists a $\delta(\epsilon)$ such that if $U_0\in H^{1}(\mathbb{R})$ satisfies $\|U_0-W\|_{H^1(\mathbb{R})}\leq \delta (\epsilon)$ then the  maximal solution $U(x,t)$ of~\eqref{NLSP}  associated with $U_0(x)$ satisfies 
 \begin{equation}
 \label{stabre}
 \sup_{t>0 }\inf_{\theta\in\mathbb{R}}\inf_{y\in\mathbb{R}}
 \left\|U(\cdot,t) - e^{i\theta} W(\cdot-y) \right\|_{H^1(\mathbb{R})}\leq \epsilon.
 \end{equation}
 In other words, if $U_0$ is close to $W$ in $H^1(\mathbb{R})$, then the corresponding solution $U$ remains close to the orbit of $W$, up to space translations and rotations.
\end{theorem}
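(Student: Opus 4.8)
\smallskip
\noindent\textbf{Proof strategy.} The plan is to follow the concentration--compactness argument of Cazenave--Lions \cite{CazLi,book1}: realize $W$, up to the symmetries of \eqref{NLSP}, as a minimizer of the energy under a fixed $L^2$ constraint, show that minimizing sequences are precompact in $H^1(\mathbb R)$ modulo translations, and then convert this compactness into the stability statement \eqref{stabre} by a contradiction argument driven by the conservation of the energy and of the $L^2$ norm along the flow. Concretely, set $E(v)=\frac12\int_{\mathbb R}|v'|^2\,dx-\frac{\mu}{2p+2}\int_{\mathbb R}|v|^{2p+2}\,dx$, so that $S_p=E+\omega Q$ with $Q(v)=\frac12\int_{\mathbb R}|v|^2\,dx$, put $m=\|W\|_{L^2(\mathbb R)}^2$, and introduce
\[
I_m=\inf\left\{E(v):\ v\in H^1(\mathbb R),\ \|v\|_{L^2(\mathbb R)}^2=m\right\}.
\]

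The first block of work is the variational analysis of $I_m$. Using the one-dimensional Gagliardo--Nirenberg inequality \eqref{GN} to bound $\|v\|_{L^{2p+2}}^{2p+2}$ by $\|v'\|_{L^2}^{p}\|v\|_{L^2}^{p+2}$, one checks in the $L^2$-subcritical regime $p<2$ that $E$ is bounded below on the constraint set, that $I_m<0$ (by evaluating $E$ along the mass-preserving rescaling $v_\lambda(x)=\lambda^{1/2}v(\lambda x)$ as $\lambda\to0^+$), and that $I_m=m^{\sigma}I_1$ with $\sigma=\frac{2+p}{2-p}>1$; since $I_1<0$ and $t\mapsto t^{\sigma}$ is superadditive, this yields the strict subadditivity $I_m<I_{m_1}+I_{m_2}$ for all $m=m_1+m_2$ with $m_1,m_2>0$. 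A Lagrange-multiplier computation then shows that any minimizer $\varphi$ solves the stationary profile equation for some $\omega>0$, i.e. $e^{i\theta}\varphi(\cdot-y)\in\mathcal{A}_p$; conversely, the one-dimensional ODE classification of $H^1$ solutions identifies every element of $\mathcal{A}_p$ with $L^2$ norm $m$ as such a translate-and-phase rotation of a single soliton, so the set of minimizers of $I_m$ coincides with the symmetry orbit $\mathcal{G}=\{e^{i\theta}W(\cdot-y):\theta,y\in\mathbb R\}$. The next step is the compactness statement: for any minimizing sequence $(v_n)$ there are translations $y_n$ and a subsequence with $v_n(\cdot-y_n)\to v$ strongly in $H^1(\mathbb R)$ and $v\in\mathcal{G}$. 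This is Lions' concentration--compactness dichotomy applied to $\rho_n=|v_n|^2+|v_n'|^2$: vanishing is ruled out because it would force $\|v_n\|_{L^{2p+2}}\to0$ and hence $\liminf E(v_n)\geq0>I_m$; dichotomy is ruled out by the strict subadditivity just established; compactness therefore holds, and weak convergence is promoted to strong convergence by the $L^2$ constraint together with the weak lower semicontinuity of $E$, the limit being a minimizer.

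It then remains to deduce \eqref{stabre}. Suppose it fails: there exist $\epsilon_0>0$, initial data $U_0^n\to W$ in $H^1(\mathbb R)$, and times $t_n>0$ with $\inf_{\theta,y}\|U^n(\cdot,t_n)-e^{i\theta}W(\cdot-y)\|_{H^1(\mathbb R)}\geq\epsilon_0$, where $U^n$ is the solution of \eqref{NLSP} with datum $U_0^n$, which is globally defined since $p<2$ by Theorem \ref{zbc-wp-t}(i). By continuity of $E$ and $Q$ on $H^1(\mathbb R)$ and their conservation along the flow, $E(U^n(t_n))\to E(W)=I_m$ and $\|U^n(t_n)\|_{L^2(\mathbb R)}^2\to m$; rescaling to $w_n=(m/\|U^n(t_n)\|_{L^2(\mathbb R)}^2)^{1/2}\,U^n(t_n)$ produces an exact minimizing sequence for $I_m$. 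The compactness step then gives, along a subsequence and after suitable translations and phases, $w_n\to g\in\mathcal{G}$ in $H^1(\mathbb R)$, hence $U^n(t_n)\to g$ in $H^1(\mathbb R)$, which contradicts the separation by $\epsilon_0$. This establishes the theorem.

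The main obstacle is the compactness step, and inside it the exclusion of dichotomy --- i.e. the strict subadditivity of $m\mapsto I_m$; this is exactly where $p<2$ enters, since the scaling exponent $\sigma=\frac{2+p}{2-p}$ exceeds $1$ precisely when $0<p<2$, whereas at $p=2$ the problem becomes $L^2$-critical, $I_m$ degenerates, and the standing wave is in fact unstable. A second, more hands-on route that avoids concentration--compactness is the Grillakis--Shatah--Strauss/Weinstein spectral method \cite{Wein86,G1,G2}: one verifies that the Hessian $S_p''(W)$ has exactly one negative eigenvalue and kernel spanned by the infinitesimal symmetries, and that the Vakhitov--Kolokolov slope condition $\frac{d}{d\omega}\|W_\omega\|_{L^2(\mathbb R)}^2>0$ holds --- again equivalent to $p<2$ for the power nonlinearity --- so that $S_p$ restricted to the symmetry-transverse subspace is coercive and serves as a conserved Lyapunov functional; the price of this second approach is the modulation-theory bookkeeping of the parameters $\theta(t),y(t)$.
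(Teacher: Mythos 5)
This statement is not proved in the paper at all: it is quoted verbatim as Theorem 8.3.1 of Cazenave's monograph \cite{book1} (the Cazenave--Lions orbital stability result), and the authors use it as a black box in the transitivity argument leading to Corollary \ref{cazstab-c}. Your sketch is essentially a correct reconstruction of the proof given in that cited source: the $L^2$-constrained minimization of the energy, strict subadditivity of $m\mapsto I_m$ from the scaling law $I_m=m^{(2+p)/(2-p)}I_1$ with $I_1<0$ (which is exactly where $p<2$ enters), exclusion of vanishing and dichotomy, compactness of minimizing sequences modulo translations, and the contradiction argument using conservation of energy and mass along the globally defined flow. The only point you gloss over is the identification of the \emph{given} $W\in\mathcal{A}_p$ (with its prescribed frequency $\omega$) with the minimizing orbit for the mass $m=\|W\|_{L^2}^2$: besides the one-dimensional ODE classification you invoke, this uses that along the soliton family the mass $\|W_\omega\|_{L^2}^2\sim\omega^{(2-p)/(2p)}$ is strictly monotone in $\omega$ for $p<2$, so the Lagrange multiplier produced by the minimization necessarily coincides with the $\omega$ of $W$; with that remark added, your outline matches the standard proof, and your alternative via the Grillakis--Shatah--Strauss/Weinstein--Vakhitov--Kolokolov criterion is an equally valid, independently known route.
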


It is crucial to recall that, in the subcritical case $0<p<2$,  any  standing wave solution can be mapped via the transformation
\begin{equation}\label{scale1}
U(x,t)\mapsto U_{\varrho}(x,t):=\varrho^{\frac{1}{p}} U(\varrho x, \varrho^2t), \quad \varrho>0,
\end{equation}
to one with an $L^2$ norm of arbitrary size \cite{Wein99}, as 
\begin{equation}\label{scale2}
\left\|U_{\varrho}(t)\right\|_{L^2(\mathbb{R})}=\varrho^{\frac{2-p}{2p}} \left\|U(\varrho^2 t)\right\|_{L^2(\mathbb{R})}
\end{equation}
and so for $p<2$ one may construct standing waves of arbitrary $L^2$ norm for a suitable choice of $\varrho$.  Moreover, for $X:=\varrho x$ we have $\partial_x U_{\varrho}(x,t)=\varrho^{\frac{p+1}{p}}\partial_X U(X, \varrho^2t)$ and so  
\begin{equation}\label{scale3}
\left\|\partial_x U_{\varrho}(t)\right\|_{L^2(\mathbb{R})}=\varrho^{\frac{p+2}{2p}} \left\|\partial_x U(\varrho^2 t)\right\|_{L^2(\mathbb{R})}.
\end{equation} 
Therefore, the scaling \eqref{scale1} preserves the stability of standing waves. Indeed, suppose $W \in \mathcal{A}_p$ corresponds to a stable standing wave. Then, given $\epsilon > 0$ and initial data $U_0$ satisfying the hypothesis of Theorem \ref{cazstab}, for $W_{\varrho}(x) := \varrho^{\frac{1}{p}}W(\varrho x)$ and ${U_0}_{\varrho}(x)$ defined analogously we have, in view of \eqref{scale2} and~\eqref{scale3},
\begin{equation}
\left\|{U_0}_\varrho-W_{\varrho}\right\|_{H^1(\mathbb{R})}
\leq
\max\left\{\varrho^{\frac{2-p}{2p}}, \varrho^{\frac{2+p}{2p}}\right\} \delta
\end{equation}
and 
\begin{equation}
\sup_{t>0 }\inf_{\theta\in\mathbb{R}}\inf_{y\in\mathbb{R}}
 \left\|U_\varrho(\cdot, \varrho^2 t)-e^{i\theta} W_\varrho(\cdot-y)\right\|_{H^1(\mathbb{R})}
\leq 
\max\left\{\varrho^{\frac{2-p}{2p}}, \varrho^{\frac{2+p}{2p}}\right\} \epsilon,
\end{equation}
implying the stability of the standing wave associated with $W_\varrho$. 

With the above preparations, a transitivity argument that combines Theorem \ref{TH1} proved in this work  with Theorem \ref{cazstab} from  \cite{book1} implies the following result.
\begin{corollary}\label{cazstab-c}
Suppose $1<p<2$ and let $U_{\textnormal{NI}}(x,t) = e^{i\omega t} \, W_{\textnormal{NI}}(x)$, $W_{\textnormal{NI}} \in \mathcal{A}_p$, $\omega>0$, be a standing wave solution of the non-integrable NLS equation \eqref{NLSP} which is stable in the sense of Theorem~\ref{cazstab}.
Furthermore, given $0<\ve<1$, let $u_0, U_0 \in H^1(\mathbb R)$ be initial data for the integrable  and non-integrable NLS equations~\eqref{NLS} and \eqref{NLSP}, respectively, satisfying the conditions
\begin{align}
\label{nsw1}
\left\|U_0-W_{\textnormal{NI}}\right\|_{H^1(\mathbb{R})}&\leq \delta (\varepsilon),\\
\label{nsw2}
\left\|u_0\right\|_{H^1(\mathbb{R})}&\leq c_0\varepsilon,\\
\label{nsw3}
\left\|u_0-U_0\right\|_{H^1(\mathbb{R})}&\leq c_1 \varepsilon,
\end{align}
for some constants $c_0, c_1>0$, with $\delta(\varepsilon)$ satisfying the stability criterion of Theorem \ref{cazstab}.
Then, $\left\|u_0-W_{\textnormal{NI}}\right\|_{H^1(\mathbb{R})}\leq  c_1 \ve + \delta(\ve) =: \widehat{\delta}(\varepsilon)$ and for arbitrary $0<T_f<\infty$ there exists a constant $\widehat{K}(T_f)>0$ such that
\begin{equation}
 \label{rob2}
\sup_{t\in[0,T_f]}\inf_{\theta\in\mathbb{R}}\inf_{y\in\mathbb{R}}
\left\|u(\cdot,t) - e^{i\theta} W_{\textnormal{NI}}(\cdot-y) \right\|_{H^1(\mathbb{R})}\leq \widehat{K}\varepsilon.
\end{equation}
\end{corollary}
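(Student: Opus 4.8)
The plan is to prove Corollary~\ref{cazstab-c} by a straightforward transitivity (triangle inequality) argument that chains together the proximity estimate of Theorem~\ref{TH1} with the orbital stability estimate of Theorem~\ref{cazstab}. First I would verify that the hypotheses of Theorem~\ref{TH1}(ii) are met: the conditions \eqref{nsw2} and \eqref{nsw3} are precisely of the form \eqref{d1} and (a weakened version of) \eqref{dH1}; note however that \eqref{nsw3} only gives an $\mathcal{O}(\varepsilon)$ bound on $\|u_0-U_0\|_{H^1}$ rather than the $\mathcal{O}(\varepsilon^3)$ bound in \eqref{dH1}, so the conclusion one actually extracts from the Fourier-transform argument in the proof of Theorem~\ref{TH1} is the correspondingly weakened estimate
\begin{equation}
\label{rob-aux1}
\sup_{t\in[0,T_f]}\left\|u(t)-U(t)\right\|_{H^1(\mathbb{R})}\le \widetilde{C}_1'\,\varepsilon
\end{equation}
for a constant $\widetilde{C}_1'$ with the same structure as $\widetilde{C}_1$ (one simply stops bounding $\varepsilon^{2p+1}$ against $\varepsilon^3$ and instead against $\varepsilon$). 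I would also record that $\|U_0\|_{H^1}\le \|U_0-W_{\textnormal{NI}}\|_{H^1}+\|W_{\textnormal{NI}}\|_{H^1}\le \delta(\varepsilon)+\|W_{\textnormal{NI}}\|_{H^1}$, which is bounded uniformly in $\varepsilon$, so that the global-existence-and-boundedness hypotheses behind Theorem~\ref{TH1} (via Theorem~\ref{zbc-wp-t}, using $1<p<2$ so that part~(i) applies and no smallness of $U_0$ is needed) are in force; hence $U(x,t)$ is the stable standing wave's perturbation and Theorem~\ref{cazstab} applies to it.

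The core of the argument is then the triangle inequality. For each fixed $t\in[0,T_f]$ and each pair $\theta\in\mathbb{R}$, $y\in\mathbb{R}$, write
\begin{equation}
\left\|u(\cdot,t)-e^{i\theta}W_{\textnormal{NI}}(\cdot-y)\right\|_{H^1(\mathbb{R})}
\le
\left\|u(\cdot,t)-U(\cdot,t)\right\|_{H^1(\mathbb{R})}
+
\left\|U(\cdot,t)-e^{i\theta}W_{\textnormal{NI}}(\cdot-y)\right\|_{H^1(\mathbb{R})}.
\end{equation}
Taking the infimum over $\theta,y$ on both sides and then the supremum over $t\in[0,T_f]$, the first term is controlled by \eqref{rob-aux1}, uniformly in $t$, while the second term, after taking $\inf_{\theta,y}\sup_{t>0}$ (or, more carefully, $\sup_{t>0}\inf_{\theta,y}$, which is what \eqref{stabre} bounds), is at most $\epsilon$ by Theorem~\ref{cazstab} applied with the data $U_0$ satisfying \eqref{nsw1}. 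Choosing the free parameter $\epsilon$ in Theorem~\ref{cazstab} to equal $\varepsilon$ (which is legitimate since $0<\varepsilon<1$ and Theorem~\ref{cazstab} allows any positive tolerance, with $\delta(\varepsilon)$ the associated threshold), one obtains
\begin{equation}
\sup_{t\in[0,T_f]}\inf_{\theta\in\mathbb{R}}\inf_{y\in\mathbb{R}}
\left\|u(\cdot,t)-e^{i\theta}W_{\textnormal{NI}}(\cdot-y)\right\|_{H^1(\mathbb{R})}
\le \widetilde{C}_1'\,\varepsilon + \varepsilon =: \widehat{K}\,\varepsilon,
\end{equation}
with $\widehat{K}=\widehat{K}(T_f)=\widetilde{C}_1'+1$, which is \eqref{rob2}. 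The auxiliary bound $\|u_0-W_{\textnormal{NI}}\|_{H^1}\le c_1\varepsilon+\delta(\varepsilon)=\widehat{\delta}(\varepsilon)$ is an immediate triangle inequality from \eqref{nsw1} and \eqref{nsw3}, and serves only to record that $u_0$ itself starts out close to the standing wave's orbit.

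I do not expect any serious obstacle here; the statement is essentially a packaging result. The one point requiring genuine care is the bookkeeping on the powers of $\varepsilon$: one must be honest that feeding \eqref{nsw3} (an $\mathcal{O}(\varepsilon)$ distance of initial data) into the machinery of Theorem~\ref{TH1} yields an $\mathcal{O}(\varepsilon)$—not $\mathcal{O}(\varepsilon^3)$—closeness of the two solutions, which is why the final conclusion \eqref{rob2} is stated at order $\varepsilon$ and not $\varepsilon^3$. A secondary subtlety is the ordering of the suprema and infima: Theorem~\ref{cazstab} controls $\sup_{t>0}\inf_{\theta,y}\|\cdot\|$, and since $\sup_{t}\inf_{\theta,y}$ distributes over the triangle-inequality splitting only after one pulls the $t$-uniform bound \eqref{rob-aux1} out first, the argument should be arranged so that the $u-U$ term is estimated by its $\sup_t$ bound before the $\inf_{\theta,y}$ is taken, which is harmless since that bound is independent of $\theta,y$. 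Finally, one should note explicitly that the restriction $1<p<2$ (rather than merely $p<2$) is what makes $W_{\textnormal{NI}}\in\mathcal{A}_p$ meaningful together with $p\ge1$, so that Theorem~\ref{TH1} (which assumes $p\ge1$) and Theorem~\ref{cazstab} (which assumes $p<2$) are simultaneously applicable.
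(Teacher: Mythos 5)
Your proposal is correct and follows essentially the same transitivity argument as the paper: the triangle inequality splits $u-e^{i\theta}W_{\textnormal{NI}}(\cdot-y)$ into $u-U$, controlled at order $\varepsilon$ via the machinery of Theorem \ref{TH1} (the paper gets this by applying Theorem \ref{TH1} verbatim with $\varepsilon^{1/3}$ in place of $\varepsilon$, after noting $\|U_0\|_{H^1(\mathbb{R})}\le (c_0+c_1)\varepsilon$ from \eqref{nsw2}--\eqref{nsw3}), and $U-e^{i\theta}W_{\textnormal{NI}}(\cdot-y)$, controlled by Theorem \ref{cazstab} with tolerance $\varepsilon$, with the same ordering of $\sup_t$ and $\inf_{\theta,y}$ and the same constant $\widehat{K}=\widetilde{C}_1+1$. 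The only cosmetic difference is that you re-run the proof of Theorem \ref{TH1} with the weakened $\mathcal{O}(\varepsilon)$ data distance rather than rescaling $\varepsilon\mapsto\varepsilon^{1/3}$, and you leave implicit the one-line bound $\|U_0\|_{H^1(\mathbb{R})}\le(c_0+c_1)\varepsilon$ that the paper states explicitly.
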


\begin{proof} 
Let $0<\ve<1$. First, note that the conditions \eqref{nsw1} and \eqref{nsw3} combined with the triangle inequality readily yield
$
\|u_0-W_{\text{NI}}\|_{H^1(\mathbb{R})}\leq 
\|u_0-U_0\|_{H^1(\mathbb{R})}+\|U_0-W_{\text{NI}}\|_{H^1(\mathbb{R})}\leq \widehat{\delta}(\varepsilon)$.
Next, since $W_{\text{NI}}$ and $U_0$ satisfy the hypothesis of Theorem \ref{cazstab}, the solution $U$ of the non-integrable NLS  equation \eqref{NLSP} associated with $U_0$ satisfies the bound
\begin{equation}\label{nsw0}
\sup_{t>0 }\inf_{\theta\in\mathbb{R}}\inf_{y\in\mathbb{R}}
 \left\|U(\cdot,t) - e^{i\theta} W_{\textnormal{NI}}(\cdot-y) \right\|_{H^1(\mathbb{R})}\leq \varepsilon.
 \end{equation} 
Let $u$ denote the solution of the integrable NLS  equation \eqref{NLS} associated with $u_0$. Then, in view of the conditions \eqref{nsw2} and \eqref{nsw3}, which via the triangle inequality and the fact that $\ve<1$ imply $\left\|U_0\right\|_{H^{1}(\mathbb R)}\leq \left(c_0+c_1\right) \varepsilon \leq \left(c_0+c_1\right) \ve^{\frac 13}$,  estimate \eqref{bound2} of Theorem \ref{TH1} yields  
\begin{equation}\label{nsw4}
\left\| u(t)-U(t) \right\|_{H^{1}(\mathbb R)}\le \widetilde C_1 \varepsilon,
\quad 
t\in [0,T_f], \ \widetilde C_1=\widetilde C_1(T_f).
\end{equation}
Hence,   combining the triangle inequality with \eqref{nsw0} and \eqref{nsw4}, we deduce
 \begin{equation*}
 \begin{aligned}
&\quad
\sup_{t\in[0,T_f]}\inf_{\theta\in\mathbb{R}}\inf_{y\in\mathbb{R}} \left\|u(\cdot,t)- e^{i\theta} W_{\text{NI}}(\cdot-y)\right\|_{H^1(\mathbb{R})}
\\
&\leq \sup_{t\in[0,T_f]}\
\left\|u(t)-U(t)\right\|_{H^1(\mathbb{R})}
+
\sup_{t>0}\inf_{\theta\in\mathbb{R}}\inf_{y\in\mathbb{R}} \left\|U(\cdot,t)- e^{i\theta} W_{\text{NI}}(\cdot-y)\right\|_{H^1(\mathbb{R})}
\leq
\widetilde C_1 \varepsilon +\varepsilon,
\end{aligned}
 \end{equation*}
which amounts to the desired estimate \eqref{rob2} with $\widehat{K}=\widetilde C_1+1$.
\end{proof}

\noindent
\textbf{Numerical simulations.}
We conclude this section with a  numerical study illustrating our analytical results in the case of the zero boundary conditions \eqref{vbc}.  As a first example, we consider the case of  bright solitons. The second example concerns the more intricate case of collision of bright solitons.
\\[2mm]
\noindent
\textit{Bright solitons.}
We supplement the non-integrable NLS equation \eqref{noninNLS} with the initial condition emanating from the one-soliton solution of the integrable NLS equation \eqref{NLS} (e.g. see \cite{AKbook}), namely
\begin{equation}\label{inc}
U_0(x)=u_{\text{S}}(x,0) := A\text{sech}(A x) e^{ic_s x}
\end{equation}
with the parameters $A$, $c_s$ chosen  so that the smallness conditions of Theorem~\ref{TH1}  are met.

\begin{figure}[ht!]
	\begin{center}
		\begin{tabular}{cc}
			\hspace{-0.0cm}\includegraphics[width=.42\textwidth]{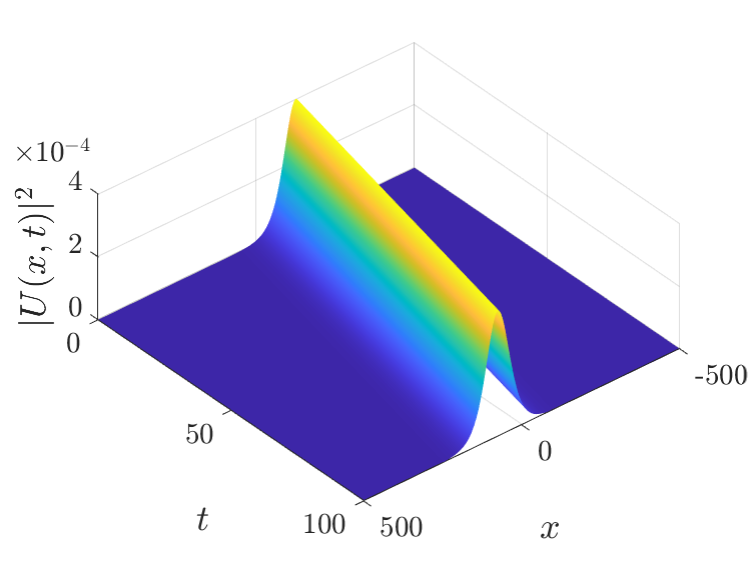}
			\hspace{-0.0cm}\includegraphics[width=.42\textwidth]{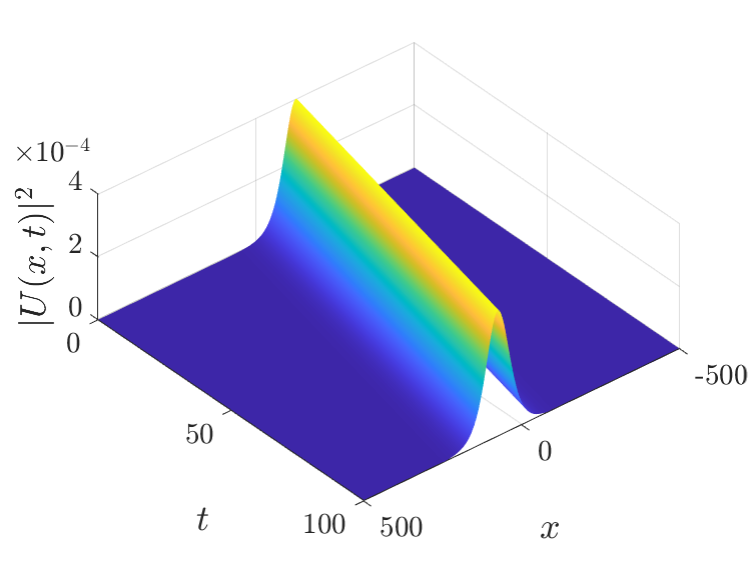}\\
			\hspace{-0.0cm}\includegraphics[width=.42\textwidth]{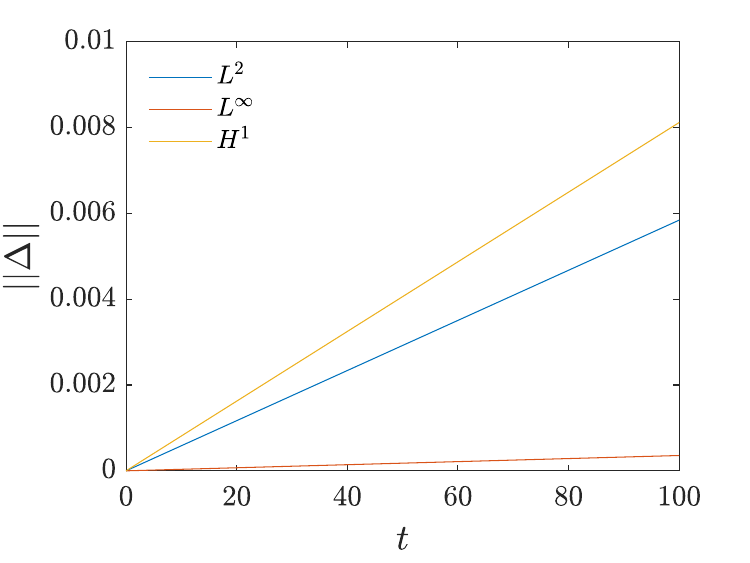}
			\hspace{0.3cm}\includegraphics[width=.42\textwidth]{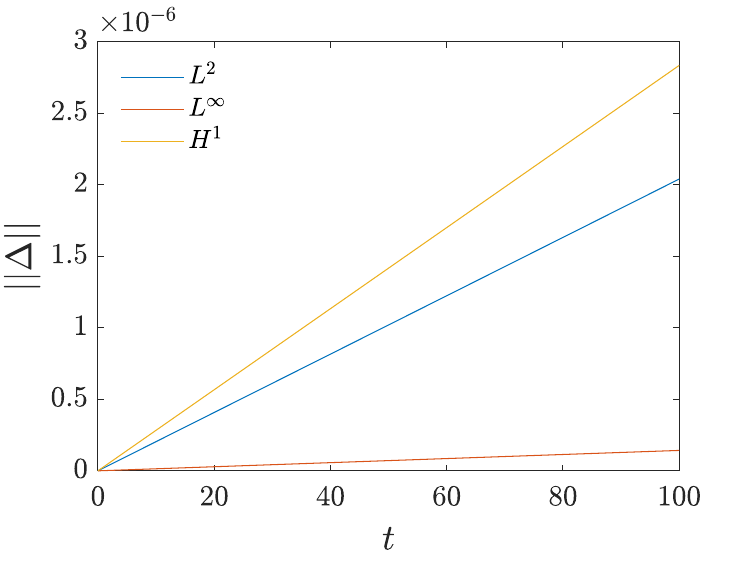}
		\end{tabular}
	\end{center}
	\caption{Top row: Spatiotemporal evolution of the density $|U(x,t)|^2$ of the  non-integrable NLS equation \eqref{noninNLS} with $\nu=1/2$, $\gamma=1$ and supplemented with the soliton initial condition  \eqref{inc}. Left: Power nonlinearity \eqref{NLSP} with $p=3$ (supercritical case). Right: Saturable nonlinearity \eqref{Sat2}. 
Bottom row: The corresponding evolution of the norms $\left\|\Delta(t)\right\|_{\mathcal{X}}$ with $\mathcal X = L^2, H^1 \text{ or } L^\infty$ for each of the two cases.
}
\label{fig1}
\end{figure}

More specifically, we trigger the dynamics of the non-integrable NLS equation \eqref{noninNLS} with $\nu=1/2$, $\gamma=1$ and the initial condition \eqref{inc} with small amplitude $A=0.02$ and velocity $c_s=1$.  We examine the case of a power nonlinearity \eqref{NLSP} with $p=3$ (top left panel in Figure \ref{fig1}) as well as the saturable nonlinearity model \eqref{Sat2} (top right panel in Figure \ref{fig1}).
As explained below, in both cases the evolution is essentially indistinguishable from the one of the integrable NLS equation. 
The bottom panels of Figure~\ref{fig1}  show the evolution of  the distance norms $\left\| \Delta \right\|_{L^2(\mathbb R)}$ (middle (blue) line), $\left\| \Delta \right\|_{H^1(\mathbb R)}$ (upper (yellow) line) and $\left\| \Delta \right\|_{L^{\infty}(\mathbb R)}$ (bottom (red) line). 
The evolution of these norms appears to be in an excellent agreement with the theoretical prediction~\eqref{gr1} for their linear growth with respect to time. 
For the aforementioned set of parameter values, we have $\left\|u_{\text{S}}(0)\right\|_{H^1(\mathbb R)}=\left\| U_0\right\|_{H^1(\mathbb R)} \simeq 0.2$, i.e. we may take  $\varepsilon=0.2$ in which case there is an excellent agreement with the theoretical predictions regarding the growth of the distance in terms of $\varepsilon$. Indeed, for  times $t\sim \mathcal{O}\left(1/\varepsilon^2\right)$ the bound \eqref{gr1} predicts that $\left\|\Delta(t)\right\|_{\mathcal{X}}\sim \mathcal{O}(\varepsilon)$, which is consistent with the evolution of  the distance norms  over the interval $t\in [0,T_f]$  with $T_f=100$ depicted in the bottom row of Figure \ref{fig1}. 

Note that in the case of the saturable nonlinearity the linear growth of $\left\|\Delta(t)\right\|_{\mathcal{X}}$ is incremental, suggesting that the dynamics of the saturable NLS equation may be even closer to those of the integrable one, compared to the power nonlinearity model.  This behavior also insinuates that in the non-integrable cases the  persistence of small-amplitude bright solitons may last for a significantly long time, which is quite remarkable.

\begin{figure}[ht!]
	\begin{center}
		\begin{tabular}{cc}
			\includegraphics[width=.45\textwidth]{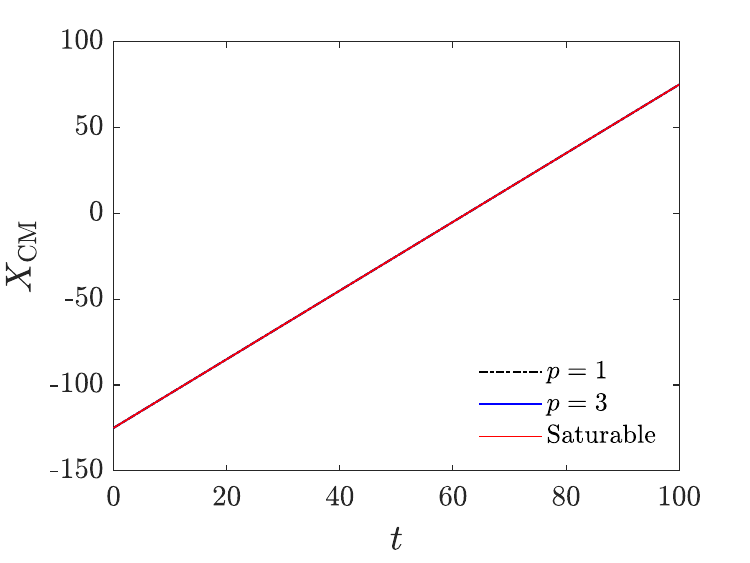}
			\includegraphics[width=.45\textwidth]{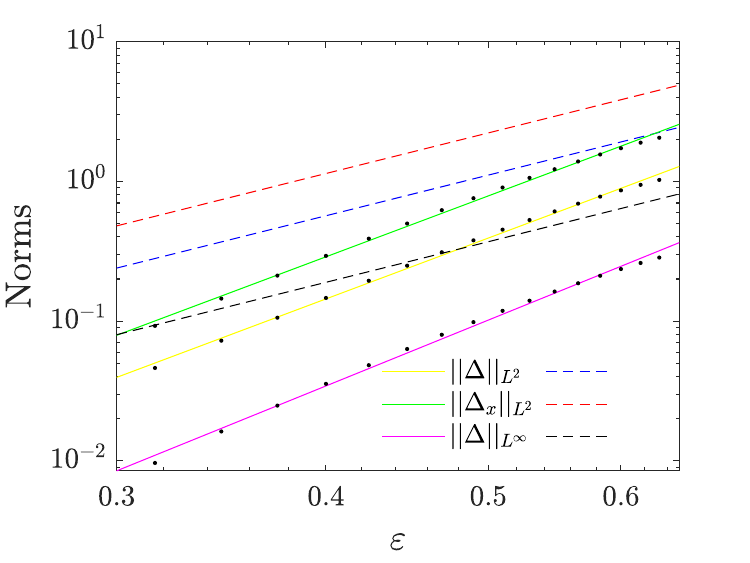}
		\end{tabular}
	\end{center}
	\caption{Left: Spatiotemporal evolution of the soliton's center of mass for the integrable and the non-integrable models.  Right: Logarithmic scale plots of the variation of the distance norm $\left\| \Delta(t) \right\|_{\mathcal X}$ with $\mathcal X = L^2, H^1 \text{ or } L^\infty$  as a function of $\varepsilon$ for fixed $T_f=600$. See text for more details. 
}
	\label{fig2}
\end{figure}

The left panel of Figure \ref{fig2}  depicts the numerical evolution of the center of mass of the soliton, presented by solid lines for the non-integrable models and by a dashed line for  the integrable model (in fact, the two solid lines for the power and saturable nonlinearities are indistinguishable). For this study, we use a spatial translation of the initial condition \eqref{inc}, namely $U_0(x)=A\mathrm{sech}[A(x-x_0)]e^{ic_s(x-x_0)}$   with $x_0=125$ and $A=0.02$, $c_s=1$ as before. The two lines corresponding to the non-integrable models are indistinguishable from the one of the integrable NLS which is extended in the figure also  to $x<0$ for visualization purposes.   The solitary waves of the non-integrable models demonstrate remarkable robustness.

Besides the case of small-amplitude initial data considered above, 
it is also important to investigate the case when the value of the parameter $\varepsilon$ is increased.  The right panel of  Figure \ref{fig2} shows  logarithmic scale plots of the variation of the distance norm $\left\|\Delta(t)\right\|_{\mathcal{X}}$ for $\mathcal X = L^2(\mathbb R),  H^1(\mathbb R) \text{ or } L^\infty(\mathbb R)$ as a function of $\varepsilon$ for fixed $T_f=600$.  The dots along the solid lines correspond to the numerically obtained rates of these variations fitted to the lines of the form $\left\| \Delta(T_f) \right\|_{L^2(\mathbb R)}$ versus $K_1\varepsilon^{\sigma_1}$, $\left\| \Delta(T_f) \right\|_{H^1(\mathbb R)}$ versus $K_2\varepsilon^{\sigma_2}$, and  $\left\| \Delta(T_f) \right\|_{L^{\infty}(\mathbb R)}$ versus $K_3\varepsilon^{\sigma_3}$. The fitting results in the following values: $K_1=8.87$, $K_2=17.77$, $K_3=2.96$ and $\sigma_1=\sigma_2=4.5$, $\sigma_3=4.86$.  
For these values of constants $K_j$, the dashed lines correspond to  the analytical estimates of Theorem \ref{TH1} of the form  $\left\| \Delta(T_f) \right\|_{\mathcal X}$ versus  $K_j\varepsilon^3$. 
The numerical results are consistent with the order of the analytical estimates and, in fact, indicate that the numerical variation of the distance norms may be of significantly lower rate, namely of $\mathcal{O}(\varepsilon^{4.5})$ or $\mathcal{O}(\varepsilon^5)$.
\\[2mm]
\noindent
\textit{Bright soliton collisions.}
For the study of the dynamics of bright soliton collisions, we choose as initial datum two incoming soliton solutions of the cubic NLS equation \eqref{NLS} of the same amplitude $A$, initially separated by a distance $x_0$ and moving against one other at speed $c_s$, namely
\begin{equation}
\label{BSCin}
U_0(x) = A \text{sech}\left[A(x-x_0)\right] e^{-ic_s(x-x_0)} + A \text{sech}\left[A(x+x_0)\right] e^{ic_s(x+x_0)}.
\end{equation}
We take $A=0.1$ and $x_0=50$. For the velocities, we considered both fast solitons with $c_s=2$ and very slow solitons with $c_s=0.02$. 

\begin{figure}[ht!]
	\begin{center}
		\begin{tabular}{cc}
\hspace{-0cm}\includegraphics[width=.45\textwidth]{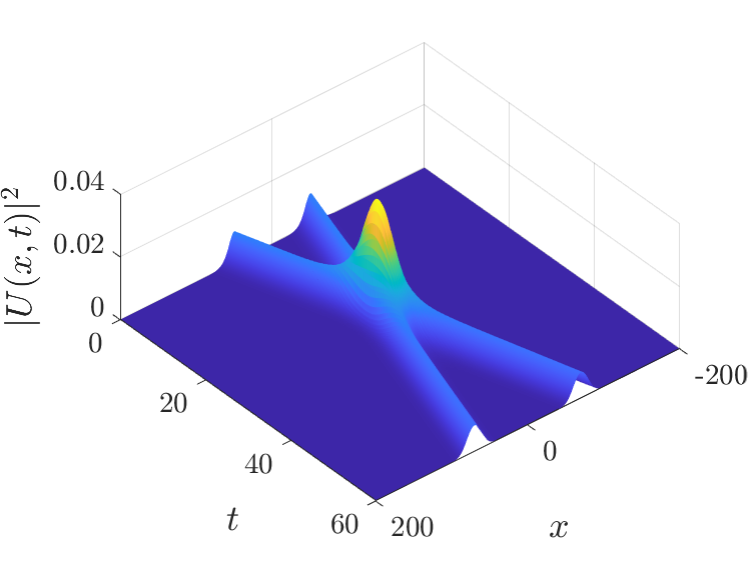}
\hspace{-0cm}\includegraphics[width=.42\textwidth]{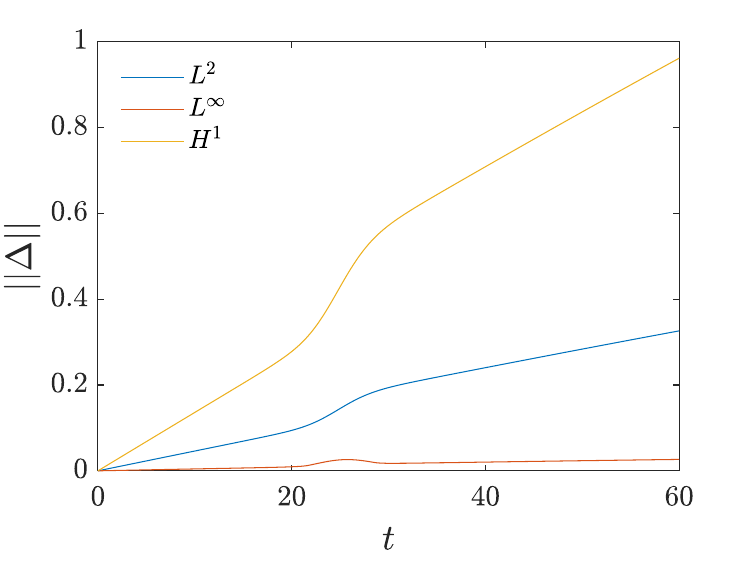}\\
\hspace{-0cm}\includegraphics[width=.45\textwidth]{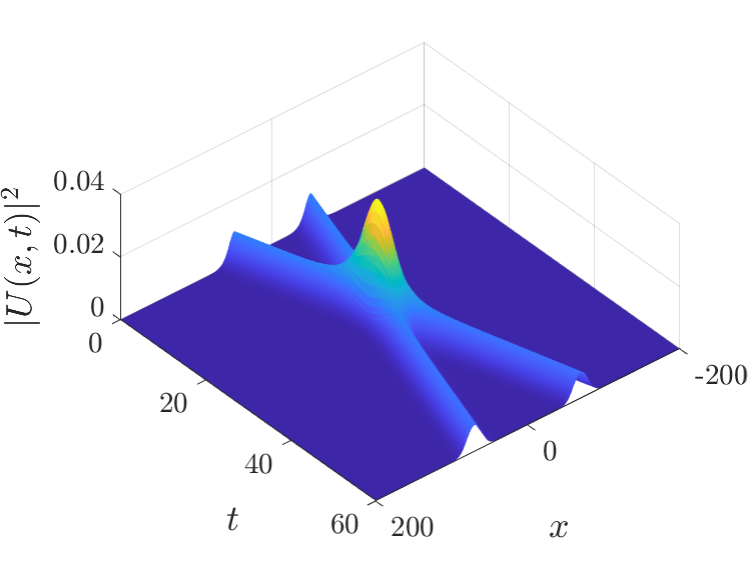}
\hspace{0cm}\includegraphics[width=.44\textwidth]{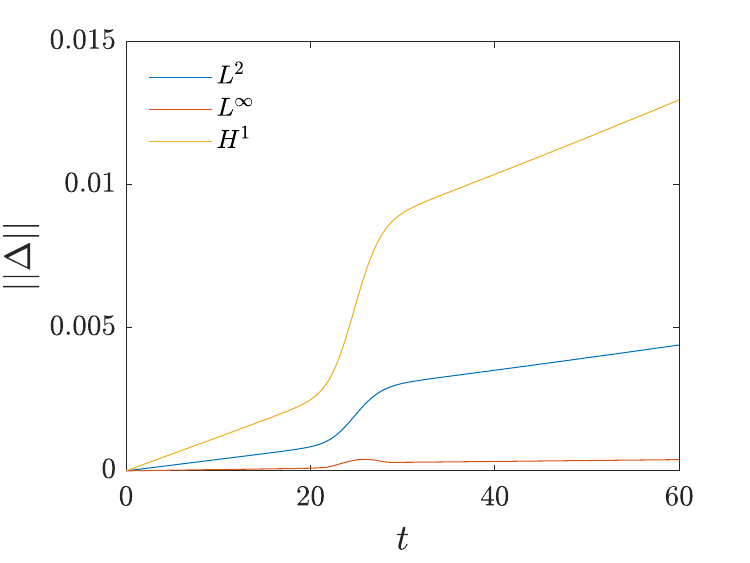}
		\end{tabular}
	\end{center}
	\caption{Fast bright soliton collisions. Left column: Spatiotemporal evolution of the density $|U(x,t)|^2$ initiated by the datum \eqref{BSCin} with $A=0.1$, $x_0=50$, $c_s=2$ for the non-integrable NLS equation with power nonlinearity \eqref{NLSP} in the supercritical case $p=3$ (top) and the non-integrable saturable NLS equation \eqref{Sat2} (bottom).  Right column: Evolution of the norms $\left\|\Delta(t)\right\|_{\mathcal{X}}$ with $\mathcal X = L^2, H^1 \text{ or } L^\infty$  for the power nonlinearity $p=3$ (top)  and the saturable nonlinearity (bottom). In both cases, $\nu=1/2$ and $\mu=\gamma=1$.
}
	\label{fig3}
\end{figure}

Figure \ref{fig3} depicts the results for the soliton collision dynamics in the case of the fast solitons. The top row corresponds to the power nonlinearity model \eqref{NLSP} in the supercritical case $p=3$ and the bottom row for the saturable NLS equation \eqref{Sat2}. The plots of the spatiotemporal evolution of the density $|U(x,t)|^2$ in the left panels reveal  that the non-integrable systems exhibit  collision dynamics that are almost identical to the integrable (cubic) case. This persistence of integrable soliton dynamics is even more evident in the plots of the evolution of the distance norm $\left\|\Delta(t)\right\|_{\mathcal{X}}$ with $\mathcal X = L^2(\mathbb R),  H^1(\mathbb R) \text{ or } L^\infty(\mathbb R)$, which are in excellent agreement with the predictions of Theorem \ref{TH1}. Note that the collision event is captured by the evolution of norms  as the sigmoid-shaped portion of the curve featuring  an interlude of increased, but still moderate, slope. 

Furthermore,  the evolution of norms highlights two features:
\begin{enumerate}[label=(\roman*), leftmargin=7mm, itemsep=1mm, topsep=2mm]
\item The saturable NLS equation seems to be a more structurally stable non-integrable model with respect to the dynamics of the integrable NLS equation, an effect which was already observed in the dynamics of the single bright soliton earlier. 
\item The relatively slow growth of $\left\|\Delta(t)\right\|_{L^{\infty}(\mathbb R)}$ 
indicates that the solutions are closer in the sense of pointwise convergence than they are in the sense of convergence in the $L^2(\mathbb R)$ or $H^1(\mathbb R)$ norms; a larger deviation in the latter norms hints at rather complex  dynamics on smaller scales that may contribute to the overall difference between the integrable and non-integrable cases. 
\end{enumerate}
These two features are further elucidated by the study of the slow solitons collisions shown in Figure \ref{fig4}.  There, the top row corresponds to the power nonlinearity with $p=3$. The breakdown of integrability in this case appears to be dramatic, as it is  illustrated by the spatiotemporal evolution of the density in the top left panel. Part of the energy is trapped at the collision site, a common feature also  present in other non-integrable models (e.g. the Klein-Gordon equation $\phi^4$). 
This is not the case for the saturable nonlinearity, as shown in the bottom row. Qualitatively, the dynamics of the saturable NLS equation is  almost identical to the one of the integrable NLS equation and, in contrast to the power nonlinearity, no energy trapping effect is observed. The vast difference between the two non-integrable models is also prominent in the corresponding evolution of the distance norms $\left\|\Delta(t)\right\|_{\mathcal{X}}$. For the septic power nonlinearity, the deviation of the $L^{\infty}$ norm is still moderate and compliant with an $L^{\infty}$ closeness of solutions. However, the larger growth and the behavior of  the $L^2$ and $H^1$ norms (particularly after the collision) gives further demonstration of the integrability breakdown induced by the power nonlinearity. On the other hand, the corresponding study of the behavior of $\left\|\Delta(t)\right\|_{\mathcal{X}}$ for all three norms in the case of the saturable NLS equation provides additional evidence of structural stability of that model in reference to the integrable one. 
 
\begin{figure}[ht!]
	\begin{center}
		\begin{tabular}{cc}
\hspace{-0cm}\includegraphics[width=.45\textwidth]{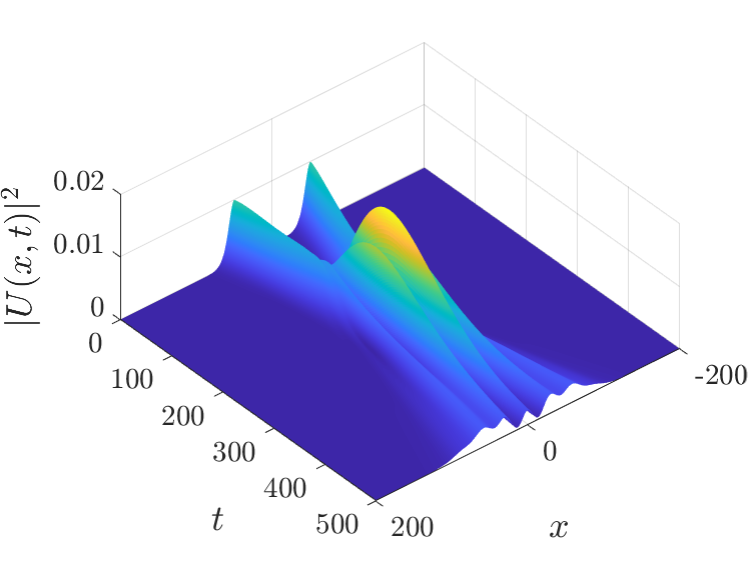}
\hspace{0cm}\includegraphics[width=.41\textwidth]{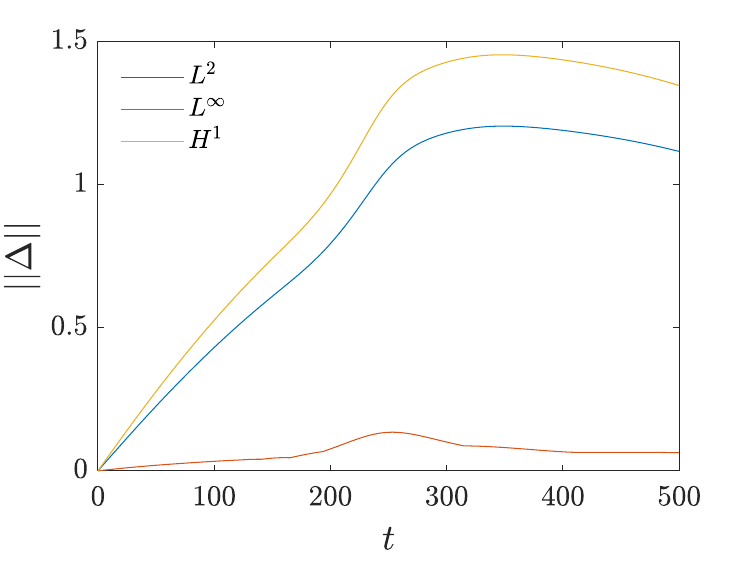}\\
\hspace{-0.cm}\includegraphics[width=.45\textwidth]{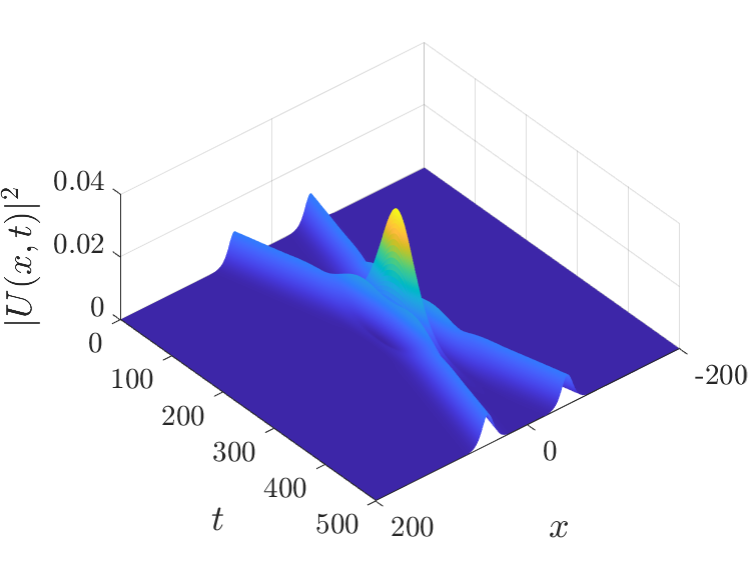}
\hspace{-0cm}\includegraphics[width=.42\textwidth]{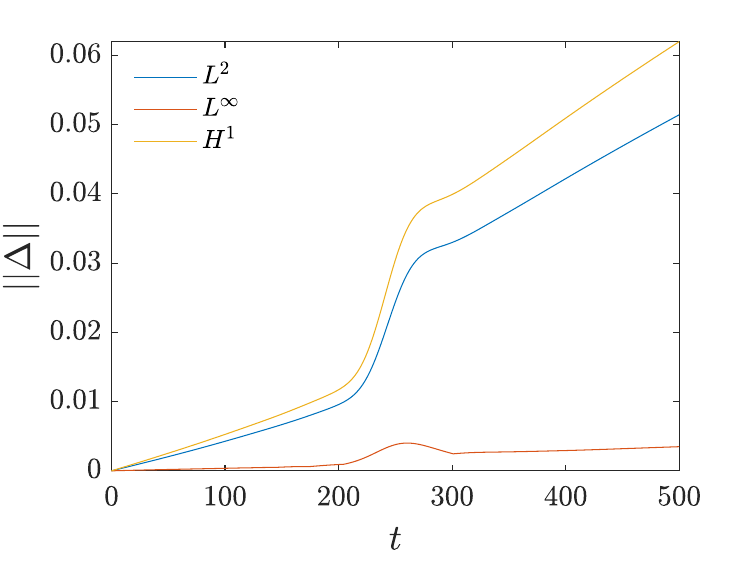}
		\end{tabular}
	\end{center}
	\caption{Slow bright soliton collisions. Same as in Figure \ref{fig3} but with $c_s=0.02$.
}
	\label{fig4}
\end{figure}

%
%
%
\section{Nonzero boundary conditions at infinity}
\label{nzbc-s}

The main objective of this section is to prove the analogue of Theorem \ref{Theorem:stability} --- namely, Theorem~\ref{nzbc-t} below --- on the proximity of solutions between the integrable and non-integrable NLS equations \eqref{NLS} and \eqref{noninNLS}  when these are supplemented with a broad class of \textit{nonzero}  boundary conditions at infinity. 
For this task, we first assume (local) existence of solutions to the non-integrable problem in $H^1(\mathbb R)$ in order to derive the relevant closeness estimates. 
In this regard, we note that the well-posedness results for the general non-integrable problem with nonzero boundary conditions are quite limited when compared to the plethora of results in the case of zero boundary conditions and, in particular, the powerful results of Theorem \ref{zbc-wp-t}. 
Thus, en route to establishing Theorem~\ref{nzbc-t}, we also prove local existence in $H^1(\mathbb R)$ for the non-integrable NLS equations \eqref{NLSP} and \eqref{Sat2} that correspond to the power and saturable nonlinearities, respectively, and as such represent the two most prominent  members of the general family of equations \eqref{noninNLS}. 

Motivated by the physically significant description given in \cite{bm2017,blmt2018,Defoc}, we supplement the integrable  and non-integrable NLS equations \eqref{NLS} and \eqref{noninNLS} with the following general class of nonzero boundary conditions at infinity:
\begin{equation}\label{uU-bc}
\lim_{|x|\rightarrow\infty} u(x,t) = \lim_{|x|\rightarrow\infty}e^{i\mu q_0^2t}\zeta(x), 
\quad 
\lim_{|x|\rightarrow\infty} U(x,t) = \lim_{|x|\rightarrow\infty} e^{i\gamma F(q_0^2)t} \zeta(x), \quad t\geq 0,
\end{equation}
where the complex-valued function $\zeta(x)$ belongs to the Zhidkov space 
\begin{equation}\label{zhi1}
X^1(\mathbb R)
:=
\left\{ \zeta\in L^\infty(\mathbb R): \zeta'\in L^2(\mathbb R)\right\}
\end{equation}
and satisfies 
\begin{equation}\label{zhi2}
\lim_{x \to \pm \infty} \zeta (x) = \zeta_\pm \in \mathbb C, \quad |\zeta_\pm| = q_0>0.	
\end{equation}
According to the above boundary conditions, the behavior of the solutions $u(x,t)$ and $U(x,t)$ as $|x| \to \infty$ is dictated by the function $\zeta (x)$, which approaches a constant background of size $q_0>0$ as described by the limit conditions~\eqref{zhi2}. 

The precise form of the conditions \eqref{uU-bc} can be motivated as follows.
Let $u(x, t)$ and $U(x, t)$ be solutions of  equations \eqref{NLS} and \eqref{noninNLS}, respectively, such that  
\begin{equation}\label{upm-mot-0}
\lim_{x\to\pm\infty}u(x, t) = u_\pm(t),
\quad
\lim_{x\to\pm\infty}U(x, t) = U_\pm(t),
\end{equation}
where $U_\pm(t)$ and $u_\pm(t)$ are temporal functions of constant modulus equal to $q_0$ but are otherwise to be determined. Then, taking the limit of \eqref{NLS} and \eqref{noninNLS} as $x\to\pm\infty$ while assuming that the nonlinearity function $F$ has sufficient smoothness so that
$
\lim_{|x| \to \infty} F(|U|^2) = F(\lim_{|x|\to\infty} |U|^2) = F(q_0^2)
$
(note, in particular, that this is true both in the semilinear case \eqref{NLSP} and in the saturable case \eqref{Sat2}), 
we have
\begin{equation}\label{upm-mot}
\begin{aligned}
&i (u_\pm)_t + \mu q_0^2 u_\pm = 0 \ \Rightarrow \ u_\pm(t) = e^{i\mu q_0^2t}u_\pm(0),
\\
&i (U_\pm)_t + \gamma F(q_0^2) U_\pm = 0 \ \Rightarrow \ U_\pm(t) = e^{i\gamma F(q_0^2)t} U_\pm(0).
\end{aligned}
\end{equation}
Therefore, if the initial data are such that $\lim_{x\to\pm\infty} u(x, 0) = \lim_{x\to\pm\infty} U(x, 0) = \zeta_\pm$, i.e. if $u_\pm(0) = U_\pm(0) = \zeta_\pm$, then for $\zeta$ satisfying \eqref{zhi2} the expressions \eqref{upm-mot-0} and \eqref{upm-mot} give rise to the conditions~\eqref{uU-bc}.

A few remarks are now in place:
\begin{enumerate}[label=(\roman*), leftmargin=7mm, itemsep=1mm, topsep=2mm]
\item The assumption for $\zeta \in X^1(\mathbb{R})$ covers the standard case $\zeta(x)=\zeta_0\in\mathbb{C}$ being a constant with $|\zeta_0|=q_0>0$.    
\item 
The pure step function
$$
\zeta_s(x)
=	
\left\{
\begin{array}{ll}
\zeta_+\in\mathbb{C}, &x>0,
\\
\zeta_-\in\mathbb{C}, &x<0,
\end{array}
\right.
$$
does not belong to $X^1(\mathbb R)$, since $\zeta_s'(x) = \left(\zeta_+-\zeta_-\right) \delta(x)$ in the sense of distributions and so $\zeta_s' \notin L^2(\mathbb R)$.
Nevertheless, although our analysis does not cover the pure step  $\zeta_s$, it does cover any continuous function approximating $\zeta_s$, as the derivative of such a function will belong to $L^2(\mathbb R)$ (and hence that function will belong to $X^1(\mathbb R)$).
\item 
Generalizations from the pure step $\zeta_s$ to a nontrivial $\zeta$ belonging to the class \eqref{zhi1}-\eqref{zhi2} covered by the present work are relevant in the context of the dynamics of NLS equations. In the case where $q_0=0$, localized waveforms in the form of Peregrine solitons may locally emerge  on a nontrivial decaying $\zeta(x)$ induced by the initial conditions for the semiclassical limit of the focusing integrable NLS equation  \cite{BT2009}. For fundamental results on the semiclassical problem for NLS, we refer to~\cite{KMPM2003,PK1998,TVZ2004}. The local emergence of localized structures reminiscent of the Peregrine soliton has been observed also in experimental setups \cite{ups,ups2} and is expected to be robust in the presence of damping and forcing effects \cite{f1,f2}.
However, herein we will restrict ourselves to the case $q_0>0$.
\end{enumerate}

The boundary conditions \eqref{uU-bc} can be made time independent (i.e. \textit{constant}) via the change of variables
\begin{equation}\label{zhi4}
u(x, t) = e^{i\mu q_0^2t} q(x, t), 
\quad 
U(x, t) = e^{i\gamma F(q_0^2)t} Q(x, t).
\end{equation}
Then, equations \eqref{NLS} and \eqref{noninNLS} take the form
\begin{align}
\label{nvNINLS}
&i q_t + \nu q_{xx}+ \mu \left(|q|^2-q_0^2\right)q=0,
\quad
\\
&iQ_t + \nu Q_{xx}+ \gamma \left[F(|Q|^2) - F(q_0^2)\right] Q = 0,
\label{nvNINLS-Q}
\end{align}
the initial conditions remain unchanged, namely $q(x, 0) = u_0(x)$ and $Q(x, 0) = U_0(x)$, and the  boundary conditions become
\begin{equation}\label{zhi5a}
\lim_{x \to \pm \infty} q(x,t) = \lim_{x \to \pm \infty} Q(x,t)
=
\zeta_\pm
\end{equation}
so that $\lim_{|x|\rightarrow\infty} |q(x,t)| = \lim_{|x|\rightarrow\infty} |Q(x,t)| = q_0>0$.
Furthermore,  the additional change of variables
\begin{equation}\label{zhi6}
q(x,t) = \phi(x,t)+\zeta(x),
\quad
Q(x,t) = \Phi (x,t)+\zeta(x)
\end{equation}
leads to the modified NLS equations
\begin{align}
	\label{cNLS}
&i\phi_t+ \nu \left(\phi+\zeta\right)_{xx} + \mu \left(|\phi+\zeta|^2-q_0^2\right)(\phi+\zeta)=0,
\\
	\label{pNLS}
&i \Phi_t+ \nu \left(\Phi+\zeta\right)_{xx} +\gamma \left[F(|\Phi+\zeta|^2) - F(q_0^2)\right] (\Phi+\zeta)=0,
\end{align}
along with the initial conditions 
\begin{equation}\label{mod-ic}
\phi(x, 0) = u_0(x) - \zeta(x) =: \phi_0(x),
\quad
\Phi(x, 0) = U_0(x) - \zeta(x) =: \Phi_0(x)
\end{equation}
and \textit{zero} boundary conditions at infinity, i.e.
\begin{equation}\label{vbcn}
\lim_{|x|\rightarrow\infty}\phi(x,t)
=
\lim_{|x|\rightarrow\infty}\Phi(x,t) = 0,\quad t\geq 0.
\end{equation}

The rest of this section is organized as follows. 
First, we establish $L^2$ and $H^1$ closeness estimates for the modified NLS equations \eqref{cNLS} and \eqref{pNLS}. Through the transformations \eqref{zhi4} and \eqref{zhi6}, these closeness results imply corresponding estimates for the difference 
$$
e^{-i\mu q_0^2t} u(x, t) - e^{-i\gamma F(q_0^2) t} U(x, t)
$$ 
involving the solutions of the original integrable and non-integrable NLS equations \eqref{NLS} and \eqref{noninNLS}. 
The derivation of the closeness estimates is accomplished under the assumption of (at least) local $H^1$ existence for the non-integrable equation \eqref{noninNLS}. 
In this connection, we emphasize that the case of nonzero boundary conditions is substantially different from the one of zero boundary conditions, even at the fundamental level of well-posedness. In particular, to the best of our knowledge, the well-posedness of the non-integrable NLS equation \eqref{noninNLS} with general nonlinearities is  much less understood in the former case. In the focusing regime, one of the few known results in Sobolev spaces is due to \cite{Def4}, who established local existence for the semilinear Schr\"odinger equation \eqref{NLSP} in the case of $H^s(\mathbb R)$ perturbations of the background wave with $s>1/2$. It should be noted that the proofs in \cite{Def4} are given only for $p=1$ corresponding to the integrable cubic NLS equation, although it is remarked that the arguments should go through also in the case of general $p>1$. 
In this regard, below we also prove local existence in $H^1(\mathbb R)$ for the modified NLS equation \eqref{pNLS} (and hence for the non-integrable NLS equation \eqref{noninNLS}) in the cases of a general power nonlinearity as well as of a saturable nonlinearity, which are respectively associated with the models \eqref{NLSP} and \eqref{Sat2}.
This local existence result is crucial as (i) it removes the relevant assumption used for the derivation of the closeness estimates, and (ii) via a continuity argument with respect to time (since our local solutions end up in the class $C([0,T],H^1(\mathbb{R}))$), it allows us to obtain the analogue of Theorem \ref{Theorem:stability} in the case of the nonzero boundary conditions \eqref{uU-bc}, namely Theorem \ref{nzbc-t}.
\\[3mm]
\noindent
\textbf{Closeness estimates.}
As noted above, in order to investigate the proximity between the solutions of equations \eqref{cNLS} and \eqref{pNLS}, we first assume local existence of these solutions in $H^1(\mathbb R)$. Local estimates validating this assumption in the case of a general power nonlinearity and the saturable nonlinearity are obtained after the proof of the closeness estimates.

The difference of solutions to equations \eqref{cNLS} and \eqref{pNLS} is the same as the one of solutions to equations \eqref{nvNINLS} and \eqref{nvNINLS-Q}, because it is unaffected by the change of variables \eqref{zhi6}. On the other hand, the change of variables \eqref{zhi4} does not preserve that difference, as it results in different phase factors multiplying each of the solutions to the original equations \eqref{NLS} and \eqref{noninNLS}. That is, 
$$
\Delta(x, t) := \phi(x, t) - \Phi(x, t) = q(x, t) - Q(x, t) = e^{-i\mu q_0^2 t} \, u(x, t) - e^{-i\gamma F(q_0^2) t} \, U(x, t)
$$ 
with $\Delta$ satisfying the equation
\begin{equation}\label{eq:Deltanv}
i\Delta_t + \nu \Delta_{xx} = \gamma \, G_{F, \Phi}(x, t)  - \mu \,  G_{1,\phi}(x, t), 
\end{equation}
where 
\begin{equation}\label{G-def}
G_{F, \Phi}(x, t) 
= 
\left[F(|\Phi+\zeta|^2) - F(q_0^2)\right] (\Phi+\zeta)
\end{equation}
and  the subscript ``1'' denotes the identity function, so that $G_{1,\phi}(x, t) = \left(|\phi+\zeta|^2-q_0^2\right)(\phi+\zeta)$. 
Starting from equation \eqref{eq:Deltanv}, we first establish closeness estimates in $L^2(\mathbb R)$ and then in $H^1(\mathbb R)$, the latter also implying the result in $L^\infty(\mathbb R)$ via Sobolev embedding.
\\[2mm]
\textit{$L^2$ closeness.} 
Taking the Fourier transform \eqref{ft-def} of  equation \eqref{eq:Deltanv} and integrating over $t$, we find
\begin{equation}\label{D-hat-nzbc}
\widehat \Delta(\xi, t) 
=
e^{-i \nu \xi^2 t} \widehat \Delta(\xi, 0) - i\int_0^t e^{-i \nu \xi^2(t-\tau)} \left[\gamma \, \widehat G_{F, \Phi}(\xi, \tau) - \mu \, \widehat G_{1,\phi}(\xi, \tau)\right] d\tau.
\end{equation}
Hence, by Plancherel's theorem, Minkowski's integral inequality  and the triangle inequality, 
\begin{equation}\label{D-L2}
\begin{aligned}
\left\| \Delta(t) \right\|_{L^2(\mathbb R)}
&\leq
\frac{1}{2\pi} \big\|\widehat \Delta(0) \big\|_{L^2(\mathbb R)}
+
\frac{1}{2\pi} \int_0^t \left\| e^{-i \nu \xi^2(t-\tau)} \left[\gamma \, \widehat G_{F, \Phi}(\xi, \tau) - \mu \, \widehat G_{1,\phi}(\xi, \tau)\right] \right\|_{L^2(\mathbb R)} d\tau
\\
&=
\big\| \Delta(0) \big\|_{L^2(\mathbb R)}
+
\int_0^t \left[ |\gamma| \, \big\| G_{F, \Phi}(\tau) \big\|_{L^2(\mathbb R)} + |\mu| \, \big\| G_{1, \phi}(\tau)\big\|_{L^2(\mathbb R)} \right] d\tau
\end{aligned}
\end{equation}
and we need to estimate the spatial $L^2$ norms of $G_{F, \Phi}$ and $G_{1, \phi}$. Note that the conditions \eqref{zhi2} and~\eqref{vbcn} already imply $\lim_{|x|\to\infty} G_{F, \Phi}(x, t) = \lim_{|x|\to\infty} G_{1, \phi}(x, t) = 0$. More precisely, we have the following estimate:
\begin{lemma}\label{G-L2-l}
Let $F$ satisfy the properties \eqref{F-prop}. Then, for any $p\geq 1$ and each $t\geq 0$, the function $G_{F, \Phi}$ defined by \eqref{G-def} admits the bound
\begin{equation}\label{G-L2-est}
\left\| G_{F, \Phi}(t) \right\|_{L^2(\mathbb R)} 
\leq
2\sqrt 2 \, K  \left( \left\| \Phi(t) \right\|_{L^\infty(\mathbb R)} + \left\| \zeta \right\|_{L^\infty(\mathbb R)} + q_0\right)^{2p}
\!
\left(\left\| \Phi(t) \right\|_{L^2(\mathbb R)} + \big\| |\zeta| - q_0 \big\|_{L^2(\mathbb R)}\right)
\end{equation}
where $K$ is the constant associated with $F$ via \eqref{F-prop}.
\end{lemma}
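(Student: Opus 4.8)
The plan is to deduce \eqref{G-L2-est} from a single pointwise estimate on $G_{F,\Phi}$, followed by one integration. The natural first move is to apply the Lipschitz-type bound in \eqref{F-prop} with the (nonnegative) arguments $x=|\Phi+\zeta|^2$ and $y=q_0^2$, which yields
\[
\big|F(|\Phi+\zeta|^2)-F(q_0^2)\big| \leq K\big(|\Phi+\zeta|^{2(p-1)}+q_0^{2(p-1)}\big)\,\big||\Phi+\zeta|^2-q_0^2\big|.
\]
Every factor on the right will then be controlled in terms of $M(t):=\|\Phi(t)\|_{L^\infty(\mathbb R)}+\|\zeta\|_{L^\infty(\mathbb R)}+q_0$, using the pointwise bounds $|\Phi(x,t)+\zeta(x)|\leq M(t)$ and $q_0\leq M(t)$.

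Concretely, I would bound the prefactor by $|\Phi+\zeta|^{2(p-1)}+q_0^{2(p-1)}\leq 2\,M(t)^{2(p-1)}$ (this is where $p\geq1$ is used), factor the difference of squares as $|\Phi+\zeta|^2-q_0^2=(|\Phi+\zeta|-q_0)(|\Phi+\zeta|+q_0)$, estimate $|\Phi+\zeta|+q_0\leq M(t)$, and control the remaining factor by the reverse triangle inequality
\[
\big||\Phi+\zeta|-q_0\big| \leq \big||\Phi+\zeta|-|\zeta|\big|+\big||\zeta|-q_0\big| \leq |\Phi|+\big||\zeta|-q_0\big|.
\]
Multiplying through by the outer factor $|\Phi+\zeta|\leq M(t)$ present in the definition \eqref{G-def} produces the pointwise bound
\[
|G_{F,\Phi}(x,t)| \leq 2K\,M(t)^{2p}\,\Big(|\Phi(x,t)|+\big||\zeta(x)|-q_0\big|\Big),\qquad x\in\mathbb R,\ t\geq0.
\]
To conclude, I would square this, apply $(a+b)^2\leq 2a^2+2b^2$, integrate over $\mathbb R$, take the square root, and finally use $\sqrt{a^2+b^2}\leq a+b$ for $a,b\geq0$, which reproduces \eqref{G-L2-est} exactly, with constant $2\sqrt2\,K$.

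I do not expect a real obstacle: the proof is a short chain of elementary pointwise inequalities plus one integration. The only points requiring attention are the accounting of the exponent of $M(t)$ --- $2(p-1)$ from the prefactor, $+1$ from the difference of squares, $+1$ from the outer factor $|\Phi+\zeta|$, for a total of $2p$ --- and the tracking of the numerical constant, so that the factor $2$ from the prefactor bound combines with the $\sqrt2$ coming from $(a+b)^2\leq2a^2+2b^2$ to give precisely $2\sqrt2\,K$. One should also keep in mind that the right-hand side of \eqref{G-L2-est} is finite only under the standing hypothesis that $|\zeta|-q_0\in L^2(\mathbb R)$; absent that, the stated inequality still holds trivially.
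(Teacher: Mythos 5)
Your argument is correct and follows essentially the same route as the paper's proof: apply the Lipschitz condition \eqref{F-prop} with $x=|\Phi+\zeta|^2$, $y=q_0^2$, extract the factor $|\Phi|+\big||\zeta|-q_0\big|$ from $\big||\Phi+\zeta|^2-q_0^2\big|$, bound the remaining factors in $L^\infty$, and integrate using $(a+b)^2\le 2a^2+2b^2$, recovering the constant $2\sqrt 2\,K$ and the exponent $2p$ exactly. The only (harmless) difference is that you factor the difference of squares directly and use the reverse triangle inequality where the paper arrives at the same intermediate bound through a short case distinction.
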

\begin{proof}
By the first of the properties \eqref{F-prop}, we have
$$
\left|F(|\Phi+\zeta|^2) - F(q_0^2)\right|
\leq
K \big(|\Phi+\zeta|^{2(p-1)} + q_0^{2(p-1)}\big) \left\|\Phi+\zeta|^2 - q_0^2\right|, \quad p\geq 1.
$$
Then, noting that
$$
\left\|\Phi+\zeta|^2 - q_0^2\right|
\leq
\left\{
\def\arraystretch{1.2}
\begin{array}{ll}
\left(|\Phi|+|\zeta|\right)^2 - q_0^2, &|\Phi+\zeta|\geq q_0,
\\
q_0^2 - \left\|\Phi|-|\zeta|\right|^2, &|\Phi+\zeta|\leq q_0,
\end{array}
\right.
$$
we obtain
\begin{equation}\label{diff-ineq}
\begin{aligned}
\left\|\Phi+\zeta|^2 - q_0^2\right|
&\leq
\left|\left(|\Phi|+|\zeta|\right)^2 - q_0^2\right|
+
\left|\left(|\Phi|-|\zeta|\right)^2 - q_0^2\right|
\\
&\leq
2\left(|\Phi|+|\zeta| + q_0\right) \left(|\Phi|+\big| |\zeta| - q_0 \big|\right)
\end{aligned}
\end{equation}
and, in turn,
\begin{equation}\label{F-diff-nzbc}
\begin{aligned}
\left|F(|\Phi+\zeta|^2) - F(q_0^2)\right|
&\leq
2K \big(|\Phi+\zeta|^{2(p-1)} + q_0^{2(p-1)}\big)
\left(|\Phi|+|\zeta| + q_0\right) \left(|\Phi|+\big| |\zeta| - q_0 \big|\right)
\\
&\leq
2K \left(|\Phi| + |\zeta| + q_0 \right)^{2p-1}
\left(|\Phi|+\big| |\zeta| - q_0 \big|\right).
\end{aligned}
\end{equation}
Using this inequality, we find
\begin{equation*}
\begin{aligned}
&\quad
\big\| \left[F(|\Phi+\zeta|^2) - F(q_0^2)\right] (\Phi+\zeta)\big\|_{L^2(\mathbb R)}
\\
&\leq
2K
\left(
\int_{\mathbb R}
\left(|\Phi| + |\zeta| + q_0\right)^{2(2p-1)} 
\left( 
|\Phi| + \big| |\zeta| - q_0 \big|
\right)^2
 \left(|\Phi|+|\zeta|\right)^2 dx
 \right)^{\frac 12}
\\
&\leq
2 \sqrt 2 K 
\sup_{x\in\mathbb R} 
\left[\left(|\Phi| + |\zeta| + q_0\right)^{2p-1} 
\left(|\Phi|+|\zeta|\right)\right]
\left(
\int_{\mathbb R}
\big( 
|\Phi|^2 + \big| |\zeta| - q_0 \big|^2
\big) dx
\right)^{\frac 12},
\end{aligned}
\end{equation*}
which readily implies the claimed estimate.
\end{proof}

In view of the $L^2$ estimate \eqref{G-L2-est} for general $G_{F, \Phi}$ as well as for $G_{1, \phi}$ with $p=K=1$,  inequality~\eqref{D-L2} yields
\begin{align}\label{D-nzbc-L^2}
\left\| \Delta(t) \right\|_{L^2(\mathbb R)}
&\leq
\big\| \Delta(0) \big\|_{L^2(\mathbb R)}
\nn\\
&\quad
+
2\sqrt 2 \, t \sup_{\tau \in [0, t]} 
\bigg[
 |\gamma| K  \left( \left\| \Phi(\tau) \right\|_{L^\infty(\mathbb R)} + \left\| \zeta \right\|_{L^\infty(\mathbb R)} + q_0\right)^{2p}
\left(\left\| \Phi(\tau) \right\|_{L^2(\mathbb R)} + \big\| |\zeta| - q_0 \big\|_{L^2(\mathbb R)}\right)
\nn\\
&\quad
+ |\mu|
 \left( \left\| \phi(\tau) \right\|_{L^\infty(\mathbb R)} + \left\| \zeta \right\|_{L^\infty(\mathbb R)} + q_0\right)^2
\left(\left\| \phi(\tau) \right\|_{L^2(\mathbb R)} + \big\| |\zeta| - q_0 \big\|_{L^2(\mathbb R)}\right)
\bigg],
\end{align}
which for each $t\geq 0$ provides an $L^2$ estimate for $\Delta(t)$ in terms of the $L^2$ norms of $\Delta(0)$, $\Phi(t)$, $\phi(t)$, $|\zeta|-q_0$, the $L^\infty$ norms of $\Phi(t)$, $\phi(t)$, $\zeta$, and the background $q_0$.
\\[2mm]
\noindent
\textit{$H^1$ closeness.}
Starting from the definition of the Sobolev norm and using expression \eqref{D-hat-nzbc} along with the unitarity of $e^{-i\nu\xi^2t}$ and Minkowski's integral inequality, we have 
\begin{equation}\label{D-sup-t1}
\begin{aligned}
\no{\Delta(t)}_{H^1(\mathbb R)}
&=
\left(\int_{\mathbb R} \left(1+\xi^2\right) \Big| e^{-i\nu \xi^2 t} \widehat \Delta(\xi, 0) - i\int_0^t e^{-i\nu \xi^2(t-\tau)} \left[\gamma \, \widehat G_{F, \Phi}(\xi, \tau) - \mu \, \widehat G_{1, \phi}(\xi, \tau) \right] d\tau \Big|^2 d\xi\right)^{\frac 12}
\\
&\leq
\sqrt 2 
\left(
\no{\Delta(0)}_{H^1(\mathbb R)} + \int_0^t \left[ |\gamma| \, \big\|G_{F, \Phi}(\tau)\big\|_{H^1(\mathbb R)} + |\mu| \, \big\|G_{1, \phi}(\tau)\big\|_{H^1(\mathbb R)} \right] d\tau
\right).
\end{aligned}
\end{equation}
Thus, for each $t\geq 0$, we need to estimate $G_{F, \Phi}(t)$ and $G_{1, \phi}(t)$ in $H^1(\mathbb R)$. 
\begin{lemma}\label{G-H1-l}
Let $F$ satisfy the properties \eqref{F-prop}. Then, for any $p\geq 1$ and each $t\geq 0$, the  function $G_{F, \Phi}$ defined by \eqref{G-def} admits the bound
\begin{align}\label{G-H1-est}
\left\| G_{F, \Phi}(t) \right\|_{H^1(\mathbb R)} 
&\leq
2\sqrt 2 \, K  \left( \left\| \Phi(t) \right\|_{L^\infty(\mathbb R)} + \left\| \zeta \right\|_{L^\infty(\mathbb R)} + q_0\right)^{2p-1}
\bigg[
2\left( \left\| \Phi(t) \right\|_{L^\infty(\mathbb R)} + \left\| \zeta \right\|_{L^\infty(\mathbb R)} + q_0\right)
\nn\\
&\quad
\cdot \left(\left\| \Phi(t) \right\|_{H^1(\mathbb R)} + \no{\zeta'}_{L^2(\mathbb R)} +  \big\| |\zeta| - q_0 \big\|_{L^2(\mathbb R)}\right)
+
\left(\no{\Phi(t)}_{L^\infty(\mathbb R)} + \no{|\zeta|-q_0}_{L^\infty(\mathbb R)} \right)
\nn\\
&\quad
\cdot
\left(\no{\Phi(t)}_{H^1(\mathbb R)} + \no{\zeta'}_{L^2(\mathbb R)}\right)
\bigg]
\end{align}
where $K$ is the constant associated with $F$ via \eqref{F-prop}.
\end{lemma}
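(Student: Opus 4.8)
The plan is to start from the elementary identity $\|G_{F,\Phi}(t)\|_{H^1(\mathbb R)} = \|G_{F,\Phi}(t)\|_{L^2(\mathbb R)} + \|\partial_x G_{F,\Phi}(t)\|_{L^2(\mathbb R)}$ and treat the two pieces separately. For the $L^2$ piece I will simply invoke the already proved Lemma \ref{G-L2-l}, rewriting the exponent as $(\|\Phi\|_{L^\infty} + \|\zeta\|_{L^\infty} + q_0)^{2p} = (\|\Phi\|_{L^\infty} + \|\zeta\|_{L^\infty} + q_0)^{2p-1}(\|\Phi\|_{L^\infty} + \|\zeta\|_{L^\infty} + q_0)$ and bounding $\|\Phi\|_{L^2} + \||\zeta| - q_0\|_{L^2} \leq \|\Phi\|_{H^1} + \|\zeta'\|_{L^2} + \||\zeta| - q_0\|_{L^2}$; this already produces the first summand inside the bracket of \eqref{G-H1-est}, with half of its coefficient.

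Next I turn to the derivative. Writing $W := \Phi + \zeta$ (which is weakly differentiable with $W, \partial_x W \in L^2_{\mathrm{loc}}(\mathbb R)$, is bounded since $H^1(\mathbb R) \hookrightarrow L^\infty(\mathbb R)$ and $\zeta \in X^1(\mathbb R)$, and satisfies $\lim_{|x|\to\infty}(|W| - q_0) = 0$), the product and chain rules applied to $G_{F,\Phi} = [F(|W|^2) - F(q_0^2)]\,W$ give
$$\partial_x G_{F,\Phi} = F'(|W|^2)\,\partial_x(|W|^2)\,W + \big[F(|W|^2) - F(q_0^2)\big]\,\partial_x W,$$
where $\partial_x(|W|^2) = 2\,\mathrm{Re}(\bar W\,\partial_x W)$, so that $|\partial_x(|W|^2)| \leq 2|W||\partial_x W|$. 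The third hypothesis $|F'(x)| \leq K|x|^{p-1}$ in \eqref{F-prop}, evaluated at $x = |W|^2$, bounds the first term by $2K|W|^{2p}|\partial_x W|$, while the pointwise estimate \eqref{F-diff-nzbc} established inside the proof of Lemma \ref{G-L2-l} bounds $|F(|W|^2) - F(q_0^2)|$ by $2K(|\Phi| + |\zeta| + q_0)^{2p-1}(|\Phi| + \big||\zeta| - q_0\big|)$. Using $|W| \leq |\Phi| + |\zeta|$ and $|\partial_x W| \leq |\partial_x\Phi| + |\zeta'|$ I arrive at the pointwise bound
$$|\partial_x G_{F,\Phi}| \leq 2K(|\Phi| + |\zeta| + q_0)^{2p-1}\Big[(|\Phi| + |\zeta| + q_0) + \big(|\Phi| + \big||\zeta| - q_0\big|\big)\Big](|\partial_x\Phi| + |\zeta'|).$$

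Finally I take the $L^2(\mathbb R)$ norm of this inequality, splitting the bracket into its two summands and using Hölder's inequality to pull the $L^\infty$ norms of the algebraic factors $(|\Phi| + |\zeta| + q_0)^{2p-1}$, $(|\Phi| + |\zeta| + q_0)$ and $(|\Phi| + \big||\zeta| - q_0\big|)$ outside the integral — each of these $L^\infty$ norms estimated by the triangle inequality — while the remaining factor $\big\||\partial_x\Phi| + |\zeta'|\big\|_{L^2}$ is controlled by $\sqrt2\,(\|\Phi\|_{H^1} + \|\zeta'\|_{L^2})$ via $(a+b)^2 \leq 2(a^2+b^2)$ and $\sqrt{a^2+b^2}\leq a+b$. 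This yields the second summand in the bracket of \eqref{G-H1-est} and the whole of the remaining term. Adding the three contributions and combining the first two (using $\|\Phi\|_{H^1} + \|\zeta'\|_{L^2} \leq \|\Phi\|_{H^1} + \|\zeta'\|_{L^2} + \||\zeta| - q_0\|_{L^2}$) reproduces exactly the claimed estimate. I do not anticipate any genuine obstacle: the proof is essentially bookkeeping, and the only points requiring a little care are choosing, at each step, which of the quantities goes into $L^\infty$ and which into $L^2$ so that the powers $2p$ and $2p-1$ and the precise combination of norms line up with \eqref{G-H1-est}, together with the (routine) justification that $W$ is weakly differentiable and that the product and chain rule computation above is valid in $L^2(\mathbb R)$, which follows from $\Phi \in H^1(\mathbb R)$, $\zeta \in X^1(\mathbb R)$, the $C^1$ regularity of $F$, and the boundedness of $\Phi$ and $\zeta$.
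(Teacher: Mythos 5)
Your proposal is correct and follows essentially the same route as the paper: split $\|G_{F,\Phi}\|_{H^1}$ into the $L^2$ part (handled by Lemma \ref{G-L2-l}) and $\|\partial_x G_{F,\Phi}\|_{L^2}$, differentiate the product, bound the $F'$-term via the third condition in \eqref{F-prop} and the difference term via \eqref{F-diff-nzbc}, then distribute $L^\infty$ and $L^2$ factors by H\"older and add; your bookkeeping (including the ``half coefficient'' from the $L^2$ piece and the $\sqrt2$ factors) matches the constants in \eqref{G-H1-est}.
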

\begin{proof}
Since $\left\| G_{F, \Phi}(t) \right\|_{H^1(\mathbb R)} = \left\| G_{F, \Phi}(t) \right\|_{L^2(\mathbb R)} + \left\| \p_x G_{F, \Phi}(t) \right\|_{L^2(\mathbb R)}$ and we have already estimated the $L^2$ norm of $G_{F, \Phi}$ via \eqref{G-L2-est}, we proceed to the $L^2$ norm of the derivative $\p_x G_{F, \Phi}$. 
Differentiating~\eqref{G-def} and applying the triangle inequality, we have
\begin{equation}\label{G'-ti}
\begin{aligned}
\no{\p_x G_{F, \Phi}(t)}_{L^2(\mathbb R)}
&\leq
2 \no{F'(|\Phi+\zeta|^2) (\Phi+\zeta)^2 (\Phi+\zeta)_x}_{L^2(\mathbb R)}
\\
&\quad
+
 \no{\left[ F(|\Phi+\zeta|^2)-F(q_0^2)\right] (\Phi+\zeta)_x}_{L^2(\mathbb R)}.
\end{aligned}
\end{equation}
For the first term on the right-hand side, by the third of the properties \eqref{F-prop} we find
\begin{equation}
\begin{aligned}
\no{F'(|\Phi+\zeta|^2) (\Phi+\zeta)^2 (\Phi+\zeta)_x}_{L^2(\mathbb R)}
\leq
K \left(\int_{\mathbb R} |\Phi+\zeta|^{4p} \left|(\Phi+\zeta)_x
\right|^2 dx\right)^{\frac 12}
&\\
\leq
\sqrt 2 \, K \left(\no{\Phi}_{L^\infty(\mathbb R)} + \no{\zeta}_{L^\infty(\mathbb R)}\right)^{2p}
\left(\no{\Phi}_{H^1(\mathbb R)} + \no{\zeta'}_{L^2(\mathbb R)}\right).&
\nn
\end{aligned}
\end{equation}
Moreover, for the second term, using inequality \eqref{F-diff-nzbc} we infer
\begin{align}
&\quad
\no{\left[ F(|\Phi+\zeta|^2)-F(q_0^2)\right] (\Phi+\zeta)_x}_{L^2(\mathbb R)}
\nn\\
&\leq
2K
\left(\int_{\mathbb R}
\left(|\Phi| + |\zeta| + q_0 \right)^{2(2p-1)}
\left(|\Phi|+\big| |\zeta| - q_0 \big|\right)^2
\left|(\Phi+\zeta)_x\right|^2 dx
\right)^{\frac 12}
\nn\\
&\leq
2 \sqrt 2 \, K \left(\no{\Phi}_{L^\infty(\mathbb R)} +\no{\zeta}_{L^\infty(\mathbb R)} + q_0 \right)^{2p-1}
\left(\no{\Phi}_{L^\infty(\mathbb R)} + \no{|\zeta|-q_0}_{L^\infty(\mathbb R)} \right)
\left(\no{\Phi}_{H^1(\mathbb R)} + \no{\zeta'}_{L^2(\mathbb R)}\right).
\nn
\end{align}
Combining the last two estimates with \eqref{G'-ti}, we deduce 
\begin{align*} 
&
\no{\p_x G_{F, \Phi}(t)}_{L^2(\mathbb R)}
\leq
2\sqrt 2 \, K \left(\no{\Phi}_{L^\infty(\mathbb R)} + \no{\zeta}_{L^\infty(\mathbb R)}\right)^{2p}
\left(\no{\Phi}_{H^1(\mathbb R)} + \no{\zeta'}_{L^2(\mathbb R)}\right)
\\
&
+
2\sqrt 2 \, K \left(\no{\Phi}_{L^\infty(\mathbb R)} +\no{\zeta}_{L^\infty(\mathbb R)} + q_0 \right)^{2p-1}
\left(\no{\Phi}_{L^\infty(\mathbb R)} + \no{|\zeta|-q_0}_{L^\infty(\mathbb R)} \right)
\left(\no{\Phi}_{H^1(\mathbb R)} + \no{\zeta'}_{L^2(\mathbb R)}\right)
\end{align*}
which alongside \eqref{G-L2-est} yields the desired estimate \eqref{G-H1-est}.
\end{proof}

In view of the $H^1$ estimate \eqref{G-H1-est} for general $G_{F, \Phi}$  as well as for $G_{1, \phi}$ with $K=p=1$, inequality~\eqref{D-sup-t1} yields
\begin{equation}
\label{nvH1}
\begin{aligned}
&\quad
\no{\Delta(t)}_{H^1(\mathbb R)}
\leq
\sqrt 2 \no{\Delta(0)}_{H^1(\mathbb R)}   
+ 4t \sup_{\tau \in [0, t]}  
\bigg\{ 
|\gamma| K  \left( \left\| \Phi(\tau) \right\|_{L^\infty(\mathbb R)} + \left\| \zeta \right\|_{L^\infty(\mathbb R)} + q_0\right)^{2p-1}
\\
&\quad
\cdot
\bigg[
2\left( \left\| \Phi(\tau) \right\|_{L^\infty(\mathbb R)} + \left\| \zeta \right\|_{L^\infty(\mathbb R)} + q_0\right)
\left(\left\| \Phi(\tau) \right\|_{H^1(\mathbb R)} + \no{\zeta'}_{L^2(\mathbb R)} +  \big\| |\zeta| - q_0 \big\|_{L^2(\mathbb R)}\right)
\\
&\qquad
+
\left(\no{\Phi(\tau)}_{L^\infty(\mathbb R)} + \no{|\zeta|-q_0}_{L^\infty(\mathbb R)} \right)
\left(\no{\Phi(\tau)}_{H^1(\mathbb R)} + \no{\zeta'}_{L^2(\mathbb R)}\right)
\bigg]
\\
&\quad
+
|\mu|  \left( \left\| \phi(\tau) \right\|_{L^\infty(\mathbb R)} + \left\| \zeta \right\|_{L^\infty(\mathbb R)} + q_0\right)
\bigg[
2\left( \left\| \phi(\tau) \right\|_{L^\infty(\mathbb R)} + \left\| \zeta \right\|_{L^\infty(\mathbb R)} + q_0\right)
\\
&\qquad
\cdot\left(\left\| \phi(\tau) \right\|_{H^1(\mathbb R)} + \no{\zeta'}_{L^2(\mathbb R)} + \big\| |\zeta| - q_0 \big\|_{L^2(\mathbb R)}\right)
\\
&\quad
+
\left(\no{\phi(\tau)}_{L^\infty(\mathbb R)} + \no{|\zeta|-q_0}_{L^\infty(\mathbb R)} \right)
\left(\no{\phi(\tau)}_{H^1(\mathbb R)} + \no{\zeta'}_{L^2(\mathbb R)}\right)
\bigg]
\bigg\},
\end{aligned}
\end{equation}
which at each time $t\geq 0$ provides a spatial $H^1$ estimate for $\Delta(t)$ in terms of the $H^1$ norms of $\Delta(0)$, $\Phi(t)$, $\phi(t)$, the $L^2$ norms of $\zeta'$, $|\zeta|-q_0$, the $L^\infty$ norms of $\Phi(t)$, $\phi(t)$, $\zeta$, and the background $q_0$.
\\[2mm]
\noindent
\textit{$L^\infty$ closeness}.
By the Sobolev embedding theorem, the bound in \eqref{nvH1} is also satisfied by the $L^\infty$ norm of $\Delta(t)$, thereby extending our closeness result to that space as well.
\\[3mm]
\noindent
\textbf{Local existence in $H^1(\mathbb R)$ for power and saturable nonlinearities.} 
The closeness estimates~\eqref{D-nzbc-L^2} and \eqref{nvH1} were derived under the assumption of existence of solution to equations~\eqref{cNLS} and \eqref{pNLS} in the function spaces involved in those estimates. 
In what follows, we establish local existence in $H^1(\mathbb R)$ for equation \eqref{pNLS} with a general power nonlinearity (this result also covers equation \eqref{cNLS}) as well as with a saturable nonlinearity.
Note that, since the $L^\infty$ norm is controlled by the $H^1$ norm  via the Sobolev embedding theorem, existence in $H^1$ suffices for removing the aforementioned assumption from the derivation of the closeness estimates.
More precisely, we prove 
\begin{theorem}[Local existence in $H^1(\mathbb R)$]\label{nzbc-lwp-t}
Consider the modified NLS equation \eqref{pNLS} with either the saturable nonlinearity   $F(x) = \dfrac{x}{\kappa(1+x)}$ or the power nonlinearity $F(x)=x^p$, $p\geq 1$. Then, the associated Cauchy problem \eqref{pNLS}-\eqref{vbcn}  possesses a unique solution $\Phi\in B(0, \rho) \subset C([0,T_f],H^1(\mathbb{R}))$, where for some fixed $T>0$ the radius $\rho>0$ is defined by \eqref{rho-def} and the lifespan $0<T_f\leq T$ satisfies~\eqref{tf-sat} in the saturable case and \eqref{estT} in the case of the power nonlinearity.
\end{theorem}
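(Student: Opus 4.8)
The plan is to recast \eqref{pNLS}--\eqref{vbcn} in Duhamel (mild) form and to run a contraction mapping argument in a closed ball of $C([0,T_f],H^1(\mathbb R))$. Denote by $S(t)=e^{i\nu t\partial_x^2}$ the free Schr\"odinger group, which is unitary on every $H^s(\mathbb R)$. Since $\Phi_0=U_0-\zeta\in H^1(\mathbb R)$ by \eqref{mod-ic}, and since $i\Phi_t+\nu\Phi_{xx}=-\nu\zeta_{xx}-\gamma G_{F,\Phi}$ with $G_{F,\Phi}$ given by \eqref{G-def}, one integrates the linear forcing $\nu\zeta_{xx}$ by parts in $\tau$ against $S(t-\tau)$ --- using $\partial_\tau[S(t-\tau)\zeta]=-i\nu\partial_x^2 S(t-\tau)\zeta$ --- to obtain the equivalent fixed-point equation
\[
\Phi(t)=S(t)\Phi_0+\big(S(t)\zeta-\zeta\big)+i\gamma\int_0^t S(t-\tau)\,G_{F,\Phi}(\tau)\,d\tau=:\mathcal T[\Phi](t).
\]
The term $S(t)\zeta-\zeta$ belongs to $H^1(\mathbb R)$ with norm bounded by a quantity of the form $C(1+t)\|\zeta'\|_{L^2(\mathbb R)}$: its spatial derivative is $S(t)\zeta'-\zeta'$ with $L^2$ norm at most $2\|\zeta'\|_{L^2(\mathbb R)}$, while on the Fourier side $\widehat{S(t)\zeta-\zeta}=(e^{-i\nu t\xi^2}-1)\widehat\zeta$ is a genuine $L^2$ function because the multiplier vanishes quadratically at $\xi=0$ and $\xi\widehat\zeta=-i\,\widehat{\zeta'}\in L^2(\mathbb R)$. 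Consequently the affine part $S(t)\Phi_0+(S(t)\zeta-\zeta)$ lies in $C([0,T],H^1(\mathbb R))$ with a norm bounded, for $t\in[0,T]$, by a constant $C_0=C_0(T,\|\Phi_0\|_{H^1(\mathbb R)},\|\zeta\|_{X^1(\mathbb R)},q_0)$.

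The self-mapping step is where Lemmas \ref{G-L2-l} and \ref{G-H1-l} do the work. Fixing $T>0$ and setting $\rho:=2C_0$, for $\Phi\in B(0,\rho)\subset C([0,T],H^1(\mathbb R))$ these lemmas together with the Sobolev embedding $H^1(\mathbb R)\hookrightarrow L^\infty(\mathbb R)$ give, for every $\tau\in[0,T]$,
\[
\|G_{F,\Phi}(\tau)\|_{H^1(\mathbb R)}\le \mathcal P\big(\|\Phi(\tau)\|_{H^1(\mathbb R)}\big)\le \mathcal P(\rho),
\]
where $\mathcal P$ is an explicit polynomial whose coefficients depend on $\|\zeta\|_{L^\infty(\mathbb R)}$, $\|\zeta'\|_{L^2(\mathbb R)}$, $\big\||\zeta|-q_0\big\|_{L^2(\mathbb R)\cap L^\infty(\mathbb R)}$ and $q_0$, and which has degree $2p+1$ (with $K=p$) in the power case and degree $3$ (with $p=1$, $K=1/|\kappa|$) in the saturable case. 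Hence $\|\mathcal T[\Phi](t)\|_{H^1(\mathbb R)}\le C_0+|\gamma|\,t\,\mathcal P(\rho)$ for $t\le T$, so choosing the lifespan $T_f\le\min\{T,\ C_0/(|\gamma|\mathcal P(\rho))\}$ forces $\mathcal T(B(0,\rho))\subseteq B(0,\rho)$; this is precisely how the radius \eqref{rho-def} and the lifespans \eqref{tf-sat} and \eqref{estT} arise.

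For the contraction estimate one bounds $\|\mathcal T[\Phi_1]-\mathcal T[\Phi_2]\|_{C([0,T_f],H^1(\mathbb R))}$ by $|\gamma|\,T_f$ times $\sup_{\tau}\|G_{F,\Phi_1}(\tau)-G_{F,\Phi_2}(\tau)\|_{H^1(\mathbb R)}$. Telescoping the difference (varying one argument at a time) and using the first property in \eqref{F-prop} and $|F'(x)|\le K|x|^{p-1}$, together with the $L^\infty$ bound $|\Phi_j+\zeta|^2\le(\rho+\|\zeta\|_{L^\infty(\mathbb R)})^2$ valid on the ball, yields a Lipschitz bound $\|G_{F,\Phi_1}(\tau)-G_{F,\Phi_2}(\tau)\|_{H^1(\mathbb R)}\le L(\rho)\,\|\Phi_1(\tau)-\Phi_2(\tau)\|_{H^1(\mathbb R)}$; in the derivative term one groups $F'$ with enough powers of its argument so that the map $w\mapsto F'(|w|^2)|w|^2 w$ is $C^1$ with derivative of growth $\sim|w|^{2p-1}$ (hence Lipschitz on bounded sets, since $2p-1\ge 1$), thereby sidestepping the non-Lipschitz behaviour of $x^{p-1}$ near $0$ when $1\le p<2$; in the saturable case $F$ and $F'$ are smooth with bounded derivatives on bounded sets, so this point is immediate. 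Shrinking $T_f$ further so that $|\gamma|\,T_f\,L(\rho)\le\tfrac12$ makes $\mathcal T$ a contraction, and the Banach fixed-point theorem delivers the unique $\Phi\in B(0,\rho)\subset C([0,T_f],H^1(\mathbb R))$ solving $\Phi=\mathcal T[\Phi]$; a routine bootstrap upgrades this mild solution to a solution of \eqref{pNLS}, and the zero boundary conditions \eqref{vbcn} hold since $H^1(\mathbb R)$ embeds in $C_0(\mathbb R)$.

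I expect the main obstacle to be the rigorous control of the background-induced terms rather than the nonlinear algebra (which is already encapsulated in Lemmas \ref{G-L2-l} and \ref{G-H1-l}): placing $S(t)\zeta-\zeta$ in $H^1(\mathbb R)$ using only $\zeta\in X^1(\mathbb R)$ --- which forces the Fourier argument above, relying on the quadratic vanishing of $e^{-i\nu t\xi^2}-1$ at the origin --- and verifying that every $\zeta$-dependent coefficient entering $\mathcal P$ and $L(\rho)$ (in particular $\big\||\zeta|-q_0\big\|_{L^2(\mathbb R)}$, which quantifies the convergence of $|\zeta|$ to the background) is genuinely finite under the stated hypotheses. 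A secondary point is matching the resulting explicit dependence of $\rho$ and $T_f$ on the data with the exact expressions \eqref{rho-def}, \eqref{tf-sat}, \eqref{estT} claimed in the statement.
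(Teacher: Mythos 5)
Your proposal is correct and follows essentially the same route as the paper: your Duhamel formulation with the affine term $S(t)\Phi_0+\big(S(t)\zeta-\zeta\big)$ is literally the paper's integral equation \eqref{L-def-F}, since on the Fourier side $S(t)\zeta-\zeta$ is exactly the term $\frac{e^{-i\nu\xi^2t}-1}{i\xi}\,\widehat{\zeta'}(\xi)$, and your small-/large-frequency multiplier bound (quadratic vanishing at $\xi=0$ plus $\xi\widehat\zeta=-i\widehat{\zeta'}\in L^2$) is the same estimate the paper uses; the self-mapping step via Lemmas \ref{G-L2-l}--\ref{G-H1-l} and Sobolev embedding is also identical. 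The one place where you genuinely diverge is the contraction estimate for the power nonlinearity: the paper isolates the obstruction in the difference $F'(|\Phi+\zeta|^2)-F'(|\Psi+\zeta|^2)$ (non-Lipschitz for $1\le p<\tfrac32$) and resolves it through the explicit mean-value identity of Lemma \ref{mvt-l} for $|z|^{2p}z-|z'|^{2p}z'$, which it then differentiates in $x$; you instead differentiate first (chain rule) and use that the gradient of the full composite $w\mapsto|w|^{2p}w$ is Lipschitz on bounded sets because its second derivatives grow like $|w|^{2p-1}$. These are two packagings of the same fundamental-theorem-of-calculus argument and both close the estimate, so there is no gap; note only the small slip that the \emph{first} derivative of $F'(|w|^2)|w|^2w=p|w|^{2p}w$ grows like $|w|^{2p}$ (it is the second derivative that is $\mathcal O(|w|^{2p-1})$), which does not affect the conclusion, and that your radius and lifespan, while structurally of the form $T_f\lesssim\rho^{-2p}$ as in \eqref{tf-1}, \eqref{tf-sat}, \eqref{estT}, will not reproduce the exact constants of \eqref{rho-def}; your closing caveat about finiteness of $\big\||\zeta|-q_0\big\|_{L^2(\mathbb R)}$ is an implicit assumption the paper makes as well (cf. \eqref{bgs}), not a defect of your argument.
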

\begin{proof} 
The proof combines linear estimates with bounds for the nonlinear terms and relies on a contraction mapping argument. It consists of several steps, starting from the linear terms and moving to the handling of the nonlinearities and their differences as required in order to invoke the contraction mapping theorem in $H^1(\mathbb R)$.

Taking the spatial Fourier transform of equation \eqref{pNLS} while noting that $\zeta' \to 0$ as $|x|\to\infty$ since $\zeta' \in L^2(\mathbb R)$, we obtain
\begin{equation}\label{Phi-ode}
\widehat \Phi_t + i \nu \xi^2 \widehat \Phi = - \nu \xi \widehat{\zeta'}(\xi) + i \gamma \widehat G_{F, \Phi}
\end{equation}
with $G_{F, \Phi}$ defined by \eqref{G-def}.
Then, integrating with respect to $t$, implementing the initial condition~\eqref{mod-ic} and inverting the Fourier transform, we arrive at the integral equation
\begin{equation}\label{L-def-F-0}
\Phi(x, t) 
= \Lambda[\Phi](x, t) 
\end{equation}
where 
\begin{equation}\label{L-def-F}
\begin{aligned}
\Lambda[\Phi](x, t) 
&:= \frac{1}{2\pi} \int_{\mathbb R} e^{i\xi x - i\nu\xi^2 t} \,  \widehat \Phi_0(\xi) d\xi
\\
&\quad
+
\frac{1}{2\pi} \int_{\mathbb R} e^{i\xi x} 
\bigg[i \gamma \int_0^t e^{-i\nu\xi^2 (t-\tau)}  \widehat G_{F, \Phi}(\xi, \tau) d\tau 
+
\frac{e^{-i\nu\xi^2 t} - 1}{i\xi} \, \widehat{\zeta'}(\xi)\bigg]
 d\xi.
\end{aligned}
\end{equation}
This formulation motivates our notion of solution to the Cauchy problem \eqref{pNLS}-\eqref{vbcn}, namely we say that $\Phi$ solves \eqref{pNLS}-\eqref{vbcn} if it satisfies the integral equation \eqref{L-def-F-0}. Thus, proving local existence of solution to \eqref{pNLS}-\eqref{vbcn} amounts to showing that \eqref{L-def-F-0} has a (unique) fixed point, which in turn will be established by proving that the mapping $\Phi \mapsto \Lambda[\Phi]$ is a contraction in the claimed solution space $C([0, T_f]; H^1(\mathbb R))$ for an appropriate choice of $T_f>0$. The first step in this direction is to obtain an estimate for the $H^1$ norm of $\Lambda[\Phi](t)$ for each $t\in\mathbb R$. This will be followed by a corresponding estimate for the difference $\Lambda[\Phi](t)-\Lambda[\Psi](t)$.  These two estimates will then be combined to deduce that $\Lambda$ is a contraction in a subset $B(0, \rho)$ of $C([0, T_f]; H^1(\mathbb R))$ for a suitably chosen radius $\rho>0$ and (minimum) lifespan $T_f>0$.

By \eqref{L-def-F} and the definition of the $H^1$ norm,
\begin{align*}
\no{\Lambda[\Phi](t)}_{H^1(\mathbb R)}^2
&\leq
2\int_{\mathbb R} \left(1+\xi^2\right) \big| \widehat \Phi_0(\xi) \big|^2 d\xi
+
4|\gamma|^2\int_{\mathbb R} \left(1+\xi^2\right) \left| \int_0^t e^{i\nu \xi^2 \tau} \widehat G_{F,\Phi}(\xi, \tau) d\tau
 \right|^2 d\xi
\nn\\
&\quad
+
4 \int_{\mathbb R} \left(1+\xi^2\right) \bigg| \frac{e^{-i\nu\xi^2 t} - 1}{i\xi} \bigg|^2  \big|\widehat{\zeta'}(\xi)\big|^2 d\xi.
\end{align*}
Concerning the integral involving $\zeta'$, we note that $\left|\frac{1-\cos(k)}{k}\right| \leq 1$ for all $k\in\mathbb R$ and so $\left| \frac{e^{-i\nu\xi^2 t} - 1}{i\xi} \right|^2 = 2 \, \frac{1 - \cos(\nu\xi^2 t)}{\xi^2} = 2|\nu| t  \left|\frac{1 - \cos(\nu\xi^2 t)}{\nu \xi^2 t}\right| \leq 2|\nu| t$ for all $\xi \in \mathbb R$ and $t\geq 0$.
This inequality is useful for $\xi$ near $0$, while for $|\xi|\geq 1$ we can simply observe that $0\leq \frac{1 - \cos(\nu\xi^2 t)}{\xi^2} \leq \frac{2}{\xi^2} \leq \frac{4}{1+\xi^2}$. Thus, using also Minkowski's integral inequality for the term involving $G_{F, \Phi}$, we infer
\begin{equation*}
\begin{aligned}
\no{\Lambda[\Phi](t)}_{H^1(\mathbb R)}^2
&\leq
2\no{\Phi_0}_{H^1(\mathbb R)}^2
+
4|\gamma|^2 \left(\int_0^t \no{G_{F,\Phi}(\tau)}_{H^1(\mathbb R)} d\tau\right)^2
\\
&\quad
+
4 \int_{|\xi|\leq1} 2 \cdot 2|\nu| t \, \big|\widehat{\zeta'}(\xi)\big|^2 d\xi
 +
 4\int_{|\xi|\geq1} \left(1+\xi^2\right) \frac{8}{1+\xi^2} \, \big|\widehat{\zeta'}(\xi)\big|^2 d\xi
\end{aligned}
\end{equation*}
and, in turn, by Plancherel's theorem we obtain
\begin{equation}\label{mf1}
\no{\Lambda[\Phi](t)}_{H^1(\mathbb R)}
\leq
\sqrt 2\no{\Phi_0}_{H^1(\mathbb R)}
+
4 \left(\sqrt{|\nu|} \sqrt t + \sqrt 2\right) \no{\zeta'}_{L^2(\mathbb R)}
+
2 |\gamma| \int_0^t \no{G_{F, \Phi}(\tau)}_{H^1(\mathbb R)} d\tau.
\end{equation}

The $H^1$ norm of $G_{F, \Phi}$ can be estimated in terms of the $H^1$ norm of $\Phi$ by combining estimate~\eqref{G-H1-est} with the Sobolev embedding theorem: 
\begin{align}\label{GF-H1-est-F}
&\quad
\left\| G_{F, \Phi}(t) \right\|_{H^1(\mathbb R)} 
\leq
4\sqrt 2 K  \left(\left\| \Phi(t) \right\|_{H^1(\mathbb R)} + \left\| \zeta \right\|_{L^\infty(\mathbb R)} + q_0\right)^{2p}
\Big(\left\| \Phi(t) \right\|_{H^1(\mathbb R)} + \no{\zeta'}_{L^2(\mathbb R)} 
\nn\\
&\quad
+  \big\| |\zeta| - q_0 \big\|_{L^2(\mathbb R)}\Big)
+
2\sqrt 2 K \left(\no{\Phi(t)}_{H^1(\mathbb R)} +\no{\zeta}_{L^\infty(\mathbb R)} + q_0 \right)^{2p-1}
\left(\no{\Phi(t)}_{H^1(\mathbb R)} + \no{|\zeta|-q_0}_{L^\infty(\mathbb R)} \right)
\nn\\
&\quad 
\cdot\left(\no{\Phi(t)}_{H^1(\mathbb R)} + \no{\zeta'}_{L^2(\mathbb R)}\right)
\nn\\
&\leq
6\sqrt 2 K   \left( \left\| \Phi(t) \right\|_{H^1(\mathbb R)} + \left\| \zeta \right\|_{L^\infty(\mathbb R)} + \no{\zeta'}_{L^2(\mathbb R)} + \big\| |\zeta| - q_0 \big\|_{L^2(\mathbb R)} + q_0\right)^{2p+1}
\nn\\
&\leq
3 \cdot 2^{2p+\frac 32} K   \left[ \left\| \Phi(t) \right\|_{H^1(\mathbb R)}^{2p+1} + \left(\left\| \zeta \right\|_{L^\infty(\mathbb R)} + \no{\zeta'}_{L^2(\mathbb R)} + \big\| |\zeta| - q_0 \big\|_{L^2(\mathbb R)} + q_0\right)^{2p+1} \right],
\end{align}
where for the last step we have used the inequality $(a+b)^\sigma \leq 2^{\sigma-1} (a^\sigma + b^\sigma)$, $a, b, \sigma \geq 1$, which follows from applying Jensen's inequality to the convex function $x^\sigma$, $\sigma\geq 1$.

By combining the bound \eqref{GF-H1-est-F} with inequality \eqref{mf1}, we obtain 
\begin{equation}\label{L-H1-est-F}
\begin{aligned}
\no{\Lambda[\Phi](t)}_{H^1(\mathbb R)}
&\leq
\sqrt 2\no{\Phi_0}_{H^1(\mathbb R)}
+
4 \big(\sqrt{|\nu|} \sqrt t+ \sqrt 2\big) \no{\zeta'}_{L^2(\mathbb R)}
\\
&\quad
+
3 \cdot 2^{2p+\frac 52} |\gamma| K \cdot t \left(\left\| \zeta \right\|_{L^\infty(\mathbb R)} + q_0 + \no{\zeta'}_{L^2(\mathbb R)} + \big\| |\zeta| - q_0 \big\|_{L^2(\mathbb R)}\right)^{2p+1}
\\
&\quad
+
3 \cdot 2^{2p+\frac 52} |\gamma| K \cdot t \sup_{\tau\in[0, t]}  \left\| \Phi(\tau) \right\|_{H^1(\mathbb R)}^{2p+1}.
\end{aligned}
\end{equation}
Motivated \eqref{L-H1-est-F}, for fixed $T>0$ we let
\begin{equation}\label{rho-def}
\begin{aligned}
\rho 
&= 
2 \cdot 3 \cdot 2^{2p+\frac 52} |\gamma| K
\bigg[
\no{\Phi_0}_{H^1(\mathbb R)}
+
\big(\sqrt{|\nu|}\sqrt T + \sqrt 2\big) \no{\zeta'}_{L^2(\mathbb R)}
\\
&\quad
+
T \left(\left\| \zeta \right\|_{L^\infty(\mathbb R)} + q_0 + \no{\zeta'}_{L^2(\mathbb R)} + \big\| |\zeta| - q_0 \big\|_{L^2(\mathbb R)}\right)^{2p+1}
\bigg] > 0
\end{aligned}
\end{equation}
and denote by $B(0, \rho)$ the open ball of radius $\rho$ centered at zero in $C([0, T_f]; H^1(\mathbb R))$, $0<T_f\leq T$. Then, in order for the map  $\Phi \mapsto \Lambda[\Phi]$ to be a contraction in $B(0, \rho)$, it must be that $\Lambda[\Phi] \in B(0, \rho)$ whenever $\Phi \in B(0, \rho)$. In view of estimate \eqref{L-H1-est-F}, this first requirement yields the following condition for the lifespan $T_f$:
\begin{equation}\label{tf-1}
\frac{\rho}{2}
+
3 \cdot 2^{2p+\frac 52} |\gamma| K T_f  \rho^{2p+1}
\leq \rho
\ \Rightarrow \
T_f \leq \frac{1}{3 \cdot 2^{2p+\frac 72} |\gamma| K \rho^{2p}}.
\end{equation}

The second requirement which alongside \eqref{tf-1} guarantees that $\Phi \mapsto \Lambda[\Phi]$ is a contraction in $B(0, \rho)$ is that $\no{\Lambda[\Phi] - \Lambda[\Psi]}_{C([0, T_f]; H^1(\mathbb R))} \leq M \no{\Phi - \Psi}_{C([0, T_f]; H^1(\mathbb R))}$ for all $\Phi, \Psi \in B(0, \rho)$ and some constant $M < 1$. Similarly to \eqref{mf1}, we have
\begin{equation}\label{L-contr}
\no{\Lambda[\Phi](t) - \Lambda[\Psi](t)}_{H^1(\mathbb R)}
\leq
2|\gamma| \int_0^t \no{G_{F, \Phi}(\tau)-G_{F, \Psi}(\tau)}_{H^1(\mathbb R)} d\tau.
\end{equation}
In order to estimate the $H^1$ norm on the right-hand side, we first manipulate the difference of nonlinearities $G_{F, \Phi}-G_{F, \Psi}$ so that a convenient factor of $\Phi-\Psi$ can be extracted. Noting that
\begin{equation}\label{g-diff-F}
G_{F, \Phi}  - G_{F, \Psi} 
=
\left[ F(\left|\Phi+\zeta\right|^2) - F(\left|\Psi+\zeta\right|^2) \right] (\Phi+\zeta)
+
\left[ F(\left|\Psi+\zeta\right|^2)-F(q_0^2)\right] (\Phi-\Psi)
\end{equation}
and recalling the conditions \eqref{F-prop} for $F$, we have
\begin{equation*}
\begin{aligned}
\left|G_{F, \Phi}  - G_{F, \Psi} \right| 
&\leq
K 
\left\{
\left(\left|\Phi+\zeta\right|^{2(p-1)} + \left|\Psi+\zeta\right|^{2(p-1)} \right) 
\left| 
\left|\Phi+\zeta\right|^2 - \left|\Psi+\zeta\right|^2
\right| 
\left|\Phi+\zeta\right|
\right.
\\
&\hskip 1.2cm
\left.+
\left(\left|\Psi+\zeta\right|^{2p} + q_0^{2p} \right) \left|\Phi-\Psi\right|
\right\}.
\end{aligned}
\end{equation*}
Thus, observing that
\begin{equation}\label{sq-diff}
\begin{aligned}
\left|\left|\Phi+\zeta\right|^2 - \left|\Psi+\zeta\right|^2\right|
&=
\big(\left|\Phi+\zeta\right|+\left|\Psi+\zeta\right|\big)
\big| \left|\Phi+\zeta\right| - \left|\Psi+\zeta\right| \big|
\\
&\leq
\big(\left|\Phi+\zeta\right|+\left|\Psi+\zeta\right|\big)
\left|\Phi-\Psi\right|,
\end{aligned}
\end{equation}
we deduce
\begin{equation*}
\begin{aligned}
\left|G_{F, \Phi}(x, t) - G_{F, \Psi} (x, t)\right| 
&\leq
K 
\left\{
\left(\left|\Phi+\zeta\right|^{2(p-1)} + \left|\Psi+\zeta\right|^{2(p-1)} \right) 
\left(
\left|\Phi+\zeta\right| + \left|\Psi+\zeta\right| 
\right)
\left|\Phi+\zeta\right|
\right.\\
&\hskip 1.2cm
\left.+
\left|\Psi+\zeta\right|^{2p} + q_0^{2p}  
\right\}
 \left|\Phi-\Psi\right|.
\end{aligned}
\end{equation*}
In turn, we can estimate the difference of nonlinearities in $L^2$ as follows: 
\begin{align}
\no{G_{F, \Phi} - G_{F, \Psi}}_{L^2(\mathbb R)}
&\leq
K
\left[
\left(\left\|\Phi+\zeta\right\|_{L^\infty(\mathbb R)}^{2(p-1)} + \left\|\Psi+\zeta\right\|_{L^\infty(\mathbb R)}^{2(p-1)} \right) 
\left(
\left\|\Phi+\zeta\right\|_{L^\infty(\mathbb R)} + \left\|\Psi+\zeta\right\|_{L^\infty(\mathbb R)}
\right)
\right.
\nn\\
&\quad
\left.\cdot
\left\|\Phi+\zeta\right\|_{L^\infty(\mathbb R)}
+
\left\|\Psi+\zeta\right\|_{L^\infty(\mathbb R)}^{2p} + q_0^{2p} 
\right]
\left\| \Phi - \Psi \right\|_{L^2(\mathbb R)}
\nn\\
&\leq
K
\left[
\left(
\left\|\Phi+\zeta\right\|_{L^\infty(\mathbb R)} + \left\|\Psi+\zeta\right\|_{L^\infty(\mathbb R)}
\right)^{2p} + q_0^{2p} 
\right]
\left\| \Phi - \Psi \right\|_{L^2(\mathbb R)}
\nn\\
&\leq K \left(\no{\Phi}_{L^\infty(\mathbb R)} + \no{\Psi}_{L^\infty(\mathbb R)} + 2\no{\zeta}_{L^\infty(\mathbb R)} + q_0\right)^{2p}
\left\| \Phi - \Psi \right\|_{L^2(\mathbb R)}
\nn\\
&\leq 
K \left(\no{\Phi}_{H^1(\mathbb R)} + \no{\Psi}_{H^1(\mathbb R)} + 2\no{\zeta}_{L^\infty(\mathbb R)} + q_0\right)^{2p}
\left\| \Phi - \Psi \right\|_{L^2(\mathbb R)}
\label{Gf-diff-l2-F}
\end{align}
with the last step due to Sobolev embedding.

It remains to also estimate the $L^2$ norm of the derivative $\p_x \left(G_{F, \Phi} - G_{F, \Psi}\right)$. This task turns out to be more involved due to the appearance of the difference $F'(|\Phi+\zeta|^2)-F'(|\Psi+\zeta|^2)$. First, differentiating~\eqref{g-diff-F} and rearranging appropriately, we have
\begin{align}\label{mf2}
\p_x \left(G_{F, \Phi} - G_{F, \Psi}\right)
&=
\Big\{ F'(\left|\Phi+\zeta\right|^2) \left[\left(\Phi+\zeta\right)_x \left(\overline{\Phi+\zeta}\right) + \left(\Phi+\zeta\right) \left(\overline{\Phi+\zeta}\right)_x\right] 
\nn\\
&\qquad
- F'(\left|\Psi+\zeta\right|^2) \left[\left(\Psi+\zeta\right)_x \left(\overline{\Psi+\zeta}\right) + \left(\Psi+\zeta\right) \left(\overline{\Psi+\zeta}\right)_x\right] \Big\} (\Phi+\zeta)
\nn\\
&\quad
+ \left[ F(\left|\Phi+\zeta\right|^2) - F(\left|\Psi+\zeta\right|^2) \right] (\Phi +\zeta)_x
\nn\\
&\quad
+
F'(\left|\Psi+\zeta\right|^2) 
\left[\left(\Psi+\zeta\right)_x \left(\overline{\Psi+\zeta}\right) + \left(\Psi+\zeta\right) \left(\overline{\Psi+\zeta}\right)_x\right] (\Phi-\Psi)
\nn\\
&\quad
+
\left[F(\left|\Psi+\zeta\right|^2) - F(q_0^2)\right] (\Phi-\Psi)_x.
\end{align}
Therefore, using also the conditions \eqref{F-prop}, the embedding $H^1(\mathbb R) \subset L^\infty(\mathbb R)$ and inequality \eqref{sq-diff}, we obtain
\begin{align}\label{g'-temp}
&\quad \no{\p_x \left(G_{F, \Phi} - G_{F, \Psi}\right)}_{L^2(\mathbb R)}
\leq
\Big\| F'(\left|\Phi+\zeta\right|^2) \left[\left(\Phi+\zeta\right)_x \left(\overline{\Phi+\zeta}\right) + \left(\Phi+\zeta\right) \left(\overline{\Phi+\zeta}\right)_x\right] 
\nn\\
&\qquad
- F'(\left|\Psi+\zeta\right|^2) \left[\left(\Psi+\zeta\right)_x \left(\overline{\Psi+\zeta}\right) + \left(\Psi+\zeta\right) \left(\overline{\Psi+\zeta}\right)_x\right] \Big\|_{L^2(\mathbb R)}
\left(\no{\Phi}_{H^1(\mathbb R)} + \no{\zeta}_{L^\infty(\mathbb R)} \right)
\nn\\
&\quad
+ 2^{2p-1} K \left(\no{\Phi}_{H^1(\mathbb R)} + \no{\Psi}_{H^1(\mathbb R)} + \no{\zeta}_{L^\infty(\mathbb R)}\right)^{2p-1}
\left(\no{\Phi}_{H^1(\mathbb R)} + \no{\zeta'}_{L^2(\mathbb R)}\right)
\no{\Phi-\Psi}_{H^1(\mathbb R)}
\nn\\
&\quad
+
2 K \left(\no{\Psi}_{H^1(\mathbb R)} + \no{\zeta}_{L^\infty(\mathbb R)}\right)^{2p-1}
\left(\no{\Psi}_{H^1(\mathbb R)} + \no{\zeta'}_{L^2(\mathbb R)} \right)
\no{\Phi-\Psi}_{H^1(\mathbb R)}
\nn\\
&\quad
+
K \left(  \no{\Psi}_{H^1(\mathbb R)} + \no{\zeta}_{L^\infty(\mathbb R)} + q_0\right)^{2p}\no{\Phi-\Psi}_{H^1(\mathbb R)}.
\end{align}
The difference $\no{\Phi-\Psi}_{H^1(\mathbb R)}$ has conveniently appeared in the last three terms on the right-hand side of the above inequality. In order to also extract it from the first term, we use the identity 
$$
u_1 v_1 \overline w_1 - u_2 v_2 \overline w_2
=
u_1 v_1 \left(\overline{w_1-w_2}\right) + u_1 \overline w_2 \left(v_1-v_2\right)   + v_2 \overline w_2 \left(u_1-u_2\right)
$$
to express the first of the differences involved in the first term of \eqref{g'-temp} as 
\begin{align*}
&\quad
F'(\left|\Phi+\zeta\right|^2) \left(\Phi+\zeta\right)_x \left(\overline{\Phi+\zeta}\right) 
- F'(\left|\Psi+\zeta\right|^2)  \left(\Psi+\zeta\right)_x \left(\overline{\Psi+\zeta}\right) 
\nn\\
&=
F'(\left|\Phi+\zeta\right|^2) \left(\Phi+\zeta\right)_x \left(\overline{\Phi-\Psi}\right)
+
F'(\left|\Phi+\zeta\right|^2) \left(\overline{\Psi+\zeta}\right) \left(\Phi-\Psi\right)_x
\nn\\
&\quad
+
\left(\Psi+\zeta\right)_x \left(\overline{\Psi+\zeta}\right) \left[F'(\left|\Phi+\zeta\right|^2) - F'(\left|\Psi+\zeta\right|^2)\right].
\end{align*}
In view of this writing and also the properties \eqref{F-prop} satisfied by $F$, we find
\begin{align}\label{F'-diff-est}
&\quad
\no{F'(\left|\Phi+\zeta\right|^2) \left(\Phi+\zeta\right)_x \left(\overline{\Phi+\zeta}\right) 
- F'(\left|\Psi+\zeta\right|^2)  \left(\Psi+\zeta\right)_x \left(\overline{\Psi+\zeta}\right)}_{L^2(\mathbb R)}
\nn\\
&\leq
\no{F'(\left|\Phi+\zeta\right|^2)}_{L^\infty(\mathbb R)} \no{\Phi_x+\zeta'}_{L^2(\mathbb R)} \no{\Phi-\Psi}_{L^\infty(\mathbb R)}
+
\no{F'(\left|\Phi+\zeta\right|^2)}_{L^\infty(\mathbb R)} \no{\Psi+\zeta}_{L^\infty(\mathbb R)} \nn\\
&\quad
\cdot \no{\left(\Phi-\Psi\right)_x}_{L^2(\mathbb R)}
+
\no{\left(\Psi+\zeta\right)_x}_{L^2(\mathbb R)} \no{\Psi+\zeta}_{L^\infty(\mathbb R)} 
\no{F'(\left|\Phi+\zeta\right|^2) - F'(\left|\Psi+\zeta\right|^2)}_{L^\infty(\mathbb R)}
\nn\\
&\leq
K \left(\no{\Phi}_{H^1(\mathbb R)} + \no{\zeta}_{L^\infty(\mathbb R)}\right)^{2(p-1)} 
\left(
\no{\Phi}_{H^1(\mathbb R)} +
\no{\Psi}_{H^1(\mathbb R)} + \no{\zeta'}_{L^2(\mathbb R)}  + \no{\zeta}_{L^\infty(\mathbb R)}
\right) 
\no{\Phi-\Psi}_{H^1(\mathbb R)}
\nn\\
&\quad
+
\left(\no{\Psi}_{H^1(\mathbb R)} + \no{\zeta'}_{L^2(\mathbb R)}\right)
\left(\no{\Psi}_{H^1(\mathbb R)} + \no{\zeta}_{L^\infty(\mathbb R)}\right)
\no{F'(\left|\Phi+\zeta\right|^2) - F'(\left|\Psi+\zeta\right|^2)}_{L^\infty(\mathbb R)}.
\end{align}
Furthermore, by symmetry with respect to complex conjugation, the exact same bound is also satisfied by the $L^2$ norm of the second difference in the first term of \eqref{g'-temp}, i.e. 
\begin{equation}\label{F'-2est}
\no{F'(\left|\Phi+\zeta\right|^2) \left(\Phi+\zeta\right) \left(\overline{\Phi+\zeta}\right)_x
- F'(\left|\Psi+\zeta\right|^2)  \left(\Psi+\zeta\right) \left(\overline{\Psi+\zeta}\right)_x}_{L^2(\mathbb R)}
\leq 
\text{RHS}_{\eqref{F'-diff-est}}.
\end{equation}

Thanks to the bounds \eqref{F'-diff-est} and \eqref{F'-2est}, it suffices to extract the $H^1$ norm of $\Phi-\Psi$ from the norm
\begin{equation}\label{chal}
\no{F'(\left|\Phi+\zeta\right|^2) - F'(\left|\Psi+\zeta\right|^2)}_{L^\infty(\mathbb R)}.
\end{equation}
This norm cannot be handled as easily as the $L^\infty$ norm of $F(\left|\Phi+\zeta\right|^2) - F(\left|\Psi+\zeta\right|^2)$ that arose earlier in the $L^2$ estimation of \eqref{g-diff-F}.
Indeed, note in particular that the first of the conditions~\eqref{F-prop} is not necessarily satisfied by $F'$. For example, in the case of power nonlinearity $F(x) = x^p$ we have $F'(x) = px^{p-1}$ and, unless $p\geq \frac 32$, it is not possible to bound the difference $|F'(x)-F'(y)|$ by a factor of $\left|\sqrt x - \sqrt y\right|$ (let alone $|x-y|$), which is the minimum requirement in order to make $\no{\Phi-\Psi}_{L^\infty(\mathbb R)}$ appear in the bound for \eqref{chal}.
On the other hand, in the case of the saturable nonlinearity  $F(x) = \frac{x}{\kappa(1+x)}$ we have  
\begin{equation}\label{F'-sat}
\begin{aligned}
\left| F'(x) - F'(y) \right| 
&= \frac{1}{|\kappa|} \left| \frac{1}{(1+x)^2} - \frac{1}{(1+y)^2} \right|
=
\frac{\left|2+x+y\right|}{|\kappa| (1+x)^2 (1+y)^2} \left|x-y\right|
\\
&\lesssim
\frac{1}{|\kappa|} \left(\frac{1}{1+x^3} + \frac{1}{1+y^3}\right) \left|x-y\right|
\leq
\frac{2}{|\kappa|} \left|x-y\right|,
\quad x, y \geq 0,
\end{aligned}
\end{equation}
and so it seems possible to control \eqref{chal} by $\no{\Phi-\Psi}_{L^\infty(\mathbb R)}$.

Following the above observations, we revisit the estimation of $\no{\p_x\left(G_{F, \Phi} - G_{F, \Psi}\right)}_{L^2(\mathbb R)}$ by treating the two cases $F(x) = \frac{x}{\kappa (1+x)}$ (saturable nonlinearity) and $F(x) = x^p$ (power nonlinearity) separately. 
\\[3mm]
\noindent
\textit{Contraction estimate for the saturable nonlinearity.}
In this case, $F(x) = \frac{x}{\kappa(1+x)}$ and $F'(x) = \frac{1}{\kappa (1+x)^2}$ are uniformly bounded by $\frac{1}{|\kappa|}$ for $x\geq 0$. In addition, the first of the conditions \eqref{F-prop} is met with $K = \frac{1}{|\kappa|}$ and $p=1$. Thus, returning to \eqref{mf2} and using these uniform bounds along with the embedding $H^1(\mathbb R) \subset L^\infty(\mathbb R)$ and inequality \eqref{sq-diff} as appropriate, we have
\begin{align}\label{temp14}
&\quad
\no{\p_x \left(G_{F, \Phi} - G_{F, \Psi}\right)}_{L^2(\mathbb R)}
\\
&\leq
\Big\| F'(\left|\Phi+\zeta\right|^2) \left[\left(\Phi+\zeta\right)_x \left(\overline{\Phi+\zeta}\right) + \left(\Phi+\zeta\right) \left(\overline{\Phi+\zeta}\right)_x\right] 
\nn\\
&\qquad
- F'(\left|\Psi+\zeta\right|^2) \left[\left(\Psi+\zeta\right)_x \left(\overline{\Psi+\zeta}\right) + \left(\Psi+\zeta\right) \left(\overline{\Psi+\zeta}\right)_x\right] \Big\|_{L^2(\mathbb R)} 
\left(\no{\Phi}_{H^1(\mathbb R)} + \no{\zeta}_{L^\infty(\mathbb R)} \right)
\nn\\
&\quad
+
\frac{2}{|\kappa|} \left(\no{\Phi}_{H^1(\mathbb R)} + \no{\Psi}_{H^1(\mathbb R)} + \no{\zeta}_{L^\infty(\mathbb R)} \right)
\no{\Phi-\Psi}_{H^1(\mathbb R)} \left(\no{\Phi}_{H^1(\mathbb R)} + \no{\zeta'}_{L^2(\mathbb R)}\right)
\nn\\
&\quad
+
\frac{2}{|\kappa|} 
\left(\no{\Psi}_{H^1(\mathbb R)} + \no{\zeta}_{L^\infty(\mathbb R)} \right)
\left(\no{\Psi}_{H^1(\mathbb R)} + \no{\zeta'}_{L^2(\mathbb R)} \right)
\no{\Phi-\Psi}_{H^1(\mathbb R)}
+
\frac{2}{|\kappa|}  \no{\Phi-\Psi}_{H^1(\mathbb R)}.
\nn
\end{align}
Notice that the bound \eqref{F'-sat} together with inequality \eqref{sq-diff} imply
\begin{equation*}
\left|F'(\left|\Phi+\zeta\right|^2) - F'(\left|\Psi+\zeta\right|^2)\right|
\leq
\frac{2}{|\kappa|} 
\big(\left|\Phi+\zeta\right| + \left|\Psi+\zeta\right|\big)
\left|\Phi-\Psi\right|.
\end{equation*}
Hence, in the saturable case, the norm \eqref{chal} satisfies the inequality
\begin{equation*}
\no{F'(\left|\Phi+\zeta\right|^2) - F'(\left|\Psi+\zeta\right|^2)}_{L^\infty(\mathbb R)}
\leq
\frac{4}{|\kappa|} \left(\no{\Phi}_{H^1(\mathbb R)} + \no{\Psi}_{H^1(\mathbb R)} + \no{\zeta}_{L^\infty(\mathbb R)}\right) \no{\Phi-\Psi}_{H^1(\mathbb R)},
\end{equation*}
which can be combined with \eqref{F'-diff-est} and the fact that $p=1$ and $K=\frac{1}{|\kappa|}$ to deduce
\begin{equation*}
\begin{aligned}
&\quad
\no{F'(\left|\Phi+\zeta\right|^2) \left(\Phi+\zeta\right)_x \left(\overline{\Phi+\zeta}\right) 
- F'(\left|\Psi+\zeta\right|^2)  \left(\Psi+\zeta\right)_x \left(\overline{\Psi+\zeta}\right)}_{L^2(\mathbb R)}
\\
&\leq
\frac{1}{|\kappa|}
\bigg[
\left(
\no{\Phi}_{H^1(\mathbb R)} +
\no{\Psi}_{H^1(\mathbb R)} + \no{\zeta'}_{L^2(\mathbb R)}  + \no{\zeta}_{L^\infty(\mathbb R)}
\right) 
+
4\left(\no{\Psi}_{H^1(\mathbb R)} + \no{\zeta'}_{L^2(\mathbb R)}\right)
\\
&\hskip 1.2cm
\cdot \left(\no{\Psi}_{H^1(\mathbb R)} + \no{\zeta}_{L^\infty(\mathbb R)}\right)
 \left(\no{\Phi}_{H^1(\mathbb R)} + \no{\Psi}_{H^1(\mathbb R)} + \no{\zeta}_{L^\infty(\mathbb R)}\right)
\bigg]\no{\Phi-\Psi}_{H^1(\mathbb R)}.
\end{aligned}
\end{equation*}
Moreover, by symmetry, the same estimate also holds for the left-hand side of \eqref{F'-2est}. Therefore, returning to \eqref{temp14}, we eventually obtain
\begin{equation}\label{temp15}
\begin{aligned}
&\quad
\no{\p_x \left(G_{F, \Phi} - G_{F, \Psi}\right)}_{L^2(\mathbb R)}
\\
&\leq
\frac{8}{|\kappa|}
\left(1+\no{\Phi}_{H^1(\mathbb R)} +
\no{\Psi}_{H^1(\mathbb R)} + \no{\zeta'}_{L^2(\mathbb R)}  + \no{\zeta}_{L^\infty(\mathbb R)}
\right)^4 
\no{\Phi-\Psi}_{H^1(\mathbb R)}.
\end{aligned}
\end{equation}

The $L^2$ estimates \eqref{Gf-diff-l2-F} and \eqref{temp15} combine to yield the $H^1$ estimate
\begin{equation*}
\no{G_{F, \Phi}-G_{F, \Psi}}_{H^1(\mathbb R)}
\leq
\frac{16}{|\kappa|} 
\left(1+
\no{\Phi}_{H^1(\mathbb R)} +
\no{\Psi}_{H^1(\mathbb R)} + \no{\zeta'}_{L^2(\mathbb R)}  + \no{\zeta}_{L^\infty(\mathbb R)} + q_0
\right)^4  \no{\Phi-\Psi}_{H^1(\mathbb R)}
\end{equation*}
which in turn implies, via inequality \eqref{L-contr},
\begin{align*}
\no{\Lambda[\Phi](t) - \Lambda[\Psi](t)}_{H^1(\mathbb R)}
&\leq
\frac{32 |\gamma|}{|\kappa|} \, t \sup_{\tau \in [0, t]} 
\Big(1+
\no{\Phi(\tau)}_{H^1(\mathbb R)} +
\no{\Psi(\tau)}_{H^1(\mathbb R)} + \no{\zeta'}_{L^2(\mathbb R)}  
\\
&\hskip 3.05cm
+ \no{\zeta}_{L^\infty(\mathbb R)} + q_0
\Big)^4
\no{(\Phi-\Psi)(\tau)}_{H^1(\mathbb R)}.
\nn
\end{align*}
Therefore, a sufficient condition for the map $\Phi \mapsto \Lambda[\Phi]$ to be a contraction in the ball $B(0, \rho) \subset C([0, T_f]; H^1(\mathbb R))$ is  
\begin{equation}\label{tf-sat}
 T_f   < 
\frac{|\kappa|}{32 |\gamma| \left(1+ 2\rho + \no{\zeta'}_{L^2(\mathbb R)}  + \no{\zeta}_{L^\infty(\mathbb R)} + q_0\right)^4}.
\end{equation}
Observe that \eqref{tf-sat} is a stronger condition than \eqref{tf-1}. 
\vskip 3mm
\noindent
\textit{Contraction estimate for the power nonlinearity.}
By the triangle inequality, 
\begin{equation}\label{temp9}
\no{\p_x \left(G_{F, \Phi}-G_{F, \Psi}\right)}_{L^2(\mathbb R)}
\leq
\no{\p_x \left[\left|\Phi+\zeta\right|^{2p} (\Phi+\zeta) -  \left|\Psi+\zeta\right|^{2p}(\Psi+\zeta) \right]}_{L^2(\mathbb R)}
+ |q_0|^{2p} \no{\Phi-\Psi}_{H^1(\mathbb R)}.
\end{equation}
We then invoke the following widely used result, whose proof we include below for completeness.  
\begin{lemma}\label{mvt-l}
For any $p\geq 1$ and any pair of complex numbers $z, z'$,
\begin{equation}
|z|^{2p} z - |z'|^{2p} z' 
=
(p+1) \left(\int_0^1  \left|Z_\lambda\right|^{2p} d\lambda\right) \left(z-z'\right)
+
p \left(\int_0^1 \left|Z_\lambda\right|^{2(p-1)}
Z_\lambda^2 \, d\lambda\right)  \overline{\left(z-z'\right)},
\end{equation}
where $Z_\lambda := \lambda z + \left(1-\lambda\right) z'$, $\lambda\in [0, 1]$.
\end{lemma}

\begin{proof}[Proof of Lemma \ref{mvt-l}]
Letting $z=x+iy$, $x, y\in\mathbb R$, we have
$|z|^{2p}z = \left(x^2 + y^2\right)^p  x 
 + i \left(x^2 + y^2\right)^p y$.
For $f(x, y) = \left(x^2 + y^2\right)^p  x$, define the function $g: [0, 1] \to \mathbb R$ by 
$g(\lambda) = f(\lambda x + (1-\lambda) x', \lambda y + (1-\lambda) y')$.
Then, by the Fundamental Theorem of Calculus,
$$
\left(x^2 + y^2\right)^p  x
-
\left(x'^2 + y'^2\right)^p  x'
\equiv
f(x, y) - f(x', y') \equiv g(1) - g(0) = \int_0^1 \frac{dg}{d\lambda} \, d\lambda.
$$
Furthermore, by the chain rule, for each $(x, y), (x',y') \in \mathbb R^2$ we have
$$
\frac{dg}{d\lambda}
=
\left(x-x'\right)\frac{\p f}{\p x}(\lambda x + (1-\lambda) x', \lambda y + (1-\lambda) y') 
+
\left(y-y'\right) \frac{\p f}{\p y}(\lambda x + (1-\lambda) x', \lambda y + (1-\lambda) y').
$$
Thus, computing 
$\frac{\p f}{\p x}(x, y) = 
2p \left(x^2+y^2\right)^{p-1} x^2 + \left(x^2+y^2\right)^p$,
$\frac{\p f}{\p y}(x, y) = 2p \left(x^2+y^2\right)^{p-1} xy$
and letting $X_\lambda = \lambda x + \left(1-\lambda\right) x'$, $Y_\lambda = \lambda y + \left(1-\lambda\right) y'$ so that $X_\lambda + i Y_\lambda = \lambda z + (1-\lambda) z' =: Z_\lambda$, we obtain
\begin{equation*}
\begin{aligned}
\left(x^2 + y^2\right)^p  x
-
\left(x'^2 + y'^2\right)^p  x'
&=
\left(\int_0^1 \left[
2p \left|Z_\lambda\right|^{2(p-1)}  X_\lambda^2
+
\left|Z_\lambda\right|^{2p}  
\right] d\lambda
\right)
\left(x-x'\right)
\\
&\quad
+
\left(
\int_0^1 
2p 
\left|Z_\lambda\right|^{2(p-1)} X_\lambda Y_\lambda
d\lambda
\right) \left(y-y'\right).
\end{aligned}
\end{equation*}
In addition, the symmetry in $x$ and $y$ readily implies
\begin{align*}
\left(x^2 + y^2\right)^p  y
-
\left(x'^2 + y'^2\right)^p  y'
&=
\left(\int_0^1 \left[
2p 
\left|Z_\lambda\right|^{2(p-1)}  Y_\lambda^2
+
\left|Z_\lambda\right|^{2p}  
\right] d\lambda
\right)
\left(y-y'\right)
\nn\\
&\quad
+
\left(\int_0^1 
2p 
\left|Z_\lambda\right|^{2(p-1)} X_\lambda Y_\lambda
d\lambda
\right) \left(x-x'\right).
\end{align*}
The last two equations can be combined to yield
\begin{equation*}
\begin{aligned}
|z|^{2p} z - |z'|^{2p} z'
=
\left(\int_0^1 \left|Z_\lambda\right|^{2p} d\lambda\right) \left(z-z'\right)
+
2p \int_0^1 &\left|Z_\lambda\right|^{2(p-1)}
\Big[
X_\lambda^2 \left(x-x'\right) + X_\lambda Y_\lambda \left(y-y'\right) 
\\
&
+ iY_\lambda \left(y-y'\right) + i X_\lambda Y_\lambda \left(x-x'\right)
\Big] d\lambda,
\end{aligned}
\end{equation*}
which is the desired expression since, upon completing the square, the quantity inside the square bracket on the right-hand side is equal to $\frac 12  Z_\lambda^2 \, \overline{\left(z-z'\right)} + \frac 12 \left|Z_\lambda\right|^2 \left(z-z'\right)$.
\end{proof}

Returning to \eqref{temp9}, we employ Lemma \ref{mvt-l} with $z=\Phi+\zeta$ and $z'=\Psi+\zeta$, which imply 
$
Z_\lambda = \lambda (\Phi+\zeta) + (1-\lambda) (\Psi+\zeta)
= \Lambda[\Phi] + (1-\lambda) \Psi + \zeta,
$ 
to obtain
\begin{align*}
&\quad
\left\| \p_x \left[ \left|\Phi+\zeta\right|^{2p} \left(\Phi+\zeta\right) - \left|\Psi+\zeta\right|^{2p} \left(\Psi+\zeta\right) \right]  \right\|_{L^2(\mathbb R)}
\nn\\
&\leq
(p+1)
\left\| \p_x \left[ \left(\int_0^1  \left|Z_\lambda\right|^{2p} d\lambda\right) \left(\Phi - \Psi\right)\right]\right\|_{L^2(\mathbb R)}
+
p
\left\| \p_x\left[ \left(\int_0^1 \left|Z_\lambda\right|^{2(p-1)}
Z_\lambda^2 \, d\lambda\right) \overline{\left(\Phi - \Psi\right)}\right]\right\|_{L^2(\mathbb R)}
\nn\\
&\leq
(p+1) \int_0^1 
\left\| \p_x \left[  \left|Z_\lambda\right|^{2p} \left(\Phi - \Psi\right)\right]  \right\|_{L^2(\mathbb R)} d\lambda
+
p \int_0^1 \left\|  \p_x\left[ \left|Z_\lambda\right|^{2(p-1)}
Z_\lambda^2 \ \overline{\left(\Phi-\Psi\right)}\right]  \right\|_{L^2(\mathbb R)} d\lambda
\nn\\
&\leq
(p+1) \sup_{\lambda\in [0, 1]} \left\| \p_x \left[  \left|Z_\lambda\right|^{2p} \left(\Phi-\Psi\right)\right]  \right\|_{L^2(\mathbb R)}  
+
p \sup_{\lambda\in [0, 1]}  \left\|  \p_x\left[ \left|Z_\lambda\right|^{2(p-1)}
Z_\lambda^2 \ \overline{\left(\Phi-\Psi\right)}\right]  \right\|_{L^2(\mathbb R)}.
\end{align*}
Thus, by the product rule,
\begin{equation}\label{diff-temp5}
\begin{aligned}
&\quad
\left\| \p_x \left[ \left|\Phi+\zeta\right|^{2p} \left(\Phi+\zeta\right) - \left|\Psi+\zeta\right|^{2p} \left(\Psi+\zeta\right) \right]  \right\|_{L^2(\mathbb R)}
\\
&\leq
(p+1) \sup_{\lambda\in [0, 1]} 
\left(
\left\| \p_x \big(\left|Z_\lambda\right|^{2p}\big) \cdot \left(\Phi-\Psi\right) \right\|_{L^2(\mathbb R)} 
+
\left\| \left|Z_\lambda\right|^{2p} \left(\Phi-\Psi\right)_x \right\|_{L^2(\mathbb R)} 
\right)
\\
&\quad
+
p \sup_{\lambda\in [0, 1]} 
\left(
\left\| \p_x\big( \left|Z_\lambda\right|^{2(p-1)}
Z_\lambda^2 \big) \cdot \overline{\left(\Phi-\Psi\right)} \right\|_{L^2(\mathbb R)}
+
\left\| \left|Z_\lambda\right|^{2(p-1)}
Z_\lambda^2 \   \overline{\left(\Phi-\Psi\right)_x} \right\|_{L^2(\mathbb R)}
\right).
\end{aligned}
\end{equation}
The second and fourth norms on the right-hand side of \eqref{diff-temp5} are easy to handle as follows:
\begin{equation}
\begin{aligned}
\left\| \left|Z_\lambda\right|^{2p} \left(\Phi-\Psi\right)_x \right\|_{L^2(\mathbb R)} 
&=
\left\| \left|Z_\lambda\right|^{2(p-1)}
Z_\lambda^2 \   \overline{\left(\Phi-\Psi\right)_x} \right\|_{L^2(\mathbb R)}
\\
&\leq
\no{Z_\lambda}_{L^\infty(\mathbb R)}^{2p}
\no{\left(\Phi-\Psi\right)_x}_{L^2(\mathbb R)} 
\\
&\leq
\left(\no{\Phi}_{H^1(\mathbb R)} + \no{\Psi}_{H^1(\mathbb R)} + \no{\zeta}_{L^\infty(\mathbb R)}\right)^{2p}
\no{\Phi-\Psi}_{H^1(\mathbb R)}.
\label{temp3}
\end{aligned}
\end{equation}
In order to estimate the first and third norms on the right-hand side of \eqref{diff-temp5}, we note that
$$
\begin{aligned}
&\p_x |Z_\lambda|^{2p}
=
2p  |Z_\lambda|^{2p-1} \p_x |Z_\lambda|
=
2p |Z_\lambda|^{2p-2} \left(\p_x |Z_\lambda|\right) |Z_\lambda|,
\\
&\p_x |Z_\lambda|^{2(p-1)}
=
2(p-1)  |Z_\lambda|^{2p-3} \p_x |Z_\lambda|
=
2(p-1) |Z_\lambda|^{2(p-2)} \left(\p_x |Z_\lambda|\right) |Z_\lambda|.
\end{aligned}
$$
Hence, it suffices to compute $\left(\p_x |Z_\lambda|\right) |Z_\lambda|$. For this, we observe that $\p_x |Z_\lambda|^2 = 2 |Z_\lambda| \p_x |Z_\lambda|$ and so we first compute
\begin{equation*}
|Z_\lambda|^2 
=
\lambda |\Phi|^2 + (1-\lambda)^2 |\Psi|^2 + q_0^2
+ \lambda (1-\lambda) \left( \Phi \overline{\Psi} + \overline \Phi \Psi \right) + \lambda \left(\Phi \overline \zeta + \overline \Phi \zeta\right) +  (1-\lambda) \left( \Psi \overline \zeta + \overline \Psi \zeta\right),
\end{equation*}
which upon differentiation yields
\begin{equation}
\begin{aligned}
2|Z_\lambda| \p_x |Z_\lambda| 
&=
\lambda^2 \left(\Phi_x \overline \Phi +  \overline \Phi_x \Phi\right) 
+ \lambda (1-\lambda) \left(\Phi_x \overline \Psi + \overline  \Phi_x \Psi\right) 
+ \lambda \left(\Phi_x \bar \zeta + \overline \Phi_x \zeta\right) 
\\
&\quad
+ \lambda (1-\lambda) \left(\overline \Phi \Psi_x + \Phi \overline \Psi_x \right) 
+ (1-\lambda)^2 \left(\overline \Psi \Psi_x + \Psi \overline \Psi_x\right) 
+ (1-\lambda) \left(\Psi_x \bar \zeta + \overline \Psi_x \zeta\right) 
\\
&\quad
+ \lambda \left(\overline \Phi \zeta' + \Phi \bar \zeta' \right)
+ (1-\lambda) \left(\overline \Psi \zeta' + \Psi \bar \zeta' \right) 
+ \left(\zeta' \bar \zeta + \bar \zeta' \zeta \right).
\end{aligned}
\end{equation}
Therefore,
\begin{equation}\label{temp}
\begin{aligned}
&\quad
\left\| \p_x \big(\left|Z_\lambda\right|^{2p}\big) \cdot \left(\Phi-\Psi\right) \right\|_{L^2(\mathbb R)} 
\leq
2p \no{\Phi-\Psi}_{L^\infty(\mathbb R)} 
\no{|Z_\lambda|^{2(p-1)} \left(\p_x |Z_\lambda|\right) |Z_\lambda|}_{L^2(\mathbb R)}
\\
&\leq
p \no{\Phi-\Psi}_{L^\infty(\mathbb R)} \no{Z_\lambda}_{L^\infty(\mathbb R)}^{2(p-1)}
\bigg[
\no{\Phi_x \overline \Phi + \overline \Phi_x \Phi}_{L^2(\mathbb R)}
+ 
\no{\Phi_x \overline \Psi + \overline  \Phi_x \Psi}_{L^2(\mathbb R)}
\\
&\quad
+ 
\no{\Phi_x \bar \zeta + \overline \Phi_x \zeta}_{L^2(\mathbb R)}
+ 
\no{\overline \Phi \Psi_x + \Phi \overline \Psi_x}_{L^2(\mathbb R)}
+
\no{\overline \Psi \Psi_x + \Psi \overline \Psi_x}_{L^2(\mathbb R)}
\\
&\quad
+
\no{\Psi_x \bar \zeta + \overline \Psi_x \zeta}_{L^2(\mathbb R)}
+ 
\no{\overline \Phi \zeta' + \Phi \bar \zeta'}_{L^2(\mathbb R)}
+
\no{\overline \Psi \zeta' + \Psi \bar \zeta'}_{L^2(\mathbb R)}
+
\no{\zeta' \bar \zeta + \bar \zeta' \zeta}_{L^2(\mathbb R)}
\bigg]
\\
&\leq
2p \no{\Phi-\Psi}_{L^\infty(\mathbb R)} \no{Z_\lambda}_{L^\infty(\mathbb R)}^{2(p-1)}
\left(
\no{\Phi}_{L^\infty(\mathbb R)}
+
\no{\Psi}_{L^\infty(\mathbb R)} 
+
\no{\zeta}_{L^\infty(\mathbb R)}
\right) 
\\
&\quad \cdot \left(\no{\Phi_x}_{L^2(\mathbb R)}
+
\no{\Psi_x}_{L^2(\mathbb R)}
+ 
\no{\zeta'}_{L^2(\mathbb R)}
\right)
\end{aligned}
\end{equation}
hence, noting that $|Z_\lambda| \leq  |\Phi| +  |\Psi| + |\zeta|$ and using the embedding $H^1(\mathbb R) \subset L^\infty(\mathbb R)$, we obtain the following bound for the first norm on the right-hand side of \eqref{diff-temp5}:
\begin{align}
\left\| \p_x \big(\left|Z_\lambda\right|^{2p}\big) \cdot \left(\Phi-\Psi\right) \right\|_{L^2(\mathbb R)} 
&\leq
2p \no{\Phi-\Psi}_{H^1(\mathbb R)} 
\left(\no{\Phi}_{H^1(\mathbb R)} + \no{\Psi}_{H^1(\mathbb R)} + \no{\zeta}_{L^\infty(\mathbb R)}  \right)^{2p-1}
\nn\\
&\quad
\cdot
\left(
\no{\Phi}_{H^1(\mathbb R)} + \no{\Psi}_{H^1(\mathbb R)}
+
\no{\zeta'}_{L^2(\mathbb R)}
\right).
\label{temp1}
\end{align}
Proceeding to the third norm on the right-hand side of \eqref{diff-temp5}, we have
\begin{align*}
&\quad
\left\| \p_x\big( \left|Z_\lambda\right|^{2(p-1)}
Z_\lambda^2 \big) \cdot \overline{\left(\Phi-\Psi\right)} \right\|_{L^2(\mathbb R)}
\leq
\no{\Phi-\Psi}_{L^\infty(\mathbb R)} \no{\p_x\big( \left|Z_\lambda\right|^{2(p-1)} Z_\lambda^2 \big)}_{L^2(\mathbb R)}
\nn\\
&\leq
\no{\Phi-\Psi}_{L^\infty(\mathbb R)} 
\left[
\no{2(p-1) \left|Z_\lambda\right|^{2(p-2)} |Z_\lambda| \left(\p_x |Z_\lambda|\right) Z_\lambda^2}_{L^2(\mathbb R)}
+
\no{\left|Z_\lambda\right|^{2(p-1)} \cdot 2 Z_\lambda \left(\p_x Z_\lambda\right)}_{L^2(\mathbb R)}
\right]
\nn\\
&\leq
\no{\Phi-\Psi}_{L^\infty(\mathbb R)} 
\left[
(p-1) \no{Z_\lambda}_{L^\infty(\mathbb R)}^{2(p-1)}
\no{2|Z_\lambda| \left(\p_x |Z_\lambda|\right)}_{L^2(\mathbb R)}
+
2\no{Z_\lambda}_{L^\infty(\mathbb R)}^{2p-1}
\no{\p_x Z_\lambda}_{L^2(\mathbb R)}
\right].
\end{align*}
The first of the $L^2$ norms  on the right-hand side can be handled along the lines of \eqref{temp}. For the second $L^2$ norm, we have
$$
\no{\p_x Z_\lambda}_{L^2(\mathbb R)}
=
\no{\lambda \Phi_x + (1-\lambda) \Psi_x + \zeta'}_{L^2(\mathbb R)}
\leq
\no{\Phi}_{H^1(\mathbb R)} + \no{\Psi}_{H^1(\mathbb R)} + \no{\zeta'}_{L^2(\mathbb R)}.
$$
Thus, using once again the embedding $H^1(\mathbb R) \subset L^\infty(\mathbb R)$, we obtain the bound
\begin{align}\label{temp2}
\left\| \p_x\big( \left|Z_\lambda\right|^{2(p-1)}
Z_\lambda^2 \big) \cdot \overline{\left(\Phi-\Psi\right)} \right\|_{L^2(\mathbb R)}
&\leq
2p \no{\Phi-\Psi}_{H^1(\mathbb R)} \left(\no{\Phi}_{H^1(\mathbb R)} + \no{\Psi}_{H^1(\mathbb R)} + \no{\zeta}_{L^\infty(\mathbb R)}  \right)^{2p-1}
\nn\\
&\quad
\cdot
\left(\no{\Phi}_{H^1(\mathbb R)} + \no{\Psi}_{H^1(\mathbb R)} + \no{\zeta'}_{L^2(\mathbb R)}\right).
\end{align}

Overall, the bounds \eqref{temp3}, \eqref{temp1} and \eqref{temp2} combine with inequality \eqref{diff-temp5} to yield 
\begin{equation*}
\begin{aligned}
&\quad
\left\| \p_x \left[ \left|\Phi+\zeta\right|^{2p} \left(\Phi+\zeta\right) - \left|\Psi+\zeta\right|^{2p} \left(\Psi+\zeta\right) \right]  \right\|_{L^2(\mathbb R)}
\\
&\leq
\left(2p+1\right)^2 \left(
\no{\Phi}_{H^1(\mathbb R)} + \no{\Psi}_{H^1(\mathbb R)}
+
\no{\zeta}_{L^\infty(\mathbb R)} + \no{\zeta'}_{L^2(\mathbb R)}
\right)^{2p}
\no{\Phi-\Psi}_{H^1(\mathbb R)}
\end{aligned}
\end{equation*}
which together with inequality \eqref{temp9} implies
\begin{equation}\label{px-G-est}
\begin{aligned}
&\quad
\no{\p_x \left(G_{F, \Phi}-G_{F, \Psi}\right)}_{L^2(\mathbb R)}
\\
&\leq
\left(2p+1\right)^2 \left(
\no{\Phi}_{H^1(\mathbb R)} + \no{\Psi}_{H^1(\mathbb R)}
+
\no{\zeta}_{L^\infty(\mathbb R)} + \no{\zeta'}_{L^2(\mathbb R)}
+
q_0
\right)^{2p}
\no{\Phi-\Psi}_{H^1(\mathbb R)}.
\end{aligned}
\end{equation}
Hence, in view of the $L^2$ estimate \eqref{Gf-diff-l2-F}, we have
\begin{align*}
&\quad
\no{G_{F, \Phi}-G_{F, \Psi}}_{H^1(\mathbb R)}
\\
&\leq
\left[2^{2p+1} p + \left(2p+1\right)^2\right] \left(\no{\Phi}_{H^1(\mathbb R)} + \no{\Psi}_{H^1(\mathbb R)} + \no{\zeta}_{L^\infty(\mathbb R)} +  \no{\zeta'}_{L^2(\mathbb R)} + q_0\right)^{2p}
\cdot \left\| \Phi - \Psi \right\|_{H^1(\mathbb R)}.
\nn
\end{align*}
In turn, inequality \eqref{L-contr} yields 
\begin{align*}
&\no{\Lambda[\Phi](t) - \Lambda[\Psi](t)}_{H^1(\mathbb R)}
\leq
2|\gamma| \left[2^{2p+1} p + \left(2p+1\right)^2\right] t 
\\
&\quad
\cdot \sup_{\tau\in [0, t]}  
\Big[
\Big(\no{\Phi(\tau)}_{H^1(\mathbb R)} + \no{\Psi(\tau)}_{H^1(\mathbb R)} 
+ \no{\zeta}_{L^\infty(\mathbb R)} +  \no{\zeta'}_{L^2(\mathbb R)} + q_0\Big)^{2p}
\left\| (\Phi - \Psi)(\tau) \right\|_{H^1(\mathbb R)}
\Big],
\nn
\end{align*}
which ensures that $\Lambda$ is a contraction in the ball $B(0, \rho) \subset C([0, T_f]; H^1(\mathbb R))$ provided that 
\begin{equation}
\label{estT}
 T_f   < 
\frac{1}{2|\gamma|\left[2^{2p+1} p + \left(2p+1\right)^2\right]\left(2\rho + \no{\zeta}_{L^\infty(\mathbb R)} +  \no{\zeta'}_{L^2(\mathbb R)} + q_0\right)^{2p}}.
\end{equation}
The proof of the local $H^1$ existence Theorem \ref{nzbc-lwp-t} is complete.
\end{proof}

\begin{remark}
The bounds \eqref{tf-sat} and \eqref{estT} for the lifespan $T_f$ get larger as the norms $\left\|\Phi_0\right\|_{H^1(\mathbb{R})}$, $\left\| \zeta \right\|_{L^\infty(\mathbb R)}$, $\no{\zeta'}_{L^2(\mathbb R)}$, $\big\| |\zeta| - q_0 \big\|_{L^2(\mathbb R)}$ and the background $q_0$ get smaller. This is consistent with the fact that, when all of these quantities are small, the problem heuristically approximates the one with zero boundary conditions, for which global existence is ensured for all $p\geq 1$ and sufficiently  small initial data according to the known results of Theorem \ref{zbc-wp-t}. In particular, when $\zeta=q_0=0$, and $1\leq p< 2$, the arguments of Theorem \ref{nzbc-lwp-t} provide the first step  towards establishing global existence of solutions satisfying the size estimate \eqref{boundg}.  In particular, this first step implies local existence of solutions satisfying \eqref{boundg}. This result is then extended to global existence by employing appropriate conservation laws, which are useful in the context of zero boundary conditions unlike the case of nonzero boundary conditions (see discussion in the next section). 
\end{remark}

\noindent
\textbf{Closeness estimates for finite times.}
Combining the closeness estimates \eqref{D-nzbc-L^2} and \eqref{nvH1} with the local existence result of Theorem \ref{nzbc-lwp-t}, we arrive at the following analogue of Theorem \ref{Theorem:stability} for the case of nonzero boundary conditions at infinity.

\begin{theorem}[Theorem \ref{Theorem:stability} for the nonzero boundary conditions \eqref{uU-bc}]
\label{nzbc-t}
Consider the modified NLS equations \eqref{cNLS} and \eqref{pNLS} with initial data $\phi_0, \Phi_0 \in H^1(\mathbb R)$ given by \eqref{mod-ic} and the boundary conditions~\eqref{vbcn}, in the case of either the saturable nonlinearity   $F(x) = \dfrac{x}{\kappa(1+x)}$ or the power nonlinearity $F(x)=x^p$ with $p\geq 1$. 
\begin{enumerate}[label=\textnormal{(\roman*)}, leftmargin=6.5mm, topsep=2mm, itemsep=1mm]
\item \underline{$L^2$ closeness}: 
Given $0<\varepsilon<1$, suppose that  the initial data satisfy
\begin{gather}
\label{d0nv}
\left\| \phi_0-\Phi_0 \right\|_{L^2(\mathbb R)} \leq  C \varepsilon^3,
\\
\label{d1nv}
\left\| \phi_0 \right\|_{H^1(\mathbb R)} \leq c_0 \, \varepsilon, \ \left\| \Phi_0 \right\|_{H^1(\mathbb R)}\leq  C_0 \, \varepsilon
\end{gather}
for some constants $C, c_0, C_0>0$. In addition, suppose that the nonzero background described by the function $\zeta \in X^1(\mathbb R)$ satisfies
\begin{equation}
\label{bgs}
q_0 \leq B\varepsilon, 
\ 
\no{\zeta}_{L^{\infty}(\mathbb{R})}\leq  B_0\varepsilon,
\ 
\no{\zeta'}_{L^{2}(\mathbb{R})}\leq B_1\varepsilon,
\ 
\big\| |\zeta| - q_0 \big\|_{L^2(\mathbb R)}\leq B_2\varepsilon
\end{equation}
for some constants $B, B_0, B_1, B_2>0$. 
Then, there exists a finite time $T_c\in \left[0, T_f\right]$, where $T_f$ is the lifespan of solutions to the non-integrable NLS equation \eqref{pNLS} from Theorem \ref{nzbc-lwp-t}, and a constant $\widetilde C=\widetilde C(\mu,\gamma, c_0, C_0, C,B, B_0, B_1, B_2, T_c)$ such that the solutions $\phi(x, t)$ and $\Phi(x, t)$ satisfy the closeness estimate 	
\begin{equation}\label{bound1nv}	
\sup_{t\in [0, T_c]}\left\| \phi(t)-\Phi(t) \right\|_{L^2(\mathbb R)}
\equiv
\no{e^{-i\mu q_0^2 t} u(t) - e^{-i\gamma F(q_0^2)t} U(t)}_{L^2(\mathbb R)} 
\leq \widetilde C \varepsilon^3.
\end{equation}
\item \underline{$H^1$ and $L^{\infty}$ closeness}: 
If the initial data $\phi_0$, $\Phi_0$ satisfy \eqref{d1nv} along with the stronger condition (in place of \eqref{d0nv})  
\begin{equation}\label{dH1nv}
\left\| \phi_0-\Phi_0 \right\|_{H^1(\mathbb R)} \le C_1 \varepsilon^3
\end{equation}
for some constant $C_1>0$ and, in addition, \eqref{bgs} holds, then there exists a constant $\widetilde C_1$ depending on $C_1$ and with a similar dependency on $T_c$ and $\mu, \gamma, c_0, C_0,B_0,B_1,B_2$ as the constant $\widetilde C$ in \eqref{bound1nv}  such that 
\begin{equation}\label{bound2nv}
\sup_{t\in [0, T_c]}\left\| \phi(t)-\Phi(t) \right\|_{H^1(\mathbb R)}
\equiv
\no{e^{-i\mu q_0^2 t} u(t) - e^{-i\gamma F(q_0^2)t} U(t)}_{H^1(\mathbb R)} 
\leq
 \widetilde C_1 \varepsilon^3.
\end{equation}
Consequently, there exists a constant  $\widetilde C_2$ with similar dependencies as $\widetilde C_1$ such that
\begin{equation}\label{boundBnv}
\sup_{t\in [0, T_c]}\left\| \phi(t)-\Phi(t) \right\|_{L^\infty(\mathbb R)}
\equiv
\no{e^{-i\mu q_0^2 t} u(t) - e^{-i\gamma F(q_0^2)t} U(t)}_{L^\infty(\mathbb R)} 
\leq \widetilde C_2 \varepsilon^3.
\end{equation}
\end{enumerate}
\end{theorem}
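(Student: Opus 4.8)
The plan is to assemble the two ingredients already in hand: the closeness estimates \eqref{D-nzbc-L^2} and \eqref{nvH1}, which were derived under the sole assumption that the solutions $\phi$, $\Phi$ of \eqref{cNLS}, \eqref{pNLS} exist in $H^1(\mathbb R)$, and the local well-posedness Theorem~\ref{nzbc-lwp-t}, which both removes that assumption and, decisively, furnishes a uniform-in-time $H^1$ bound on the lifespan. Since the integrable equation \eqref{cNLS} is exactly the $p=1$, $K=1$ instance (with $\gamma$ replaced by $\mu$) of the power case treated in Theorem~\ref{nzbc-lwp-t}, that theorem applies to \eqref{cNLS} as well as to \eqref{pNLS}. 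Writing $\Delta := \phi-\Phi$, the change of variables \eqref{zhi4}--\eqref{zhi6} gives $\Delta = q - Q = e^{-i\mu q_0^2 t}u - e^{-i\gamma F(q_0^2)t}U$, so the identities displayed in \eqref{bound1nv}--\eqref{boundBnv} are automatic and it remains only to bound $\|\Delta(t)\|_{L^2}$ and $\|\Delta(t)\|_{H^1}$.

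First I would apply the contraction argument of Theorem~\ref{nzbc-lwp-t} separately to \eqref{cNLS} and to \eqref{pNLS}, producing solutions $\phi\in B(0,\rho_\phi)\subset C([0,T_f^\phi],H^1(\mathbb R))$ and $\Phi\in B(0,\rho_\Phi)\subset C([0,T_f^\Phi],H^1(\mathbb R))$, and then set $T_c := \min\{T_f^\phi, T_f^\Phi\}\in(0,T_f]$. Under the hypotheses \eqref{d1nv} and \eqref{bgs}, every quantity entering the radius formula \eqref{rho-def} --- namely $\|\Phi_0\|_{H^1}$ (resp.\ $\|\phi_0\|_{H^1}$), $\|\zeta'\|_{L^2}$, $\|\zeta\|_{L^\infty}$, $q_0$ and $\big\||\zeta|-q_0\big\|_{L^2}$ --- is $\mathcal O(\varepsilon)$; hence, fixing the parameter $T$ in \eqref{rho-def} of order one, one gets $\rho_\phi, \rho_\Phi \le C'\varepsilon$ for some constant $C'$ independent of $\varepsilon$. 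With the same choice of $T$, the lifespan conditions \eqref{tf-sat} and \eqref{estT} have denominators that stay bounded (they tend respectively to $32|\gamma|$ and to $2|\gamma|[2^{2p+1}p+(2p+1)^2]$ as $\varepsilon\to0$), so $T_c$ is bounded below by a positive constant uniformly in $\varepsilon$. Consequently, with $\rho_* := \max\{\rho_\phi,\rho_\Phi\} = \mathcal O(\varepsilon)$, we have $\|\phi(\tau)\|_{H^1}, \|\Phi(\tau)\|_{H^1}\le\rho_*$ and, by the Sobolev embedding $H^1(\mathbb R)\hookrightarrow L^\infty(\mathbb R)$, also $\|\phi(\tau)\|_{L^\infty}, \|\Phi(\tau)\|_{L^\infty} = \mathcal O(\varepsilon)$ for every $\tau\in[0,T_c]$.

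Next I would feed these $\mathcal O(\varepsilon)$ bounds, together with the background bounds \eqref{bgs} and the distance hypothesis \eqref{d0nv}, into the $L^2$ closeness estimate \eqref{D-nzbc-L^2}. The initial term is $\|\Delta(0)\|_{L^2}=\|\phi_0-\Phi_0\|_{L^2}\le C\varepsilon^3$; inside the supremum over $\tau\in[0,t]$, the $\gamma$-contribution is bounded by $|\gamma|K\,(\mathcal O(\varepsilon))^{2p}\cdot\mathcal O(\varepsilon) = \mathcal O(\varepsilon^{2p+1})$ and the $\mu$-contribution by $|\mu|\,(\mathcal O(\varepsilon))^2\cdot\mathcal O(\varepsilon) = \mathcal O(\varepsilon^3)$. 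Since $p\ge1$ and $0<\varepsilon<1$ imply $\varepsilon^{2p+1}\le\varepsilon^3$, the bracket is $\mathcal O(\varepsilon^3)$, and multiplying by $2\sqrt2\,t\le2\sqrt2\,T_c$ yields $\sup_{[0,T_c]}\|\Delta(t)\|_{L^2}\le\widetilde C\varepsilon^3$ with $\widetilde C$ of the asserted form. The $H^1$ bound \eqref{bound2nv} follows in exactly the same way from \eqref{nvH1}, now using $\|\Delta(0)\|_{H^1}\le C_1\varepsilon^3$ from \eqref{dH1nv} and noting that each product of norms appearing in the supremum again contributes at least three powers of $\varepsilon$ (the $\gamma$-terms contribute $2p+1\ge3$). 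Finally, \eqref{boundBnv} is immediate from \eqref{bound2nv} via $H^1(\mathbb R)\hookrightarrow L^\infty(\mathbb R)$.

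The only delicate point --- and hence the main obstacle --- is the coupling in Theorem~\ref{nzbc-lwp-t} between the radius $\rho$ (which grows with the fixed parameter $T$) and the lifespan $T_f$ (which is constrained by $T$ and by $\rho$): one must choose $T$ so that $\rho$ remains $\mathcal O(\varepsilon)$, which is what makes the solution norms on the right-hand sides of \eqref{D-nzbc-L^2} and \eqref{nvH1} small, while simultaneously keeping $T_c=\min\{T_f^\phi,T_f^\Phi\}$ bounded away from zero uniformly in $\varepsilon$. Once this is arranged, the remainder is the power-counting bookkeeping $\varepsilon^{2p+1}\le\varepsilon^3$ applied to the already-established estimates. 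I would also emphasize that, in contrast to the zero-boundary-condition setting of Theorem~\ref{TH1}, no conservation law yields a global-in-time $H^1$ bound here, which is precisely why the closeness estimate is necessarily confined to the finite interval $[0,T_c]$.
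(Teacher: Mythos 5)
Your argument is correct and rests on the same two building blocks as the paper: the closeness estimates \eqref{D-nzbc-L^2} and \eqref{nvH1} together with the local existence Theorem \ref{nzbc-lwp-t}, followed by the power counting $\varepsilon^{2p+1}\leq\varepsilon^{3}$ and Sobolev embedding for \eqref{boundBnv}. Where you genuinely diverge is in how the $\mathcal{O}(\varepsilon)$ bounds on $\no{\phi(t)}_{H^1}$ and $\no{\Phi(t)}_{H^1}$ over $[0,T_c]$ are obtained. The paper treats $\phi$ by invoking global well-posedness of the modified integrable equation \eqref{cNLS} in $C([0,\infty);H^1(\mathbb R))$, and then gets the smallness bounds \eqref{cs} for both $\phi$ and $\Phi$ by a soft continuity-in-time argument, which leaves $T_c=\min\{T_1,T_2\}$ unquantified (consistent with the statement ``there exists $T_c\in[0,T_f]$''). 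You instead run the contraction of Theorem \ref{nzbc-lwp-t} on both equations, viewing \eqref{cNLS} as the $p=1$, $K=1$ power case with $\mu$ in place of $\gamma$, and exploit that the fixed point lies in $B(0,\rho)$ with $\rho=\mathcal{O}(\varepsilon)$ by \eqref{rho-def} under \eqref{d1nv} and \eqref{bgs}; this gives the solution-size bounds on the full local lifespan and, moreover, a $T_c$ bounded below uniformly in $\varepsilon$. Your route is therefore more self-contained (no appeal to the integrable nonzero-background global theory) and yields a quantitatively stronger conclusion about the time of validity, at the price of the radius/lifespan bookkeeping; the paper's route is shorter and softer. One small correction: in the power case the denominator of \eqref{estT} contains the factor $\left(2\rho+\no{\zeta}_{L^\infty(\mathbb R)}+\no{\zeta'}_{L^2(\mathbb R)}+q_0\right)^{2p}=\mathcal{O}(\varepsilon^{2p})$, so it tends to zero rather than to $2|\gamma|\left[2^{2p+1}p+(2p+1)^2\right]$ as $\varepsilon\to0$; this only relaxes the constraint on $T_f$ (the admissible lifespan grows), so your claim that $T_c$ stays bounded away from zero, subject also to $T_f\leq T$, is unaffected.
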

\begin{proof} 
We only give the details for part (ii), as part (i) can be established similarly.  The solutions to the modified cubic NLS equation \eqref{cNLS} exist globally in time with $\phi \in C([0,\infty);H^1(\mathbb{R}))$. On the other hand, thanks to Theorem \ref{nzbc-lwp-t}, the solutions to the modified general NLS equation exist at least locally with $\Phi\in C([0,T_f],H^1(\mathbb{R}))$. Consider both solutions on the time interval $[0,T_f]$ and suppose that the conditions \eqref{d1nv} are met. Then, since both solutions belong to $C([0,T_f],H^1(\mathbb{R}))$, by continuity there exist $T_{1}, T_{2} \in [0,T_f]$ associated with the solutions $\phi, \Phi$ of the integrable and non-integrable NLS equations, respectively, such that
\begin{equation}
\label{cs}
\begin{aligned}
&\left\|\phi(t)\right\|_{H^{1}(\mathbb{R})}\leq  \tilde{c}_0 \, \varepsilon \ \  \forall t\in [0, T_1],
\\
&\left\|\Phi(t)\right\|_{H^{1}(\mathbb{R})}\leq  \tilde{C}_0 \, \varepsilon \ \  \forall t\in [0, T_2],
\end{aligned} 
\end{equation}
for some constants  $\tilde{c}_0, \tilde{C}_0>0$ independent of $t\in [0, T_1]$ and $t\in [0, T_2]$, respectively. Let $T_c:=\min\left\{T_1, T_2\right\}$. Then, estimates \eqref{cs}  hold for all $t\in [0, T_c]$ and can be combined with the closeness estimate \eqref{nvH1} to imply 
\begin{equation}
 \label{nvH1b}
 \no{\Delta(t)}_{H^1(\mathbb{R})}\leq M_1T_c \, \varepsilon^{2p+1}+M_2 \, \varepsilon^{2p+1}+M_3\, \varepsilon^3,
\end{equation}
for some constants $M_j(\mu,\gamma,C_1,B_0,B_1,B_2)$, $j=1,2, 3$ and all $t\in [0,T_c]$ (recall that for the saturable nonlinearity $p=1$, while for the power nonlinearity $p\geq 1$). 
\end{proof}
\begin{remark}
The fact that the NLS solutions belong to $C([0, T_f], H^1(\mathbb{R}))$ is crucial, since it allows us to establish that solutions starting from initial data that satisfy the smallness conditions \eqref{d1nv} remain small  in the sense of the bounds \eqref{cs}, at least for short times. Such an argument could not be implemented for solutions are not continuous with respect to time, e.g. for solutions belonging in a weaker class such as $L^2([0, T_f], H^1(\mathbb{R}))$.

Moreover, estimates \eqref{bound1nv}, \eqref{bound2nv} and \eqref{boundBnv} highlight that the rotations $e^{-i\mu q_0^2 t}$ and  $ e^{-i\gamma F(q_0^2)t}$ are necessary in order to establish the closeness between the solutions of the original NLS equations \eqref{NLS} and~\eqref{NLSP} or \eqref{Sat2} in the case of the nonzero boundary conditions \eqref{uU-bc}. Indeed, these rotations are the result of converting the non-vanishing conditions \eqref{uU-bc} into vanishing ones via the changes of variables~\eqref{zhi4} and \eqref{zhi6}, which lead to the modified problems \eqref{cNLS} and \eqref{pNLS}, respectively.  For the  importance of rotations in the context of orbital stability of standing waves in the case of the vanishing boundary conditions, we refer to the reader to Remark 8.3.4 on page 274 of \cite{book1}.
\end{remark}

%
%
%
\section{The case of a finite interval}
\label{finite-s}

In this section, we examine the proximity question for initial-boundary value problems formulated on a finite interval. The impact of our results is twofold. On the one hand, they provide analytical justification for the forthcoming numerical simulations, in which the real line is  approximated by finite domains. On the other hand, they shed light on the finite domain problem, which is interesting on its own right especially in the context of global existence of solutions. In that direction, we identify major differences between the case of general nonzero boundary conditions and the case of periodic conditions.
\\[3mm]
\noindent
\textbf{Nonzero boundary conditions on a finite interval.}
Consider the integrable and non-integrable NLS equations \eqref{NLS} and \eqref{noninNLS} on the finite interval $I=(-L, L)$, $L>0$, supplemented with the boundary conditions (we use the limit notation in order to illustrate the motivation of studying this problem as a finite domain approximation to the one on $\mathbb R$)
\begin{equation}\label{nvDBC}
\lim_{x \to \pm L} u(x,t) = \lim_{x \to \pm L}e^{i\mu q_0^2t}\zeta(x), 
\quad 
\lim_{x \to \pm L} U(x,t) = \lim_{x \to \pm L} e^{i\gamma F(q_0^2)t} \zeta(x), \quad t\geq 0,
\end{equation}
where the function $\zeta$ satisfies
\begin{equation}\label{zhi2b}
 \zeta\in X^1(I),
 \quad
 \lim_{x \to \pm L} \zeta (x) = \zeta_\pm \in \mathbb C, 
 \quad 
 |\zeta_\pm| = q_0>0,
 \quad
 \lim_{x \to \pm L} \zeta' (x) = 0,
\end{equation}
with  $X^1(I)$ denoting the Zhidkov space on the interval $I$, defined analogously to \eqref{zhi1}.
In view of~\eqref{nvDBC} and \eqref{zhi2b}, $\lim_{x \to \pm L}|u(x, t)|=\lim_{x \to \pm L}|U(x,t)|=q_0$,  $t\geq 0$. Thus, making the changes of variables~\eqref{zhi4} and \eqref{zhi6}, we obtain the modified NLS equations \eqref{cNLS} and \eqref{pNLS} with zero Dirichlet boundary conditions on $I$, namely
\begin{equation}\label{Dvbcn}
\lim_{x \to \pm L} \phi(x,t)= \lim_{x \to \pm L} \Phi(x,t)=0,\quad t\geq 0.
\end{equation}

In the rest of this section, we confine our analysis to the case of the power nonlinearity 
\begin{equation}\label{Phi-p}
i \Phi_t+ \nu \left(\Phi+\zeta\right)_{xx} +\gamma \big( \left|\Phi+\zeta\right|^{2p} - q_0^{2p} \big) (\Phi+\zeta)=0, \quad p \geq 1.
\end{equation}
The case of the saturable nonlinearity can be handled similarly. 
Defining the Sobolev space $H^2_0(I)$ as the closure of $C_c^\infty(I)$ in $H^2(I)$, which can be characterized by $H^2_0(I)= \left\{f \in H^2(I): f(\pm L) = f'(\pm L) = 0\right\}$ (e.g. see \cite{lm1972}), we begin with the following result. 
\begin{lemma}\label{locex1} 
Let $\zeta$ satisfy \eqref{zhi2b} and $\Phi_0\in H^2_0(I)$. Then, there exists $T_{\max}>0$ (possibly infinite) such that the modified NLS equation \eqref{Phi-p} supplemented with the initial condition $\Phi(x, 0) = \Phi_0(x)$ and the zero Dirichlet boundary conditions~\eqref{Dvbcn}  has a unique solution $\Phi(t)\in H^2_0(I)$, $t\in [0, T_{\max})$.  
\end{lemma}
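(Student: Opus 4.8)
The plan is to follow the scheme of Theorem~\ref{nzbc-lwp-t}: recast the initial-boundary value problem \eqref{Phi-p}, \eqref{Dvbcn} with $\Phi(\cdot,0)=\Phi_0$ as a fixed-point equation for the Duhamel map and apply the contraction mapping theorem, now in $C([0,T];H^2_0(I))$. Let $\{S(t)\}_{t\in\mathbb R}$, $S(t)=e^{it\nu\partial_x^2}$, be the unitary Schr\"odinger group on $L^2(I)$ generated by $i\nu\partial_x^2$ with zero Dirichlet boundary conditions, diagonalized by the Dirichlet eigenbasis $\{\sin(n\pi(x+L)/(2L))\}_{n\geq1}$, and set $N(\Phi):=\big(\left|\Phi+\zeta\right|^{2p}-q_0^{2p}\big)(\Phi+\zeta)$. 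Following the notion of solution used in Theorem~\ref{nzbc-lwp-t}, we say that $\Phi$ solves \eqref{Phi-p}, \eqref{Dvbcn} if $\Phi=\Lambda[\Phi]$ with
\begin{equation*}
\Lambda[\Phi](t)=S(t)\Phi_0+i\gamma\int_0^t S(t-\tau)\,N(\Phi)(\tau)\,d\tau+\mathcal Z(t),
\qquad
\mathcal Z(t):=i\nu\int_0^t S(t-\tau)\,\zeta_{xx}\,d\tau .
\end{equation*}
Even though $\zeta\in X^1(I)$ renders $\zeta_{xx}$ only a distribution, the time integration against $S(t)$ formally gains two derivatives: using $\int_0^t S(t-\tau)\,d\tau=(i\nu\partial_x^2)^{-1}\big(S(t)-I\big)$ on the range of $\partial_x^2$ and integrating by parts with $\zeta(\pm L)=\zeta_\pm$, $\zeta'(\pm L)=0$, the term $\mathcal Z(t)$ is controlled uniformly on $[0,T]$ in terms of $\left\|\zeta\right\|_{L^\infty(I)}$, $\left\|\zeta'\right\|_{L^2(I)}$ and $q_0$. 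This is precisely the bounded-interval counterpart of the cancellation of the $1/(i\xi)$ factor exploited in the proof of Theorem~\ref{nzbc-lwp-t}.

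The remaining ingredients are nonlinear estimates at the $H^2$ level. On the bounded interval $I$ one has $H^2(I)\hookrightarrow C^1(\bar I)$ and $H^2(I)$ is a Banach algebra; together with the smoothness of $x\mapsto x^p$ on $[0,\infty)$ and the boundary cancellation $\left|\zeta_\pm\right|^{2p}=q_0^{2p}$ — which forces $N(\Phi)$ to vanish at $x=\pm L$, so that $\Lambda[\Phi](t)\in H^2_0(I)$ — these yield a polynomial bound $\left\|N(\Phi)\right\|_{H^2_0(I)}\leq \mathcal P\big(\left\|\Phi\right\|_{H^2_0(I)},\left\|\zeta\right\|_{L^\infty(I)},\left\|\zeta'\right\|_{L^2(I)},q_0\big)$ and a local Lipschitz estimate $\left\|N(\Phi)-N(\Psi)\right\|_{H^2_0(I)}\leq L(R)\left\|\Phi-\Psi\right\|_{H^2_0(I)}$ valid for $\left\|\Phi\right\|_{H^2_0(I)},\left\|\Psi\right\|_{H^2_0(I)}\leq R$. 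Inserting these into the Duhamel formula and using the boundedness of $S(t)$ on $H^2_0(I)$ gives $\left\|\Lambda[\Phi](t)\right\|_{H^2_0(I)}\leq C\big(\left\|\Phi_0\right\|_{H^2_0(I)}+T\,\mathcal B\big)+Ct\,\mathcal P(R,\cdots)$, where $\mathcal B=\mathcal B(\left\|\zeta\right\|_{L^\infty(I)},\left\|\zeta'\right\|_{L^2(I)},q_0)$, together with $\left\|\Lambda[\Phi]-\Lambda[\Psi]\right\|_{C([0,T];H^2_0(I))}\leq CTL(R)\left\|\Phi-\Psi\right\|_{C([0,T];H^2_0(I))}$. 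Choosing $R=2C\big(\left\|\Phi_0\right\|_{H^2_0(I)}+T\,\mathcal B\big)$ and then $T_f\leq T$ small enough that $CT_f\,\mathcal P(R,\cdots)\leq R/2$ and $CT_fL(R)<1$ makes $\Lambda$ a contraction on the closed ball of radius $R$ in $C([0,T_f];H^2_0(I))$; its unique fixed point is the desired solution, and uniqueness over $[0,T_f]$ follows from a Gronwall argument on the difference of two solutions. A standard continuation argument — re-solving from $\Phi(T_f)\in H^2_0(I)$, with local existence time depending only on $\left\|\Phi(T_f)\right\|_{H^2_0(I)}$, $\left\|\zeta\right\|_{L^\infty(I)}$, $\left\|\zeta'\right\|_{L^2(I)}$ and $q_0$ — then produces a maximal $T_{\max}\in(0,\infty]$ with the usual blow-up alternative.

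I expect the main obstacle to be the low regularity of the background, which enters both through the linear forcing $\zeta_{xx}$ and through the nonlinearity $N(\Phi)$, whose argument is $\Phi+\zeta$. The first is dealt with by the two-derivative smoothing of the time-integrated group described above. The second is the genuinely delicate point: naively $\partial_x^2 N(\Phi)$ contains $\partial_x^2\zeta$, so the $H^2$ estimate for $N(\Phi)$ must be organized — via the product rule, the $H^2(I)$ algebra inequality, and the identity $\left|\zeta_\pm\right|^{2p}=q_0^{2p}$ — so that the would-be second-order contributions of $\zeta$ are either absent or absorbed into lower-order terms bounded by $\left\|\zeta\right\|_{L^\infty(I)}$ and $\left\|\zeta'\right\|_{L^2(I)}$; if necessary, one first runs the whole argument for a smooth approximation $\zeta_n\to\zeta$ in $X^1(I)$ with $\zeta_n(\pm L)=\zeta_\pm$, $\zeta_n'(\pm L)=0$, establishes bounds uniform in $n$, and passes to the limit (a Galerkin truncation in the Dirichlet eigenbasis is an equally viable substitute for the contraction step). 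Once this is in place, all remaining steps are routine.
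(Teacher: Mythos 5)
Your fixed-point scheme does not close in $C([0,T];H^2_0(I))$ under the stated hypothesis \eqref{zhi2b}, which gives only $\zeta'\in L^2(I)$, and the two places where you claim to gain the missing regularity both fail. First, the forcing term: diagonalizing in the Dirichlet eigenbasis $\{e_n\}$ with eigenvalues $\lambda_n$, one has $\widehat{\mathcal Z}(n,t)=\lambda_n^{-1}\big(1-e^{-i\nu\lambda_n t}\big)\langle \zeta'',e_n\rangle$ with $\langle\zeta'',e_n\rangle=-\langle\zeta',e_n'\rangle$, so $\no{\mathcal Z(t)}_{H^2(I)}^2\approx\sum_n\big|1-e^{-i\nu\lambda_n t}\big|^2\,\big|\langle\zeta',e_n'\rangle\big|^2$; since $\big|1-e^{-i\nu\lambda_n t}\big|$ has no decay in $n$, this is finite (uniformly in $t$) essentially only if $\zeta''\in L^2(I)$. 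Equivalently, your identity gives $\mathcal Z(t)=(\partial_x^2)^{-1}\big(S(t)-I\big)\zeta''$, and since $\zeta''\in H^{-1}(I)$ this lands in $H^1_0(I)$ — exactly the one-derivative gain exploited at the $H^1$ level in the proof of Theorem \ref{nzbc-lwp-t} — not the two-derivative gain you need at the $H^2$ level. The same lack of regularity defeats the nonlinear estimate: $\p_x^2 N(\Phi)$ contains, e.g., $\big(|\Phi+\zeta|^{2p}-q_0^{2p}\big)\zeta''$ and the chain-rule terms from $\p_x^2\big(|\Phi+\zeta|^{2p}\big)$, whose coefficients do not vanish in the interior, so these second-order contributions of $\zeta$ cannot be ``absorbed into lower-order terms''; and the fallback of smoothing $\zeta_n\to\zeta$ in $X^1(I)$ begs the question, because passing to the limit requires uniform $H^2$ bounds, which is precisely what is missing. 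Finally, the invariance of your ball is unjustified even structurally: $H^2_0(I)$ requires $f'(\pm L)=0$ as well as $f(\pm L)=0$, the Dirichlet group preserves $H^2(I)\cap H^1_0(I)$ but not the Neumann trace condition (already $S(t)\Phi_0\notin H^2_0(I)$ in general for $t>0$), and the vanishing of $N(\Phi)$ at $x=\pm L$ controls only the Dirichlet trace; so the claim $\Lambda[\Phi](t)\in H^2_0(I)$ does not follow.

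For comparison, the paper does not argue by contraction at all: it obtains Lemma \ref{locex1} by adapting the Galerkin-approximation existence proof (Theorem 3.1) and the regularity proof (Theorem 4.1, via Fa\'a di Bruno formulas and Gagliardo--Nirenberg inequalities) of \cite{Defoc} to the focusing sign, which is harmless for the local statements since the sign of the nonlinearity plays no role there. A viable repair of your route would be to run the fixed point at the $H^1_0(I)$ level, mirroring Theorem \ref{nzbc-lwp-t} verbatim on the interval, and then upgrade to $H^2$ by a separate bootstrap in the space $H^2(I)\cap H^1_0(I)$ under an explicit additional regularity assumption on the background (e.g. $\zeta''\in L^2(I)$, as is effectively used in the cited regularity theorem), rather than attempting the contraction directly in $C([0,T];H^2_0(I))$ with $\zeta$ merely in $X^1(I)$.
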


The proof of Lemma \ref{locex1} follows by adapting the proofs of the existence result given in Theorem~3.1 and the regularity result given in Theorem 4.1 of \cite{Defoc} to the focusing regime considered here. We remark that, in the defocusing regime with power nonlinearity considered in  \cite{Defoc}, global existence can be proved for the problem in higher dimensional setups for general bounded or unbounded domains $\Omega\subseteq \mathbb{R}^N$, $N\geq 1$.  In \cite{Defoc}, Galerkin approximations
are combined with an approximative domain expansion scheme for the original domain $\Omega$. This is achieved by introducing suitable extension/restriction operators
and cutoff functions; for further details and illustrative examples, we also refer the reader to~\cite{gbsjmaa}. The existence of global in time solutions in the defocusing regime is proved via suitable versions of Trudinger/Gagliardo–Nirenberg inequalities establishing existence for arbitrary time intervals through non-uniform in time estimates and continuation for all $t\in\mathbb{R}$, while regularity of solutions is proved  by estimates derived by a combination of multivariate Fa\'{a} di Bruno formulas and Gagliardo–Nirenberg type inequalities. On the other hand, in the focusing regime considered here, global existence is established at a later stage, unconditionally for $1\leq p < 2$ and with appropriate smallness conditions on the data and the size of $I$ when $p=2$ (see Remark \ref{def-r} and Theorem \ref{gex} below).

The next result provides a conservation law involving the $L^2$ norm of the solution of \eqref{pNLS}.
\begin{proposition}\label{conL2}
Suppose that the hypothesis of Lemma \ref{locex1} holds true and consider the functional
\begin{equation}\label{modL21}
\mathcal{P}[\Phi(t)] := \frac{1}{2}\left\|\Phi(t)\right\|^2_{L^2(I)}+\textnormal{Re}\int_{I}\Phi(x, t)\overline{\zeta(x)}dx.	
\end{equation}
Then, for every $t\in [0, T_{\max})$, we have the conservation law
\begin{equation}\label{modL22}
\mathcal{P}[\Phi(t)]=\mathcal{P}[\Phi_0].		
\end{equation}
\end{proposition}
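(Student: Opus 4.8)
The plan is to establish the conservation law by differentiating $t\mapsto\mathcal{P}[\Phi(t)]$ and showing the derivative vanishes on $[0,T_{\max})$. Since $\zeta$ is independent of $t$, one formally has
\[
\frac{d}{dt}\mathcal{P}[\Phi(t)]
=\textnormal{Re}\int_I \Phi_t(x,t)\,\overline{\Phi(x,t)}\,dx+\textnormal{Re}\int_I \Phi_t(x,t)\,\overline{\zeta(x)}\,dx
=\textnormal{Re}\int_I \Phi_t\,\overline{(\Phi+\zeta)}\,dx,
\]
so it suffices to prove that $\textnormal{Re}\int_I \Phi_t\,\overline{(\Phi+\zeta)}\,dx=0$ for every $t\in[0,T_{\max})$.

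To that end, I would use equation \eqref{Phi-p} to write $i\Phi_t=-\nu(\Phi+\zeta)_{xx}-\gamma\big(|\Phi+\zeta|^{2p}-q_0^{2p}\big)(\Phi+\zeta)$ and pair it with $\overline{(\Phi+\zeta)}$, obtaining
\[
i\int_I \Phi_t\,\overline{(\Phi+\zeta)}\,dx
=-\nu\int_I (\Phi+\zeta)_{xx}\,\overline{(\Phi+\zeta)}\,dx-\gamma\int_I \big(|\Phi+\zeta|^{2p}-q_0^{2p}\big)\,|\Phi+\zeta|^2\,dx.
\]
The last integral is manifestly real. For the first integral on the right, integration by parts gives
\[
\int_I (\Phi+\zeta)_{xx}\,\overline{(\Phi+\zeta)}\,dx
=\Big[(\Phi+\zeta)_x\,\overline{(\Phi+\zeta)}\Big]_{x=-L}^{x=L}-\int_I \big|(\Phi+\zeta)_x\big|^2\,dx .
\]
Since $\Phi(t)\in H^2_0(I)$ by Lemma \ref{locex1}, we have $\Phi_x(\pm L,t)=0$, while $\zeta'(\pm L)=0$ by \eqref{zhi2b}; hence $(\Phi+\zeta)_x(\pm L,t)=0$, the boundary term vanishes, and the remaining integral is real. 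Consequently the entire right-hand side is real, i.e. $i\int_I \Phi_t\,\overline{(\Phi+\zeta)}\,dx\in\mathbb{R}$, which forces $\textnormal{Re}\int_I \Phi_t\,\overline{(\Phi+\zeta)}\,dx=0$. Therefore $\frac{d}{dt}\mathcal{P}[\Phi(t)]=0$ on $[0,T_{\max})$, and integrating in $t$ yields $\mathcal{P}[\Phi(t)]=\mathcal{P}[\Phi_0]$.

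The one point that needs care — and the main obstacle — is the rigorous justification of differentiating under the integral sign and of the integration by parts; this requires the solution produced by Lemma \ref{locex1} to satisfy $\Phi\in C([0,T_{\max});H^2_0(I))$ with $\Phi_t$ well defined (for instance in $L^2(I)$, directly through $\Phi_t=i\nu(\Phi+\zeta)_{xx}+i\gamma\big(|\Phi+\zeta|^{2p}-q_0^{2p}\big)(\Phi+\zeta)$). This regularity is exactly what is furnished by the construction underlying Lemma \ref{locex1} (adapted from \cite{Defoc}); alternatively, one may carry out the identity for smooth approximating solutions and pass to the limit using continuous dependence on the data. Once this is in place, every step above is routine, the only genuinely structural inputs being the boundary relations $\Phi_x(\pm L)=0$ and $\zeta'(\pm L)=0$, which eliminate the boundary contribution, and the algebraic observation that pairing both the dispersive and the nonlinear terms with $\overline{(\Phi+\zeta)}$ yields real quantities.
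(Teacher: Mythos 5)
Your proposal is correct and follows essentially the same route as the paper: multiply equation \eqref{Phi-p} by $\overline{\Phi+\zeta}$, take the imaginary part (equivalently, the real part of $i\int_I\Phi_t\overline{(\Phi+\zeta)}\,dx$), integrate over $I$, and kill the boundary term using $\Phi(t)\in H^2_0(I)$ together with $\zeta'(\pm L)=0$ from \eqref{zhi2b}. Your explicit remarks on the reality of the nonlinear term and on the regularity needed to justify differentiating $\mathcal{P}$ in time are consistent with (and slightly more detailed than) the paper's argument.
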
 

\begin{proof} 
Multiplying equation \eqref{Phi-p} by $\overline{\Phi+\zeta}$ and taking the imaginary part of the resulting expression, we have
$$
-\text{Re}\left(\Phi_t \overline \Phi\right) -  \text{Re}\left(\Phi_t \overline \zeta \right)  + \nu \text{Im}\left[(\Phi+\zeta)_{xx}(\overline{\Phi+\zeta})\right]=0.	
$$
Then, integrating over $I$ and employing integration by parts, we obtain
$$
-\frac{1}{2}\frac{d}{dt}\left\|\Phi(t)\right\|^2_{L^2(I)}
-\frac{d}{dt}\text{Re}\int_{I}\Phi(t)\overline{\zeta}dx
+ \nu \text{Im}\left[\left(\Phi+\zeta\right)_x \left(\overline{\Phi+\zeta}\right)\right]_{-L}^{L}
- \nu \text{Im} \int_I \left|\left(\Phi+\zeta\right)_x\right|^2 dx = 0.
$$
Thus, in view of the conditions \eqref{zhi2b}  and the fact that $\Phi \in H_0^2(I)$, we arrive at the desired result. 
\end{proof}

Note that, under the assumption $\zeta \in X^1(\mathbb R)$, the conservation law \eqref{modL22} is also valid on $I=\mathbb R$. 
In the case of the bounded interval $I=(-L,L)$,   \eqref{modL22} provides uniform in time $L^2$ estimates. In particular, we have
\begin{proposition}\label{globalL^2}
Suppose that the hypothesis of Lemma \ref{locex1} holds true and let $\Phi_0 \in H^2_0(I)$. Then, the unique local solution $\Phi (t) \in H^2_0(I)$ of equation \eqref{Phi-p} with the initial condition $\Phi(x, 0) = \Phi_0(x)$ and the zero boundary conditions \eqref{Dvbcn}  is uniformly bounded in $L^2(I)$ for all $t \in [0, T_{\max})$, satisfying the estimate
\begin{equation}\label{modL24A}
\sup_{t\in [0, T_{\max})} \left\|\Phi(t)\right\|^2_{L^2(I)} 
\leq 
\left\|\Phi_0\right\|^2_{L^2(I)} + L\left\|\zeta\right\|_{L^{\infty}(I)}.
\end{equation}
\end{proposition}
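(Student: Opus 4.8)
The plan is to read off the estimate \eqref{modL24A} essentially as a corollary of the conservation law \eqref{modL22} established in Proposition \ref{conL2}. That law is legitimately available here because the local solution produced by Lemma \ref{locex1} lies in $H^2_0(I)$, so the integrations by parts and boundary evaluations behind \eqref{modL22} are justified for the solution at hand. Writing $\mathcal{P}[\Phi(t)]=\mathcal{P}[\Phi_0]$ explicitly and isolating the squared norm, one gets, for every $t\in[0,T_{\max})$,
\[
\tfrac12\,\|\Phi(t)\|_{L^2(I)}^2
=\tfrac12\,\|\Phi_0\|_{L^2(I)}^2
+\mathrm{Re}\!\int_I \Phi_0\,\overline{\zeta}\,dx
-\mathrm{Re}\!\int_I \Phi(t)\,\overline{\zeta}\,dx .
\]

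The only work is then to control the two cross terms. First I would bound them by the Cauchy--Schwarz inequality, $\big|\mathrm{Re}\!\int_I \Phi(t)\,\overline{\zeta}\,dx\big|\le \|\Phi(t)\|_{L^2(I)}\,\|\zeta\|_{L^2(I)}$ and likewise for the $\Phi_0$ term. Here the decisive point --- and the structural reason the argument succeeds on $I=(-L,L)$ but not on $\mathbb{R}$ --- is that $\zeta\in X^1(I)\subset L^\infty(I)$ on a set of finite measure $2L$ automatically gives $\zeta\in L^2(I)$ with the explicit bound $\|\zeta\|_{L^2(I)}\le\sqrt{2L}\,\|\zeta\|_{L^\infty(I)}$, whereas on the line $|\zeta|\to q_0>0$ at infinity and $\zeta\notin L^2(\mathbb{R})$, so the conserved functional $\mathcal{P}$ ceases to control $\|\Phi\|_{L^2}$. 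I would then apply Young's inequality to the term $\|\Phi(t)\|_{L^2(I)}\,\|\zeta\|_{L^2(I)}$ with a weight chosen so that the resulting multiple of $\|\Phi(t)\|_{L^2(I)}^2$ is strictly less than $\tfrac12$ and can be absorbed into the left-hand side, and bound the $\Phi_0$ cross term crudely by $\tfrac12\|\Phi_0\|_{L^2(I)}^2+\tfrac12\|\zeta\|_{L^2(I)}^2$.

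Collecting these estimates, absorbing the $\Phi(t)$ term, and using $\|\zeta\|_{L^2(I)}^2\le 2L\,\|\zeta\|_{L^\infty(I)}^2$ leaves an inequality of exactly the form \eqref{modL24A} (with explicit numerical constants in front of $\|\Phi_0\|_{L^2(I)}^2$ and $L\,\|\zeta\|_{L^\infty(I)}$), and taking the supremum over $t\in[0,T_{\max})$ finishes the proof. There is no genuine analytical obstacle: the heavy lifting is the conservation identity of Proposition \ref{conL2}, and what remains is elementary. The only points deserving care are the bookkeeping of the Young/Cauchy--Schwarz constants so that the $\|\Phi(t)\|_{L^2(I)}^2$ term is truly absorbed, and flagging the conceptual role of the finiteness of $L$ --- it is precisely this $L^2$ a priori bound, unavailable on $\mathbb{R}$, that feeds the global existence statements of the next subsection (Theorem \ref{gex}).
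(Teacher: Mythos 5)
Your proposal is correct and follows essentially the same route as the paper: invoke the conservation law \eqref{modL22}, estimate the two cross terms by Cauchy--Schwarz using the finiteness of $|I|=2L$ (so that $\|\zeta\|_{L^2(I)}\le\sqrt{2L}\,\|\zeta\|_{L^\infty(I)}$), and absorb the $\|\Phi(t)\|_{L^2(I)}^2$ contribution via Young's inequality. Your caveat about ending up with explicit numerical constants rather than the literal coefficients in \eqref{modL24A} is fair, since the paper's own computation likewise produces such constants before being stated in the simplified form of the proposition.
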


\begin{proof} 
By the conservation law \eqref{modL22}, for each $t\in [0, T_{\max})$ we have
\begin{equation}\label{modL25}
\frac{1}{2}\left\|\Phi(t)\right\|^2_{L^2(I)} 
+ \text{Re}\int_{I}\Phi(x, t)\overline{\zeta(x)}dx 
= 
\frac{1}{2}\left\|\Phi_0\right\|^2_{L^2(I)}+\text{Re}\int_{I}\Phi_0(x) \overline{\zeta(x)}dx.
\end{equation}
By the Cauchy-Schwarz inequality and the fact that $2ab \leq a^2 + b^2$ for any $a, b \in \mathbb R$, the second term of the left-hand side of \eqref{modL25} admits the estimate
\begin{equation}\label{modL26}
\begin{aligned}
\left|\int_{I}\Phi(x, t) \overline{\zeta(x)}dx\right|
&\leq \left\|\zeta\right\|_{L^{\infty}(I)}\int_{I} \left|\Phi(x, t)\right|dx
\\
&\leq \sqrt{2L}\left\|\zeta\right\|_{L^{\infty}(I)}\left\|\Phi(t)\right\|_{L^2(I)}
\leq \frac{1}{4}\left\|\Phi(t)\right\|^2_{L^2(I)}+2L\left\|\zeta\right\|^2_{L^{\infty}(I)}.
\end{aligned}
\end{equation}
Similarly, 
\begin{equation}
	\label{modL27}
	\left|\int_{I}\Phi_0(x) \overline{\zeta(x)}dx\right|
	\leq \frac{1}{4}\left\|\Phi_0\right\|^2_{L^2(I)}+2L\left\|\zeta\right\|^2_{L^{\infty}(I)}.
\end{equation}
Combining \eqref{modL26} and \eqref{modL27} with \eqref{modL25} and the triangle inequality yields the uniform in time estimate~\eqref{modL24A}. 
\end{proof}

Importantly, the proof of Proposition \eqref{globalL^2} given above is not valid when $I=\mathbb R$. In the case of the bounded domain $I=(-L,L)$, where that proposition is valid, it can be used to infer global existence of solutions at the level of $H^1_0(I)$. In order to prove this result, we consider the following energy functional $\mathcal{E}: H_0^1(I) \to \mathbb R$, along the lines of Section 2.2 in \cite{Defoc} but this time adapting the definition to the focusing regime:
\begin{equation}\label{Eneg1}
\mathcal{E}\left[\Phi(t); q_0,p,\zeta\right] 
:= 
\frac{1}{2} \left\|(\Phi(t)+\zeta)_x\right\|^2_{L^2(I)} - \mathcal G\left[\Phi(t);q_0,p,\zeta\right],
\end{equation}
where the functional $\mathcal G: H_0^1(I)\rightarrow\mathbb R$ is defined by
\begin{equation}\label{defG}
\begin{aligned}
\mathcal G\left[\Phi(t); q_0,p,\zeta\right] &:= \int_{I}V(\left|\Phi(x, t) +\zeta(x)\right|;q_0,p) \, dx,
\\
V(f; q_0, p) &= \frac{1}{2p + 2} f^{2p+2} - \frac{1}{2}q_0^{2p}f^2
+ \frac{p}{2p+2} q_0^{2p+2}.
\end{aligned}
\end{equation}
For the local in time $H^2_0(I)$ solutions of Lemma \ref{locex1}, the energy functional $\mathcal{E}$ is conserved:
\begin{proposition}\label{conen1}
Suppose that the hypothesis of Lemma \ref{locex1} holds true and let $\Phi_0 \in H^2_0(I)$. Then, the unique local solution $\Phi(t) \in H^2_0(I)$ of  equation \eqref{Phi-p} supplemented with the initial condition $\Phi(x, 0) = \Phi_0(x)$ and the boundary conditions \eqref{Dvbcn} conserves the energy functional $\mathcal{E}$ defined by~\eqref{Eneg1}, i.e.
\begin{equation}\label{conen2}
\mathcal{E}\left[\Phi(t); q_0,p,\zeta\right] 
= 
\mathcal{E}\left[\Phi_0; q_0,p,\zeta\right],
\quad
t\in [0, T_{\max}). 
\end{equation}	
\end{proposition}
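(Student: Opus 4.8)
The plan is to repeat the argument used for Proposition~\ref{conL2}, but to pair equation~\eqref{Phi-p} with the multiplier $\overline{\Phi_t}=\overline{(\Phi+\zeta)_t}$ (recall that $\zeta$ does not depend on $t$) instead of $\overline{\Phi+\zeta}$, and to retain the \emph{real} part of the resulting identity rather than the imaginary one. Multiplying \eqref{Phi-p} by $\overline{(\Phi+\zeta)_t}$, integrating over $I$, and taking real parts, the term $i\Phi_t\,\overline{(\Phi+\zeta)_t}=i\,|(\Phi+\zeta)_t|^2$ contributes nothing, while an integration by parts in $x$ gives
\[
\nu\,\mathrm{Re}\!\int_I (\Phi+\zeta)_{xx}\,\overline{(\Phi+\zeta)_t}\,dx
=
-\frac{\nu}{2}\,\frac{d}{dt}\left\|(\Phi+\zeta)_x\right\|_{L^2(I)}^2
+\nu\,\mathrm{Re}\big[(\Phi+\zeta)_x\,\overline{(\Phi+\zeta)_t}\big]_{-L}^{L},
\]
and the boundary term vanishes for exactly the same reason as in Proposition~\ref{conL2}: $\Phi\in H_0^2(I)$ forces $\Phi(\pm L,t)=\Phi_x(\pm L,t)=0$, hence also $\Phi_t(\pm L,t)=0$, and $\zeta'(\pm L)=0$ by \eqref{zhi2b}.

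For the nonlinear term, the point is that $V(\cdot;q_0,p)$ in \eqref{defG} is, by construction, an antiderivative of $f\mapsto f^{2p+1}-q_0^{2p}f$ normalized so that $V(q_0;q_0,p)=0$. Using $\mathrm{Re}\!\big[(\Phi+\zeta)\,\overline{(\Phi+\zeta)_t}\big]=|\Phi+\zeta|\,\p_t|\Phi+\zeta|$, this yields the pointwise identity
\[
\mathrm{Re}\Big[\big(|\Phi+\zeta|^{2p}-q_0^{2p}\big)(\Phi+\zeta)\,\overline{(\Phi+\zeta)_t}\Big]
=\big(|\Phi+\zeta|^{2p}-q_0^{2p}\big)\,|\Phi+\zeta|\,\p_t|\Phi+\zeta|
=\p_t\,V\big(|\Phi+\zeta|;q_0,p\big),
\]
so that, after integrating over $I$, the nonlinear contribution equals $\gamma\,\frac{d}{dt}\,\mathcal G[\Phi(t);q_0,p,\zeta]$. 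Collecting the three pieces, the whole identity collapses to $\frac{d}{dt}\big(\frac{\nu}{2}\|(\Phi+\zeta)_x\|_{L^2(I)}^2-\gamma\,\mathcal G[\Phi(t);q_0,p,\zeta]\big)=0$, which — up to the harmless multiplicative constants $\nu$ and $\gamma$ carried from \eqref{Phi-p} — is precisely $\frac{d}{dt}\,\mathcal E[\Phi(t);q_0,p,\zeta]=0$; integrating in $t$ gives \eqref{conen2}.

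The genuine difficulty is not the algebra but the regularity bookkeeping: for the local $H_0^2(I)$-solutions of Lemma~\ref{locex1} one only knows $\Phi_t\in C\big([0,T_{\max});L^2(I)\big)$ (read off from $i\Phi_t=-\nu(\Phi+\zeta)_{xx}-\gamma(|\Phi+\zeta|^{2p}-q_0^{2p})(\Phi+\zeta)$), so neither the integration by parts trading $(\Phi+\zeta)_{xx}$ against $(\Phi+\zeta)_{xt}$ nor the differentiation under the integral sign is literally licit at that level. The way around this is the standard two-step scheme: first prove \eqref{conen2} for initial data in a dense subclass of $H_0^2(I)$ (with $\zeta$ correspondingly smooth) for which the solution of Lemma~\ref{locex1} enjoys the extra regularity $\Phi\in C^1\big([0,T];H_0^1(I)\big)$, where every manipulation above is classical; then pass to the limit using the continuous dependence of the solution map on the data in $C\big([0,T];H_0^1(I)\big)$ together with the continuity of $\mathcal E$ on $H_0^1(I)$ — note that $\mathcal G$ is continuous there because $H^1(I)\hookrightarrow L^\infty(I)$, $V$ is a polynomial, and $I$ is bounded. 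Equivalently, one may test the finite-dimensional Galerkin system from the proof of Lemma~\ref{locex1} (in the spirit of \cite{Defoc}) against the time derivative of the Galerkin approximant, obtain the conservation identity exactly at the discrete level for all $t$, and then let the discretization parameter tend to infinity with the same compactness/continuity input; this route also makes transparent why, as for Propositions~\ref{conL2} and \ref{globalL^2}, boundedness of $I$ is used in an essential way.
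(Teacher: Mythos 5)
Your proof is correct and takes essentially the same route as the paper: pair the equation with the conjugate time derivative, keep real parts, and use $\Phi\in H^2_0(I)$ together with $\zeta'(\pm L)=0$ to eliminate the boundary term (the paper's printed multiplier $(\overline{\Phi+\zeta})_x$ is evidently a slip for the time derivative, which is what the energy identity requires). The extra density/Galerkin bookkeeping you add is a standard way of making rigorous what the paper compresses into ``thanks to the regularity of the local solution,'' and your remark about the factors $\nu,\gamma$ matches the paper's own implicit normalization of the conserved functional.
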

\begin{proof}
The proof is similar to the one of Proposition \ref{conL2}. In particular, thanks to the regularity of the local solution, \eqref{conen2} can be derived by multiplying \eqref{Phi-p} by $(\overline{\Phi+\zeta})_x$ and integrating over $I$, this time keeping the real parts of the resulting expression.  
\end{proof}

\begin{remark}\label{def-r}
In the defocusing regime, the negative sign in front of $\mathcal G$ in \eqref{Eneg1} changes to a positive one and hence, since $V: [0, \infty) \to [0, \infty)$ (this can be shown via calculus techniques), thanks to the conservation law \eqref{conen2} the first term in \eqref{Eneg1} can be controlled by the value of $\mathcal E$ at $t=0$. Then, by continuation in time, $T_{\max}$ can be extended to infinity for all initial data in $H_0^2(I)$ with the improvement in comparison to \cite{Defoc}, that the estimates are uniform in time. On the other hand, due to the negative sign in \eqref{Eneg1}, this argument cannot be employed in the focusing case  considered here. Instead, we also need the uniform bounds of  Proposition \ref{globalL^2} at the $L^2$ level.
\end{remark}

We remark that, as shown in \cite{Defoc}, Proposition \eqref{conen1} is also valid when $I=\mathbb R$ and $\zeta$ is in the general Zhidkov space $X^m(\mathbb R) := 
\left\{ \zeta\in L^\infty(\mathbb R): \p^j \zeta \in L^2(\mathbb R), \ j=1, \ldots, m\right\}$ for some $m\geq 1$. 
For simplicity, we shall hereafter denote the functionals $\mathcal{E}\left[\Phi(t); q_0,p,\zeta\right]$ and $\mathcal G\left[\Phi(t); q_0,p,\zeta\right]$ by $\mathcal{E}\left[\Phi\right]$ and $\mathcal G\left[\Phi\right]$ respectively. 
The next result can be proved via estimates that are very similar to those involved in the proof of Theorem \ref{nzbc-lwp-t} (see also Proposition 2.3 in \cite{Defoc}).
\begin{lemma}\label{Gest}
Let $\varphi,\psi \in H_0^1(I)$ and $p\in\mathbb{N}$.  Then, there exist a constant  $C_1\big(\left\|\zeta\right\|_{L^{\infty}(I)}\big)=\mathcal{O}\big(\left\|\zeta\right\|_{L^{\infty}(I)}^{2p+2}\big)$ such that the functional $\mathcal G$ given by \eqref{defG} satisfies the inequality 
\begin{equation}\label{gineq2}
\big| \mathcal G\left[\varphi\right] - \mathcal G\left[\psi\right] \big|
\leq 
C_1\int_{I} 
\left(\left|\varphi\right|^{2p+1} + \left|\psi\right|^{2p+1} + \big|\left|\zeta\right| - q_0\big| \right) \left|\varphi - \psi \right| dx.
\end{equation}			
\end{lemma}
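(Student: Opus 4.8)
The plan is to estimate $\mathcal G[\varphi]-\mathcal G[\psi]=\int_I\big[V(|\varphi+\zeta|;q_0,p)-V(|\psi+\zeta|;q_0,p)\big]\,dx$ directly, exploiting the explicit polynomial form of $V$. First I would observe that, writing $a=|\varphi+\zeta|$ and $b=|\psi+\zeta|$,
\begin{equation*}
V(a;q_0,p)-V(b;q_0,p)=\frac{1}{2p+2}\big(a^{2p+2}-b^{2p+2}\big)-\frac{q_0^{2p}}{2}\big(a^2-b^2\big),
\end{equation*}
so that the constant term $\tfrac{p}{2p+2}q_0^{2p+2}$ cancels and only two power differences remain. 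For each, I would apply the elementary inequality $|s^m-t^m|\le m\,(s^{m-1}+t^{m-1})\,|s-t|$ for $s,t\ge0$ (the real analogue of the Mean Value Theorem bound already used for \eqref{F-prop} in the Introduction), obtaining
\begin{equation*}
\big|V(a)-V(b)\big|\le \big(a^{2p+1}+b^{2p+1}\big)|a-b|+q_0^{2p}\,(a+b)\,|a-b|.
\end{equation*}

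Next I would control $|a-b|$ pointwise. Since $a=|\varphi+\zeta|$ and $b=|\psi+\zeta|$, the reverse triangle inequality gives $|a-b|\le|\varphi-\psi|$. For the prefactors I would use $a\le|\varphi|+|\zeta|$, $b\le|\psi|+|\zeta|$, and for the term carrying the factor $q_0^{2p}$ I would split $|\zeta|=(|\zeta|-q_0)+q_0$ so that each occurrence of $|\zeta|$ is traded either for $\big||\zeta|-q_0\big|$ or for $q_0$; combined with $a+b\le|\varphi|+|\psi|+2|\zeta|$, a short expansion of $q_0^{2p}(a+b)$ produces terms of the shape $q_0^{2p}(|\varphi|+|\psi|)$, $q_0^{2p}\big||\zeta|-q_0\big|$, and $q_0^{2p+1}$. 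All powers of $\zeta$ and $q_0$ appearing as coefficients are then bounded by $\|\zeta\|_{L^\infty(I)}$ (recall $q_0=|\zeta_\pm|\le\|\zeta\|_{L^\infty(I)}$), so they are absorbed into a single constant $C_1=\mathcal O\big(\|\zeta\|_{L^\infty(I)}^{2p+2}\big)$, while the genuinely $\varphi,\psi$-dependent factors contribute exactly $|\varphi|^{2p+1}+|\psi|^{2p+1}+\big||\zeta|-q_0\big|$ as the bracket multiplying $|\varphi-\psi|$. (One may freely add the lower-order terms $|\varphi|^k$, $|\psi|^k$ with $k\le 2p+1$ into the two top-order powers $|\varphi|^{2p+1}$, $|\psi|^{2p+1}$ at the cost of enlarging $C_1$, since on the support of the integrand they are dominated by a constant times the top power plus a constant — or, more cleanly, by Young's inequality.) Integrating over $I$ then yields \eqref{gineq2}.

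The only mildly delicate point — and the step I expect to require the most care rather than real difficulty — is the bookkeeping that guarantees the constant depends on $\zeta$ \emph{only} through $\|\zeta\|_{L^\infty(I)}$ to the stated power $2p+2$, and that no uncontrolled cross term survives; in particular one must make sure that the term $q_0^{2p}\,|\zeta|\,|\varphi-\psi|$ is handled via the $q_0$-versus-$(|\zeta|-q_0)$ splitting and not left as is, since $\int_I|\zeta|\,|\varphi-\psi|\,dx$ is not of the required form. This is entirely analogous to the manipulations \eqref{diff-ineq}–\eqref{F-diff-nzbc} in the proof of Lemma \ref{G-L2-l}, so I would simply mirror that argument. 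No compactness, no Sobolev embedding, and no fixed-point machinery are needed here; the lemma is purely an algebraic pointwise estimate integrated over $I$.
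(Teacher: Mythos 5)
Your decomposition of $V(a)-V(b)$ into the two monomial differences $\tfrac{1}{2p+2}(a^{2p+2}-b^{2p+2})$ and $\tfrac{q_0^{2p}}{2}(a^2-b^2)$ is exactly the step that breaks the argument. The only reason a bracket built from $|\varphi|^{2p+1}$, $|\psi|^{2p+1}$ and $\big||\zeta|-q_0\big|$ can appear at all is the cancellation $V'(q_0)=0$: writing $W(g):=V(\sqrt{g})$ one has $W'(g)=\tfrac12\big(g^p-q_0^{2p}\big)$, i.e.\ precisely the combination $F(|\cdot+\zeta|^2)-F(q_0^2)$ that the paper keeps intact in \eqref{diff-ineq}--\eqref{F-diff-nzbc} and \eqref{g-diff-F}; indeed the paper gives no separate proof of Lemma \ref{Gest} but refers to those estimates from Theorem \ref{nzbc-lwp-t} and to Proposition 2.3 of \cite{Defoc}. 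Estimating the two monomials separately destroys this cancellation: after your substitutions $a\le|\varphi|+|\zeta|$, $b\le|\psi|+|\zeta|$ and $|\zeta|=(|\zeta|-q_0)+q_0$, your pointwise bound contains terms such as $|\zeta|^{2p+1}|\varphi-\psi|$ and, as you yourself record, $q_0^{2p+1}|\varphi-\psi|$, i.e.\ terms whose coefficient is a pure constant. These cannot be ``absorbed into $C_1$'': there is nothing on the right-hand side of \eqref{gineq2} for a constant to multiply, since $\mathrm{const}\cdot|\varphi-\psi|$ is not dominated by $C_1\big(|\varphi|^{2p+1}+|\psi|^{2p+1}+\big||\zeta|-q_0\big|\big)|\varphi-\psi|$; take $\zeta\equiv q_0$ (so $\big||\zeta|-q_0\big|\equiv0$), $\psi=0$ and $\varphi=\epsilon\chi$ a small bump and compare powers of $\epsilon$. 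For the same reason your parenthetical claim that lower-order powers $|\varphi|^k$, $1\le k\le 2p$, can be ``freely added'' into the top power via Young's inequality is false: Young leaves behind exactly such a constant remainder. So the plan fails at the very bookkeeping step you flagged as routine, and it is not ``entirely analogous'' to Lemma \ref{G-L2-l}, because there the factorization is applied to $F(|\Phi+\zeta|^2)-F(q_0^2)$ as a whole, not to its two pieces.

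The repair is to mirror the paper's route literally: write $\mathcal G[\varphi]-\mathcal G[\psi]=\int_I\int_0^1 W'\big(|\theta_s+\zeta|^2\big)\,\partial_s|\theta_s+\zeta|^2\,ds\,dx$ with $\theta_s=\psi+s(\varphi-\psi)$, use $|\partial_s|\theta_s+\zeta|^2|\le 2(|\theta_s|+|\zeta|)|\varphi-\psi|$, and bound $\big||\theta_s+\zeta|^{2p}-q_0^{2p}\big|$ via \eqref{F-diff-nzbc}, which converts it into $(|\varphi|+|\psi|+|\zeta|+q_0)^{2p-1}\big(|\varphi|+|\psi|+\big||\zeta|-q_0\big|\big)$. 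Note, however, that even this argument produces the bracket $(|\varphi|+|\psi|+\|\zeta\|_{L^\infty(I)}+q_0)^{2p}\big(|\varphi|+|\psi|+\big||\zeta|-q_0\big|\big)$, whose expansion contains the lower-order terms $\|\zeta\|_{L^\infty(I)}^{2p}(|\varphi|+|\psi|)$, and the small-amplitude test above (where the left-hand side is of order $\epsilon^2$ while the right-hand side of \eqref{gineq2} is of order $\epsilon^{2p+2}$) shows these terms cannot be removed: the inequality should be read with $|\varphi|+|\psi|$ included in the bracket. This weaker form is all that the application in Theorem \ref{gex} requires, since the resulting extra terms $\int_I(|\Phi|+|\Phi_0|)|\Phi-\Phi_0|\,dx$ are controlled by the conserved $L^2$ bound of Proposition \ref{globalL^2}. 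In short: the pure-constant terms are a fatal artifact of your splitting and must be avoided by keeping $g^p-q_0^{2p}$ together, while the linear terms in $|\varphi|,|\psi|$ are intrinsic and should simply be kept in the statement.
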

We are now ready to proceed to the proof of global existence of solutions at the level of  the $H^1_0(I)$ norm. 
\begin{theorem}
\label{gex}
Suppose that the hypothesis of Lemma \ref{locex1} holds true and let $\Phi_0 \in H^2_0(I)$. Then, under appropriate smallness conditions when $p=2$ (see \eqref{condsL} below), the unique local solution $\Phi(t) \in H^2_0(I)$ of  equation \eqref{Phi-p} supplemented with the initial condition $\Phi(x, 0) = \Phi_0(x)$ and the boundary conditions~\eqref{Dvbcn}  exists globally in $H_0^1(I)$. In particular,
\begin{enumerate}[label=\textnormal{(\roman*)}, leftmargin=8mm, topsep=2mm, itemsep=1mm]
\item If $1\leq p<2$, then $\Phi(t)$ is uniformly bounded in $H^1_0(I)$, unconditionally with respect to the size of the $L^2(I)$ norm of the initial data.
\item If $p=2$, then there exists a constant  
$C\big(p, \left\|\zeta\right\|_{L^{\infty}(I)}\big)=\mathcal{O}\big(\left\|\zeta\right\|_{L^{\infty}(I)}^{2p+2}\big)$  such that if
\begin{equation}\label{condsL}
L<\frac{1}{\left\|\zeta\right\|_{L^{\infty}(I)}}  \left(\frac{1}{4C}\right)^{\frac{2}{p+2}},
\quad
\left\|\Phi_0\right\|^2_{L^{2}(I)} <  \left(\frac{1}{4C}\right)^{\frac{2}{p+2}} - L\left\|\zeta\right\|_{L^{\infty}(I)}
\end{equation}
then $\Phi(t)$ is uniformly bounded in $H^1_0(I)$. 
\end{enumerate}
\end{theorem}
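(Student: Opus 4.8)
The plan is to exploit the conserved energy functional $\mathcal{E}[\Phi]$ from Proposition \ref{conen1} together with the uniform $L^2$ bound of Proposition \ref{globalL^2}, and then show that these two ingredients combine to control $\|(\Phi(t)+\zeta)_x\|_{L^2(I)}^2$, hence $\|\Phi(t)\|_{H^1_0(I)}$, uniformly in $t\in[0,T_{\max})$. Since $\Phi(t)$ remains in $H^2_0(I)$ by Lemma \ref{locex1} and the only way $T_{\max}$ can be finite is blow-up of the $H^1$ norm (an $H^2$-local existence theory with $H^1$-continuation, standard once the $H^1$ bound is in hand), a uniform bound on $\|\Phi(t)\|_{H^1_0(I)}$ forces $T_{\max}=\infty$.

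First I would rearrange the conservation law $\mathcal{E}[\Phi(t)]=\mathcal{E}[\Phi_0]$ to isolate the gradient term:
\begin{equation*}
\frac12\|(\Phi(t)+\zeta)_x\|_{L^2(I)}^2
=\mathcal{E}[\Phi_0]+\mathcal{G}[\Phi(t)]
=\mathcal{E}[\Phi_0]+\mathcal{G}[0]+\big(\mathcal{G}[\Phi(t)]-\mathcal{G}[0]\big).
\end{equation*}
The term $\mathcal{G}[0]=\int_I V(|\zeta|;q_0,p)\,dx$ is a fixed constant depending only on $\zeta,q_0,p,L$. The key is to bound the difference $\mathcal{G}[\Phi(t)]-\mathcal{G}[0]$ using Lemma \ref{Gest} with $\varphi=\Phi(t)$, $\psi=0$, which gives
\begin{equation*}
\big|\mathcal{G}[\Phi(t)]-\mathcal{G}[0]\big|
\leq C_1\int_I\big(|\Phi(t)|^{2p+1}+\big||\zeta|-q_0\big|\big)|\Phi(t)|\,dx
\leq C_1\Big(\|\Phi(t)\|_{L^{2p+2}(I)}^{2p+2}+\big\||\zeta|-q_0\big\|_{L^2(I)}\|\Phi(t)\|_{L^2(I)}\Big).
\end{equation*}
Now apply the one-dimensional Gagliardo--Nirenberg inequality $\|\Phi\|_{L^{2p+2}(I)}^{2p+2}\lesssim \|\Phi\|_{L^2(I)}^{p+2}\|\Phi_x\|_{L^2(I)}^{p}+\|\Phi\|_{L^2(I)}^{2p+2}$ (the lower-order term from the bounded interval). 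Since $\|\Phi(t)\|_{L^2(I)}$ is already uniformly bounded by Proposition \ref{globalL^2}, say $\|\Phi(t)\|_{L^2(I)}^2\leq \mathcal{M}:=\|\Phi_0\|_{L^2(I)}^2+L\|\zeta\|_{L^\infty(I)}$, the difference is bounded by a constant plus a term of the form $C_1\mathcal{M}^{(p+2)/2}\|\Phi_x(t)\|_{L^2(I)}^{p}$; and $\|\Phi_x(t)\|_{L^2(I)}\leq \|(\Phi(t)+\zeta)_x\|_{L^2(I)}+\|\zeta'\|_{L^2(I)}$. Writing $y(t):=\|(\Phi(t)+\zeta)_x\|_{L^2(I)}^2$, this yields a scalar inequality of the schematic form
\begin{equation*}
\tfrac12 y(t)\leq A+B\,\big(y(t)^{1/2}+\|\zeta'\|_{L^2(I)}\big)^{p},
\end{equation*}
with $A$ depending on $\mathcal{E}[\Phi_0],\mathcal{G}[0],\mathcal{M}$ and $B=\mathcal{O}(C_1\mathcal{M}^{(p+2)/2})=\mathcal{O}(\|\zeta\|_{L^\infty(I)}^{2p+2}\mathcal{M}^{(p+2)/2})$.

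The two cases then separate by the growth exponent $p$ in this inequality. If $1\leq p<2$, the right-hand side grows like $y^{p/2}$ with $p/2<1$, so the inequality $\tfrac12 y\leq A+B(y^{1/2}+c)^p$ is self-improving: it forces $y(t)$ to lie below the largest root of the corresponding algebraic equation, which is finite and independent of $t$ — this gives the unconditional uniform $H^1_0(I)$ bound of part (i), with no smallness needed. If $p=2$, the right-hand side grows linearly, $\tfrac12 y\leq A+B(y^{1/2}+c)^2\leq A'+2B\,y$, and one can absorb $2B\,y$ into the left-hand side only when $2B<\tfrac12$, i.e. when $B$ is small enough. Tracing $B=\mathcal{O}(\|\zeta\|_{L^\infty(I)}^{2p+2}\mathcal{M}^{(p+2)/2})$ with $\mathcal{M}=\|\Phi_0\|_{L^2(I)}^2+L\|\zeta\|_{L^\infty(I)}$, the smallness condition $2B<\tfrac12$ translates into an upper bound on $\mathcal{M}^{(p+2)/2}\|\zeta\|_{L^\infty(I)}^{2p+2}$, equivalently into the bounds \eqref{condsL} on $L$ and $\|\Phi_0\|_{L^2(I)}^2$ (the factorization into a bound on $L$ via $\|\zeta\|_{L^\infty(I)}$ and a bound on $\|\Phi_0\|_{L^2(I)}^2$ being exactly the splitting of $\mathcal{M}$). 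Once $2B<\tfrac12$ the gradient term is controlled uniformly in $t$, giving part (ii). Finally, in either case the uniform bound on $\|(\Phi(t)+\zeta)_x\|_{L^2(I)}$ combined with the uniform $L^2$ bound and $\zeta\in X^1(I)$ yields $\sup_{t<T_{\max}}\|\Phi(t)\|_{H^1_0(I)}<\infty$, and the blow-up alternative of the local theory (Lemma \ref{locex1}) forces $T_{\max}=\infty$.

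The main obstacle I anticipate is not any single estimate but the careful bookkeeping of constants needed so that the $p=2$ smallness threshold $2B<\tfrac12$ comes out exactly as the stated conditions \eqref{condsL} — in particular, verifying that the constant $C$ in \eqref{condsL} can indeed be taken as $\mathcal{O}(\|\zeta\|_{L^\infty(I)}^{2p+2})$ and that the two displayed inequalities in \eqref{condsL} together encode $2B<\tfrac12$. A secondary technical point is justifying the Gagliardo--Nirenberg inequality on the bounded interval $I$ with the correct dependence (the lower-order $\|\Phi\|_{L^2}^{2p+2}$ term is harmless because of the $L^2$ bound, but it must be tracked), and confirming that the $H^2$-local solution of Lemma \ref{locex1} genuinely satisfies an $H^1$-continuation criterion so that the a priori $H^1$ bound closes the argument.
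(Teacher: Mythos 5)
Your proposal is correct and follows essentially the same route as the paper: conservation of $\mathcal{E}$ plus the uniform $L^2$ bound of Proposition \ref{globalL^2}, the difference estimate of Lemma \ref{Gest}, the Gagliardo--Nirenberg inequality, and then the dichotomy between sublinear growth ($1\leq p<2$, unconditional bound) and linear growth ($p=2$, absorption under exactly the smallness encoded in \eqref{condsL}, i.e. $4C\big(\|\Phi_0\|_{L^2(I)}^2+L\|\zeta\|_{L^\infty(I)}\big)^{\frac{p+2}{2}}<1$). The only minor deviations are harmless: you apply Lemma \ref{Gest} with $\psi=0$ (using that $\mathcal G[0]$ is finite on the bounded interval) where the paper takes $\psi=\Phi_0$ and estimates the resulting six terms separately, and your closing claim that the $H^1$ bound forces $T_{\max}=\infty$ should be softened, as in the paper, to global existence and uniform boundedness at the $H_0^1(I)$ level only, since the $H^2$ norm is not controlled and no $H^1$ blow-up criterion for the $H^2$ local theory is established.
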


\begin{proof} 
If $T_{\max}$ in Lemma \ref{locex1} is infinite, then we are done. Otherwise, if $T_{\max}$ is finite, then in order for the solution to exist only locally in $H_0^1(I)$ 
it must be that  $\lim_{t\to T_{\max}} \no{\Phi(t)}_{H^1(I)} = \infty$.  However, below we show that $\sup_{t\in [0, T_{\max})} \no{\Phi(t)}_{H^1(I)} < C(\Phi_0, \zeta, q_0, p, L) <\infty$, reaching a contradiction. %
Of course, it could be that $\lim_{t\to T_{\max}} \no{\Phi_{xx}(t)}_{L^2(I)} = \infty$, which is why we do not claim global existence in $H_0^2(I)$, i.e. we do not claim that $T_{\max} = \infty$ in Lemma \ref{locex1} but only at the level of the $H_0^1(I)$ norm.

We begin by noting that the norm $\left\| \Phi(t) \right\|_{L^2(I)}$ is uniformly bounded in $t$ thanks to \eqref{modL24A}. Thus, we only need to consider $\left\| \Phi_x(t) \right\|_{L^2(I)}$. 
Suppressing the dependence on $t$, by the conservation law~\eqref{conen2} 
\begin{equation*}
\frac{1}{2} \left\|\Phi_x + \zeta' \right\|^2_{L^2(I)} = \frac{1}{2}\left\|{\Phi_0}_x + \zeta' \right\|^2_{L^2(I)}+\mathcal G\left[\Phi\right]- \mathcal G\left[\Phi_0\right].	
\end{equation*}	
Then, employing inequality \eqref{gineq2} with $\varphi=\Phi$ and $\psi=\Phi_0$, we obtain
\begin{equation}\label{gineq5}
	\frac{1}{2} \left\|\Phi_x + \zeta' \right\|^2_{L^2(I)}
\leq \frac{1}{2}\left\|{\Phi_0}_x + \zeta'\right\|^2_{L^2(I)}+ C_1\int_{I}\Big(|\Phi|^{2p+1}+|\Phi_0|^{2p+1}+\big| |\zeta|-q_0\big|\Big) \left|\Phi-\Phi_0\right|dx.
\end{equation}	
By the triangle inequality, the second term on the right hand-side of \eqref{gineq5} can be bounded by
\begin{equation}\label{gineq6}
\begin{aligned}
&\quad
\int_{I}|\Phi|^{2p+2}dx + \int_{I}|\Phi|^{2p+1}|\Phi_0|dx + \int_{I}|\Phi_0|^{2p+1}|\Phi|dx
\\
&+ \int_{I}|\Phi_0|^{2p+2}dx+ \int_{I} \big| |\zeta|-q_0\big| |\Phi|dx + \int_{I} \big| |\zeta|-q_0\big| |\Phi_0|dx.
\end{aligned} 
\end{equation}

For the first term in \eqref{gineq6}, we employ the Gagliardo-Nirenberg inequality \eqref{GN} with $2p+2$ in place of $p$, $j=0$, $q=r=2$ and $m=1$ (so that $\theta=\frac{p}{2p+2}$) to infer
\begin{equation}\label{gineq7}
C_1 \int_{I} |\Phi|^{2p+2}dx \leq C_2\left\|\Phi\right\|^{p+2}_{L^2(I)} \left\|\Phi_x\right\|_{L^2(I)}^p
\end{equation}
where the constant $C_2$ depends only on $\zeta$ and $p$.  Note that the constant in the Gagliardo-Nirenberg inequality is independent of the domain $I$, i.e. it is independent of $L$. 
For the second term in \eqref{gineq6}, we first apply H\"{o}lder's inequality with $p'=\frac{2p+2}{2p+1}>1$ and $q'=2p+2$ (so that $\frac{1}{p'}+\frac{1}{q'}=1$) and then Young's product inequality for the same choice of $p'$ and $q'$ to obtain 
$$
\int_{I}|\Phi|^{2p+1}|\Phi_0|dx
\leq \left\|\Phi\right\|_{L^{2p+2}(I)}^{2p+1}\left\|\Phi_0\right\|_{L^{2p+2}(I)}
\leq \frac{2p+1}{2p+2} \left\|\Phi\right\|_{L^{2p+2}(I)}^{2p+2}+ \frac{1}{2p+2} \left\|\Phi_0\right\|_{L^{2p+2}(I)}^{2p+2}
$$
so that via \eqref{gineq7} we find
\begin{equation}\label{gineq8}
C_1 \int_{I}|\Phi|^{2p+1}|\Phi_0|dx
\leq C_3\left\|\Phi\right\|^{p+2}_{L^2(I)}\left\|\Phi_x\right\|_{L^2(I)}^p+C_4\left\|\Phi_0\right\|_{L^{2p+2}(I)}^{2p+2}. 
\end{equation}
We note that $C_3, C_4=\mathcal{O}\big(\left\|\zeta\right\|_{L^{\infty}(I)}^{2p+2}\big)$ similarly to $C_1$. 
For the third term in~\eqref{gineq6}, the Cauchy-Schwarz inequality yields
\begin{equation}\label{gineq9}
\int_{I}|\Phi_0|^{2p+1}|\Phi| dx \leq \left\|\Phi\right\|_{L^2(I)}\left\|\Phi_0\right\|^{2p+1}_{L^{4p+2}(I)}.	
\end{equation}
The fourth term in \eqref{gineq6} is analogous to the first one. For  the fifth term we recall that according to~\eqref{zhi2b} we have $\lim_{x\rightarrow\pm L}(|\zeta|-q_0)=0$ and $|\zeta|-q_0\in L^2(I)$ and so by the Cauchy-Schwarz inequality 
\begin{equation}\label{bdb}
\int_{I} \big| |\zeta|-q_0\big| |\Phi|dx \leq \big\| |\zeta|-q_0\big\|_{L^2(I)}\left\|\Phi\right\|_{L^2(I)},
\end{equation}
with the sixth and final term in \eqref{gineq6}  admitting an analogous estimate.

In view of \eqref{gineq6} and of the estimates \eqref{gineq7}-\eqref{bdb}, inequality \eqref{gineq5} becomes 
\begin{equation}\label{gineq10}
\begin{aligned}
\frac{1}{2}\left\|\Phi_x + \zeta'\right\|^2_{L^2(I)}
&\leq \frac{1}{2}\left\|{\Phi_0}_x + \zeta' \right\|^2_{L^2(I)}
+C_4\left\|\Phi_0\right\|_{L^{2p+2}(I)}^{2p+2}
+C_1\left\|\Phi\right\|_{L^2(I)}\left\|\Phi_0\right\|^{2p+1}_{L^{4p+2}(I)}
\\
&\quad
+ C_2\left\|\Phi_0\right\|^{p+2}_{L^2(I)} \left\|{\Phi_0}_x\right\|_{L^2(I)}^p
+C_5\left\|\Phi\right\|^{p+2}_{L^2(I)}\left\|\Phi_x\right\|_{L^2(I)}^p
\\
&\quad
+C_1 \big\| |\zeta|-q_0\big\|_{L^2(I)}\|\Phi\|_{L^2(I)}+C_1\big\| |\zeta|-q_0\big\|_{L^2(I)}\|\Phi_0\|_{L^2(I)}
\end{aligned}
\end{equation}
where $C_5=\mathcal{O}\big(\left\|\zeta\right\|_{L^{\infty}(I)}^{2p+2}\big)$  similarly to $C_1$.  

At this point, the uniform in time estimate \eqref{modL24A} comes into play (recall that this estimate is not valid on $\mathbb R$, which is why we only claim global existence on the finite interval $I$). Letting 
$$
R^2 = R^2\big(\left\|\Phi_0\right\|_{L^2(I)}, \left\|\zeta\right\|_{L^{\infty}(I)}, L\big)  := \left\|\Phi_0\right\|^2_{L^2(I)} + L\left\|\zeta\right\|_{L^{\infty}(I)},
$$
we combine \eqref{modL24A} with \eqref{gineq10} to deduce
\begin{equation*}
\begin{aligned}
\sup_{t\in [0, T_{\max})} \left\|\Phi_x(t)+\zeta'\right\|^2_{L^2(I)}
&\leq
\left\|{\Phi_0}_x+\zeta'\right\|^2_{L^2(I)}
+2C_4\left\|\Phi_0\right\|_{L^{2p+2}(I)}^{2p+2}
+2C_1 R \left\|\Phi_0\right\|^{2p+1}_{L^{4p+2}(I)}
\\
&\quad
+2C_2\left\|\Phi_0\right\|^{p+2}_{L^2(I)} \left\|{\Phi_0}_x\right\|_{L^2(I)}^p
+2C_5 R^{p+2} \sup_{t\in [0, T_{\max})}  \left\|\Phi_x(t)\right\|_{L^2(I)}^p
\\
&\quad
+2C_1 \big\| |\zeta|-q_0\big\|_{L^2(I)}  \big(R+\left\|\Phi_0\right\|_{L^2(I)}\big).
\end{aligned}
\end{equation*}
Thus, in view of the inequality $\left\|\Phi_x\right\|_{L^2(I)}^2 \leq 2\left\|\Phi_x+\zeta'\right\|_{L^2(I)}^2+2\left\|\zeta'\right\|_{L^2(I)}^2$,  
\begin{equation}\label{gineq11}
\begin{aligned}
\sup_{t\in [0, T_{\max})} \left\|\Phi_x(t)\right\|^2_{L^2(I)}
&\leq
2\left\|{\Phi_0}_x+\zeta'\right\|^2_{L^2(I)}
+
4C_4\left\|\Phi_0\right\|_{L^{2p+2}(I)}^{2p+2}
+
4C_1 R \left\|\Phi_0\right\|^{2p+1}_{L^{4p+2}(I)}
\\
&\quad
+
4C_2\left\|\Phi_0\right\|^{p+2}_{L^2(I)} \left\|{\Phi_0}_x\right\|_{L^2(I)}^p
+
4C_1 \big\| |\zeta|-q_0\big\|_{L^2(I)}  \big(R+\left\|\Phi_0\right\|_{L^2(I)}\big)
\\
&\quad
+2\left\|\zeta'\right\|_{L^2(I)}^2
+
4C_5 R^{p+2} \sup_{t\in [0, T_{\max})}  \left\|\Phi_x(t)\right\|_{L^2(I)}^p.
\end{aligned}
\end{equation}
Note that the various Lebesgue norms on the right-hand side can be controlled by the $L^\infty(I)$ norm (and hence by the $H_0^1(I)$ norm) due to the crucial fact that we are working on the finite domain $I$.

In the subcritical case $1 \leq p < 2$, dividing~\eqref{gineq11} by $\sup_{t\in [0, T_{\max})}  \left\|\Phi_x(t)\right\|_{L^2(I)}^p$ we obtain
\begin{align}\label{gineq11-div}
\sup_{t\in [0, T_{\max})} \left\|\Phi_x(t)\right\|_{L^2(I)}^{2-p} 
&\leq
4C_5 R^{p+2} 
+
\frac{1}{\sup_{t\in [0, T_{\max})}  \left\|\Phi_x(t)\right\|_{L^2(I)}^p}
\Big[2\left\|{\Phi_0}_x+\zeta'\right\|^2_{L^2(I)}
\nn\\
&\quad
+
4C_4\left\|\Phi_0\right\|_{L^{2p+2}(I)}^{2p+2}
+
4C_1 R \left\|\Phi_0\right\|^{2p+1}_{L^{4p+2}(I)}
+
4C_2\left\|\Phi_0\right\|^{p+2}_{L^2(I)} \left\|{\Phi_0}_x\right\|_{L^2(I)}^p
\nn\\
&\quad
+
4C_1 \big\| |\zeta|-q_0\big\|_{L^2(I)}  \big(R+\left\|\Phi_0\right\|_{L^2(I)}\big)
+2\left\|\zeta'\right\|_{L^2(I)}^2
\Big],
\end{align}
which shows that $\sup_{t\in [0, T_{\max})}  \left\|\Phi_x(t)\right\|_{L^2(I)} < \infty$ and hence implies global existence in $H^1_0(I)$. Indeed, if $\sup_{t\in [0, T_{\max})}  \left\|\Phi_x(t)\right\|_{L^2(I)} = \infty$ then, due to the fact that  $2-p>0$ and $p>0$, \eqref{gineq11-div} leads to the contradiction $\infty \leq 4C_5 R^{p+2}$. Thus, part (i) of the theorem has been proved.

In the critical case $p=2$, \eqref{gineq11} can be rearranged to yield the uniform  bound 
\begin{equation}\label{gineq11-crit}
\begin{aligned}
&\quad
\left(1-4C_5 R^{4}\right) \sup_{t\in [0, T_{\max})} \left\|\Phi_x(t)\right\|^2_{L^2(I)}
\\
&\leq
2\left\|{\Phi_0}_x+\zeta'\right\|^2_{L^2(I)}
+
4C_4\left\|\Phi_0\right\|_{L^{6}(I)}^{6}
+
4C_1 R \left\|\Phi_0\right\|^{5}_{L^{10}(I)}
+
4C_2\left\|\Phi_0\right\|^{4}_{L^2(I)} \left\|{\Phi_0}_x\right\|_{L^2(I)}^2
\\
&\quad
+
4C_1 \big\| |\zeta|-q_0\big\|_{L^2(I)}  \big(R+\left\|\Phi_0\right\|_{L^2(I)}\big)
+2\left\|\zeta'\right\|_{L^2(I)}^2
\end{aligned}
\end{equation}
and so we once again deduce $\sup_{t\in [0, T_{\max})}  \left\|\Phi_x(t)\right\|_{L^2(I)} < \infty$, but this time provided that $1 - 4C_5 R^{p+2} > 0$, which is precisely the second condition in \eqref{condsL} (note that this condition is realizable thanks to the first condition in \eqref{condsL}). This implies global well-posedness in $H_0^1(I)$ and hence completes the proof of part (ii) of the theorem with $C=C_5$ (which depends only on $p$ and $\left\|\zeta\right\|_{L^{\infty}(I)}$).
\end{proof}

We remark that if $p>2$ then neither of the two arguments used in the proof of Theorem \ref{gex} works.
Theorem \ref{gex} has the following implication concerning the proximity of solutions to the integrable and non-integrable NLS equations in the case of a finite domain.
\begin{theorem}[Theorem \ref{Theorem:stability} for the finite interval]
\label{gex2}
Given $0<\ve<1$, suppose that the hypothesis of Theorem \ref{nzbc-t} with the relevant norms over $\mathbb{R}$ replaced with their counterparts over  the finite interval $I = (-L, L)$ holds true. If $L=\mathcal{O}\left(\frac 1\ve\right)$, then the closeness estimates of Theorem \ref{nzbc-t} are valid on $I$ for arbitrary $0<T_f<\infty$.
\end{theorem}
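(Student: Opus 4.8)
The plan is to remove the lifespan restriction $T_c$ of Theorem~\ref{nzbc-t} by exploiting the global $H_0^1(I)$ existence available on a bounded interval. Two ingredients are needed: the finite‑interval analogues of the closeness estimates \eqref{D-nzbc-L^2} and \eqref{nvH1}, and global existence of both solutions in $C([0,\infty);H_0^1(I))$. For the first, after the changes of variables \eqref{zhi4}, \eqref{zhi6} the difference $\Delta=\phi-\Phi$ solves \eqref{eq:Deltanv} on $I$ with homogeneous Dirichlet data at $x=\pm L$ (both $\phi$ and $\Phi$ vanish there), so testing against $\bar\Delta$ and against $\bar\Delta_t$ --- or, equivalently, expanding in the Dirichlet sine basis of $I$, which is the natural analogue on $I$ of the Fourier argument on $\mathbb R$ --- makes all boundary terms vanish and reproduces \eqref{D-nzbc-L^2} and \eqref{nvH1} with every $\mathbb R$-norm replaced by its $I$-counterpart; Lemmas~\ref{G-L2-l} and \ref{G-H1-l}, being pointwise estimates followed by integration, carry over unchanged, as does the Sobolev embedding $H^1(I)\hookrightarrow L^\infty(I)$. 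These estimates hold on any $[0,T]$ on which $\phi,\Phi\in C([0,T];H_0^1(I))$.

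For global existence, equation \eqref{cNLS} is precisely \eqref{Phi-p} with $p=1<2$, hence Theorem~\ref{gex}(i) gives $\phi\in C([0,\infty);H_0^1(I))$ with a uniform‑in‑time $H_0^1(I)$ bound, and the same holds for \eqref{pNLS} when $F$ is a power with $1\le p<2$ (the saturable case being handled analogously, as remarked in Section~\ref{finite-s}). In the critical power case $p=2$ one invokes Theorem~\ref{gex}(ii): its hypothesis \eqref{condsL} requires $L<\|\zeta\|_{L^\infty(I)}^{-1}(4C)^{-2/(p+2)}$ and $\|\Phi_0\|^2_{L^2(I)}<(4C)^{-2/(p+2)}-L\|\zeta\|_{L^\infty(I)}$; since the transferred background bound \eqref{bgs} gives $\|\zeta\|_{L^\infty(I)}\le B_0\varepsilon$ while \eqref{d1nv} gives $\|\Phi_0\|_{L^2(I)}\le C_0\varepsilon$, choosing the implied constant in $L=\mathcal O(1/\varepsilon)$ below $\tfrac12 B_0^{-1}(4C)^{-2/(p+2)}$ makes both inequalities hold for all small $\varepsilon$ --- this is exactly where the scaling $L=\mathcal O(1/\varepsilon)$ enters. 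Moreover the conservation law \eqref{modL22}--\eqref{modL24A}, the energy identity \eqref{conen2}, and the argument of Theorem~\ref{gex} furnish uniform‑in‑time control of $\|\phi(t)\|_{H^1(I)}$ and $\|\Phi(t)\|_{H^1(I)}$ in terms of the data, $\|\zeta\|_{L^\infty(I)}$, $q_0$ and $L$, which under \eqref{d1nv}, \eqref{bgs} and $L=\mathcal O(1/\varepsilon)$ remain bounded in $t$ and small as $\varepsilon\to0$.

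With both solutions global in $H_0^1(I)$, fix $0<T_f<\infty$ and set $T^\ast:=\sup\{T\in[0,T_f]:\ \|\phi(t)\|_{H^1(I)}\le 2c_0\varepsilon \text{ and } \|\Phi(t)\|_{H^1(I)}\le 2C_0\varepsilon \text{ for all } t\in[0,T]\}$, which is positive by \eqref{d1nv} and continuity. On $[0,T^\ast]$ the finite‑interval form of \eqref{nvH1}, the bound $\|\Delta(0)\|_{H^1(I)}\le C_1\varepsilon^3$ from \eqref{dH1nv}, the background bounds \eqref{bgs} (in particular $\||\zeta|-q_0\|_{L^2(I)}\le B_2\varepsilon$), and the smallness of $\|\phi\|_{H^1(I)},\|\Phi\|_{H^1(I)}$ combine --- after absorbing $t\le T_f$ into the constant and using $\varepsilon^{2p+1}\le\varepsilon^3$ for $\varepsilon<1$, $p\ge1$, exactly as in the proof of Theorem~\ref{nzbc-t} --- to give $\sup_{[0,T^\ast]}\|\Delta(t)\|_{H^1(I)}\le\widetilde C_1\varepsilon^3$. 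Then $\|\Phi(t)\|_{H^1(I)}\le\|\phi(t)\|_{H^1(I)}+\|\Delta(t)\|_{H^1(I)}\le c_0\varepsilon+\widetilde C_1\varepsilon^3<2C_0\varepsilon$ for $\varepsilon$ small, so the defining inequalities for $T^\ast$ hold strictly at $t=T^\ast$; since $\phi,\Phi\in C([0,\infty);H_0^1(I))$ there is no finite‑time breakdown, whence $T^\ast=T_f$. This yields \eqref{bound2nv} on all of $[0,T_f]$ and, by Sobolev embedding on $I$, \eqref{boundBnv}; the $L^2$ statement \eqref{bound1nv} follows identically from the finite‑interval form of \eqref{D-nzbc-L^2} under the weaker hypothesis \eqref{d0nv}. (For a power nonlinearity with $p>2$ the input Theorem~\ref{gex} is unavailable, so only the local version of Theorem~\ref{nzbc-t} on $I$, valid up to $T_{\max}$, is obtained.)

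The main obstacle is the competition between the two ingredients: the global bounds of Theorem~\ref{gex} deteriorate as $L\to\infty$ --- through the factor $L\|\zeta\|_{L^\infty(I)}$ in \eqref{modL24A} and the threshold \eqref{condsL} --- whereas the closeness estimates demand that the solution norms and the $L$-dependent background quantities $\|\zeta\|_{L^\infty(I)}$, $\|\zeta'\|_{L^2(I)}$, $\||\zeta|-q_0\|_{L^2(I)}$ stay small in $\varepsilon$. The assertion $L=\mathcal O(1/\varepsilon)$ identifies the window in which both requirements are compatible, and making this quantitative --- fixing the implied constant in $L=\mathcal O(1/\varepsilon)$ so that \eqref{condsL} holds, the bootstrap of the previous paragraph closes, and the $L$-dependent terms entering the finite‑interval versions of \eqref{D-nzbc-L^2}--\eqref{nvH1} do not degrade the order of the estimate --- is the step requiring genuine care. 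Everything else is a transcription of the arguments already developed on $\mathbb R$ in Theorem~\ref{nzbc-t} to the interval $I$.
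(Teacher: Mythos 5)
Your proposal follows the paper's own route in its essential ingredients: the scaling $L=\mathcal{O}(1/\ve)$ together with $\no{\zeta}_{L^\infty(I)}=\mathcal{O}(\ve)$ and $\no{\Phi_0}_{L^2(I)}=\mathcal{O}(\ve)$ is used to verify the smallness condition \eqref{condsL} and invoke the global existence of Theorem \ref{gex}; the uniform-in-time $H^1_0(I)$ control extracted from \eqref{gineq11-div}--\eqref{gineq11-crit} keeps the solutions of order $\mathcal{O}(\ve)$; and the closeness estimates are then rerun exactly as in the proof of Theorem \ref{nzbc-t}, with $T_c$ replaced by an arbitrary finite $T_f$. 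Your explicit tracking of the implied constant in $L=\mathcal{O}(1/\ve)$ needed for \eqref{condsL} is a welcome refinement of the paper's brief statement. The one genuinely different element is the continuity/bootstrap argument of your third paragraph, and it is both unnecessary and not closed as written: having defined $T^\ast$ through the hypotheses $\no{\phi(t)}_{H^1(I)}\le 2c_0\ve$ and $\no{\Phi(t)}_{H^1(I)}\le 2C_0\ve$, you improve only the bound on $\Phi$, and to do so you use $\no{\phi(t)}_{H^1(I)}\le c_0\ve$, which is strictly stronger than the bootstrap hypothesis and is never re-derived inside the argument; nothing there rules out $\no{\phi(T^\ast)}_{H^1(I)}=2c_0\ve$, so the continuation step does not close for $\phi$. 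The cure is exactly what the paper does (and what your second paragraph already invokes): apply Theorem \ref{gex} with $p=1$ to the modified cubic equation \eqref{cNLS} as well, so that both $\phi$ and $\Phi$ carry uniform-in-time $H^1_0(I)$ bounds of the required size, after which the finite-interval form of \eqref{D-nzbc-L^2} and \eqref{nvH1} applies directly on all of $[0,T_f]$ and the bootstrap can simply be discarded. (Both you and the paper pass over the facts that Theorem \ref{gex} is formulated for $H_0^2(I)$ data and for the power nonlinearity, with the saturable case only asserted to be analogous; this is a shared, minor omission rather than a defect of your argument.)
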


\begin{proof}
Since $\left\|\zeta\right\|_{L^{\infty}(I)}=\mathcal{O}(\varepsilon)$ and $L=\mathcal{O}\left(\frac 1\ve\right)$ by hypothesis, the second condition in \eqref{condsL} is satisfied for sufficiently small initial data such that $\left\|\Phi_0\right\|_{L^{2}(I)}=\mathcal{O}(\varepsilon)$ and thus we can invoke the global existence result of Theorem  \ref{gex}.
In particular, we may assume that $\left\|\Phi_0\right\|_{H^{1}_0(I)}=\mathcal{O}(\varepsilon)$ so that  the form of the right-hand side of  estimates \eqref{gineq11-div} and \eqref{gineq11-crit} guarantees (at least) $\left\|\Phi (t)\right\|_{H^{1}_0(I)}=\mathcal{O}(\varepsilon)$. 
Then, assuming also the condition \eqref{dH1nv} on the $H^1(I)$ distance between the initial data of the integrable and non-integrable models, we can derive the closeness estimates exactly as in the proof of Theorem \ref{nzbc-t}. Now, however, thanks to the global existence result of Theorem \ref{gex}, we can replace $T_c$ by any arbitrary finite time $0<T_f<\infty$.
\end{proof}

\vspace*{2mm}
\noindent
\textbf{The periodic problem.}
We turn our attention to the periodic Cauchy problem, namely to the case where the integrable and non-integrable NLS equations \eqref{NLS} and  \eqref{noninNLS}  are supplemented
with the periodic boundary conditions 
\begin{equation}
u(-L, t)=u(L, t), \quad U(-L, t)=U(L, t), \quad t\geq 0.\label{eq:pbcfinite}
\end{equation}
Importantly, the counterpart of Theorem \ref{zbc-wp-t} for the periodic conditions \eqref{eq:pbcfinite}, which guarantees global solutions at the level of the $H^1$ norm  with the same estimate as \eqref{boundg},  is proved in Theorem~2.1 of \cite{lrs1988}. 
With this global existence result in hand, the analogue of Theorem \ref{Theorem:stability} on the proximity of solutions to the integrable and non-integrable NLS equations in the case of the periodic boundary conditions \eqref{eq:pbcfinite} can be stated as follows.

\begin{theorem}[Theorem \ref{Theorem:stability} for the periodic Cauchy problem]
\label{THpbcfinite}
For $p\geq 1$ and $\nu, \mu, \gamma>0$, consider the integrable and non-integrable focusing NLS equations \eqref{NLS} and \eqref{noninNLS} with $x\in I = (-L, L)$, the initial conditions $u(x, 0) = u_0(x)$, $U(x, 0) = U_0(x)$, and the periodic boundary conditions \eqref{eq:pbcfinite}.  
\begin{enumerate}[label=\textnormal{(\roman*)}, leftmargin=7.5mm]
\item \underline{$L^2$ closeness}: Let $0<\varepsilon<1$ and suppose that the initial data satisfy
\begin{gather}
\label{d0finite}
\left\| u_0-U_0 \right\|_{L^2(I)} \le C \varepsilon^3,
\\
\label{d1finite}
\left\| u_0 \right\|_{H^1(I)} \le c_0 \, \varepsilon,
\ 
\left\| U_0 \right\|_{H^1(I)}\leq  C_0 \, \varepsilon,
\end{gather}
for some constants $c_0, C_0, C>0$.	Then, for arbitrary finite $0<T_f<\infty$, there exists a constant $\widetilde C=C(\mu, \gamma, c_0, C_0, C, T_f)$ such that  	
\begin{equation}\label{eq:bound1finite}
\sup_{t\in [0, T_f]} \left\| u(t)-U(t) \right\|_{L^2(I)} \le \widetilde C \varepsilon^3.
\end{equation}
\item \underline{$H^1$ and $L^{\infty}$ closeness}: Suppose that the initial data satisfy \eqref{d1finite} and the stronger condition (instead of \eqref{d0finite})
\begin{equation}\label{dH1finite}
\left\| u_0-U_0 \right\|_{H^1(I)} \le C_1 \varepsilon^3
\end{equation}
for some constant $C_1$. Then, there exists a constant $\widetilde{C}_1$ with similar dependencies as $\widetilde C$ in \eqref{bound1} such that
\begin{equation}\label{boundBfinite-h1}
\sup_{t\in [0, T_f]}  \left\| u(t)-U(t) \right\|_{H^1(I)}\le \widetilde{C}_1 \varepsilon^3.
\end{equation}
and, in turn, 
\begin{equation}\label{boundBfinite}
\sup_{t\in [0, T_f]}  \left\| u(t)-U(t) \right\|_{L^{\infty}(I)}\le \widetilde C_2 \varepsilon^3.
\end{equation}
\end{enumerate}
\end{theorem}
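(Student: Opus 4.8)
The plan is to transcribe the Fourier-side argument of Theorem~\ref{TH1} to the periodicity cell $I=(-L,L)$, replacing the Fourier transform on $\mathbb R$ by Fourier series and the global bound \eqref{boundg} by its periodic analogue. First, since $I$ is a bounded interval, the one-dimensional Sobolev embedding $H^1(I)\subset L^\infty(I)$ holds with a constant depending only on the fixed length $L$; hence the $L^\infty$ estimate \eqref{boundBfinite} will follow immediately from the $H^1$ estimate \eqref{boundBfinite-h1}, and it remains to establish \eqref{eq:bound1finite} and \eqref{boundBfinite-h1}. As in \eqref{eqd1}, the difference $\Delta:=u-U$ of the periodic solutions satisfies $i\Delta_t+\nu\Delta_{xx}=N$ with $N=-\mu|u|^2u+\gamma F(|U|^2)U$ and periodic boundary conditions, so expanding in the orthonormal system $\{(2L)^{-1/2}e^{i\pi n x/L}\}_{n\in\mathbb Z}$ and integrating in $t$ gives the Duhamel representation
\[
\widehat\Delta_n(t)=e^{-i\nu(\pi n/L)^2 t}\,\widehat\Delta_n(0)-i\int_0^t e^{-i\nu(\pi n/L)^2(t-\tau)}\,\widehat N_n(\tau)\,d\tau ,
\]
where the multipliers $e^{-i\nu(\pi n/L)^2 t}$ are unitary on $\ell^2(\mathbb Z)$.

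For the $L^2$ bound I would argue as in \eqref{D-planch}: Parseval's identity together with Minkowski's integral inequality yields $\|\Delta(t)\|_{L^2(I)}\le\|\Delta(0)\|_{L^2(I)}+\int_0^t\|N(\tau)\|_{L^2(I)}\,d\tau$. Using $|a+b|^2\le 2|a|^2+2|b|^2$, the first of the conditions \eqref{F-prop}, and the embeddings $H^1(I)\subset L^6(I)$, $H^1(I)\subset L^{2(2p+1)}(I)$ (available for all exponents on a bounded interval), one controls $\|N(\tau)\|_{L^2(I)}$ by a constant multiple of $\|u(\tau)\|_{H^1(I)}^3+\|U(\tau)\|_{H^1(I)}^{2p+1}$, exactly as in \eqref{F-est-0}--\eqref{F-est}. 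Invoking the periodic global-existence bound of \cite{lrs1988} (the analogue of \eqref{boundg}) for $U$, the classical global $H^1$ bound for the cubic equation \eqref{NLS} for $u$, and the hypotheses \eqref{d1finite}, \eqref{d0finite}, one arrives at \eqref{eq:bound1finite} with $\widetilde C=C+A\,(c_0^3+C_0^{2p+1})\,T_f$, where $A$ depends only on $\mu,\gamma,K$ and the fixed $L$.

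For the $H^1$ bound I would proceed as in \eqref{D-sup}--\eqref{D-sup-2}: multiplying the Duhamel formula by $(1+(\pi n/L)^2)^{1/2}$ and applying Minkowski and Parseval gives $\|\Delta(t)\|_{H^1(I)}\le\sqrt2\,\|\Delta(0)\|_{H^1(I)}+\sqrt2\int_0^t\|N(\tau)\|_{H^1(I)}\,d\tau$. The $L^2$ norm of $N$ is already controlled; for $\partial_x N$ one differentiates, uses the first and third conditions in \eqref{F-prop} together with $H^1(I)\subset L^\infty(I)$, and obtains $\|\partial_x N(\tau)\|_{L^2(I)}\le A'\big(\|u(\tau)\|_{H^1(I)}^3+\|U(\tau)\|_{H^1(I)}^{2p+1}\big)$ as in \eqref{F'-est}. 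Combining these, using the two global $H^1$ bounds and the hypotheses \eqref{d1finite}, \eqref{dH1finite}, and finally the elementary inequality $\varepsilon^{2p+1}\le\varepsilon^3$ (valid since $p\ge1$ and $0<\varepsilon<1$), yields \eqref{boundBfinite-h1} with a constant $\widetilde C_1$ of the asserted form, and then \eqref{boundBfinite} follows by Sobolev embedding, giving $\widetilde C_2$.

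The only genuinely non-mechanical ingredient is the global $H^1$ well-posedness of the non-integrable equation \eqref{noninNLS} under periodic boundary conditions: this is precisely Theorem~2.1 of \cite{lrs1988}, which supplies a uniform-in-time $H^1(I)$ bound of the same shape as \eqref{boundg}, with smallness hypotheses coinciding with those of Theorem~\ref{zbc-wp-t} because the relevant mass and energy conservation laws on the torus are the standard ones. Given that input, the remainder is a verbatim periodic transcription of the proof of Theorem~\ref{TH1}; in fact the bounded-domain setting is easier, since every embedding $H^1(I)\hookrightarrow L^q(I)$, $q\in[2,\infty]$, is available without decay assumptions, the only price being that the embedding constants, and hence $\widetilde C,\widetilde C_1,\widetilde C_2$, depend on the fixed $L$. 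I expect the bookkeeping of these $L$-dependent constants to be the main (mild) obstacle, together with verifying that the cited periodic global bound for \eqref{noninNLS} is stated in the form \eqref{boundg} that the argument requires.
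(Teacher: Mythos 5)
Your proposal follows essentially the same route as the paper's proof: a Fourier-series Duhamel representation on $I=(-L,L)$, Parseval plus Minkowski's integral inequality and unitarity of the multipliers $e^{-i\nu(\pi n/L)^2 t}$, control of $N$ and $\partial_x N$ in $L^2(I)$ via \eqref{F-prop} and the embeddings $H^1(I)\subset L^q(I)$, the periodic global $H^1$ bound of Theorem 2.1 in \cite{lrs1988} in place of \eqref{boundg}, and finally Sobolev embedding for the $L^\infty$ estimate. The argument and the resulting constants match the paper's, so the proposal is correct and essentially identical in approach.
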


\begin{proof} 
The argument is entirely analogous to the one used for the proof of Theorem \ref{TH1} in the non-periodic case. In particular, expressing periodic functions $\varphi \in L^2(I)$ in the form of the Fourier series 
$\varphi(x) = \displaystyle \sum_{\xi\in\mathbb Z} e^{i\frac{\pi}{L} \xi x} \, \widetilde{\varphi}(\xi)$
where
$\widetilde{\varphi}(\xi, t)= \displaystyle \frac{1}{2L}\int_{-L}^L e^{-i\frac{\pi}{L} \xi x} \, \varphi(x) dx$, 
we find that the difference of solutions $\Delta = u-U$ to the integrable and non-integrable NLS equations satisfies
\begin{equation}\label{fs-delta}
\widetilde{\Delta}(\xi,t)=  e^{- i \nu \left(\frac{\pi}{L} \xi\right)^2 t} \widetilde{\Delta}(\xi,0)-i\int_0^t  e^{- i \nu \left(\frac{\pi}{L} \xi\right)^2(t-\tau)} \widetilde{N}(\xi,\tau)d\tau,
\end{equation}
where $\widetilde{N}(\xi, t)$ denotes the spatial Fourier series of the nonlinearity function $N(x, t)$ defined by~\eqref{eqd1}.
Thus, since $\left\| \Delta(t) \right\|_{L^2(I)} = \sqrt{2L} \, \big\| \widetilde{\Delta}(t) \big\|_{\ell^2(\mathbb Z)}$ by Parseval's theorem, using Minkowski's integral inequality and  the fact that $e^{i\nu\left(\frac{\pi}{L} \xi\right)^2 t}$ is unitary, we have
\begin{equation}
\left\|\Delta(t)\right\|_{L^2(I)} 
\leq
\left\|\Delta(0)\right\|_{L^2(I)} + t \sup_{\tau \in [0, t]} \left\|N(\tau) \right\|_{L^2(I)}.
\end{equation}
Then, similarly to the steps that led to estimate \eqref{F-est-0}, we employ \eqref{F-prop} and the analogue of the embedding \eqref{eqd3} on $I$ for $q=6$ and also for $q=2(p+1)$ to obtain
\begin{align*}
\left\|N(t)\right\|_{L^2(I)}^2
&\leq
2\mu^2 \left\| u(t) \right\|_{L^6(I)}^6 + 2\gamma^2 K^2 \left\| U(t) \right\|_{L^{2(2p+1)}(I)}^{2(2p+1)}
\nn\\
&\leq
2\mu^2 \left\| u(t) \right\|_{H^1(I)}^6 + 2 \gamma^2 K^2 \left\| U(t) \right\|_{H^1(I)}^{2(2p+1)}
\end{align*}
from which we infer
\begin{equation*}
\left\|\Delta(t)\right\|_{L^2(I)}
\leq
\left\|\Delta(0)\right\|_{L^2(I)} 
+ 
A \sup_{\tau \in [0, t]} \left(\left\| u(\tau) \right\|_{H^1(I)}^3 + \left\| U(\tau) \right\|_{H^1(I)}^{2p+1} \right) t
\end{equation*}
with $A = \sqrt 2\max\{|\mu|, |\gamma| K\}$ as before. 
Therefore, for  initial data $u_0$, $U_0$ satisfying \eqref{d0finite} and~\eqref{d1finite}, using the periodic analogue of Theorem \ref{zbc-wp-t} and  the solution size estimate \eqref{boundg} (see Theorem 2.1 on page 662 of \cite{lrs1988}), we deduce the desired $L^2$ bound \eqref{eq:bound1finite}, namely
\begin{equation*}
\left\|\Delta(t)\right\|_{L^2(I)}
\leq
C \varepsilon^3
+ 
A \big(M^3 c_0^3 \, \varepsilon^3 + M^{2p+1} C_0^{2p+1} \, \varepsilon^{2p+1} \big) t
\leq
\widetilde C \varepsilon^3, \quad t\in [0, T_f],
\end{equation*}
with $\widetilde C = \max\big\{C,  A M^3 c_0^3 T_f, A M^{2p+1} C_0^{2p+1} T_f\big\}$. 

In order to prove \eqref{boundBfinite-h1}, we combine the definition of the $H^1(I)$ norm with \eqref{fs-delta} to infer
\begin{equation*}
\left\|\Delta(t)\right\|_{H^1(I)}^2
\leq 
2\sum_{\xi \in \mathbb Z} \left(1+\xi^2\right) \big|\widetilde{\Delta}(\xi, 0) \big|^2
+
2\sum_{\xi \in \mathbb Z} \left(1+\xi^2\right) \left( \int_0^t \big|\widetilde{N}(\xi, \tau) \big| d\tau  \right)^2.
\end{equation*}
Hence, by Minkowski's integral inequality, 
\begin{equation*}
\left\|\Delta(t)\right\|_{H^1(I)}^2
\leq 
2\left\|\Delta(0)\right\|_{H^1(I)} + 2t^2 \sup_{\tau \in [0, t]} \left\|N(\tau) \right\|_{H^1(I)}
\end{equation*}
and proceeding as for \eqref{F-est-0} and \eqref{F'-est} we eventually obtain the analogue of \eqref{bound2}.
Finally, the $L^\infty$ estimate \eqref{boundBfinite} follows from the $H^1$ estimate \eqref{boundBfinite-h1} via the Sobolev embedding theorem.
\end{proof}

\vspace*{2mm}
\noindent
\textbf{Numerical studies for the case of nonzero boundary conditions.}
We now present the results of numerical studies in the case of nonzero boundary conditions in order to illustrate (i) our theoretical results of Section \ref{nzbc-s} on the nonzero boundary conditions on the whole line and (ii) the preceding theoretical results of the current section concerning the approximations of the problem on the infinite line by finite intervals. The numerical studies 
are motivated by the numerical investigations of \cite{bm2017} and \cite{blmt2018}. 
We consider the simplest case where $\zeta(x)=q_0>0$. We will treat two examples of initial conditions decaying on the nonzero background, namely exponentially and algebraically decaying, the latter stemming from the Peregrine soliton. 

\textit{I. Exponentially decaying initial data on the top of the nonzero background $q_0$}. 
We study the dynamics of the NLS equations emerging from the initial condition of the form 
\begin{equation}
\label{NVinc}
    u(x,0)=q_0(1+i\,\mathrm{sech} x).
\end{equation}
In all the numerical results we set $\mu=\gamma=1$ for the nonlinearity parameters of the integrable and non-integrable NLS equations, and $\nu=1$, for the linear dispersion parameter.  For the amplitude of the background we set $q_0=0.25$ in order to comply with the smallness conditions on the initial data of Theorem \ref{nzbc-t}.
Based on the analysis presented in Section \ref{nzbc-s} and Theorem \ref{nzbc-t}, we may expect closeness between the solutions of the integrable and the non-integrable NLS for short times. Thus, we start with the presentation of the numerical results for a short time interval $t\in [0, 10]$.  Figure \ref{fig5} depicts contour plots of the spatiotemporal evolution of the density of the NLS equations. The left panel depicts the dynamics for the integrable NLS, the central panel for the non-integrable NLS with the subcritical power nonlinearity~\eqref{NLSP} with $p=3/2$ (quartic case), and the right panel for the non-integrable NLS~\eqref{Sat2} with saturable nonlinearity. The patterns look almost indistinguishable and we expect this to be verified by the evolution of norms of the distance function. This is indeed the case, as it is shown in Figure \ref{fig6}. Due to estimate \eqref{bound1nv} of Theorem \ref{nzbc-t}, we examine the evolution of the norm 
\begin{equation}\label{newnorm}	
\|\widetilde{\Delta}(t)\|_{\mathcal{X}}:=\no{e^{-i\mu q_0^2 t} u(t) - e^{-i\gamma F(q_0^2)t} U(t)}_{\mathcal{X}}.
\end{equation}
For all norms we observe an excellent agreement with the predictions of Theorem \ref{nzbc-t} regarding their linear growth. On the other hand, we find again the moderate increase of $\|\widetilde{\Delta}(t)\|_{L^{\infty}}$ and the larger growth of $\|\widetilde{\Delta}(t)\|_{L^{2}}$ and $\|\widetilde{\Delta}(t)\|_{H^1}$. Again the saturable model exhibits dynamics which are closer to the integrable one than the quartic NLS, while both non-integrable systems are closer to the integrable in the sense of the pointwise topology.  The larger, although still mediocre deviation in the $L^2$ and $H^1$ norms, is a first indication  that in the non-integrable dynamics there may be smaller, finer structures or oscillations present in the solutions that are not captured by the large-scale pattern. These smaller scales could be associated with faster oscillations or sharper gradients in the solutions of the non-integrable NLS. The existence of such scales can become more eminent  over longer time intervals.

\begin{figure}[ht!]
	\begin{center}
		   \includegraphics[width=.33\textwidth]{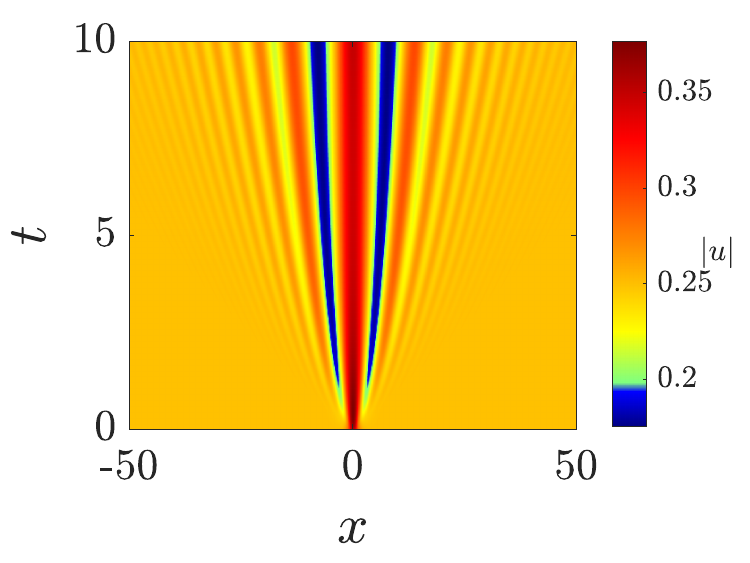}
			\hspace{-0.1cm}\includegraphics[width=.33\textwidth]{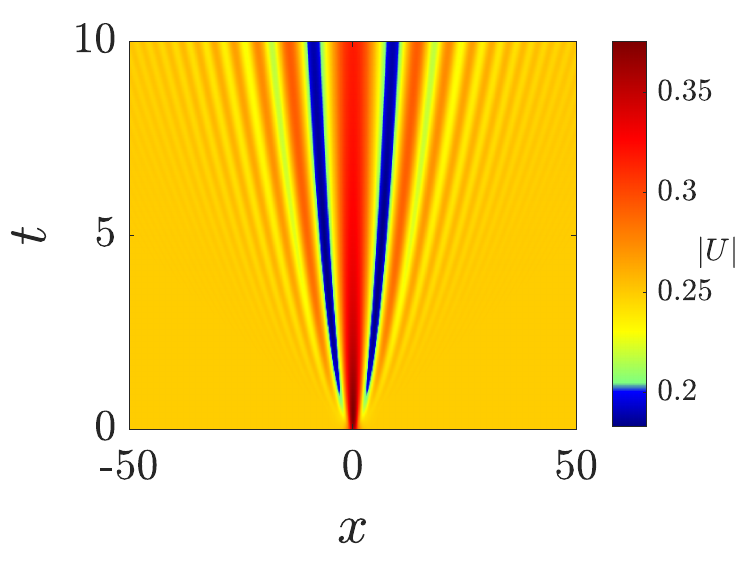}
			\hspace{-0.1cm}\includegraphics[width=.33\textwidth]{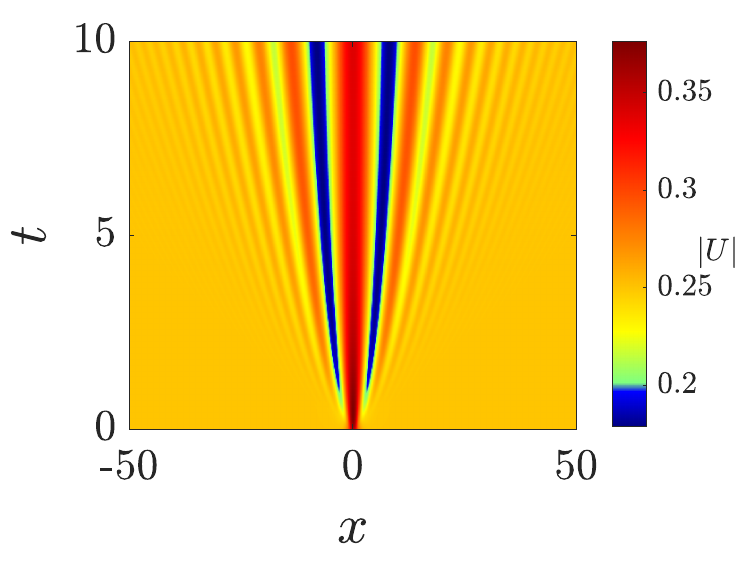}
	\end{center}
	\caption{Contour plots of the spatiotemporal evolution of the initial condition \eqref{NVinc} with $q_0=0.25$. Left: Integrable NLS $p=1$. Center: Non-integrable NLS \eqref{NLSP} with power nonlinearity in the subcritical case $p=3/2$. Right: Non-integrable NLS \eqref{Sat2} with saturable nonlinearity. 
	}
	\label{fig5}
\end{figure}

\begin{figure}[ht!]
	\begin{center}
   \includegraphics[width=.33\textwidth]{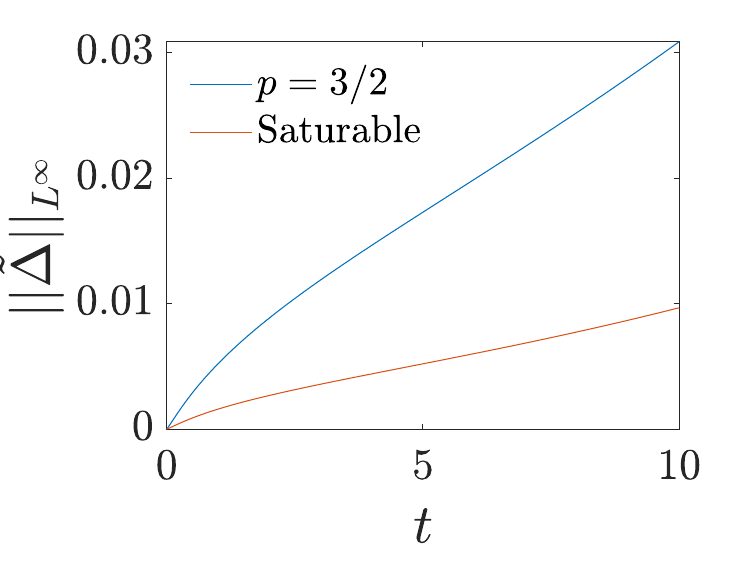}
			\hspace{-0.1cm}\includegraphics[width=.33\textwidth]{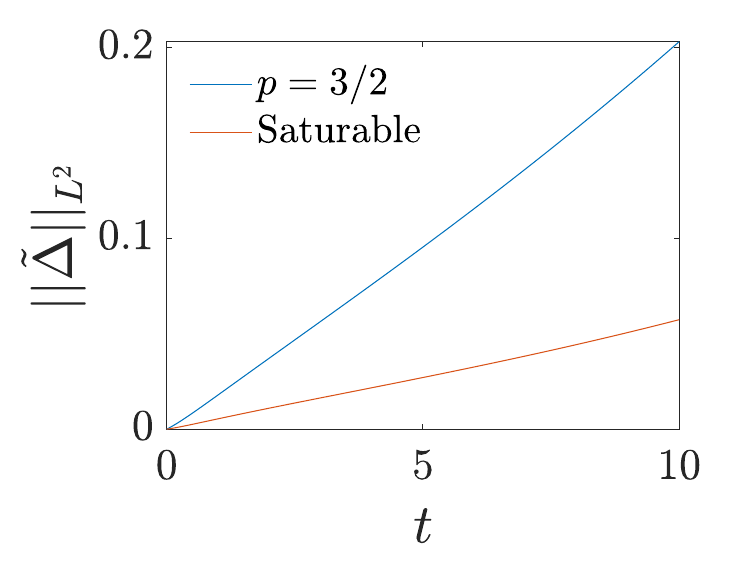}
			\hspace{-0.1cm}\includegraphics[width=.33\textwidth]{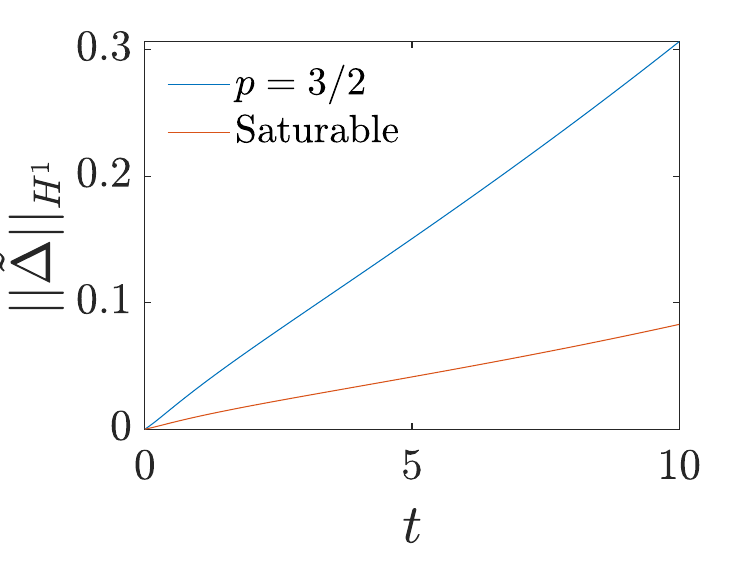}
	\end{center}
	\caption{Evolution of the distance norm 
 \eqref{newnorm} for the dynamics shown in Figure~\ref{fig5}. Left: $\|\widetilde{\Delta}(t)\|_{L^{\infty}}$. Center: $\|\widetilde{\Delta}(t)\|_{L^{2}}$. Right: $\|\widetilde{\Delta}(t)\|_{H^1}$.
	}
	\label{fig6}
\end{figure}

\begin{figure}[ht!]
	\begin{center}
   \includegraphics[width=.33\textwidth]{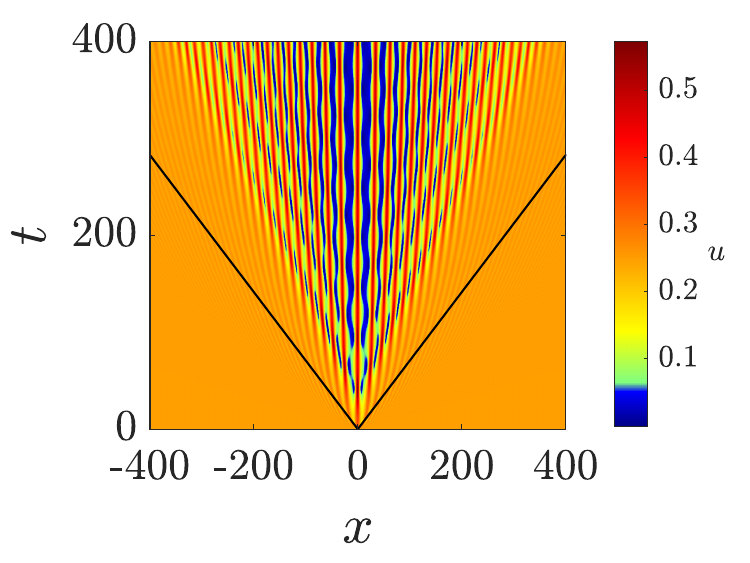}
			\hspace{-0.1cm}\includegraphics[width=.33\textwidth]{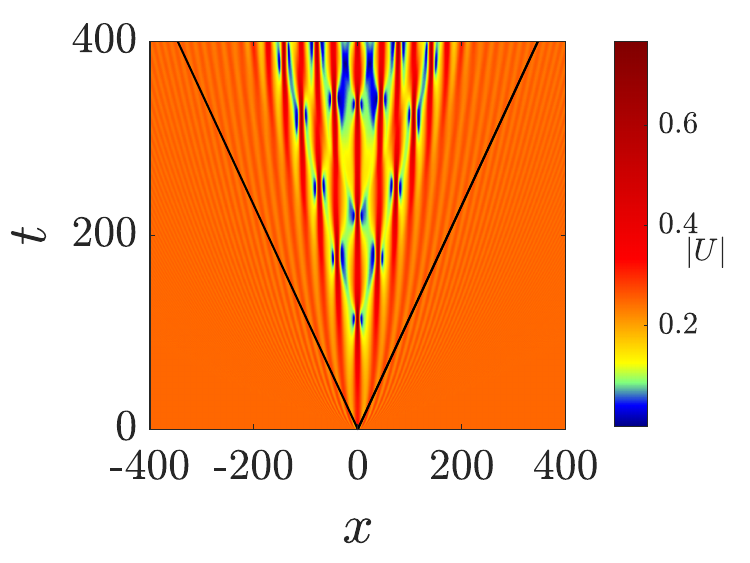}
   \includegraphics[width=.33\textwidth]{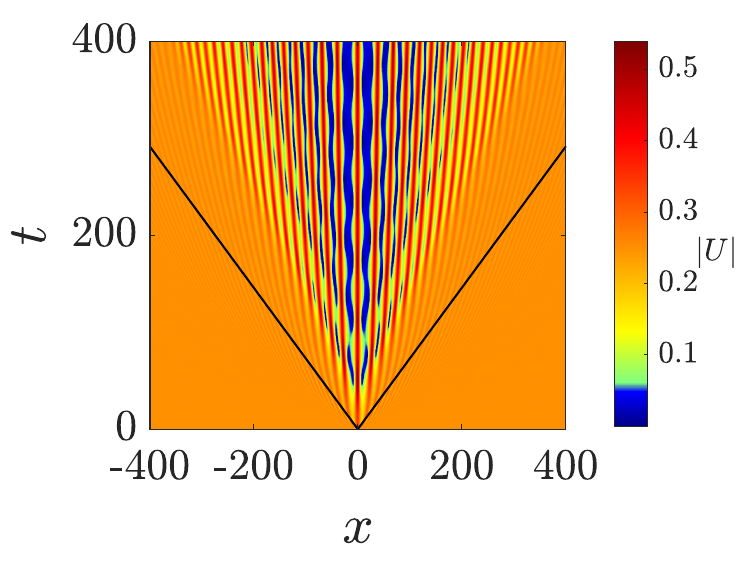}
	\end{center}
	\caption{Contour plots of the spatiotemporal evolution of the initial condition~\eqref{NVinc} with $q_0=0.25$ for longer times $t\in [0,400]$. Left: Integrable NLS $p=1$. Center: Non-integrable NLS \eqref{NLSP} with power nonlinearity  in the subcritical case $p=3/2$. Right: Non-integrable NLS \eqref{Sat2} with saturable nonlinearity. 
	}
	\label{fig7}
\end{figure}

\begin{figure}[ht!]
	\begin{center}
   \includegraphics[width=.33\textwidth]{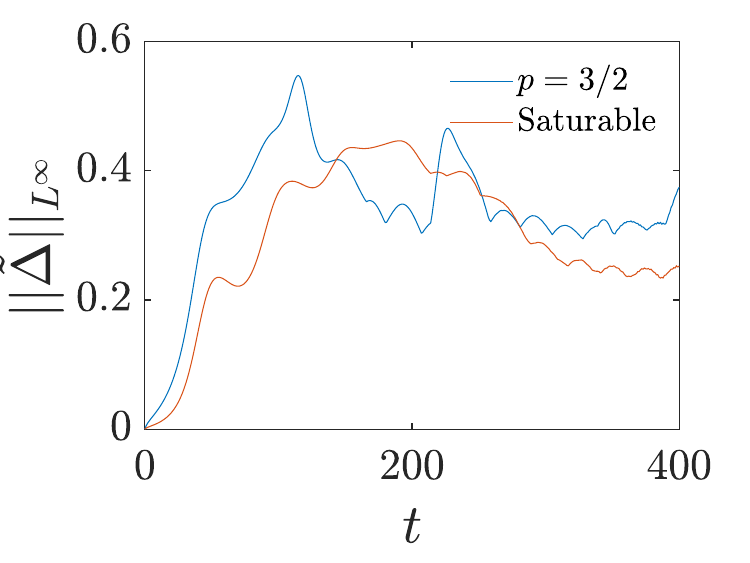}
			\hspace{-0.1cm}\includegraphics[width=.33\textwidth]{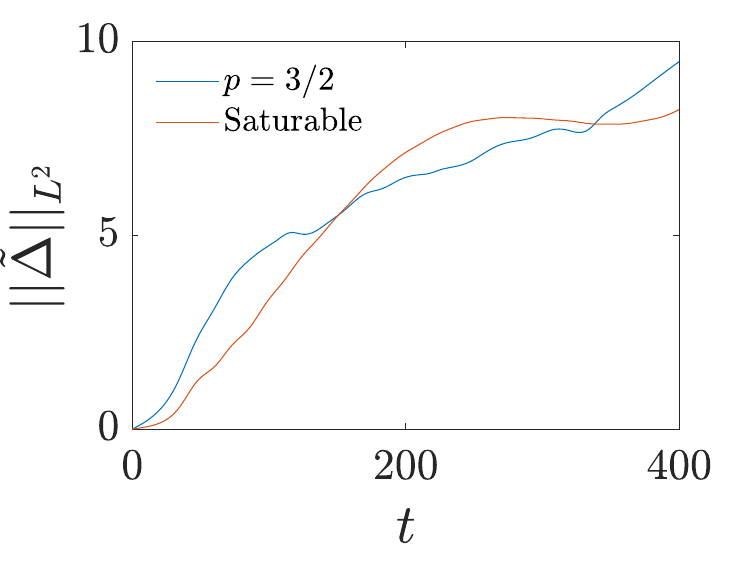}
			\hspace{-0.1cm}\includegraphics[width=.33\textwidth]{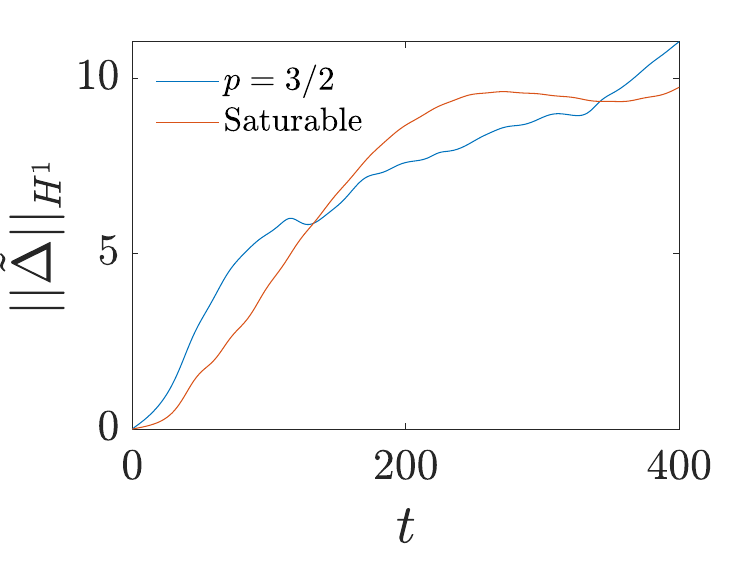}
	\end{center}
	\caption{Evolution of the distance norm 
 \eqref{newnorm} for the dynamics of Figure \ref{fig7} over $t\in [0,400]$. Left: $\|\widetilde{\Delta}(t)\|_{L^{\infty}}$. Center: $\|\widetilde{\Delta}(t)\|_{L^{2}}$. Right: $\|\widetilde{\Delta}(t)\|_{H^1}$. 
	}
	\label{fig8}
\end{figure}

\begin{figure}[ht!]
	\begin{center}
   \includegraphics[width=.33\textwidth]{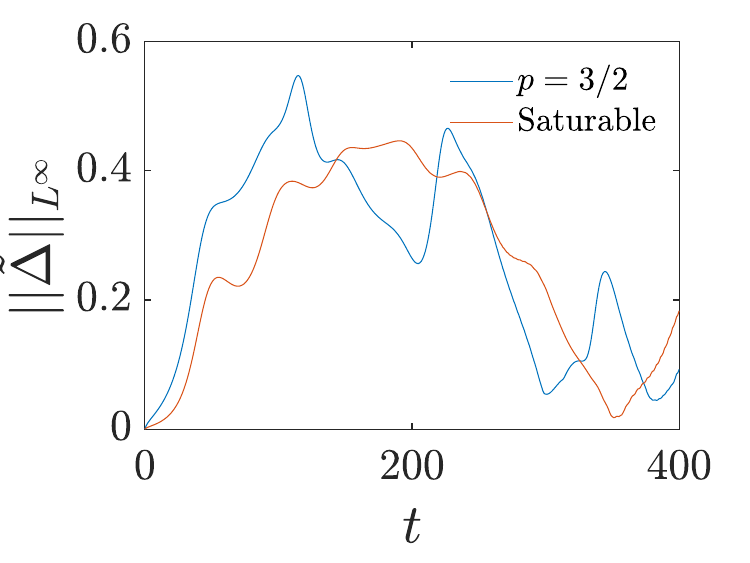}
			\hspace{-0.1cm}\includegraphics[width=.33\textwidth]{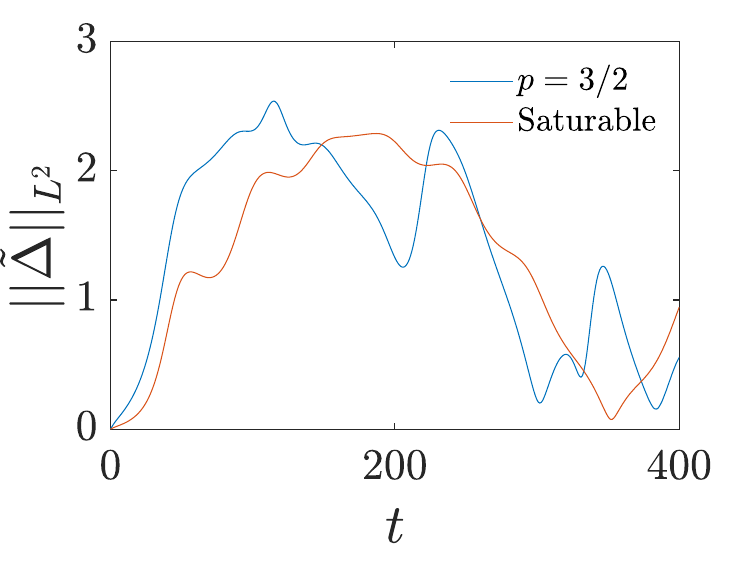}
			\hspace{-0.1cm}\includegraphics[width=.33\textwidth]{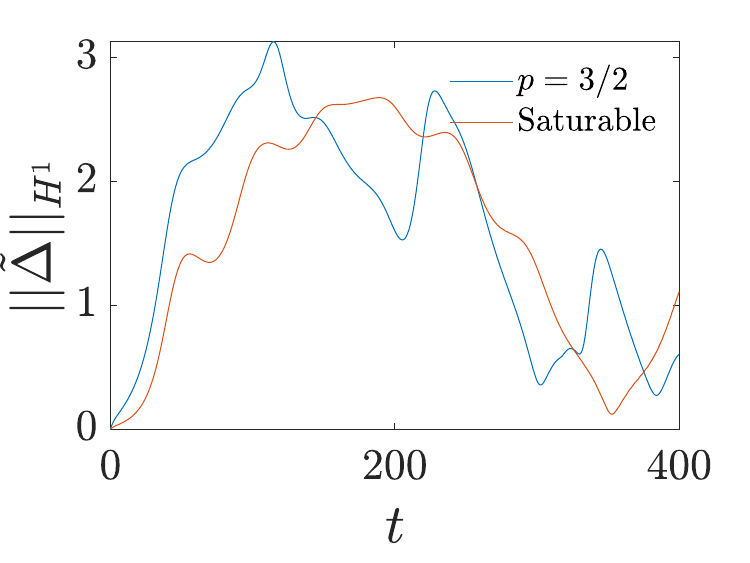}
	\end{center}
	\caption{Evolution of the distance norm 
 \eqref{newnorm} for the dynamics of Figure~\ref{fig7} over $t\in [0,400]$, but with the norm evaluated over a spatial interval of length $\mathcal O(1/\ve)$, namely $x\in (-1/\ve, 1/\ve) = (-4,4)$. Left: $\|\widetilde{\Delta}(t)\|_{L^{\infty}(-4,4)}$. Center: $\|\widetilde{\Delta}(t)\|_{L^{2}(-4,4)}$. Right: $\|\widetilde{\Delta}(t)\|_{H^1(-4,4)}$.
	}
	\label{fig9}
\end{figure}

\begin{figure}[ht!]
	\begin{center}
   \includegraphics[width=.33\textwidth]{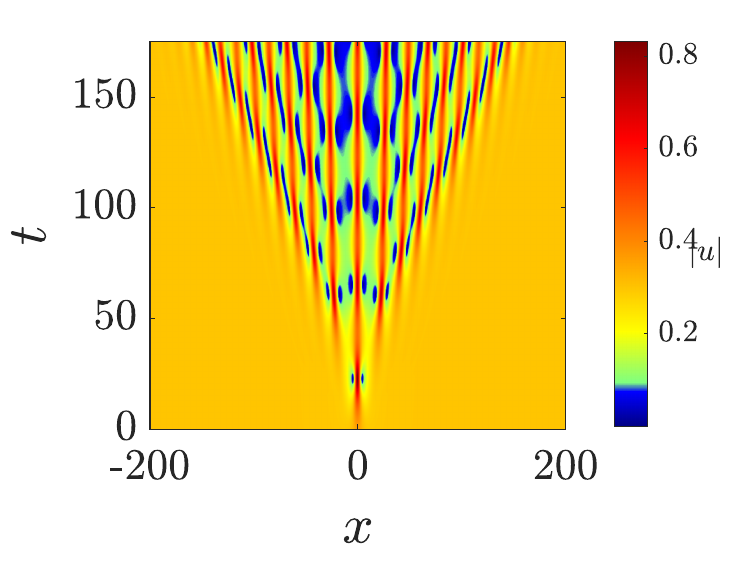}
			\hspace{-0.1cm}\includegraphics[width=.33\textwidth]{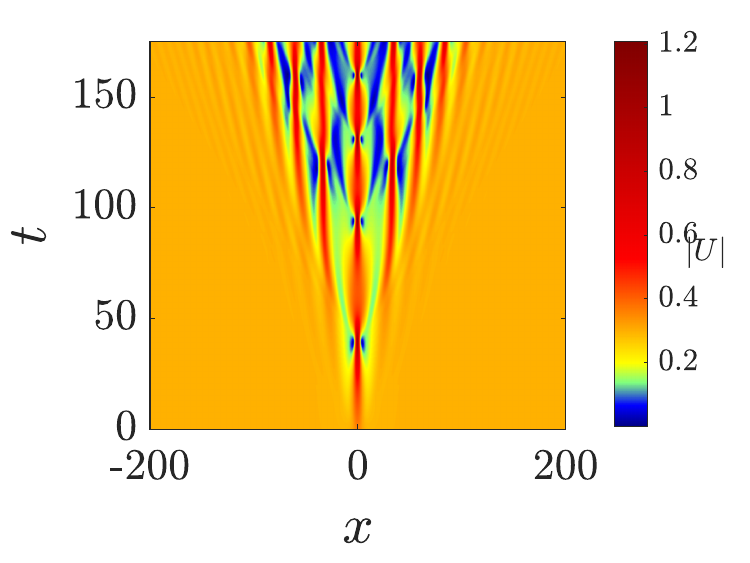}
			\hspace{-0.1cm}\includegraphics[width=.33\textwidth]{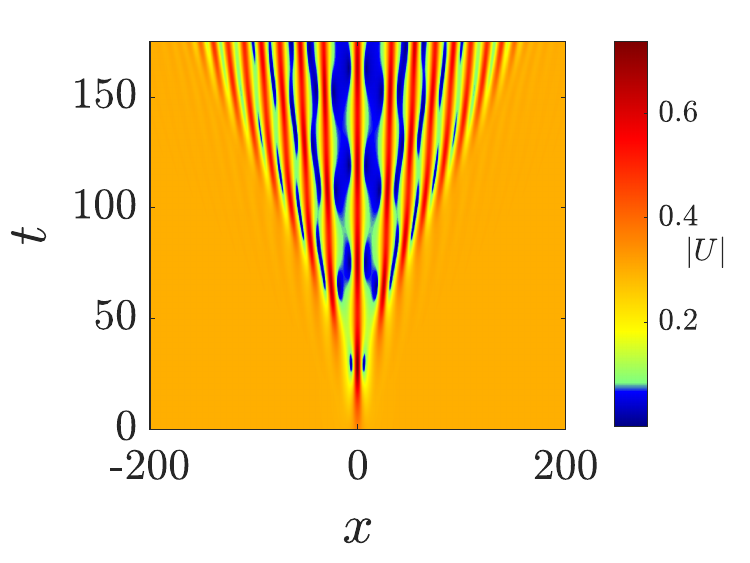}
	\end{center}
	\caption{Contour plots of the spatiotemporal evolution of the initial condition \eqref{psinc} with $\delta=\nu=1$, $\gamma=\mu=1$, $q_0=0.3$ and $T_0=-25$ for $t\in [-50,100]$. Left: Integrable NLS $p=1$. Center: Non-integrable NLS \eqref{NLSP} with power nonlinearity in the subcritical case $p=3/2$. Right: Non-integrable NLS \eqref{Sat2} with saturable nonlinearity.
	}
	\label{fig11}
\end{figure}
To investigate further these issues, we extended considerably the time-horizon of the above numerical investigations. Figure \ref{fig7} extends the study presented in Figure \ref{fig5} to a time span $t\in [0,400]$. The results of Figure \ref{fig7} verify that the evolution observed for $t\in [0,10]$ is the initial stage of the modulational instability dynamics discussed in detail in \cite{bm2017} and \cite{blmt2018}. In each panel, the bounding lines represent the linear caustics separating the $x$-$t$ plane into two types of regions: a left far-field region and a right far-field region, where the solution meets the condition at infinity (up to a phase shift), and a central region in which the asymptotic behavior is described by slowly modulated periodic oscillations. Although at the level of the nonlinear stage of modulational instability the non-integrable dynamics shares to a great extent the basic features of the ones of the integrable counterpart, there are still some important findings revealed. Comparing the pattern of the integrable NLS (shown in the left panel) and the corresponding one for the saturable nonlinearity (right panel), we observe that they have more in common than the pattern for the quartic nonlinearity (central panel).  The long-time behavior of the saturable model confirms also in the case of nonzero boundary conditions, that it is the more structural stable model in reference to the integrable one than its power nonlinearity counterpart.  For the quartic NLS, the oscillatory pattern within the caustics region is markedly  different  as it exhibits a considerably larger period of the internal modulated wave than in the other systems. On the other hand, the maximum amplitude of the internal oscillations seems to be similar in all systems indicating that the similarity of all patterns at large amplitudes should be manifested in the $L^{\infty}$ closeness while the differences of the patterns in their finer structures should be reflected in more pronounced  deviations in the $L^2$ and $H^1$ norms.

\begin{figure}[tbp!]
	\begin{center}
		\begin{tabular}{cc}
   \includegraphics[width=.45\textwidth]{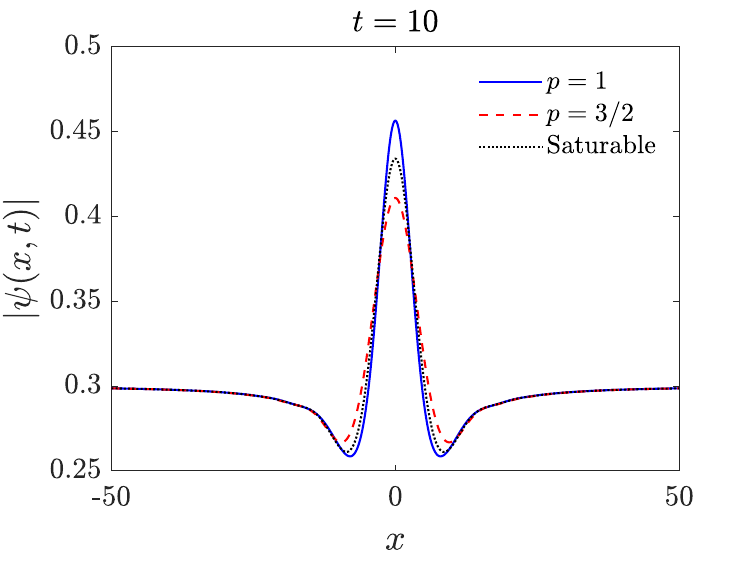}
   \includegraphics[width=.45\textwidth]{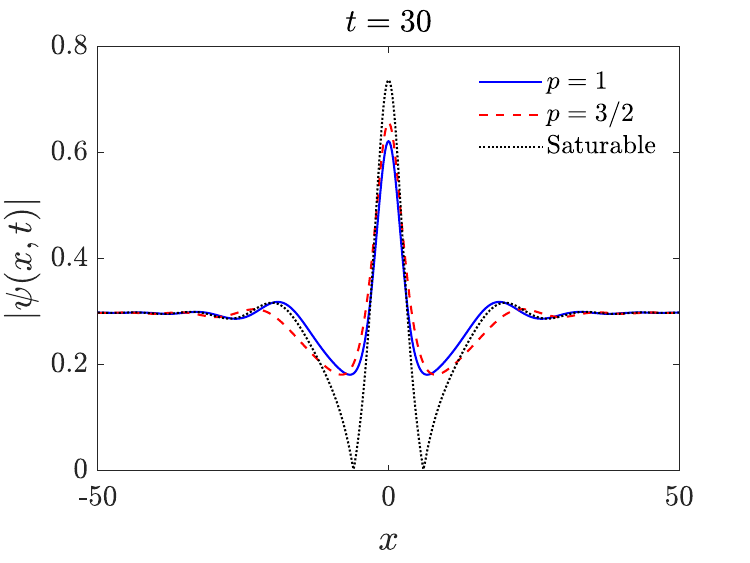}\\
   \includegraphics[width=.45\textwidth]{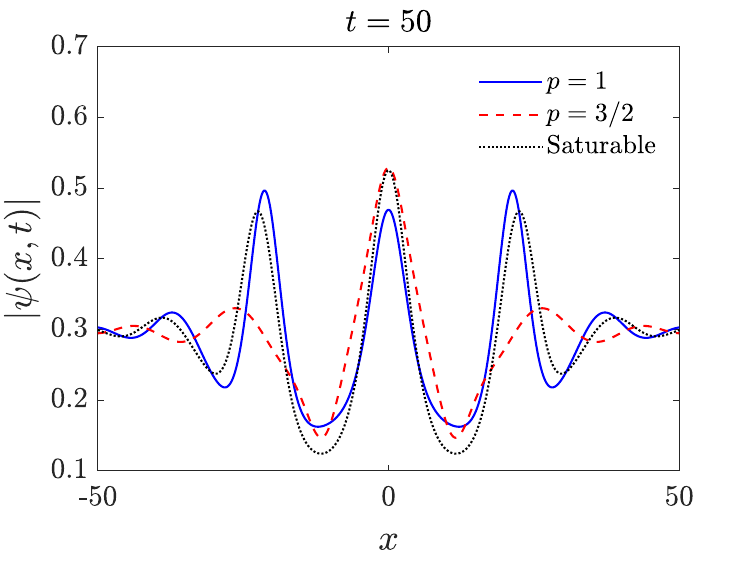}
\includegraphics[width=.45\textwidth]{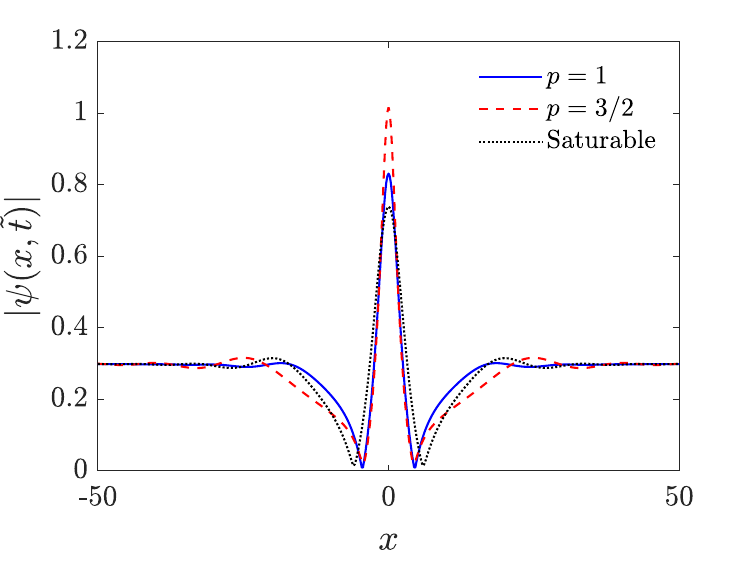}
		\end{tabular}
	\end{center}
	\caption{Snapshots of the evolution of the initial condition \eqref{psinc} with $\delta=\nu=1$, $\gamma=\mu=1$, $q_0=0.3$ and $T_0=-25$, portraying the  comparison between the profiles of the solutions of the integrable NLS with $p=1$, the non-integrable NLS with power nonlinearity $p=3/2$, and the non-integrable NLS with saturable nonlinearity.  The bottom right panel depicts the comparison of the profiles of the first event in each of the panels of Figure \ref{fig11}, which occurs at $\tilde t=25$ for the integrable case $p=1$,   $\tilde t\sim 30$ for the power nonlinearity $p=3/2$, and  $\tilde t\sim 25$ for the saturable case. 
	}
	\label{fig12}
\end{figure}

In connection with these predictions, we illustrate in Figure \ref{fig8} the evolution of the distance $\|\widetilde{\Delta}(t)\|_{\mathcal{X}}$ associated with the dynamics shown in Figure \ref{fig7}. From the central panel, showing the evolution of $\|\widetilde{\Delta}(t)\|_{L^{\infty}}$ it is evident  that the non-integrable modulational instability dynamics are remarkably close to the integrable ones in the sense of the pointwise topology, as $\|\widetilde{\Delta}(t)\|_{L^{\infty}}$ initially grows linearly but then becomes bounded and at most of order $\sim\mathcal{O}(10^{-1})$. Concerning  $\|\widetilde{\Delta}(t)\|_{L^{2}}$ (central panel)  and $\|\widetilde{\Delta}(t)\|_{H^1}$ (right panel), we observe an almost linear growth for $t\in [0,100]$ and a large deviation in compliance with the differences of the oscillatory behavior seen in Figure \ref{fig7}. Revisiting the latter in the saturable and integrable cases, there important differences between their oscillations close to the bounding caustics. These  differences explain the deviation of the systems in the $L^2$ and $H^1$ topology. On the other hand, the patterns of the integrable and the saturable NLS look very similar in the core, close to $x=0$.    This observation suggests an  agreement with the theoretical claim of Theorem \ref{gex} for the bounded domain approximation, that the dynamics should be more similar when considered on finite intervals of length $L\sim\mathcal{O}\left(\frac 1\ve\right)$.   The fact, that when restricted to an interval  of length $\sim\mathcal{O}\left(\frac 1\ve\right)$, the similarity is significantly more noticeable than on the full length interval $L$, is confirmed in Figure~\ref{fig9}. 

With our choice of $\left\|\zeta\right\|_{L^{\infty}}=q_0=0.25$ we monitor the behavior of the distances in the restricted interval for $x\in[-4,4]$ in accordance with   Theorems \ref{gex} and \ref{gex2}. The plots feature the remarkable decrease of the deviation from the integrable dynamics particularly for the $\|\widetilde{\Delta}(t)\|_{L^{2}}$   and $\|\widetilde{\Delta}(t)\|_{H^1}$ norms; in contrast with the plots of Figure \ref{fig8}, the deviation in these norms seems is bounded and at most of order $\mathcal{O}(1)$, suggesting that on large spatial scales, most of the deviation stems from the finer oscillating structures far from the core of the pattern.  
\begin{figure}[ht!]
	\begin{center}
  \includegraphics[width=.33\textwidth]{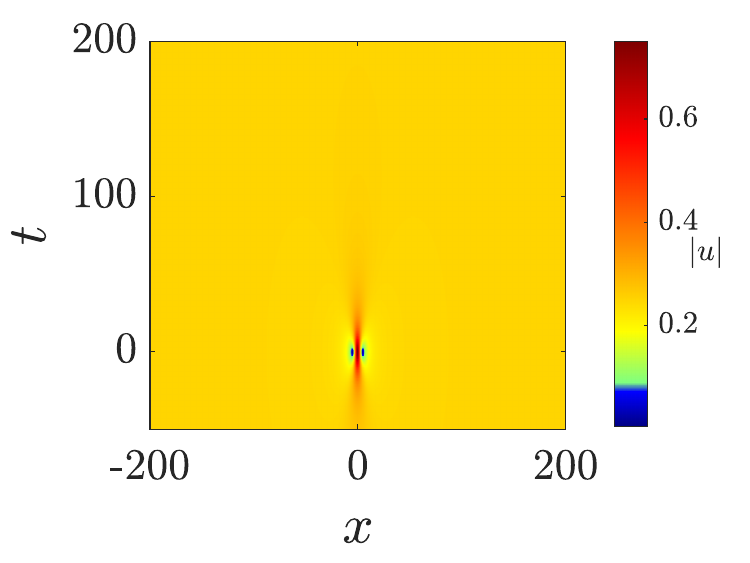}
			\hspace*{-0.1cm}\includegraphics[width=.33\textwidth]{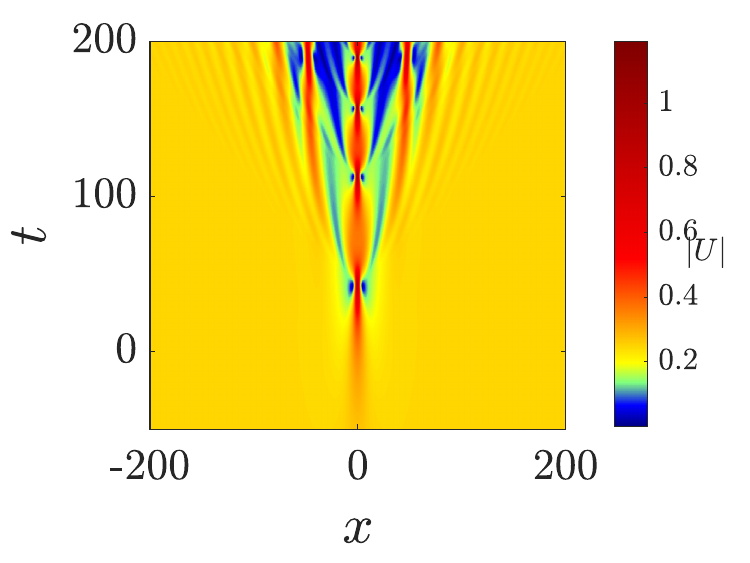}
			\hspace*{-0.1cm}
   \includegraphics[width=.33\textwidth]{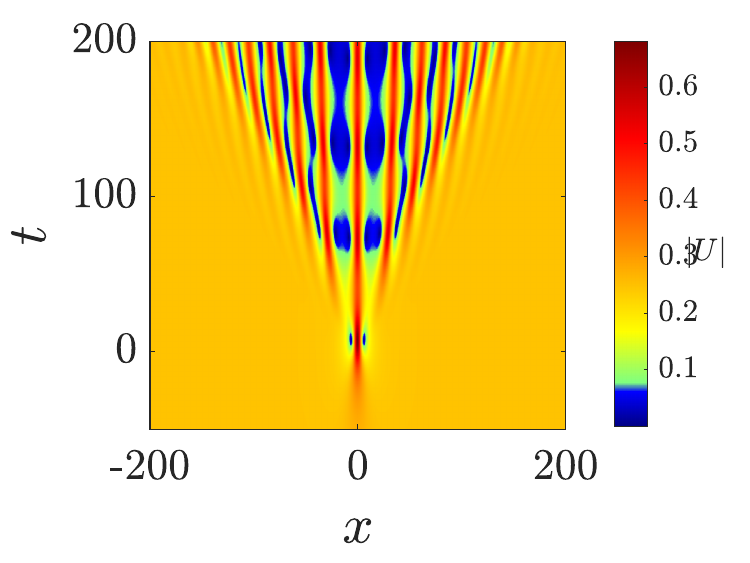}
	\end{center}
	\caption{Contour plots of the spatiotemporal evolution of the initial condition \eqref{psinc} with $\delta=2$, $\nu=1$, $\gamma=\mu=1$, $q_0=0.25$ and $T_0=50$ for $t\in [-50,200]$. Left: Integrable NLS $p=1$. Center: Non-integrable NLS \eqref{NLSP} with power nonlinearity  in the subcritical case $p=3/2$. Right: Non-integrable NLS \eqref{Sat2} with saturable nonlinearity.
	}
	\label{fig13}
\end{figure}

\begin{figure}[ht!]
	\begin{center}
  \includegraphics[width=.33\textwidth]{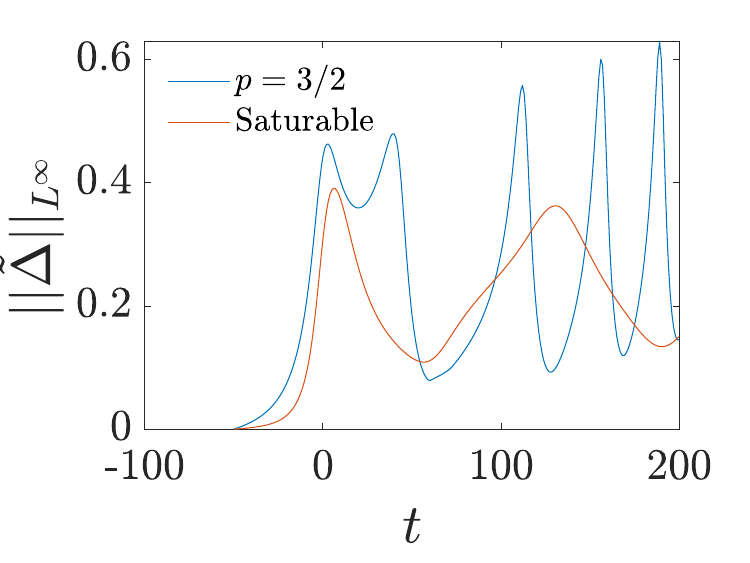}
			\hspace{-0.1cm}\includegraphics[width=.33\textwidth]{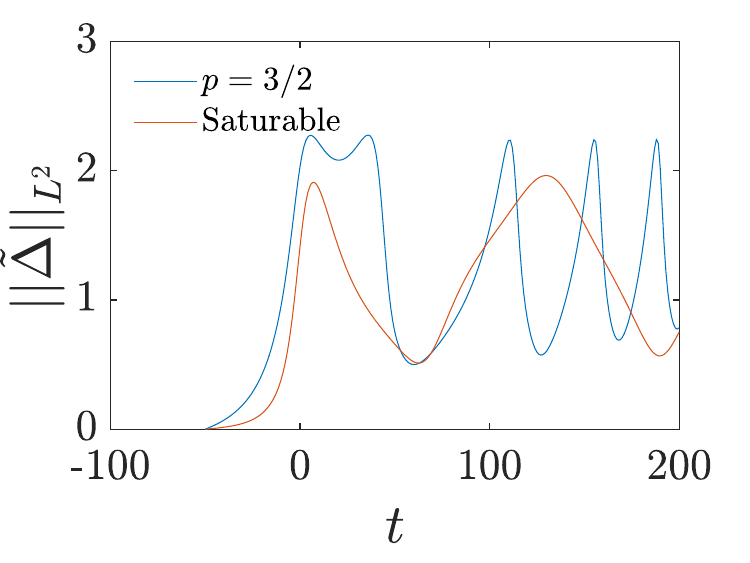}
			\hspace{-0.1cm}\includegraphics[width=.33\textwidth]{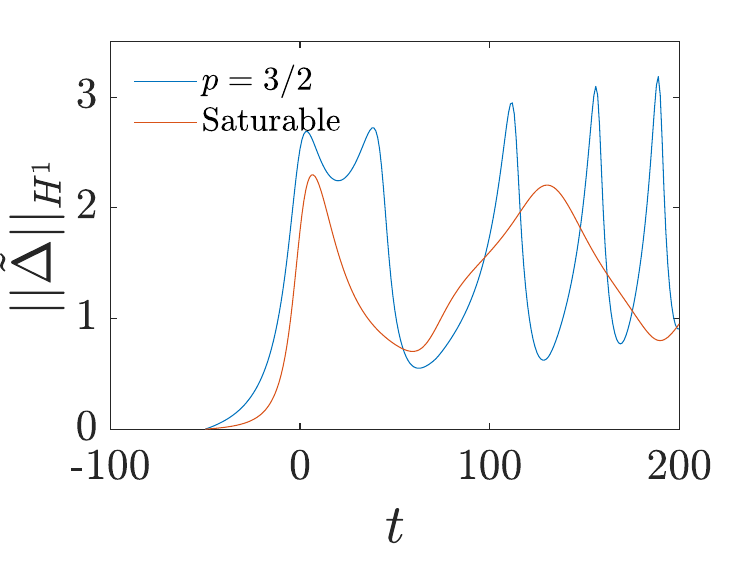}
	\end{center}
	\caption{Evolution of the distance norm  $\|\widetilde{\Delta}(t)\|_{\mathcal{X}}$ defined in
 \eqref{newnorm} for the dynamics of the initial condition \eqref{psinc} with the parameters of Figure \ref{fig13}, $x\in[-4,4]$ and $t\in [-50,250]$. Left: $\|\widetilde{\Delta}(t)\|_{L^{\infty}}$. Center: $\|\widetilde{\Delta}(t)\|_{L^{2}}$. Right: $\|\widetilde{\Delta}(t)\|_{H^1}$.
	}
	\label{fig10}
\end{figure}
\begin{figure}[tbh!]
	\begin{center}
		\begin{tabular}{cc}
\hspace{-0cm}\includegraphics[width=.48\textwidth]{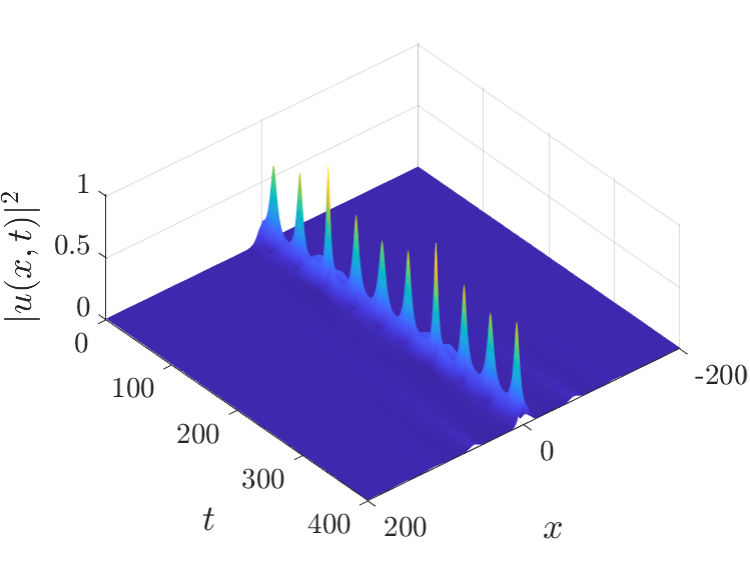}
\hspace{0cm}\includegraphics[width=.48\textwidth]{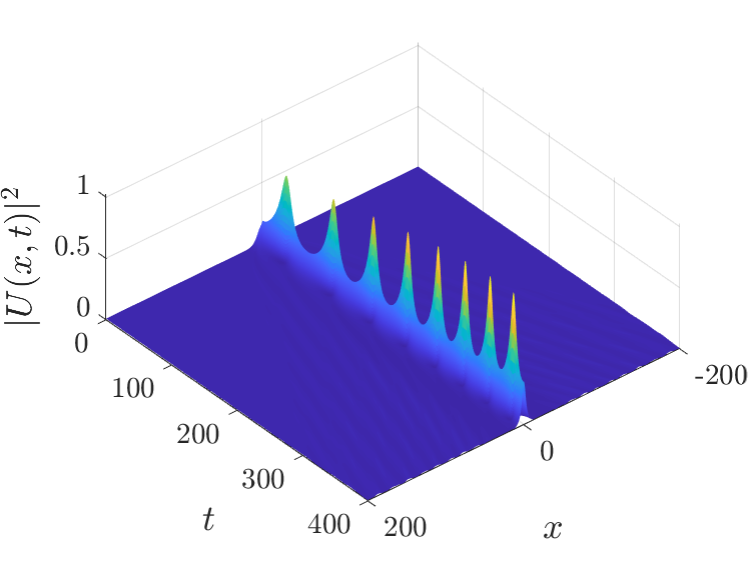}
\\
\hspace{-0.cm}\includegraphics[width=.48\textwidth]{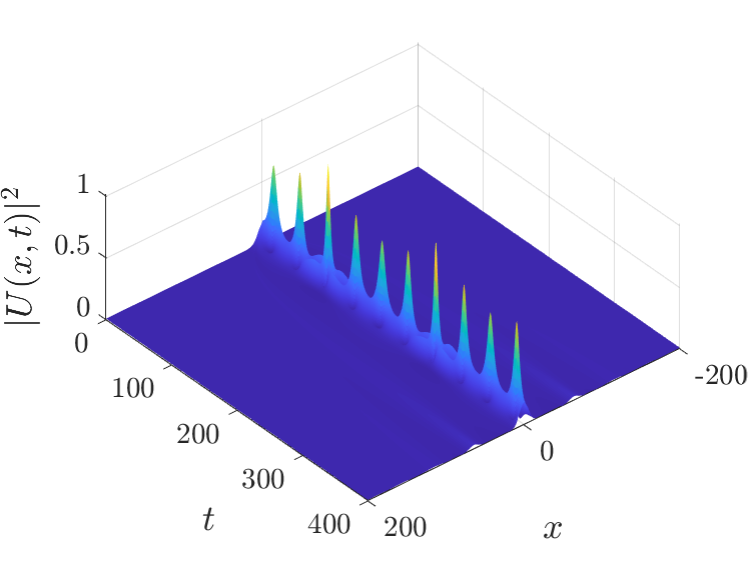}
\hspace{0cm}\includegraphics[width=.41\textwidth]{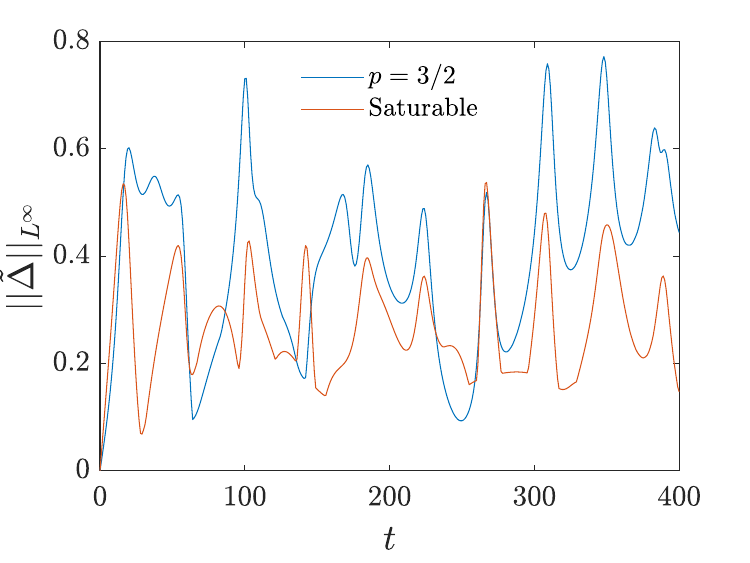}
		\end{tabular}
	\end{center}
	\caption{Simulation of the spatiotemporal evolution of the density of the solutions to the integrable and non-integrable NLS equations with the nonzero boundary conditions \eqref{uU-bc} over the non-constant background $\zeta(x)$ of the form \eqref{last1} and the initial conditions \eqref{last2} (corresponding to the modified NLS equations \eqref{cNLS} and \eqref{pNLS}). Parameters $A_1=0.3$, $\beta=0.01$, $q_0=0.1$, $A_2=0.1$.   Top left panel: Integrable (cubic) NLS. Top right panel: Quartic NLS ($p=3/2$). Bottom left panel: Saturable NLS. 
Bottom right panel: Evolution of the $L^\infty$ norm of the distance function $\widetilde \Delta$ for $x\in [-200, 200]$ and $t\in [0, 400]$.
More details in the text.
}
	\label{fig14}
\end{figure}
\textit{II. Algebraically decaying initial data on  top of the nonzero background $q_0$.} We conclude with the presentation of a numerical study concerning the dynamics emerging from  a quadratically decaying initial condition on  top of the finite background $q_0>0$. For this purpose, we will use initial conditions defined by the Peregrine soliton solution-the famous rational solution of the integrable NLS
\begin{equation}
    \label{NLSdelta}
i u_t + \frac{\delta}{2} u_{xx}+\mu|u|^{2}u=0,    
\end{equation}
given by the formula
		\begin{equation}
		u_{\text{PS}}(x,t;\delta;\mu;T_0;q_0)=q_0\left\{1-\frac{4\left[1+\frac{2i(t+T_0)}{\Lambda}\right]}{1+\frac{4x^2}{K_0^2}+\frac{4(t+T_0)^2}{\Lambda^2}}\right\}e^{\frac{i(t+T_0)}{\Lambda}}.
		\label{sprw}
		\end{equation}
The parameters in \eqref{sprw} are
		$\Lambda=\frac{1}{\mu\,q_0^2}$, $K_0=\sqrt{\delta\,\Lambda}$, and $T_0$ is a time-translation. 
Comparing the dispersion coefficients of the  NLS equations \eqref{NLSP} and \eqref{NLSdelta}, we see that they coincide when $\delta=2\nu$. With this observation we distinguish between two examples. 
\\[2mm]
\noindent
(a) $\delta\neq 2\nu$. We consider the parameters $\delta=\nu=1$, $\gamma=\mu=1$, $q_0=0.3$ and $T_0=-25$, in order to be compliant with the smallness conditions on the initial data when considering the initial condition 
\begin{equation}
\label{psinc}
    u(x,0)=u_{\text{PS}}(x,0;\delta;\mu;T_0;q_0),
\end{equation}
for the above set of parameters. This is an example where \eqref{sprw} is not an analytical solution of the NLS \eqref{NLSP} for its integrable case $p=1$. 

The theoretical results of \cite{bm2017} and \cite{blmt2018} establish  very similar spatiotemporal behavior of the integrable NLS as in  the case I, and this fact is confirmed in the  left panel of Figure \ref{fig11}. Comparing this panel with the central one which corresponds to the case $p=3/2$ and the right one which corresponds to the saturable NLS, we again observe that the saturable dynamics is closer to the integrable one than its power nonlinearity counterpart. The first events of large amplitude  occurring in all contour plots suggest also the presence of features reminiscent of Peregrine rogue waves (PRWs) governed  by the specific type of the initial condition.  To highlight this feature, as well as the similarities and differences of the solutions, we present  in Figure \ref{fig12}  snapshots of the profiles of the solutions for specific times. The similarities to PRWs can be observed for times around $t=25$ where the function \eqref{sprw} attains its maximum as shown in the fourth panel ($t=\tilde t \sim 30$ for the quartic and $t=\tilde t \sim 25$ for the saturable). For larger times ($t=50$, third panel), the differences of the profiles are reflected in the larger deviation of the norms of the distance function between the solutions.  The proximity around the center $x_0=0$ predicted by Theorem \ref{gex2} is also illustrated by the snapshots exhibiting the similarity of the central spikes.
\\[2mm]
\noindent
(b) $\delta=2\nu$. For $\mu=\gamma=1$, we choose this time $\delta=2$, $q_0=0.25$ and $T_0=50$ in the formula \eqref{sprw} and $\nu=1$ for the NLS \eqref{NLSP}. This time  the formula \eqref{sprw} defines an analytical solution of the NLS \eqref{NLSP} for its integrable case $p=1$. Using the initial condition \eqref{psinc} for this choice of parameters we observe crucial  differences in comparison with  the dynamics of the case \textit{a}, as it can be seen in Figure \ref{fig13}, for long times.  In the left panel, which shows the dynamics for the integrable case $p=1$, the PRW attains it maximum at $t=0$, since we have triggered initially, the  exact analytical PRW solution and  the MI effects predicted by \cite{bm2017,blmt2018} should occur much later.  This is not the case for the non-integrable models whose dynamics are represented in the central panel (power nonlinearity with $p=3/2$) and the right panel (saturable), respectively.  Figure \ref{fig10}, showing the time evolution of the norms of the distances for $x\in [-4,4]$, further confirms for the initial condition \eqref{psinc}, that when considering  sufficiently small spatial scales and small time intervals, one observes diminished deviations between  non-integrable and integrable dynamics  in all of the distances $\|\widetilde{\Delta}(t)\|_{\mathcal{X}}$. Particularly, for $t\in [-50,50]$, the behavior of the norms for the saturable NLS suggests the proximity of the first RW occurring close to $t=0$, as in the integrable case $p=1$. For the quartic nonlinearity, the first RW appears much later at $t\sim 50$, in compliance with the larger deviation of norms. However, both patterns and norms show that for $t\in [-50,50]$, the dynamics of all systems are similar. This is an illustration that Theorems \ref{nzbc-t} and \ref{gex2} may capture important nonlinear effects, particularly when these occur within short time scales.
\textit{III. Dynamics for non-constant $\zeta (x)$}. We conclude with a brief presentation of numerical results concerning  the case of the nonzero boundary conditions \eqref{nvDBC}, this time with a non-constant $\zeta(x)$. We consider a simple example of a non-constant background 
\begin{eqnarray}
\label{last1}
\zeta(x)=A_1\exp(-\beta x^2)+q_0, \quad A_1, \beta, q_0>0,
\end{eqnarray}
so that $\zeta(x) \to q_0 >0$ at an exponential rate as $|x|\rightarrow\infty$. For $\zeta(x)$ given by \eqref{last1}, we  solve  numerically the finite interval problems approximating the ones on the infinite line for the modified NLS equations~\eqref{cNLS} and \eqref{pNLS} supplemented with the nonzero boundary conditions \eqref{nvDBC}. The corresponding initial conditions are
\begin{eqnarray}
\label{last2}
\phi(x,0)=\Phi(x,0)=iA_2\mathrm{sech}x, \quad A_2>0,
\end{eqnarray}
which approximately satisfy the  zero Dirichlet boundary conditions \eqref{Dvbcn} for large $L$.    We return to the solutions $u$ and $U$ of the original equations \eqref{NLS} and \eqref{noninNLS} with the nonzero boundary conditions \eqref{uU-bc} via the transformations \eqref{zhi4} and \eqref{zhi6}. 
The top left panel of Figure \ref{fig14} shows the spatiotemporal evolution of the density $|u(x,t)|^2$  for the cubic NLS equation, and the top right and bottom left panels for the quartic ($p=3/2$) and the saturable nonlinearities respectively. The bottom right panel depicts the evolution of the norm $\| \widetilde \Delta(t)\|_{L^\infty}$ for $x\in [-200, 200]$ and $t\in [0, 400]$. The parameters are $A_1=0.3$, $\beta=0.01$ for the  non-constant background \eqref{last1} and $A_2=0.1$ for the initial condition \eqref{last2}; $\zeta(x)$ decays slowly to $q_0$. For this example, $\left\|\zeta-q_0\right\|_{L^2(\mathbb{R})}=1.06$(so we are in the limit of the theoretical assumption for the smallness condition for  $\left\|\zeta-q_0\right\|_{L^2(\mathbb{R})}$) and $\left\|\Phi(0)\right\|_{L^2(\mathbb{R})}=0.16$.   Although the long-time asymptotics are covered by the results of \cite{bm2017} (recall that the initial conditions of the original problems are of the form $u(x,0)=\phi(x,0)+\zeta(x)$), interesting dynamics are generated, as we observe the emergence of waveforms reminiscent to Kuznetzov-Ma  breathers.  For the emergence of such waveforms in the case of vanishing boundary conditions with Gaussian initial data, we refer to \cite{bs}. As it is expected by the analysis of the previous cases, the corresponding waveforms for the cubic and saturable NLS equations are very similar. They resemble  a two-period Kuznetzov-Ma breather alike waveform, with the peak of the larger amplitude reminiscent of the form of a second-order rogue wave. In the quartic case, after their initial stage, the dynamics stabilize to a single period Kuznetzov-Ma breather alike waveform. Despite the differences of the waveforms, this example still shows that the proximity analysis may justify consequently the  proximity of the dynamics between the integrable and the non-integrable models (as it is also shown by the evolution of $\|\widetilde{\Delta}(t)\|_{L^{\infty}}$ in the bottom right panel of Figure~\ref{fig14}). This is showcased by the stronger localization of the centered oscillations, similar to the one exhibited by Kuznetzov-Ma breathers.

\end{document}